\keywords{Weihrauch reducibility, equational theory, Kleene algebra, containers, simulation}
\tikzstyle{vertex}=[circle,fill=black,minimum size=7pt,inner sep=0pt]
\tikzstyle{bigvertex}=[circle,draw,thick,fill=black!5,minimum size=16pt,inner
\tikzstyle{automaton}=[circle,draw,thick,minimum size=20pt,inner sep=0pt]
\tikzset{square/.style={regular polygon,regular polygon sides=4,inner sep=0}}
\Crefname{lem}{Lemma}{Lemmas}
\Crefname{defi}{Definition}{Definitions}
\Crefname{thm}{Theorem}{Theorems}
\Crefname{exa}{Example}{Examples}
\Crefname{cor}{Corollary}{Corollaries}
\theoremstyle{definition}\newtheorem{question}[thm]{Question}
\newcommand\cW{\mathcal{W}}
\newcommand\cE{\mathcal{E}}
\newcommand\backgroundcolor{white}
\newcommand\longto\longrightarrow
\newcommand{\Cantor}{2^\omega}
\newcommand{\id}{\textrm{id}}
\newcommand{\dom}{\operatorname{dom}}
\newcommand{\Baire}{{\mathbb{N}^\mathbb{N}}}
\newcommand{\hide}[1]{}
\newcommand\reAut{\mathfrak{re}_\Sigma}
\newcommand{\leqW}{\leq_{\textrm{W}}}
\newcommand{\equivW}{\equiv_{\textrm{W}}}
\newcommand{\geqW}{\geq_{\textrm{W}}}
\newcommand{\powerset}{\mathcal{P}}
\newcommand\mfA{\mathfrak{A}}
\newcommand\mfB{\mathfrak{B}}
\newcommand{\Wei}{\mathfrak{W}}
\newcommand{\ptWei}{\mathfrak{W}_\bullet}
\newcommand{\Nat}{\mathbb{N}}
\newcommand\bN\Nat
\newcommand\bbN\Nat
\newcommand{\partto}{\rightharpoonup}
\newcommand{\interp}[1]{\llbracket #1 \rrbracket}
\newcommand{\interpt}[1]{\interp{\reAut^{(#1)}}}
\newcommand{\interpaut}[1]{\interp{\mfA^{(#1)}}}
\newcommand{\bnfeq}{\mathrel{::=}}
\newcommand{\bnfalt}{\; | \;}
\newcommand\cL{\mathcal{L}}
\newcommand\RegE[1]{\mathrm{RE}_{#1}}
\newcommand\coNP{\textsc{coNP}}
\newcommand\Pspace{\textsc{Pspace}}
\newcommand\Ptime{\textsc{Ptime}}
\newcommand\Exptime{\textsc{Exptime}}
\newcommand\pol{\mathrm{pol}}
\DeclareMathOperator*{\Compo}{\bigstar}
\newcommand\done{\checkmark}
\newcommand\Pol{\mathrm{Pol}_\Sigma}
\newcommand\polarity{\mathrm{pol}}
\newcommand\simGamePlain{\mathcal{SG}}
\newcommand\simGame{\mathcal{SG}^{\reAut}}
\newcommand\simGameAut{\mathcal{SG}^{\mathfrak{A}}}
\newcommand\psimGame{\mathcal{SG}_\bullet^{\reAut}}
\newcommand\psimGameAut{\mathcal{SG}^{\mathfrak{A}}_\bullet}
\newcommand\ExeStrat{\mathsf{ES}}
\newcommand\price{\mathsf{price}}
\newcommand\execute{\mathsf{exe}}
\newcommand\code[1]{\lceil #1 \rceil}
\newcommand\tuple[1]{\langle #1 \rangle}
\newcommand\cA{\mathcal{A}}
\newcommand{\duppol}{\exists}
\newcommand{\spopol}{\forall}
\newcommand{\orapol}[1]{{#1}}
\newcommand\pSRKA{\mathrm{RSKA}_\bullet}
\newcommand\SRKA{\mathrm{RSKA}}
\newcommand\SRKAM{\mathrm{RSKA}_{s\sqcap}}
\newcommand\SRKAMT{\mathrm{RSKA}_{s\sqcap}}
\newcommand\pSRKAM{\mathrm{RSKA}_{s\sqcap,\bullet}}
\newcommand\size[1]{\mathrm{size}(#1)}
\newcommand{\position}[3]{#1 \mid #2 \vdash #3}
\newcommand{\pposition}[2]{#1 \vdash #2}
\newcommand{\inc}{\mathsf{in}}
\newcommand{\pcaapp}{\cdot}
\newcommand{\cons}{\frown}
\newcommand{\last}{\mathrm{last}}
\newcommand\rank{\mathrm{rank}}
\newcommand\unitelt{{\langle\rangle}}
\newcommand\cC{\mathcal{C}}
\newcommand\multiset[1]{\ensuremath{\{\kern-.3em\{ #1 \}\kern-.3em\}}}
\newcommand\token{\mathsf{token}}
\newcommand\Tokens{\mathsf{Tokens}}
\begin{document}

\title[Theory of the Weihrauch lattice with composition]{The equational theory of the Weihrauch lattice with (iterated) composition}

\author[C.~Pradic]{Cécilia Pradic\lmcsorcid{0000-0002-1600-8846}}

\address{Department of Computer Science, Swansea University, Wales}	%
\email{cecilia.pradic@ens-lyon.org}

\begin{abstract}
\noindent
We study the equational theory of the Weihrauch lattice with composition and iteration,
meaning the collection of equations between terms built from
variables, the lattice operations $\sqcup,\sqcap$, the composition operator $\star$ and its iteration $(-)^\diamond$, which are true however
we substitute (partial) Weihrauch degrees for the variables.
We characterize them using B\"uchi games on finite graphs and
give a complete axiomatization that derives them.
The term signature and the axiomatization are reminiscent of Kleene algebras,
except that we additionally have meets and the lattice operations do
not fully distribute over composition.
The game characterization gives a variant of the notion of simulation for
alternating automata. It also implies that
it is decidable whether an equation is universally valid.
We give some complexity bounds; in particular,
the problem is $\Pspace$-hard in general
and we conjecture that it is solvable in $\Pspace$.
We expect that the axiomatizations and games can also be applicable as-is
to characterize the existence of strong natural transformation between polynomial
functor expressions.
\end{abstract}

\maketitle

\section{Introduction}\label{S:one}
Weihrauch reducibility allows one to quantify how undecidable certain problems
are in computable analysis, much like Turing reducibility may be used to
quantify how uncomputable certain functions are.
Weihrauch reductions relate multivalued functions
on Baire space that we call \emph{Weihrauch problems}.
The intuition is that the domain of a problem
contains instances that are mapped to sets of valid solutions. Unlike
Turing reducibility, for which most degrees are either somewhat
artificially built or iterations of the Turing jump, there are a great number
of concrete Weihrauch problems of interest inspired by Reverse Mathematics.
An example of such a concrete problem is ``given an infinite binary tree $T$,
output an infinite path in $T$'', which corresponds to \emph{Weak K\H{o}nig's Lemma}.
For two Weihrauch problems $P$ and $Q$, a Weihrauch reduction from $P$ to $Q$
models the notion of making exactly one oracle call to $Q$ to solve $P$ in a type-2
setting.
Here ``type-2'' means that we are working computability on infinite
bit strings: a type-2 Turing machine is a deterministic Turing machine which has
a read-only input tape containing initialized with an infinite bit string encoding the
input, a number of working tapes, and a write-and-move-right-only output tape
whose limit state should be the desired output.
More abstractly, a partial function $f : \Cantor \partto \Cantor$ is type-2
computable if and only if there is a computable map $\hat{f} : \{0,1\}^* \partto \{0,1\}^*$
(in the more traditional sense) which is monotone for the prefix ordering on strings and
such that, for every $p \in \dom(f)$ and $n \in \Nat$, $f(p)(n) = \hat{f}(\sigma)(n)$ for a long
enough finite prefix $\sigma$ of $p$ (see e.g.~\cite{weihrauchbook} for a textbook introduction).
A Weihrauch reduction then consists of a forward function $\varphi : \dom(P) \to \dom(Q)$
corresponding to calling the oracle $Q$ after receiving a $P$-question, and a backward function $\psi$
corresponding to the computation of the answer to the original $P$-question, with $\varphi$ and
$\psi$ type-2 computable.
Equivalence classes of problems under reduction are called Weihrauch degrees.

A common cause of complexity for Weihrauch problems is that they
are discontinuous, in the sense that there is no continuous map $f : \dom(P) \to \bigcup_{u \in \dom(P)} P(u)$
with $f(u) \in P(u)$ for all $u \in \dom(P)$. Another one is that we may
have some $u \in \dom(P)$ such that $P(u)$ does not contain any solution $x$
Turing-reducible to $u$. It is possible that $\dom(P)$ does not
contain any computable point, although this does not tend to happen often with
``natural'' problems. We call \emph{pointed} those Weihrauch problems
$P$ which contain a computable point (accordingly,
we call a degree which contain a pointed problem pointed as well).

There are a number of natural operators on Weihrauch problems which
are used in the study of concrete problems. They also equip the Weihrauch degrees
with a rich algebraic structure that is worth studying. In this paper,
we are going to consider the lattice operations $\sqcap$ and $\sqcup$, the operator $\star$ that \emph{composes}
problems, its iterated variant $(-)^\diamond$
and relevant constants.
Given problems $P$ and $Q$, $P \sqcup Q$ and $P \sqcap Q$ both correspond to solving
either $P$ or $Q$, but with a different policy on the inputs: for $\sqcup$, the
input specifies which problem should be solved, while for $\sqcap$, inputs for
both problems should be provided and there is no guarantee on which problem is solved
(although the output of $P \sqcap Q$ does indicate with a tag whether it is $P$ or $Q$ which
is solved). The neutral element for $\sqcup$, which we call $0$, is the problem with an empty set of input.
On the other hand, we write $\top$ for the unit for $\sqcap$, which is the problem
with one trivial computable input, but no output whatsoever; this is usually not
considered a legal Weihrauch problem. In this paper, we will want to work with $\top$
sometimes and thus move to a slight extension to Weihrauch problems that we dub
\emph{partial Weihrauch problems}.

$P \star Q$ is constructed so that a Weihrauch reduction to it is morally
able to make an oracle call to $Q$ and then an oracle call to $P$ (and \emph{must} perform
these two oracle calls, in order). In more details, an input for $P\star Q$
consists of a question $q$ for $Q$ and a code for a map $f$ taking an answer to $q$
to a question for $P$. The answer to the question $(q, f)$ to $P \star Q$ is then
a pair $(a, b)$ where $a$ answers $q$ and $b$ answers $f(a)$.
There is a neutral element $1$ for $\star$, which is the problem with a trivial
computable input-output pair.
The iteration $(-)^\diamond$ then has a couple of descriptions. It was first
introduced by way of type-2 oracle Turing machines in~\cite{topol-comput-neumann-pauly},
the idea being that an input to $P^\diamond$ should be a 
machine\footnote{Including an infinite advice string.}
that always produce an output after finitely many oracle calls to $P$ along
any run\footnote{
Note that the number of calls to $P$ across all possible runs has no reason to be bounded if $P$ is non-deterministic.},
and the output should be the output of that machine.
Alternatively, one can note that $(-)^\diamond$ is a least fixpoint of $X \mapsto 1 + X \star P$,
and use that as the basis for a more convenient alternative definition we will work with here~\cite[Definition 3]{westrick2020}.

Any expression built of variables taken from an alphabet $\Sigma$,
the constants $0,1,\top$ and the operators $\sqcap$, $\sqcup$, $\star$,
$(-)^\diamond$ can be regarded as a regular expression with intersections
via a straightforward analogy: the lattice operations correspond to union
and intersection ($0$ and $\top$ to the empty and the full languages), composition to concatenation ($1$ to $\{\varepsilon\}$) and $(-)^\diamond$ to the Kleene
star. Because of this correspondence, we call $\RegE{\Sigma}$ the set of
such expressions in the sequel. The main question to be addressed in this paper
is the following.

\begin{question}[{towards~\cite[Open Question 2]{pauly2020update}}]
Can we characterize or decide the universal validity of $e \le f$ in the (partial) Weihrauch
degrees for $e, f \in \RegE{\Sigma}$?
\end{question}

The main answer we offer is that the universal validity of an inequality $e \le f$
can be characterized using a B\"uchi game $\simGame(\position{\emptyset}{\{e\}}{f})$
on a finite graph. The intuition is that $e$ and $f$ can be
regarded as alternating automata over a common alphabet and that the
game $\simGame(\position{\emptyset}{\{e\}}{f})$ captures a notion of step-by-step simulation of
$e$ by $f$ (on the other hand, the languages recognized by the expressions $e$ and $f$
are not so important). While we postpone the technical definition of $\simGame$ to later,
we can look at some examples of valid reductions, and intuit how simulations on
the corresponding automata may be induced by such reductions.

\begin{exa}
  \label{ex:intro-init}
Consider the two expressions $(b \star a) \sqcup (c \star a)$ and $(b \sqcup c) \star a$.
They correspond to the non-deterministic automata pictured below\footnote{Where
expressions are processed right-to-left. This matches the convention imposed by composition,
which is opposite to the usual convention for reading words in automata theory.
On the other hand, a good intuition is that words along paths correspond to
possible sequences of oracle calls.}:

  \begin{center}
\begin{tabular}{c !\qquad c}
  \begin{tikzpicture}[->,>={Stealth[round]},shorten >=1pt,
                    node distance=1.5cm,semithick,
                    inner sep=2pt,bend angle=45]
  \tikzset{every state/.style={minimum size=20pt}}
  \tikzset{initial text={}}
  \node[initial,state] (top) {};
  \node[state]         (l) [above right=of top] {};
  \node[state]         (ld) [right=of l] {};
  \node[state,accepting right]         (ldd) [right=of ld] {};
  \node[state]         (r) [below right=of top] {};
  \node[state]         (rd) [right=of r] {};
  \node[state,accepting right]         (rdd) [right=of rd] {};

  \path [every node/.style={fill=\backgroundcolor,circle}]
        (top) edge (l)
        (l) edge  node {$a$} (ld)
        (ld) edge node {$b$} (ldd)
        (top) edge (r)
         (r)  edge   node {$a$} (rd)
         (rd)   edge node {$c$} (rdd);
\end{tikzpicture}
  &
\begin{tikzpicture}[->,>={Stealth[round]},shorten >=1pt,
                    node distance=1.5cm,semithick,
                    inner sep=2pt,bend angle=45]
  \tikzset{every state/.style={minimum size=20pt}}
  \tikzset{initial text={}}
  \node[initial,state] (top) {};
  \node[state]         (a) [right=of top] {};
  \node[state, accepting right]         (ab) [above right=of a] {};
  \node[state,accepting right]         (ac) [below right=of a] {};

  \path [every node/.style={fill=\backgroundcolor,circle}]
        (top) edge node {$a$} (a)
        (a) edge node {$b$} (ab)
         (a)   edge node {$c$} (ac);
\end{tikzpicture}
\\
  $(b \star a) \sqcup (c \star a)$
  &
  $(b \sqcup c) \star a$
\end{tabular}
  \end{center}

The language, which correspond to traces of oracle calls to the problems,
recognized by these two machines is the same ($\{ab,ac\}$). But while there
is a reduction $(b \sqcup c) \star a \ge (b \star a) \sqcup (c \star a)$,
the converse is not possible. Accordingly, the reduction corresponds to
a simulation of NFAs in the usual sense\footnote{See e.g. ~\cite[Definition 7.47]{baierkatoen} for details.}, and the lack of reduction the
other way around can be witnessed by a witness that there can be no
simulations.
\end{exa}

\begin{exa}
  \label{ex:introJunk}
Before moving to examples with more operators, let us observe
that even just with $\star$, the notion of simulation needs to be
relaxed somewhat as we have $b \star a \le a \star b \star a$: the corresponding
generic reduction simply asks the same $a$-question twice in the oracle call, and
then discards the second answer.
In the corresponding game, this will correspond to the ability to skip
a step in the simulator without having to take a matching transition
in the simulated machine.

On the other hand, it can be the case that $a \star b \not\le a \star b \star a$.
This is because it could be
the case that $a$ is \emph{not} pointed, and thus there could
be a question to $a \star b$ which cannot compute
any questions to $a$.
\end{exa}

\begin{exa}
  In more complex situations, typically involving \emph{alternating}
automata, it might be the case that a machine is only able
to simulate another one by making several attempts in parallel, which is
a legal behaviour in the context of Weihrauch reducibility.
Consider the following examples of expressions and corresponding machines,
where we use square nodes to denote states where the non-determinism should
be resolved adversarially.

  \begin{center}
\begin{tabular}{c !\qquad c}
\begin{tikzpicture}[->,>={Stealth[round]},shorten >=1pt,
                    node distance=1.5cm,semithick,
                    inner sep=2pt,bend angle=45]
  \tikzset{every state/.style={minimum size=20pt}}
  \tikzset{initial text={}}
  \tikzset{opp state/.style={draw,square,minimum size=20pt}}
  \node[initial,opp state] (top) {};
  \node[state]         (ba) [above right=of top] {};
  \node[state]         (ca) [below right=of top] {};
  \node[state,accepting right]         (bab) [right=of ba] {};
  \node[state,accepting right]         (cac) [right=of ca] {};

  \path [every node/.style={fill=\backgroundcolor,circle}]
        (top) edge node {$a$} (ba)
        (top) edge node {$a$} (ca)
        (ba) edge node {$b$} (bab)
        (ca) edge node {$c$} (cac);
\end{tikzpicture}
  &
\begin{tikzpicture}[->,>={Stealth[round]},shorten >=1pt,
                    node distance=1.5cm,semithick,
                    inner sep=2pt,bend angle=45]
  \tikzset{every state/.style={minimum size=20pt}}
  \tikzset{initial text={}}
  \tikzset{opp state/.style={draw,square,minimum size=20pt}}
  \node[initial, state] (top) {};
  \node[opp state]         (a) [right=of top] {};
  \node[state, accepting right]         (ab) [above right=of a] {};
  \node[opp state]         (aa) [below right=of a] {};
  \node[state,accepting right]         (aac) [below right=of aa] {};
  \node[state,accepting right]         (aab) [above right=of aa] {};

  \path [every node/.style={fill=\backgroundcolor,circle}]
        (top) edge node {$a$} (a)
        (a) edge node {$b$} (ab)
        (a) edge node {$a$} (aa)
        (aa) edge node {$c$} (aac)
        (aa) edge node {$b$} (aab);
\end{tikzpicture}
\\
  $(b \star a) \sqcap (c \star a)$
  &
  $(((c \sqcap b) \star a) \sqcap b) \star a$
\end{tabular}
  \end{center}
Here since all branching states are squares, there is no genuine alternation.
Seen as string-reading automata, both of these machines recognize the empty language, but
we are again in the situation where we have a Weihrauch reduction in only one direction.
A simulation witnessing the reduction
  $(b \star a) \sqcap (c \star a) \le (((c \sqcap b) \star a) \sqcap b) \star a$ must
first attempt to use the right-hand side to simulate $b \star a$. But then, if
the non-determinism in the simulator exposes $(c \sqcap b) \star a$, an attempt
at simulating $c \star a$ should also be started in parallel. The thread that
can be simulated to completion will then depend on the last non-deterministic step
of $c \sqcap b$.
\end{exa}

\begin{exa}
\label{ex:intro-3}
Let us end with an example of a reduction between expressions that give
  rise to genuinely alternating machines with runs of arbitrary (finite) length.
  \begin{center}
\begin{tabular}{c !\qquad c}
\begin{tikzpicture}[->,>={Stealth[round]},shorten >=1pt,
                    node distance=1.5cm,semithick,
                    inner sep=2pt,bend angle=45]
  \tikzset{every state/.style={minimum size=20pt}}
  \tikzset{initial text={}}
  \tikzset{opp state/.style={draw,square,minimum size=20pt}}
  \node[initial,opp state] (top) {};
  \node[state]         (u) [above right=of top] {};
  \node[state]         (d) [below right=of top] {};
  \node[state,accepting right]         (ub) [right=of u] {};
  \node[state,accepting right]         (dc) [right=of d] {};

  \path [every node/.style={fill=\backgroundcolor,circle}]
        (top) edge (u)
        (top) edge (d)
        (u) edge node {$b$} (ub)
        (d) edge node {$c$} (dc)
        (u) edge [out=55,in=125,looseness=7] node {$a$} (u)
        (d) edge [out=-55,in=-125,looseness=7] node {$a$} (d);
\end{tikzpicture}
  &
\begin{tikzpicture}[->,>={Stealth[round]},shorten >=1pt,
                    node distance=1.5cm,semithick,
                    inner sep=2pt,bend angle=45]
  \tikzset{every state/.style={minimum size=20pt}}
  \tikzset{initial text={}}
  \tikzset{opp state/.style={draw,square,minimum size=20pt}}
  \node[initial,state] (top) {};
  \node[opp state]         (a) [right=of top] {};
  \node[state, accepting right]         (ab) [above right=of a] {};
  \node[state,accepting right]         (ac) [below right=of a] {};

  \path [every node/.style={fill=\backgroundcolor,circle}]
        (top) edge [out=55,in=125,looseness=7] node {$a$} (top)
        (top) edge (a)
        (a) edge node {$b$} (ab)
        (a) edge node {$c$} (ac);
\end{tikzpicture}
\\
  $(b \star a^\diamond) \sqcap (c \star a^\diamond)$
  &
  $(b \sqcap c) \star a^\diamond$
\end{tabular}
  \end{center}
There are reductions both ways. The harder direction is establishing that
$(b \star a^\diamond) \sqcap (c \star a^\diamond) \le
(b \sqcap c) \star a^\diamond$. The forward pass of the reduction is given
as input a tuple $(i_0, f_0, i_1, f_1)$ where $i_0$ and $i_1$ are input to $a^\diamond$
and $f_0$ is (the code of) a function that turns an $a^\diamond$-answer to $i_0$
into a question to $b$ (and similarly for $f_1$, $i_1$ and $c$).
The oracle call that makes the reduction works in two steps:
first $i_0$ and $i_1$ are combined into a single question to $a^\diamond$,
which is morally\footnote{Up to re-encoding, which can be thought of as induced
by the equivalence $a^\diamond \star a^\diamond \equiv a^\diamond$.} answered by $(x_0, x_1)$ where $x_0$ answers $i_0$ and $x_1$ answers $i_1$.
Then the second part encodes the map $(x_0, x_1) \mapsto (f_0(x_0), f_1(x_1))$.
The answer to the oracle call is then some $(x_0, x_1, k, y)$ where $k \in 2$
is the bit witnessing whether $b$ or $c$ was answered, and where accordingly
$y$ is answering $f_k(i_k)$.
Then the original question is answered by $(k, x_k, y)$.

Turning to the automata, we again have a phenomenon where the
correct notion of simulation must match several state to the left-hand side;
here for instance, this means that the square state on the right-hand-side must
cover the exits of all $a$-looping states of the left-hand side.
\end{exa}

Thanks to the characterization via simulation games, one can
translate a pair of expressions $(e,f) \in \RegE{\Sigma}$ to a game
$\simGame(\position{\emptyset}{\{e\}}{f})$ on a finite arena where Duplicator wins if and only we always
have $e \le f$.
This can be turned into a decision procedure, as deciding who is the winner in
B\"uchi games on finite graphs is computable in polynomial time.
But in general, our simulation games have arenas of
size exponential in $e$. This is because the notion of simulation that
captures Weihrauch reducibility allows $f$ to make several attempts in
parallel to simulate $e$. Taking this into account, we are able
to offer some complexity bounds for deciding whether $e \le f$ holds in
natural substructures of $(\mathfrak{W}, \sqcap, \sqcup, \star, 0, 1, (-)^\diamond)$,
where $\mathfrak{W}$ are Weihrauch degrees. The substructures we consider are
obtained by either dropping operators from the signature, or restricting the
carrier to pointed degrees $\mathfrak{W}_\bullet$. See~\Cref{fig:complexity-summary}
for a summary of these results.

This game correspondence also leads us to a complete axiomatization of inequalities
in partial Weihrauch degrees
equipped with the full set of constants
$0,1, \top$ and connectives $(-)^\diamond, \sqcup, \sqcap$.
The axiomatization, given in
\Cref{fig:axioms}, is reminiscent of right-handed Kleene algebras~\cite{LKA-KozenSilva} save for
the following aspects:
\begin{itemize}
\item $\star$ does not left-distribute over $\sqcup$. This is
because we do not have $(P \sqcup Q) \star R \leqW (P \star R) \sqcup (Q \star R)$
in general: the second component $f$ of a question $\tuple{w, f} \in \dom((P \sqcup Q) \star R)$
might decide whether a question should be asked to $P$ or $Q$ depending on the $R$-answer to $w$.
\item we add axioms dealing with distributive meets $\sqcap$, which
also satisfy the following half-distributivity axiom.
\[ (a \star b) \sqcap c \le (a  \sqcap c) \star b \]
This does not admit a very natural automata-theoretic interpretation, but is
rather linked to the \emph{strength} of polynomial functors induced by problems.
In line with this, we also generalize the induction principle for $(-)^\diamond$ to the following.
\[a \sqcap (b \star c) \le b ~~ \Rightarrow ~~ a \sqcap (b \star c^\diamond) \le b\]
\end{itemize}

For these reasons, we dub $\SRKAM$ the theory corresponding to the axioms of~\Cref{fig:axioms},
an acronym standing for \emph{right-skewed Kleene algebras with strong meets}.

\afterpage{

\begin{figure}[h]
\begin{center}
\begin{tabular}{|l||c|c|}
\hline
\diagbox{operators}{carrier} & Weihrauch degrees $\Wei$ & pointed Weihrauch degrees $\ptWei$ \\
\hhline{|=||=|=|}
$1, \star, (-)^\diamond, \sqcup$
& $\coNP$-hard & $\Ptime$ \\
\hline
$1, \star, \sqcup, \sqcap$
& \multicolumn{2}{c|}{$\Pspace$-complete} \\
\hline
$1, \star, (-)^\diamond, \sqcup, \sqcap$
& \multicolumn{2}{c|}{$\Exptime$} \\
\hline
\end{tabular}
\end{center}
\caption{Some complexity bounds for deciding ``Is $e \le f$ valid?'' in substructures of Weihrauch
degrees with the operators we discuss.}
\label{fig:complexity-summary}
\end{figure}

\begin{figure}[h]
\begin{center}
\[\begin{array}{c!\qquad r}
\multicolumn{2}{c}{\text{Distributive lattice with $0$ and $\top$}}\\
a \le a & \text{\footnotesize reflexivity}\\
a \le b \; \wedge \; b \le c ~~ \Rightarrow ~~ a \le c & \text{\footnotesize transitivity}\\
0 \le a \qquad a \le \top & \text{\footnotesize bottom and top elements}
\\
a \le a \sqcup b \qquad b \le a \sqcup b  & \text{\footnotesize $\sqcup$ is an upper bound} \\
b \le a \; \wedge \; c \le a ~~ \Rightarrow ~~ b \sqcup c \le a & \text{\footnotesize$\sqcup$ is the least upper bound} \\
a \sqcap b \le a \qquad a \sqcap b \le b  & \text{\footnotesize$\sqcap$ is a lower bound} \\
a \le b \; \wedge \; a \le c ~~ \Rightarrow ~~ a \le b \sqcap c & \text{\footnotesize$\sqcup$ is the least lower bound} \\
a \sqcap (b \sqcup c) \le (a \sqcap b) \sqcup (a \sqcap c) &\text{\footnotesize distributivity of $\sqcap$ over $\sqcup$}
\\\\
\multicolumn{2}{c}{\text{Ordered monoid axioms + distributivity}}\\
  a \star 1 = a = 1 \star a \qquad a \star (b \star c) = (a \star b) \star c
  & \text{\footnotesize monoid axioms}\\
  a \le a' \wedge b \le b' \Rightarrow a \star b \le a' \star b' & \text{\footnotesize monotonicity of $\star$}\\
 a \star 0 \le 0 \qquad \top \le a \star \top & \text{\footnotesize $0$ and $\top$ $\star$-absorptions}\\
  a \star (b \sqcup c)
\le  (a \star b) \sqcup (a \star c)
&\text{\footnotesize left-distributivity of $\star$ over $\sqcup$}\\
(a \star b) \sqcap (a \star c) \le a \star (b \sqcap c) &\text{\footnotesize left-distributivity of $\star$ over $\sqcap$}\\
(a \star b) \sqcap c \le (a  \sqcap c) \star b & \text{\footnotesize left-half-distributivity
of $\sqcap$ over $\star$}\\
\\\\
\multicolumn{2}{c}{\text{Iteration of composition as a least fixpoint}}\\
1 \le a^\diamond \qquad a^\diamond \star a \le a^\diamond & \text{\footnotesize fixpoint unfolding} \\
a \sqcap (b \star c) \le b ~~ \Rightarrow ~~ a \sqcap (b \star c^\diamond) \le b & \text{\footnotesize parameterized $\diamond$-induction}\\
\end{array}
\]
\end{center}
\caption{Axioms of right-skewed Kleene algebras with distributive meets ($\SRKAM$), where $=$ stands for
    having inequalities both ways. If we want to regard it as an
  equational theory, we can take $a \le b$ to be a notation for $a = a \sqcap b$ and throw in the
axioms that make $\sqcap$ associative, commutative and idempotent.}
\label{fig:axioms}
\end{figure}
}

All in all, our main results are essentially captured in (variations of) the following theorem and
\Cref{fig:complexity-summary}.

\begin{restatable}{thm}{mainloop}
\label{thm:mainloop}
The following are equivalent for any $e, f \in \RegE{\Sigma}$:
\begin{enumerate}
\item
\label{enumitem:loop-proof}
$e \le f$ is derivable from the axioms of $\SRKAM$. \hfill {\footnotesize (see~\Cref{fig:axioms})}
\item
\label{enumitem:loop-valid}
$e \le f$ is valid in the partial Weihrauch degrees
\item
\label{enumitem:loop-game}
Duplicator has a winning strategy in the simulation game $\simGame(\position{\emptyset}{\{e\}}{f})$\\ \phantom{a} \hfill {\footnotesize(see~\Cref{def:simgame})}
\end{enumerate}
\end{restatable}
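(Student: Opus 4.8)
The plan is to establish the cycle \eqref{enumitem:loop-proof} $\Rightarrow$ \eqref{enumitem:loop-valid} $\Rightarrow$ \eqref{enumitem:loop-game} $\Rightarrow$ \eqref{enumitem:loop-proof}. This route is economical and non-circular, and it never requires synthesizing a Weihrauch reduction directly from a winning strategy by hand, since the implication ``Duplicator wins $\Rightarrow$ $e \leqW f$'' comes for free by composing \eqref{enumitem:loop-game} $\Rightarrow$ \eqref{enumitem:loop-proof} with soundness.

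\emph{Soundness, \eqref{enumitem:loop-proof} $\Rightarrow$ \eqref{enumitem:loop-valid}.} We check axiom by axiom that every axiom of $\SRKAM$ (\Cref{fig:axioms}) holds in the extended Weihrauch degrees and that the inference rules with hypotheses preserve validity (recalling that $a \le b$ abbreviates $a = a \sqcap b$, so substitutivity is part of the ambient equational logic). The distributive-lattice and ordered-monoid blocks are routine once the definition of $\star$ is unwound, and the absorption laws $a \star 0 \le 0$ and $\top \le a \star \top$ are exactly where the extended setting (a question may have no valid answer) is genuinely used. The load-bearing checks are the half-distributivity law $(a \star b) \sqcap c \le (a \sqcap c) \star b$ and the parameterized $\diamond$-induction rule: for the latter we use the characterization of $(-)^\diamond$ as (equivalent to) the supremum of the finite iterates $a^{\star n}$, reduce $a \sqcap (b \star c^\diamond) \le b$ to the corresponding statement for each $n$ by an induction that feeds the hypothesis back in, and then pass to the limit, using that a question for $b \star c^\diamond$ comes with a bound on how many $c$-rounds it triggers.

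\emph{Validity forces a Duplicator win, \eqref{enumitem:loop-valid} $\Rightarrow$ \eqref{enumitem:loop-game}.} We argue contrapositively, using that B\"uchi games on finite graphs are determined with finite-memory (indeed positional) winning strategies: if Duplicator has no winning strategy in $\simGame(\position{\emptyset}{\{e\}}{f})$, then Spoiler, the opponent, has a positional winning one, say $\tau$. From $\tau$ we manufacture an interpretation of the variables by extended Weihrauch problems for which $e \nleqW f$. The guiding idea is to choose the interpretations ``independent and hard enough'' that, by continuity of the type-2 functionals realizing any putative reduction, such a reduction is forced to track the automaton structure of $e$ and $f$ step by step --- that is, to play $\simGame$ in Duplicator's role --- whereupon $\tau$ defeats it by steering every play into a non-accepting run (equivalently: into producing no output, or into making infinitely many oracle calls). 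This is exactly the point where the naive ``language'' model, which already fails to tell apart the expressions in the introductory examples, must be replaced by a model that is in essence the game arena itself.

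\emph{Completeness, \eqref{enumitem:loop-game} $\Rightarrow$ \eqref{enumitem:loop-proof} --- the main obstacle.} From a positional winning strategy for Duplicator we extract a derivation of $e \le f$ in $\SRKAM$. As in Kleene-algebra completeness proofs, the strategy together with Duplicator's winning region determines a finite family of inequations $e_i \le f_i$, with $e_i$ ranging over the ``residuals'' (Brzozowski-style derivatives) of $e$ visited along plays and $f_i$ over those of $f$, closed under the game moves; we derive the whole family by a simultaneous induction and read off $e \le f$ from its initial member. The $\sqcup$-choices on the right and $\sqcap$-choices on the left are discharged via the right-distributivity laws for $\star$ together with the half-distributivity law $(a \star b) \sqcap c \le (a \sqcap c) \star b$, which is precisely what lets a thread started late still be absorbed by the context accumulated on the $f$-side (the behaviour of the second introductory example); cycles created by $(-)^\diamond$ are closed by the \emph{parameterized} $\diamond$-induction rule, whose parameter is exactly that $f$-side context. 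The crux --- which I expect to be the main obstacle --- is to line up the B\"uchi acceptance condition of the game with a well-founded descent along which this simultaneous induction is licensed, while respecting that $\sqcup$ does \emph{not} left-distribute over $\star$: one cannot first normalize $e$ and $f$ into a convenient shape and must instead peel off one oracle call at a time in lockstep with the strategy. The exponential blow-up of $\simGame(\position{\emptyset}{\{e\}}{f})$ relative to $e$ is harmless here (it only matters for the complexity bounds of \Cref{fig:complexity-summary}), but it is what rules out a cruder induction scheme and motivates the shape of \Cref{fig:axioms}; the counterexample construction inside \eqref{enumitem:loop-valid} $\Rightarrow$ \eqref{enumitem:loop-game} is the other substantial ingredient.
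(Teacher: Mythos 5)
Your overall route is the same as the paper's: soundness axiom-by-axiom with parameterized $\diamond$-induction as the only hard case, the contrapositive of \eqref{enumitem:loop-valid}~$\Rightarrow$~\eqref{enumitem:loop-game} via positional determinacy and a ``hard, independent'' interpretation extracted from a Spoiler strategy (this is exactly the paper's generic interpretation $G$ of \Cref{def:generic-interp} and \Cref{lem:spo-strat-to-trap}), and completeness by turning a Duplicator strategy into a derivation. Your economy of obtaining ``Duplicator wins $\Rightarrow$ valid'' by composing \eqref{enumitem:loop-game}~$\Rightarrow$~\eqref{enumitem:loop-proof} with soundness is legitimate (the paper additionally proves that direction directly, in \Cref{lem:dup-strat-to-red}, but the cycle suffices). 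However, two steps have genuine gaps. First, your justification of parameterized $\diamond$-induction is wrong as stated: an instance of $b \star c^\diamond$ does \emph{not} come with a bound on the number of $c$-rounds (the domain of $c^\diamond$ is a least fixpoint whose elements are well-founded trees of possibly infinite ordinal rank, since solution sets are infinite), and identifying $c^\diamond$ with $\sup_n c^{\star n}$ is precisely the ``number of calls declared in advance'' reading that the paper explicitly distinguishes from $(-)^\diamond$ (fixpoint of $X \mapsto 1 \sqcup (X \star P)$ versus $X \mapsto 1 \sqcup (P \star X)$); note also that since $\sqcup$ does not left-distribute over $\star$, even the Kleene approximants of the correct fixpoint are not of the form $1 \sqcup c \sqcup c^{\star 2} \sqcup \cdots$. ``Prove it for each $n$ and pass to the limit'' hides exactly the uniformity problem: the forward functional is never told $n$. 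The paper's \Cref{lem:diamond-fp} instead defines the pair of reduction functionals by a single recursive ($\lambda$-calculus fixpoint) definition and verifies correctness by induction on the least-fixpoint structure of $\dom(Q^\diamond)$; some such argument is required.

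Second, the completeness direction is a plan rather than a proof: you explicitly leave the crux (a well-founded descent compatible with the B\"uchi condition and with the failure of left-distributivity) as ``the main obstacle''. This is where the bulk of the paper's work lies: it introduces $S$-places and covers (\Cref{def:place}, \Cref{def:cover}), a well-founded order on places combining growth of the alphabet $\Gamma$, $\le_\Gamma$-classes of meets of subobjects of $f$, and a multiset ordering on the left-hand sets, proves that every place has a resolving cover (\Cref{thm:cover}) by a case analysis in which the $\diamond$-case is handled by parameterized $\diamond$-induction with the accumulated $f$-context as parameter, and needs additional bookkeeping your sketch does not anticipate: the relation $\le_\Gamma$ to absorb the first component of positions (with $\top \star b$ factors), and the Galois connection $(1 \sqcap -, \top \star -)$ of \Cref{lem:topstar-galois} to finish off resolved positions. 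Similarly, in \eqref{enumitem:loop-valid}~$\Rightarrow$~\eqref{enumitem:loop-game} the phrase ``independent and hard enough'' stands in for a nontrivial construction (a strong Turing antichain $d(e,n)$, the execution-strategy problems $A(e,k)$, and a count of \textsc{fill} moves bounding the resources Duplicator can consume). So the theorem's statement and decomposition are right, but as written the soundness of $\diamond$-induction is incorrectly argued and the two substantial directions are not yet proofs.
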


We define Weihrauch reducibility and the operators under consideration
in~\Cref{sec:operators}. We remark that the axioms are sound, that is
\ref{enumitem:loop-proof} $\Rightarrow$ \ref{enumitem:loop-valid}. 
In~\Cref{sec:game}, we prove \ref{enumitem:loop-valid} $\Leftrightarrow$ \ref{enumitem:loop-game}
and discuss analogous statements for ordinary Weihrauch degrees and pointed degrees.
We then prove
\ref{enumitem:loop-game} $\Rightarrow$ \ref{enumitem:loop-proof} in \Cref{sec:completeness} and thus completeness of our axiomatization
for partial Weihrauch degrees.
We also discuss how one can adapt those proofs for pointed degrees
and briefly comment on the use of $\sqcap$ in the signature, but leave a
finding a nice complete axiomatization of $(\Wei, \sqcap, \sqcup, \star, 0, 1, (-)^\diamond)$ as an open problem. 
We finally establish the results of \Cref{fig:complexity-summary} in \Cref{sec:complexity}.

\subsection*{Related work}

Most of the energy of the community in Weihrauch complexity 
is arguably focused on studying specific degrees or properties of discontinuous
functions. In contrast to, say, Turing degrees, many Weihrauch problems model more
popular problems in analysis. We refer to~\cite{survey-brattka-gherardi-pauly} for an overview of Weihrauch complexity.

Other works touching on the theory of the Weihrauch lattice beyond a listing of
useful properties of operators include~\cite{lmpsv} and~\cite{NPP24}; the former
studies purely order-theoretic properties of the lattice, while the latter is
a more direct predecessor of this work which considers the parallel product instead
of $\star$.

It is striking that the soundness of axioms proposed here and in~\cite{NPP24} does
not follow from any sort of refined computability-theoretic considerations.
This is because the notion of Weihrauch reductions and the algebra of (partial) Weihrauch
problems actually correspond to well-behaved categories of \emph{containers}
within the world of type-2 computability~\cite{PricePradic25} (i.e. over regular projective
represented spaces and computable maps).
All the aforementioned axioms hold generically in categories of containers
over categories with sufficient structure. So the main challenge here is in
establishing completeness, which happens to hold because the category of regular
projective spaces fails to be well-pointed in a very strong way. In this paper,
this manifests whenever we ask for certain antichains in the Turing degrees.

So while we nominally only discuss partial Weihrauch degrees in this paper (for a
combination of historical and legibility reasons), we may confidently
conjecture that the proofs we offer work with minute modifications to completely axiomatize
the existence of morphisms between containers, or equivalently, strong natural
transformation between polynomial functors over extensive locally cartesian closed categories
with $\cW$-types.

We can also
confidently conjecture that the proof we offer work as-is for the \emph{extended Weihrauch
problems} introduced in~\cite{Bauer22}. Extended Weihrauch problems generalize
the partial Weihrauch problems by introducing a noncomputable part in forward reductions.
These extended degrees and their type-1 counterparts have received sustained recent
attention, especially those degrees closed under iterations~\cite{kihara2022rethinking,kihara2023lawvere,maschio2025,abou2026order, kihara2026katvetov1}.
Extended Weihrauch degrees also admit a presentation in terms of containers over regular projective
multirepresented spaces~\cite{PP26}.
But, contrary to the partial Weihrauch degrees, they may also be presented as internally monotone operators $j : \Omega \to \Omega$
in the Kleene-Vesley topos
(see~\cite[Theorem 3.1]{kihara2023lawvere}\footnote{This can be adapted more
generically to prove an equivalence of propositional containers in an arbitrary topos
with internally monotone operators in the style of~\cite{AhmanBauer24,AhmanBauer26}.}).
We may thus also confidently conjecture that, more generally, our axioms are sound when
interpreted in the internally monotone operators $j : \Omega \to \Omega$ of
any elementary topos $\cE$ (where the order is taken pointwise and $j \star j' = j' \circ j$),
and that it is complete for single inequalities whenever there are strong antichains
of arbitrary finite sizes in the lattice of subterminals of $\cE$. Note that this does
\emph{not} hold in well-pointed toposes, such as the effective topos used to
model the usual (type-1) notion of computability on finite objects.
The more visible difference is that the axiom $1 \le a$ always hold; even if it were
the only difference, then a proof of completeness (for games) would presumably still require
very different ideas than those presented here.

Finally, another aspect of our work here is that our axiomatizations and
games seem to capture a notion of simulation between alternating automata, modulo
the removal of the left-half-distributivity axiom in~\Cref{fig:axioms} and $\textsc{junk}$
moves in~\Cref{def:simgame}. Removing these two axioms seem to capture
the usual notion of simulation between non-deterministic automata.
This hints at potential for further application of $\SRKAM$ and our techniques
in mathematics and computer science. There is in particular a rich literature in concurrency
theory concerned with axiomatizing bisimulation and simulations
in various process calculi, sometimes described by regular
expressions~\cite{AcetoFGI14, BaetenCG07, GrabmayerF20, Milner84, SilvaBR10}.
We are not aware of close variations of the simulation game we propose nor
$\SRKAM$ appearing in the literature before, so we hope there might be further
connections to draw, especially with extensions of the present paper we suggest in the conclusion.

\subsection*{Notations and conventions}
We may use $\Nat$ or $\omega$ for the set of natural numbers. For any sort of
fixed set which computably embeds in $\Baire$ in a reasonable way (such as $\Nat$
or $\RegE{\Sigma}$), we write $\code{-}$ for that embedding. We also
assume that for terms of $\RegE{\Sigma}$, the coding is such that there is a fixed computable total order that agrees with the subterm ordering.
Given a finite sequence $p$ over $A$ and some $m \in A$, we write $p \cons m$
for the sequence obtained by appending $m$ at the end of $p$. If $p$ is
non empty, we write $\last(p)$ for its last element.

As the sequential composition of Weihrauch problem involves encoding of functions,
we will find it at times convenient to sometimes adopt notations that come from
the $\lambda$-calculus, that may elegantly be encoded in Turing machines
using the notion of \emph{partial combinatory algebra} (PCA). In short, a PCA
is a set $A$ together with a partial application operation $\cdot : A^2 \partto A$ and
combinators that allow to encode the full untyped $\lambda$-calculus. We refer
to~\cite{VanOosten} for formal definitions and an introduction. The broad idea
is that elements $e \in A$ are codes for the partial functions
$x \mapsto e \cdot x$ over $A$. In this paper we will consider the PCA
$\mathcal{K}_2 = (\Baire, \cdot)$ known as \emph{Kleene's second algebra}.
One idea to define the application is then a code $e  \in \Baire$ encodes the number of a
type 2 Turing machine \emph{and} an advice string $\alpha \in \Cantor$ (and that
$\Baire$ can certainly be encoded in $\Cantor$), from which one can cook up a partial
computable function $\Baire \partto \Baire$ using a universal Turing machine.
There is also another equivalent strategy that views labelled trees as codes
for continuous functions; see e.g.~\cite[Section 1.4.3]{VanOosten} for details,
and~\cite[Appendix A]{golov2023embeddings} for a proof of equivalence.
As usual, we call $\mathcal{K}_2^\mathrm{rec}$ the PCA whose carrier consists of computable
elements of $\Baire$ and with application defined in the same way as $\mathcal{K}_2$.
It is a sub-PCA, i.e., the interpretation of $\lambda$-terms (with parameters in $\mathcal{K}_2^\mathrm{rec}$)
is the same in $\mathcal{K}_2$ and $\mathcal{K}_2^\mathrm{rec}$.

With that in mind, we will use the application operation $\cdot$ in the sequel.
Since $\mathcal{K}_2$ is a PCA, we have a sensible coding of tuples $\tuple{p_1, \ldots, p_n} \in \Baire$
for  $p_1, \ldots, p_n \in \Baire$
(i.e., we have a computable family of
projections $\pi_i : \Baire \partto \Baire$, we can compute the length of a tuple from its code,
and tuples are uniquely represented).
Similarly, we have sensibly coded tagged
unions, i.e., we have computable injections $\inc_1, \inc_2 : \Baire \to \Baire$
with disjoint and computable images, as well as computable partial inverses.
We will sometimes write $\lambda$-terms with tupling and parameters in $\mathcal{K}_2$,
which are always meant to be interpreted in $\mathcal{K}_2$.
We will also allow ourselves to make recursive definitions, using implicitly
a standard fixpoint combinator such as $\lambda f. \; (\lambda x. \; f \; (x \; x)) \; (\lambda x. \; f \; (x \; x))$.

\section{Introducing the operators and the axioms}
\label{sec:operators}

\subsection{Partial Weihrauch problems, reduction and operators}

We begin by formally defining the notions related to Weihrauch
reducibility and the operators in question. All definitions will be,
up to details, completely standard, except maybe for the notion of
\emph{partial Weihrauch problem} (and the relevant degrees)
that we introduce.

The notion of partial Weihrauch problems and reductions below is equivalent to 
the (posetal reflection) of the category of containers over partitioned modest
sets described in~\cite[\S 3.3]{PricePradic25}, which only extends usual Weihrauch
problems by allowing questions that have no answers. Those are a strict subclass
of the extended Weihrauch degrees of~\cite{Bauer22}, which
captures a more general class of problems with advice.

\begin{figure}

\begin{center}
  \includegraphics[scale=0.8]{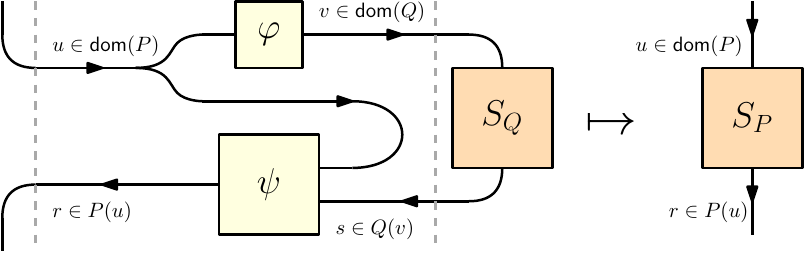}
\end{center}
  \caption{Informal picture of a Weihrauch reduction $(\varphi, \psi)$ from $P$ to $Q$
and how it turns
a solver $S_Q$ for $Q$ into a solver $S_P$ for $P$.}
\label{fig:weired}
\end{figure}

\begin{figure}
\begin{center}
\includegraphics[width=\linewidth]{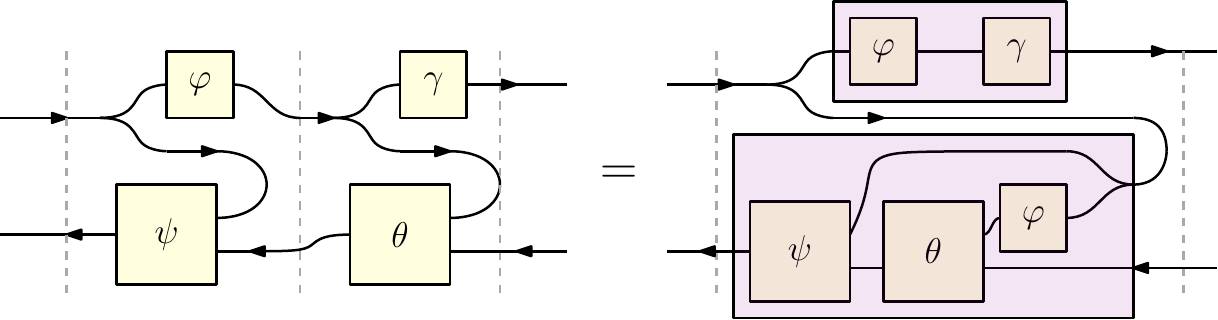}
\end{center}
\caption{Informal picture of the composition of the Weihrauch reductions
$(\gamma, \theta)$ and $(\varphi, \psi)$. The left-hand side illustrate the
intuitive plugging of the picture in~\Cref{fig:weired}, to which the right-hand
side is equivalent by re-arranging the position of $\varphi$. From this, the formal
definition of the composition can be read off: 
$(\gamma \circ \varphi, \psi \circ \langle \pi_1, \theta \circ (\varphi \times \id)\rangle)$.}
\label{fig:weiredcomp}
\end{figure}

\begin{defi}
\label{def:WeihrauchProblem}
An \emph{partial Weihrauch problem} $P$ is given by:
\begin{itemize}
\item a subset $\dom(P) \subseteq \Baire$ of \emph{instances} (that we may think
of as questions) and
\item for every $u \in \dom(P)$, a subset $P(u) \subseteq \Baire$ of \emph{solutions}.
\end{itemize}
$P$ is called \emph{pointed} if $\dom(P)$ has a computable point.
A \emph{Weihrauch problem} is a partial Weihrauch problem $P$ such that
every question has a solution, i.e., for every $u \in \dom(P)$, $P(u) \neq \emptyset$.

A \emph{Weihrauch reduction} $(\varphi, \psi)$ from problem $P$ to problem $Q$ is given by computable maps $\varphi : \dom(P) \to \dom(Q)$ and
$\psi$, where $\psi$ is defined over every pair $(u,y)$ with $u \in \dom(P)$ and $y \in Q(\varphi(u))$ and we have $\psi(u,y) \in P(u)$
(see~\Cref{fig:weired}). When there exists such a reduction
from $P$ to $Q$, we write $P \leqW Q$. Reductions compose (see~\Cref{fig:weiredcomp}) and there is
an identity reduction witnessing $P \leqW P$ for every $P$, so $\leqW$ is a preorder.
The equivalence classes of ${\equivW} = {\leqW} \cap {\geqW}$ are called \emph{partial Weihrauch degrees}\footnote{Or just Weihrauch degree if a Weihrauch problem belongs to the class in question.}.
\end{defi}

We then formally introduce all the
operators relevant to this paper in \Cref{fig:operators}.
The paradigmatic example of a partial Weihrauch problem which is not a Weihrauch
problem is $\top$, which we\footnote{
While our $\top$ coincides with the one of Bauer's extended Weihrauch degrees (up to
a trivial embedding), note that it does not coincide with every extension of the
Weihrauch degrees with a $\top$ element that have been considered in the past. For
instance, in the setting of~\cite{paulybrattka4}, $\infty$ does not satisfy the same
inequalities as $\top$ in the partial Weihrauch degrees with respect to $\star$.
}
define by taking $\dom(\top) = \{\unitelt\}$
and $\top(\unitelt) = \emptyset$. It is easy to check that for every problem $P$, there
is a unique reduction witnessing $P \leqW \top$.

All the rest of the operations and problems of \Cref{fig:operators} preserve or belong to
the class of ordinary Weihrauch problems. The meet and join operations
give the Weihrauch degrees a distributed lattice structure with a least element $0$.
The unit $1$ is the problem with a single computable question which admits an
equally computable answer; it morally corresponds to a trivial oracle in the
sense that if $P \leqW 1$, then there is a type-2 computable map taking questions
of $P$ to suitable $P$-answers.
The composition operator $\star$ models the composition of problems:
an input for $P\star Q$ consists of a question $q$ for $Q$ and a code for a map $f$ taking an answer to $q$
to a question for $P$. The answer to the question $(q, f)$ to $P \star Q$ is then
a pair $(a, b)$ where $a$ answers $q$ and $b$ answers $f(a)$.
Reducing a problem $R$ to $P \star Q$ essentially amounts to having the right (and obligation)
to make an oracle call to $Q$ and then a call to $P$ before returning an answer to $R$.
The unit of this operation is the problem $1$, which is the identity over a computable
singleton\footnote{The identity over any set containing a computable point would be
Weihrauch equivalent to that as well, and often in the literature $1$ is defined
as the identity on Baire space (as in~\cite{survey-brattka-gherardi-pauly}). If one looks at the finer notion of isomorphism
in containers (as in~\cite{PricePradic25}), the correct unit for both composition and
parallel product is the one used here.}.
Iterated composition, as the name implies, allows to make a finite number of
calls to an oracle solving $P$ in a reduction to $P^\diamond$.
$P^\diamond$ is defined as a least fixpoint of the following map of problems: $X \mapsto 1 \sqcup (X \star P)$.
As a consequence, an input to $P^\diamond$ is either the trivial tag
$\inc_1(\langle\rangle)$ which calls for an equally trivial answer, or a tagged
pair $\inc_2(u, f)$ where $u \in \dom(P)$ is an initial question to $P$, and
$f$ codes a map $P(u) \to \dom(P^\diamond)$. An answer to such a question is a pair
of an answer $a$ to $u$ and of an answer to $f(a)$; unravelling the recursion, this
means that an input to $P^\diamond$ is a recipe to make finitely many calls to $P$,
and an answer is a list of consistent $P$-answers.
Note that since $P$ may answer non-deterministically, there is not necessarily
a uniform bound on the ``numbers of calls'' to $P$. The idea is rather that
a $P^\diamond$-question implicitly encode an interaction tree which, while well-founded, may have
a very high branching factor. Let us give a (somewhat contrived) example.

\begin{exa}
  \label{ex:omega2}
Call $\mathfrak{T} \subseteq \powerset(\Cantor)$ the set of Turing degrees
and $\mathfrak{c} = 2^{\aleph_0}$ be the cardinality of the continuum.
Using the axiom of choice, pick an $\mathfrak{c}$-sequence $\alpha : \mathfrak{c} \to \mathfrak{T}$
whose image is an antichain; large enough antichains can be found as there are
$\mathfrak{c}$-many minimal degrees (see e.g.~\cite[Exercise 13-35]{rogers1987}). Let us consider a problem 
$\mathsf{down}_\alpha$, where questions are either the computable point $\langle\rangle \in \Cantor$
or some $p$ in some $\alpha(i)$, and answers as follows:
\begin{itemize}
  \item to $\langle\rangle$, $\mathsf{down}_\alpha$ is free to answer any point
    in $\bigcup\limits_{ j < \mathfrak{c}} \alpha(j)$
  \item to $p \in \alpha(i)$, $\mathsf{down}_\alpha$ answers with a pair $(b, p) \in 2 \times \Cantor$
    such that either $p \in \alpha(j)$ for $j < i$ and $b = 0$, or
    with $p \in \alpha(0)$ and $b = 1$ 
\end{itemize}
Then consider the uncomputable problem $\mathsf{get}_{\alpha(0)}$ which ignores its trivial input
and returns some point in $\alpha(0)$. It is reducible to $\mathsf{down}_\alpha^\diamond$:
first call $\mathsf{down}_\alpha$ with $\langle\rangle$
and get some $p_0$, and then get some pair $(b_0, p_1) \in \mathsf{down}_\alpha$ with another call.
Generally, we run into a loop with some $(b_n, p_{n+1})$ at hand.
If $b_n = 1$ we can halt as $p_{n + 1} \in \alpha(0)$, otherwise
we get some $(b_{n+1}, p_{n+2}) \in \mathsf{down}_\alpha(b_n, p_{n+1})$.
By induction over $\mathfrak{c}$, this loop is guaranteed to terminate.
The code for this procedure is obviously computable, but there is no uniform
bound on the number of calls to $\mathsf{down}_\alpha$.
\end{exa}

\begin{rem}
There is a natural notion of rank $\mathrm{rk}$ for inputs to $P^\diamond$.
\[ \mathrm{rk}(\inc_1(\langle \rangle)) = 0 \qquad \qquad \text{and} \qquad \qquad \mathrm{rk}(\inc_2(\langle u , f \rangle)) = \sup \{ \mathrm{rk}(f
\cdot x) + 1 \mid x \in P(u)\}\]
\Cref{ex:omega2} shows a reduction that produces
a computable input to some $P^\diamond$ of rank $\mathfrak{c}$.
\end{rem}

\begin{rem}
$(-)^\diamond$ is the more prevalent notion of sequential iteration in Weihrauch
degrees which has been used throughout the literature since its introduction~\cite{topol-comput-neumann-pauly}.
It has nice closure properties, in the sense that it is a closure operator (${P^\diamond}^\diamond \le P^\diamond$)
and it generates problems closed under composition ($P^\diamond \star P^\diamond \le P^\diamond$).
The following other constructions that capture variants of sequential iterations:
\begin{itemize}
  \item The iteration $P^{[*]}$ where we additionally ask the input to give
  the exact number of oracle calls to $P$ along any run. They are called \emph{for loops}
  in~\cite{brattka2025loops} and it can be computed as a least fixpoint
  of $X \mapsto 1 + P \star X$ in the spirit of~\cite[\S 4.2]{PP26}.
  Note that $(P + 1)^{[*]}$ correspond to the relaxation where a (finite) upper bound
  on the number of calls to $P$ must be given.
\item An infinite iteration $P^{\infty}$ introduced in~\cite{brattka2025loops}
  where $\omega$ successive calls to $P$ are allowed.
\item A more general ordinal iteration $P^{(\dagger)}$ that allows to further
  iterate $P$ along any countable ordinal coded in a suitable way as part of
  the input~\cite{paulycountableordinals}.
\end{itemize}
Those operators sit in a hierarchy, that can be strict for suitable $P$s.
\[ P^{[*]} \le {\left. P^{[*]}\right.}^{[*]} \le
  \ldots \le P^\diamond = {P^\diamond}^\diamond \le (1 \sqcup P)^\infty \le {(1 \sqcup P)^{\infty}}^\infty
\le \ldots \le P^{\dagger} = {P^{\dagger}}^\dagger\]
Finding purely equational characterization of our signature $(-)^\infty$ and
$(-)^\dagger$ sounds tricky. Getting a characterization via games of length  $< \omega_1$ on finite
graphs is left for exciting future work.
\end{rem}

\begin{figure}
\[
\begin{array}{rcl !\qquad rclr}
\dom(0) &=& \emptyset & & &
& \text{\footnotesize (bottom)}
\\
\\
\dom(P \sqcup Q) &=&
\multicolumn{4}{l}{\{\inc_1(u) \mid u \in \dom(P)\} \cup \{\inc_2(v) \mid v \in \dom(Q)\}}
& \text{\footnotesize (join)}\\
(P \sqcup Q)(\inc_1(u)) &=& P(u) &
(P \sqcup Q)(\inc_2(v)) &=& Q(v) \\
\\\\
\dom(\top) &=& \{\unitelt\}&
\top(\unitelt) &=& \emptyset &  \text{\footnotesize (top)}
\\\\
\dom(P \sqcap Q) &=&
\multicolumn{4}{l}{\{\tuple{u, v} \mid u \in \dom(P), v \in \dom(Q)\}}
& \text{\footnotesize (meet)}\\
(P \sqcap Q)(\tuple{u, v}) &=&
\multicolumn{4}{l}{\{ \inc_1(x) \mid x \in P(u)\} \cup
\{ \inc_2(y) \mid x \in Q(v)\}} \\
\\
\dom(1) &=& \{\unitelt\} &
1(\unitelt) &=& \{\unitelt\}
 & \text{\footnotesize (unit)}
\\\\
\dom(P \star Q) &=& \multicolumn{4}{l}{\{ \tuple{v, f} \mid v \in \dom(Q), \forall y \in Q(v). \; f \cdot x \in \dom(P)\}}  &\text{\footnotesize (composition)} \\
(P \star Q)(\tuple{v,f}) &=& \multicolumn{4}{l}{\{\tuple{y,x} \mid y \in Q(v), x \in P(f \cdot y)\}}
\\\\
\dom(P^\diamond) &=&
\multicolumn{5}{l}{\{\inc_1(\unitelt)\} \cup \{ \inc_2(\tuple{u, f}) \mid u \in \dom(P), \forall x \in P(u). \; f \cdot x \in \dom(P^\diamond))\}}  \\
(P^\diamond)(\inc_1(\unitelt)) &=& \{\unitelt\} & & & \multicolumn{2}{r}{\text{\footnotesize (iterated composition)}}\\
(P^\diamond)(\inc_2(\tuple{u,f})) &=&
\multicolumn{4}{l}{\{\tuple{x,y} \mid x \in P(u), y \in P^\diamond(f \cdot x)\}}
\\\\
\end{array}
\]
\caption{Operators on partial Weihrauch problems and assorted constants,
assuming $P$ and $Q$ are partial Weihrauch problems.
$\dom(P^\diamond)$ should be the smallest subset of $\Baire$ satisfying the
advertised conditions; the sets of valid outputs of $P^\diamond$ are then
computed by well-founded recursion on the inputs.}
\label{fig:operators}
\end{figure}

We now turn to the definition of the syntax of $\SRKAM$ and of universal
validity in the (partial) Weihrauch degrees/problems.

\begin{defi}
\label{def:regE}
When $\Sigma$ is a finite alphabet, call $\RegE{\Sigma}$ the set of expressions
(that we sometimes also call terms)
generated by the following grammar (we overload the notations for the operators)
\[
e, f \bnfeq
a \in \Sigma \bnfalt
0 \bnfalt
e \sqcup f \bnfalt
\top \bnfalt
e \sqcap f \bnfalt
1 \bnfalt
e \star f \bnfalt
e^\diamond
\]
We define the size
of an expression $e$ in the obvious way:
\[
  \begin{array}{l!\qquad lcll}
    &\size{e\mathrel{\boxempty} f} &=& \size{e} + \size{f} + 1 & \text{for $\boxempty \in \{\star,\sqcap, \sqcup\}$}\\
    &\size{e^\diamond} &=& \size{e} + 1 &\\
    \text{and} &
    \size{e} &=& 1 & \text{otherwise.}
  \end{array}
\]

An \emph{interpretation} $\rho$ of variables of $\Sigma$, that is,
a map from $\Sigma$ to partial Weihrauch problems,
extends to an interpretation $\interp{-}_\rho$ from $\RegE{\Sigma}$ to
partial Weihrauch problems by taking $\interp{a}_\rho = \rho(a)$,
$\interp{\mathbf{k}}_\rho = \mathbf{k}$ for $\mathbf{k} \in \{0,1,\top\}$,
$\interp{e^\diamond}_\rho = \interp{e}_\rho^\diamond$ and
$\interp{e \boxempty f}_\rho = \interp{e}_\rho \boxempty \interp{f}_\rho$
for $\boxempty \in \{\sqcap, \sqcup, \star\}$.
Given $e, f$ in $\RegE{\Sigma}$, we say that $e \le f$ is universally valid
in the (pointed and/or partial) Weihrauch degrees/problems if for
every $\rho$ valued in the (pointed and/or partial) Weihrauch problems,
$\interp{e}_\rho \leqW \interp{f}_\rho$ holds.
\end{defi}

\subsection{$\SRKAM$ is sound for the partial Weihrauch degrees}

Recall the axiomatization of $\SRKAM$ given in \Cref{fig:axioms}.
The first step is to show that these axioms are valid in the 
partial Weihrauch degrees. The only non-trivial axiom to
check is parameterized $\diamond$-induction.
To do so, one needs to generalize the construction of
of~\cite{westrick2020}.

\begin{lem}
\label{lem:diamond-fp}
For any partial Weihrauch problems $P, Q$ and $R$ with
${R \sqcap (P \star Q)} \leqW P$, we have
$R \sqcap (P \star Q^\diamond) \leqW P$.
\end{lem}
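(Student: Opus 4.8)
The plan is to iterate the given reduction $R \sqcap (P \star Q) \leqW P$ along the well-founded structure of $\dom(Q^\diamond)$. Write $(\varphi, \psi)$ for the reduction $R \sqcap (P \star Q) \leqW P$, so that $\varphi : \dom(R \sqcap (P \star Q)) \to \dom(P)$ and $\psi$ takes a question $\tuple{r, \tuple{u,c}}$ together with an answer $x \in P(\varphi(\tuple{r,\tuple{u,c}}))$ to an element of $(R \sqcap (P\star Q))(\tuple{r,\tuple{u,c}})$; such an element is either of the form $\inc_1(\rho)$ with $\rho \in R(r)$ or of the form $\inc_2\tuple{p,q}$ with $p \in P(u)$, $q \in Q(c \cdot p)$. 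Intuitively, given a question $\tuple{r, w}$ for $R \sqcap (P \star Q^\diamond)$, the forward functional peels one layer off the $Q^\diamond$-component: if $w = \unitelt$ then $P \star Q^\diamond$ degenerates and the claim is easy; otherwise $w = \tuple{u, c}$ where $c$ feeds $P$-solutions into further $\dom(Q^\diamond)$ questions, and we hand $\varphi$ the question $\tuple{r, \tuple{u, c'}}$ where $c'$ is obtained from $c$ by post-composing with the forward part of the reduction we are recursively constructing for the smaller instances. The backward functional, given a $P$-solution, runs $\psi$; if it returns an $R$-witness, we are done; if it returns $\inc_2\tuple{p,q}$, then $q$ is a solution to a strictly simpler $Q^\diamond$ instance than we started with, so the recursively-built backward functional converts the continuation into the needed answer, and we assemble $\tuple{p, \ldots}$ as our solution in $(P \star Q^\diamond)(\tuple{u,c})$, hence in $(R \sqcap (P \star Q^\diamond))(\tuple{r, \tuple{u,c}})$ after tagging.

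Concretely, I would fix a realizer $F$ for the forward map and $G$ for the backward map of the reduction we want, and define both simultaneously by a single application of the fixpoint combinator in $\mathcal{K}_2$ (as the excerpt explicitly licenses recursive definitions of codes). The verification that $F$ and $G$ have the right domains and satisfy the reduction clauses proceeds by well-founded induction on the rank of the $\dom(Q^\diamond)$-component of the input, exactly mirroring the well-founded recursion used to define $\dom(P^\diamond)$ in \Cref{fig:operators}. The base case is the instance $\tuple{r, \unitelt}$: here $(P \star Q^\diamond)(\tuple{u,c})$ with the trivial $c$ collapses, and one checks directly that $\varphi$ applied to an instance of $R \sqcap (P \star Q)$ with $Q$-component forced to $\unitelt$ still works, since $1 \le Q^\diamond$ and the fixpoint-unfolding shape of $Q^\diamond$ make $P \star 1 \cong P$ absorb correctly. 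In the inductive step, the key point is that every $Q^\diamond$-answer $q$ returned inside an $\inc_2$-tagged solution from $\psi$ lies in $Q^\diamond(c \cdot p)$, whose defining instance $c \cdot p$ has strictly smaller rank, so the induction hypothesis supplies the converted answer.

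The main obstacle is bookkeeping rather than conceptual: one must be careful that the forward functional $c'$ we feed into $\varphi$ genuinely lands in $\dom(Q)$ on every $P$-solution — this uses the definition of $\dom(P \star Q^\diamond)$, which guarantees $c \cdot p \in \dom(Q^\diamond)$, and then that $\dom(Q^\diamond)$ question is either $\unitelt$ (handled by the trivial $Q$-question, using $1 \le Q^\diamond$ only at the level of the backward map) or of the form $\tuple{u', c''}$ whose first projection $u'$ is a legitimate $\dom(Q)$ instance. The other delicate spot is ensuring the whole construction is \emph{uniformly computable} in the original reduction, i.e.\ that $F$ and $G$ are single codes built once and for all by $\lambda$-abstraction and a fixpoint combinator from codes for $\varphi$, $\psi$, and the coding/decoding operations on tuples and tagged unions; this is routine given the PCA machinery recalled in the notations section, but it is the reason the lemma is phrased as ``we can compute a reduction'' rather than merely ``there exists''. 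Apart from these points, the argument is a direct transcription of the $\diamond$-as-least-fixpoint intuition and of the construction in \cite{westrick2020}.
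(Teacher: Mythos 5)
Your high-level plan is the same as the paper's (peel one layer of $Q^\diamond$, define the forward and backward realizers simultaneously by a fixpoint combinator in $\mathcal{K}_2$, verify by induction on the rank of the $\dom(Q^\diamond)$ data), but the concrete data flow you describe does not work, and the problem is not mere bookkeeping. You read a question of $P \star Q^\diamond$ as $\tuple{u,c}$ with $u \in \dom(P)$ and $c$ sending $P$-answers to $\dom(Q^\diamond)$-questions, and then propose to hand $\varphi$ the question $\tuple{r,\tuple{u,c'}}$ with $c'$ obtained ``by post-composing $c$ with the forward part of the reduction we are recursively constructing''. This does not typecheck: $\varphi$ needs $c'$ valued in $\dom(Q)$, whereas the recursively constructed forward map produces $\dom(P)$-questions; moreover, under this reading the $Q^\diamond$-structure sits \emph{behind} the $P$-answer (it is $c \pcaapp x$, different for each $x$), so there is no well-founded parameter visible at forward time on which your recursion for $\varphi^\diamond$ could run. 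Your patch for the base case --- ``the trivial $Q$-question'' --- does not exist: $Q$ is an arbitrary extended problem, and no element of $\dom(Q)$ is computable from the given data (indeed $\dom(Q)$ may be empty). Finally, the backward step is broken: with your typing, $\psi$'s $\inc_2$-output is a pair $\tuple{p,q}$ where $q$ is a \emph{single} $Q$-answer, not ``a solution to a strictly simpler $Q^\diamond$ instance'', and the recursively built backward functional you want to invoke needs as input a $P$-answer to the recursively built forward question --- an object your construction never produces, since the one oracle answer has already been consumed by $\psi$.

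The mechanism that makes the paper's proof go through is exactly what is missing here. The proof (in line with the convention announced in the introduction, that a reduction to $P \star Q$ makes its $Q$-call first and its $P$-call second, even though the letters in \Cref{fig:operators} can be read the other way) takes a question of $R \sqcap (P \star Q^\diamond)$ to be $\tuple{w,\tuple{i,h}}$ with $i \in \dom(Q^\diamond)$ given \emph{upfront} and $h$ a continuation sending $Q^\diamond$-answers to $\dom(P)$-questions. If $i = \unitelt$ one forwards $h \pcaapp \unitelt$ directly, with no appeal to $\varphi$ and no $Q$-question needed; if $i = \tuple{v,f}$ one hands $\varphi$ the question $\tuple{w,\tuple{v,\Phi}}$ where $\Phi = \lambda y.\,\varphi^\diamond(\tuple{w,\tuple{f \pcaapp y, \lambda j.\, h \pcaapp \tuple{y,j}}})$, i.e.\ the recursively defined forward map is spliced into the continuation-into-$P$. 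Because of this, $\psi$'s $\inc_2$-output is a pair consisting of the first-layer $Q$-answer $y$ \emph{and} a $P$-answer to the recursively built question $\varphi^\diamond(\cdots)$, which is precisely the input the recursive call to $\psi^\diamond$ needs; the induction on the rank of $i \in \dom(Q^\diamond)$ then closes the argument. Your sketch keeps the original first component and tries to modify only the continuation, which leaves no slot in which to embed the recursion and no source for the recursive $P$-answers; as written it cannot be completed.
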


Before giving some formal details, let us try to give some intuition
as to how this works in a restricted case.
The following scheme, that we call $\diamond$-induction, is admissible in
$\SRKAM$ by taking one parameter to be $\top$.
\[ a \star b \le a \qquad \Longrightarrow \qquad a \star b^\diamond \le a\]
\begin{rem}
It is also admissible in $\SRKAM$ restricted to $\top$-free
expressions, as the premise of $\diamond$-induction also implies
$(a \star b^\diamond) \sqcap (a \star b) \le a$ and $\sqcap$ is an idempotent operator.
\end{rem}

So let us intuit how one deals with building a reduction $(\varphi^\diamond, \psi^\diamond) : P \star Q^\diamond \leqW P$
from a reduction $(\varphi, \psi) : P \star Q \leqW P$.
The forward pass $\varphi^\diamond: \dom(P \star Q^\diamond) \to \dom(P)$ is
the more instructive. Writing $\langle h, g\rangle \in \dom(P \star Q^\diamond)$,
we can define $\varphi^\diamond$ by recursing over $h$, taking care that $\mathrm{rk}(h)$ decreases strictly:
\begin{itemize}
\item when $h = \inc_1(\langle\rangle)$ and no question are asked of $Q$,
  we can recover a question to $P$ and
  simply pass it along. Formally, this is done by setting 
  $\varphi^{\diamond}(\langle \inc_1(\langle\rangle), g \rangle) = g \cdot \langle\rangle \in \dom(P)$.
\item otherwise, $h = \inc_2(\langle v, f\rangle)$ where $v$ is a question to $Q$ and
$f$ is a recipe to cook up a question to $Q^\diamond$ from an answer $y$ to $v$ (recall then that
$g$ in turn can cook up a question to $P$ from $y$ and the answers $y'$ induced
by $f$). The informal idea is then that by re-arranging the tuple $\langle h, g \rangle$
and currifying $g$ into $\tilde{g}$ such that $g \cdot \langle y , y' \rangle = \tilde{g} \cdot y \cdot y'$, the input
$\langle h, g \rangle$ can be turned into
\[\langle v, \lambda y. \; \langle f \cdot y, \tilde{g} \cdot y\rangle\rangle \in \dom((P \star Q^\diamond) \star Q) \qquad \qquad \text{with $\mathrm{rk}(f \cdot y) < \mathrm{rk}(\inc_2
(\langle v, f\rangle))$}\]
So we may recursively get a valid code $m$ for a map
\[
\begin{array}{llcl}
Q(v) &\longto& \dom(P)\\
y &\longmapsto& \varphi^\diamond(\langle f \cdot y, \tilde{g} \cdot y \rangle)
\end{array}\]
Then we have that $\langle v, m\rangle \in \dom(P \star Q)$ by definition. We
can then apply $\varphi$ to get $\varphi(\langle v, m\rangle) \in \dom(P)$.
\end{itemize}
Now that we have a well-defined forward function $\varphi^\diamond$, we can
similarly define $\psi^\diamond$, which will again take $\langle h, g\rangle \in \dom(P \star Q^\diamond)$
as first argument and then some $a \in P(\varphi^\diamond(\langle h, g\rangle))$.
The recursion will be well-defined again by making the rank of $h$ decrease across recursive calls.
The base case where $h = \inc_1(\langle\rangle)$ is again easy, so let us focus
on $h = \inc_2(\langle v, f\rangle)$, in which case we know that
$\varphi^\diamond(\langle h, g\rangle) = \varphi(\langle v, m\rangle)$ as before.
Then we know that $\psi(\langle v, m\rangle, a) \in (P \star Q)(\langle v, m \rangle)$,
so $\psi(\langle v, m\rangle, a) = \langle y, a' \rangle$ with $y \in Q(v)$ and
$a' \in P(m \cdot y)$. Recalling how $m$ is defined, this means that, by recursion, we have
\[ \psi^\diamond(\langle f \cdot y, \tilde{g} \cdot y\rangle, a') \in (P \star Q^\diamond)(\langle f \cdot y, \tilde{g} \cdot y)\]
(note in particular that $\mathrm{rk}(f \cdot y) < \mathrm{rk}(h)$, which makes
this valid). Then an answer to the
original instance can be obtained by joining $y$ to answer the initial value $v$
(and moving the parentheses in the tuple).

This concludes the informal exposition of how to prove soundness of $\diamond$-induction.
Formally, it should be noted that one can simply perform the definition of $\varphi^\diamond$
and $\psi^\diamond$ in the PCA $\mathcal{K}_2^{\mathrm{rec}}$, by fixing
codes for $\varphi, \psi$ and using a fixpoint combinator to perform the
recursion. Then correctness and totality can be checked a posteriori, by induction
on ranks of would-be inputs.

The proof sketch we offer of parameterized $\diamond$-induction follows this pattern.
We offer less justifications as the rationale for the definitions is
the same. The key difference is that the question $w$ corresponding to the extra parameter
needs to be passed along throughout for the benefit of $\varphi$, and that $\psi$ may
decide to answer $w$ rather than $\langle v, m\rangle$; in which case $\psi^\diamond$
can simply pass along the answer to $w$.

\begin{proof}[Proof of \Cref{lem:diamond-fp}]
Assume that we are given a forward functional $\varphi$
and a backward functional $\psi$ witnessing
$R \sqcap (P \star Q) \leqW P$. We can define directly
a pair of functionals $(\varphi^\diamond, \psi^\diamond)$
witnessing $R \sqcap (P \star Q^\diamond) \leqW P$
by recursion as follows (we adopt the same notational convention as above for
variables, with the addition that $w \in \dom(R)$, $z \in R(w)$, and $a''$ is introduced
as a name for a solution to a $P$ question (that was left implicit above in the
very last step)):
\[
\begin{array}{lcl}
\varphi^\diamond(\tuple{w, \tuple{\inc_1(\unitelt), g}})
&=& g \pcaapp \unitelt
\\
\varphi^\diamond(\tuple{w, \tuple{\inc_2(\tuple{v, f}), g}})
&=& \varphi(\tuple{w, \tuple{v, \Phi^\diamond(w,v,f,g)}})\\
\Phi^\diamond(w,v,f,g) &=&\lambda y. \; \varphi^\diamond(\tuple{w, \tuple{f \pcaapp y, \lambda y'. \; g \pcaapp \tuple{y, y'}}})
\end{array}
\]
\[
\begin{array}{lcl}
\psi^\diamond(\tuple{w, \tuple{v , \inc_1(\unitelt)}}, a)
&=& \inc_2(\tuple{\unitelt,a})
\\
\psi^\diamond(\tuple{w, \tuple{\inc_2(\tuple{v, f}), g}},a)
&=& \left\{ \begin{array}{ll}
\multicolumn{2}{l}{\inc_1(z)}
\\
& \text{when $\psi(\tuple{w,\tuple{v, \Phi^\diamond(w,v,f,g)}}, a) = \inc_1(z)$}  \\
\multicolumn{2}{l}{\Psi^\diamond(w,v,f,h,y,a')}\\
& \text{when $\psi(\tuple{w,\tuple{v, \Phi^\diamond(w,v,f,g)}}, a) = \inc_2(\tuple{y,a'})$} \\
\end{array}\right.\\
\Psi^\diamond(w,v,f,h,y,a')
&=& \left\{ \begin{array}{ll}
\multicolumn{2}{l}{\inc_1(z)} \\ & \text{when $\psi^\diamond(\tuple{w, \tuple{f \pcaapp y, \lambda r. \; g \pcaapp \tuple{y, r}}}, a') = \inc_1(z)$}  \\
\multicolumn{2}{l}{\inc_2(\langle \langle y, y'\rangle, a''\rangle)} \\ & \text{when $\psi^\diamond(\tuple{w, \tuple{f \pcaapp y, \lambda y'. \; h \pcaapp \tuple{y,
y'}}}, a') = \inc_2(\langle y',  a''\rangle)$} \\
\end{array}\right.\\
\end{array}
\]
We are only using operators of the $\lambda$-calculus, recursion
and $(\varphi, \psi)$, so everything given codes for $\varphi$ and $\psi$ in $\mathcal{K}_2^\mathrm{rec}$, one
can define suitable codes for $\varphi^\diamond$ and $\psi^\diamond$. 
That this is a reduction for every possible input $\tuple{w, \tuple{h, g}} \in \dom(R \sqcap (P \star Q^\diamond))$ is provable by recursion on
the rank\footnote{It is also possible to do so without reference to ordinal ranks
as well, just referring to the Knaster-Tarski theorem; but we refer to ranks
here (and later) for notational convenience.}
of $h \in \dom(Q^\diamond)$.
\end{proof}

\begin{rem}
There is also a more elegant proof of soundness of
parameterized $\diamond$-induction from $\diamond$-induction in the style of~\cite[Lemma 54]{PP26}.
The corresponding proof can be replayed with the assumption that the partial Weihrauch
degrees form an Heyting algebra, which is the case.
This Heyting algebra structure this does not appear in the literature stricto sensu (in
part because it does not restrict to usual Weihrauch degree~\cite{paulykojiro}), but this
can be proven by either abstract nonsense in the style of~\cite[\S 4.4]{PricePradic25}
or by replaying the construction in Bauer's extended degrees~\cite{Bauer22,maschio2025}.
\end{rem}

\begin{lem}
  \label{lem:soundness}
The axioms of~\Cref{fig:axioms} are sound in the partial Weihrauch degrees:
if $e \le f$ is derivable, then it is universally valid. 
\end{lem}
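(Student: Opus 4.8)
The plan is to establish soundness by induction on the derivation of $e \le f$ in $\SRKAM$, exactly as one proves soundness for any equational or inequational calculus: check that each axiom holds universally in the extended Weihrauch degrees, and that each inference rule preserves universal validity. Since validity is defined by quantifying over all interpretations $\rho$ valued in extended Weihrauch problems, and the connectives on terms are interpreted by the corresponding operators on problems, it suffices to verify the axioms for arbitrary extended Weihrauch problems $P, Q, R$ standing in for the metavariables $a, b, c$; the transitivity axiom and the "introduction" rules for $\sqcup$ and $\sqcap$ (the implications with hypotheses) are handled by the fact that $\leqW$ is a preorder and that $\sqcup, \sqcap$ are genuinely the join and meet in the lattice of degrees, which is already recorded in the discussion following \Cref{def:WeihrauchProblem} and \Cref{fig:operators}.

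I would then walk through \Cref{fig:axioms} group by group. For the distributive lattice block: $0 \leqW a$ uses $\dom(0) = \emptyset$ so the reduction is vacuous; $a \leqW \top$ is the unique reduction into $\top$ noted right after \Cref{fig:operators}; the join/meet (in)equalities and the distributivity law $a \sqcap (b \sqcup c) \le (a \sqcap b) \sqcup (a \sqcap c)$ are direct constructions of forward/backward functionals from the definitions in \Cref{fig:operators} (for distributivity, given $\tuple{u, \inc_i(\cdot)}$ on the left one reads off which disjunct was chosen and routes accordingly). For the ordered-monoid block: the monoid laws for $\star$ with unit $1$ hold up to the coding isomorphisms on tuples (using that $1(\unitelt) = \{\unitelt\}$ makes $\tuple{\unitelt, c}$ essentially $c \cdot \unitelt$); monotonicity of $\star$ composes the given reductions componentwise; $a \star 0 \le 0$ is vacuous since $\dom(a \star 0)$ forces $c \cdot x \in \dom(0) = \emptyset$, which (given $P(u)\neq\emptyset$ is not assumed, but still) makes the domain's solution set empty or the domain empty — more carefully, any instance has no solutions, matching $0$; $\top \le a \star \top$ again is the unique map into $\top$... wait, it is a map \emph{into} $a \star \top$, so one exhibits an instance of $a\star\top$ with no solutions and maps $\unitelt$ there; right-distributivity of $\sqcup$ over $\star$, of $\sqcap$ over $\star$, and right-half-distributivity of $\star$ over $\sqcap$ are each a short explicit functional construction from \Cref{fig:operators}, e.g. for $(a \star b) \sqcap c \le (a \sqcap c) \star b$ one sends $\tuple{\tuple{u,d}, w}$ to $\tuple{\tuple{u,w}, d}$ and translates solutions back through the tag on $\inc_1/\inc_2$.

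For the iteration block, $1 \le a^\diamond$ is witnessed by $\unitelt \in \dom(P^\diamond)$ with $P^\diamond(\unitelt) = \{\unitelt\}$, and $a^\diamond \star a \le a^\diamond$ unfolds the recursive definition of $\dom(P^\diamond)$ one step; the only substantive case is parameterized $\diamond$-induction, $a \sqcap (b \star c) \le b \Rightarrow a \sqcap (b \star c^\diamond) \le b$, which is precisely \Cref{lem:diamond-fp} already proved above (with $R = \interp a_\rho$, $P = \interp b_\rho$, $Q = \interp c_\rho$): a derivation of the hypothesis gives, by the induction hypothesis of the soundness argument, a reduction $R \sqcap (P \star Q) \leqW P$, and \Cref{lem:diamond-fp} computes from it the desired reduction $R \sqcap (P \star Q^\diamond) \leqW P$. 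The main obstacle is entirely concentrated in that one axiom, and it has been dispatched in advance by \Cref{lem:diamond-fp}; everything else is routine bookkeeping with the coding conventions for tuples and tagged unions, so I would present the remaining cases tersely and refer to \Cref{fig:operators} throughout.
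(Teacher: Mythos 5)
Your overall architecture is exactly the paper's: the paper likewise disposes of the axioms not involving $(-)^\diamond$ or $\top$ as routine (it simply cites the known verifications for ordinary Weihrauch degrees, which transfer verbatim to the extended setting), notes that the two unfolding axioms follow from the fixpoint definition of $(-)^\diamond$, and concentrates the entire difficulty in parameterized $\diamond$-induction, which is dispatched by \Cref{lem:diamond-fp}. You invoke that lemma in precisely the same way, so the one substantive case is handled correctly.

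There is, however, one step in your routine block that would fail as written: for $a \star 0 \le 0$ you argue that every instance of $\interp{a}_\rho \star 0$ has no solutions and that this ``matches $0$''. It does not: an extended problem with nonempty domain and empty solution sets is equivalent to $\top$, the \emph{top} of the order, and does not reduce to $0$, since a reduction to $0$ needs a forward functional into $\dom(0) = \emptyset$, i.e.\ the left-hand domain must itself be empty. The correct argument uses the orientation of $\star$ the paper intends, namely that the right-hand factor is queried first (this is how the prose describes $\star$ and how the proof of \Cref{lem:diamond-fp} manipulates instances of $P \star Q^\diamond$, even though the displayed formula in \Cref{fig:operators} reads with the roles swapped): an instance of $\interp{a}_\rho \star 0$ must contain an instance of $0$, hence $\dom(\interp{a}_\rho \star 0) = \emptyset$ and the reduction is vacuous. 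The same orientation is what makes your self-corrected treatment of $\top \le a \star \top$ work uniformly, including for $\rho(a) = 0$: an instance of $\interp{a}_\rho \star \top$ consists of the $\top$-instance together with a continuation that only needs to be defined on the empty solution set, so it always exists and has no solutions, and mapping $\unitelt$ to it gives the reduction. With these two cases repaired, your case analysis together with \Cref{lem:diamond-fp} yields the lemma, and the induction-on-derivations wrapper is unproblematic.
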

\begin{proof}
The validity of every axiom not involving $(-)^\diamond$ or $\top$ has been established previously for Weihrauch degrees~\cite{paulybrattka4}, and the proof that
they hold in the partial Weihrauch degrees is exactly the same.
The more
obscure one is left-half-distributivity of $\sqcap$ over $\star$, which follows
from~\cite[Proposition 7 (4)]{paulybrattka4} by taking $\mathbf{d} = 1$ there (a more direct approach in a similar setting is given in~\cite[Appendix C.2]{PP26}).
\Cref{lem:diamond-fp} covers the parameterized $\diamond$-induction axiom,
and the other two axioms involving $(-)^\diamond$ follow from $P^\diamond = 1 + P^\diamond$.
\end{proof}

To illustrate the axioms of \Cref{fig:axioms}, 
let us use them to derive some valid inequalities in the Weihrauch degrees.

\begin{exa}
\label{ex:derivation-intro}
Let us derive the inequality of \Cref{ex:intro-3}, $(b \star a^\diamond) \sqcap (c \star a^\diamond) \le
(b \sqcap c) \star a^\diamond$.
First, by left-half-distributivity of $\sqcap$ over $\star$, we have
\[
\textcolor{purple}{(b \star a^\diamond)} \sqcap (\textcolor{teal}{c} \star \textcolor{brown}{a^\diamond}) \quad \le \quad
\left(\textcolor{purple}{(b \star a^\diamond)} \sqcap \textcolor{teal}{c}\right) \star \textcolor{brown}{a^\diamond}\]
Our axioms show that $\sqcap$ is commutative in a standard way. By applying that, associativity of $\star$ and using left-half-distributivity again,
we thus get
\[
\left((\textcolor{teal}{b} \star \textcolor{brown}{a^\diamond}) \sqcap \textcolor{purple}{c}\right) \star a^\diamond
\quad
\le
\quad
\left(\textcolor{teal}{b}
  \sqcap \textcolor{purple}{c}\right)
\star
\textcolor{brown}{a^\diamond}
  \star a^\diamond
\]
and therefore that $(b \star a^\diamond) \sqcap (c \star a^\diamond) \le
(b \sqcap c)\star a^\diamond \star a^\diamond$.
It then suffices to show $a^\diamond \star a^\diamond \le a^\diamond$ to conclude, which can be done by applying the (parameterless) $\diamond$-induction
scheme
\[ \textcolor{purple}{a^\diamond} \star \textcolor{teal}{a} ~~ \le~~ \textcolor{purple}{a^\diamond} \qquad \Longrightarrow \qquad \textcolor{purple}{a^\diamond}
\star \textcolor{teal}{a}^\diamond ~~ \le ~~ \textcolor{purple}{a^\diamond}\]
Its premise is one of the fixpoint unfolding axiom, so we can conclude.
\end{exa}

Aside from $a^\diamond \star a^\diamond \le a^\diamond$, there are two other
remarkable properties of $a^\diamond$ that are derivable from $\diamond$-induction.
\begin{itemize}
\item First, that $(-)^\diamond$ is a monotone operator. Indeed, assuming $a \le b$, we
further have that $b^\diamond \star a \le b^\diamond \star b \le b^\diamond$.
Hence, by $\diamond$-induction, $b^\diamond \star a^\diamond \le b^\diamond$, which
then entails $a^\diamond \le b^\diamond$ since $1 \le b^\diamond$.
\item Second, that $(-)^\diamond$ is a closure operator: Applying the $\diamond$-induction
  scheme to $a^\diamond \star a^\diamond \le a^\diamond$, we get $a^\diamond \star {a^\diamond}^\diamond \le a^\diamond$,
which similarly entails ${a^\diamond}^\diamond \le a^\diamond$.
\end{itemize}

\begin{exa}
\label{ex:derivation}
Let us show how to establish that $a^\diamond \sqcap b^\diamond \le (a \sqcap b)^\diamond$.
Starting from the right-hand side, a fixpoint unfolding axiom yields
\[\textcolor{purple}{(a \sqcap b)}^\diamond \star \textcolor{purple}{(a \sqcap b)} \quad\le\quad \textcolor{purple}{(a \sqcap b)}^\diamond\]
Then we can use a general left-distributivity of $\star$ over $\sqcap$ (only the right-to-left
inequality is part of the official axioms, but the other is derivable purely from the
meet axioms):
\[
  (\textcolor{purple}{(a \sqcap b)^\diamond} \star \textcolor{teal}{a})
  \sqcap (\textcolor{purple}{(a \sqcap b)^\diamond} \star \textcolor{brown}{b}) \quad=\quad \textcolor{purple}{(a \sqcap b)^\diamond} \star (\textcolor{teal}{a}
  \sqcap \textcolor{brown}{b})
\]
All in all we have $((a \sqcap b)^\diamond \star a) \sqcap ((a \sqcap b)^\diamond \star b) \le (a \sqcap b)^\diamond$,
which is the premise of the following instance of parameterized $\diamond$-induction:
\[
  \textcolor{brown}{((a \sqcap b)^\diamond \star a)} \sqcap (\textcolor{purple}{(a \sqcap b)^\diamond} \star \textcolor{teal}{b}) ~~\le~~ \textcolor{purple}{(a \sqcap
  b)^\diamond}
  \quad \Longrightarrow \quad
  \textcolor{brown}{((a \sqcap b)^\diamond \star a)} \sqcap (\textcolor{purple}{(a \sqcap b)^\diamond} \star \textcolor{teal}{b}^\diamond) ~~\le~~ \textcolor{purple}{(a \sqcap
  b)^\diamond}
\]
The conclusion itself is the premise of the following variant of parameterized $\diamond$-induction,
up to commutativity of $\sqcap$
(in both premise and conclusion):
\[
  (\textcolor{purple}{(a \sqcap b)^\diamond} \mathrel{\star} {\textcolor{teal}{a}}) \sqcap
\textcolor{brown}{({(a \sqcap b)^\diamond} \star b^\diamond)} ~~\le~~
\textcolor{purple}{(a \sqcap b)^\diamond}
  \quad \Longrightarrow \quad
  (\textcolor{purple}{(a \sqcap b)^\diamond} \mathrel{\star} \textcolor{teal}{a}^\diamond) \sqcap
\textcolor{brown}{((a \sqcap b)^\diamond \star b^\diamond)} ~~\le~~
\textcolor{purple}{(a \sqcap b)^\diamond}
\]
Hence it suffices to show
$a^\diamond \sqcap b^\diamond \le
({(a \sqcap b)^\diamond} \star {a}^\diamond) \sqcap
{({(a \sqcap b)^\diamond} \star b^\diamond)}$ to conclude.
We can start by noting that the right-hand side can be rewritten
as $(a \sqcap b)^\diamond \star (a^\diamond \sqcap b^\diamond)$ by using 
the left-distributivity of $\star$ over $\sqcap$. Then we can use
the fixpoint unfolding  axiom $1 \le \textcolor{purple}{(a \sqcap b)}^\diamond$
to get 
$1 \star (a^\diamond \sqcap b^\diamond) \le (a \sqcap b)^\diamond$ and finally
the unitality of $\star$ to have $a^\diamond \sqcap b^\diamond$ below all of that.
\end{exa}

\subsection{Incompleteness for Weihrauch problems}

As $\SRKAMT$ is sound for partial Weihrauch degrees, it is also sound for
Weihrauch problems. It is however not complete.
In $\SRKAM$, we do not have any proof of $0 \star a \le a$, which would be unsound
for $a = \top$. However, that principle is true in Weihrauch degrees.

\begin{lem}
For any Weihrauch problem $P$, we have $0 \star P \leqW 0$.
\end{lem}
\begin{proof}
If $u \in \dom(P)$, then since $P$ is a Weihrauch problem, then $P(u) \neq \emptyset$
and there is no map $P(u) \to \emptyset$. A fortiori there can be no $\tuple{u, f} \in \dom(0 \star P)$.
\end{proof}

We leave it as an open problem to fully axiomatize equations in
$(\mathfrak{W}, \sqcap, \sqcup, \star, 0, 1, (-)^\diamond)$,
although we shall be able to handle pointed degrees in~\Cref{sec:completeness}.

\section{Alternating automata as problems and the simulation game}
\label{sec:game}

The main goal of this section is, given $e, g \in \RegE{\Sigma}$,
define simulation games $\simGame(\emptyset \mid \{e\} \vdash f)$
in which Duplicator has a winning strategy if and only if $e \le f$ is universally
valid in the partial Weihrauch degrees. The simulation game will be a B\"uchi
game over a finite graph, which will yield decidability of universal validity.

However, it is conceptually cleaner to prove a more general result involving
automata, where bureaucratic details pertaining to the syntax
of regular expressions are abstracted away. So we first introduce such automata $\mathfrak{A}$,
how they can be interpreted as problems $\interp{\mathfrak{A}}_\rho$ (when given
a valuation $\rho$ on their alphabet), and how a classical translation of
regular expressions $e$ to finite automata $\mathfrak{re}_\Sigma^{(e)}$
gives equivalent Weihrauch problems $\interp{\mathfrak{re}_\Sigma^{(e)}}_\rho \equivW \interp{e}_\rho$.

Then we define a simulation game where Duplicator wins if and only if $\interp{\mathfrak{A}}_\rho \leqW \interp{\mathfrak{B}}_\rho$
for all valuations $\rho$. The interesting part is the converse, which is proven
by finding a generic valuation $G$ such that Spoiler winning implies that
$\interp{\mathfrak{A}}_G \not\leqW \interp{\mathfrak{B}}_G$
when Spoiler wins. To make problems in $G$ hard enough, we crucially use that
the Turing degrees contain a strong countable antichain and that it is impossible
to continuously tell two Turing degrees apart. Then, after a technical argument,
we can conclude by determinacy of B\"uchi games.

We conclude by briefly commenting on the variant of those theorems for pointed
(partial) Weihrauch degrees.

\subsection{Alternating automata as problems, and regular expressions}
\label{subsec:transterm}

We first start with our working definition of alternating automata.
Now and for the rest of
the paper, we assume distinct \emph{polarity} symbols $\forall$, $\exists$ and
$\done$, assumed to be disjoint from the working alphabet $\Sigma$. They are
meant to inform the nature of a given state: whether the non-determinism is angelic ($\exists$)
or demonic ($\forall$), or if the state is final ($\done$).
A non-standard helpful convention we will take is to put letter labels $a \in \Sigma$
not on transitions, but on states.
As such, we will regard elements of $\Sigma$ as special polarities in our automata,
and all the actual transitions will be ``silent''.
We write $\Pol = \Sigma \cup \{\forall, \exists, \done\}$ the set of polarities.
All automata we will consider might be alternating by default.

\begin{defi}
  \label{def:automaton}
A preautomaton over the alphabet $\Sigma$ is a tuple
$\mathfrak{A} = (Q, \to, \pol)$ where:
\begin{itemize}
\item $(Q, \to)$ is a computable directed graph. Its vertices are called \emph{states}
and edges \emph{transitions}. Computability means that we are implicitly given
a canonical injection $Q \to \bN$ whose range is a computable subset and that
the lifting of the edge relation is decidable. Further we require that each
vertex has finitely many successors.
\item $\pol : Q \to \Pol$ is a polarity assignment.
States with polarity $\checkmark$ will be called \emph{final} and states $e$
with $\pol(e) \in \Sigma$ will be called \emph{oracle} states.
\item We further impose the following constraints on the graph $(Q, \to)$:
\begin{itemize}
\item if $e$ is an oracle state, it has exactly one
incoming edge and one outgoing edge in $(Q, \to)$
\item if $e$ is a final state, it has no outgoing edges.
\end{itemize}
\end{itemize}
An automaton is given by a preautomaton $\mathfrak{A}$ together with
a state $\iota$ of $\mathfrak{A}$, which we regard as the \emph{initial state}.
The \emph{accessible part} $\mfA^{(\iota)}$ of an automaton $(\mfA, \iota)$ is
its natural restriction to the set of states reachable from $\iota$.
\end{defi}

The automata pictures we have given in the introduction implicitly give
examples of such automata. We give additional examples and explain conventions
in~\Cref{fig:autos}. Before giving the semantics of these automata in
partial Weihrauch problems, let us define the translation of regular expressions
into automata.

\begin{defi}
Define the $\RegE{\Sigma}$-preautomaton $\reAut$ over $\Sigma$
as $(\RegE{\Sigma}, \to, \pol)$ where $\to$ is the least relation such that,
for every $e, f \in \RegE{\Sigma}$ we have
\[\begin{array}{cr}
  e \sqcup f \to e \qquad\qquad e \sqcup f \to f \qquad\qquad
  e \sqcap f \to e \qquad\qquad e \sqcap f \to f \\
  a \to 1 ~\text{\small (for $a \in \Sigma$)} \qquad\qquad e \star 1 \to e \qquad\qquad e^\diamond \to 1 \qquad\qquad e^\diamond \to e^\diamond \star e
\end{array}\]
and such that for every $g \in \RegE{\Sigma}$, 
$e \to f$ implies $g \star e \to g \star f$.

The polarities are given by $\pol(a) = a$ for $a \in \Sigma$ and the following clauses:
\[\begin{array}{lcl !\qquad lcl !\qquad lcl}
  \polarity(e \sqcup f) &=& \duppol
  &
  \polarity(e \sqcap f) &=& \spopol
  &
  \polarity(e \star f) &=& \polarity(f) ~~\text{if $f \neq 1$}\\
  \polarity(e^\diamond) &=& \duppol & &&& \polarity(e \star 1) &=& \exists
  \\
  \polarity(0) &=& \duppol
  &
  \polarity(\top) &=& \spopol
&
\polarity(1) &=& \done
\end{array}
  \]
The automaton associated to some $e \in \RegE{\Sigma}$ is $\reAut^{(e)}$.
We call states of $\reAut^{(e)}$ \emph{spawns} of $e$.
\end{defi}

\begin{rem}
It would be equivalent to take the more standard approach of defining
counterpart of the operators of $\RegE{\Sigma}$ as operation over automata, but
we find it more notationally convenient to proceed with $\reAut$ in this paper.
\end{rem}

\begin{figure}
  \begin{center}
\begin{tabular}{c !\qquad c}
\begin{tikzpicture}[->,>={Stealth[round]},shorten >=1pt,
                    node distance=1.5cm,semithick,
                    inner sep=2pt,bend angle=45]
  \tikzset{every state/.style={minimum size=20pt}}
  \tikzset{initial text={}}
  \tikzset{opp state/.style={draw,square,minimum size=20pt}}
  \node[initial,opp state] (top) {};
  \node[state]         (u) [above right=of top] {};
  \node[state]         (d) [below right=of top] {};
  \node[state,accepting right]         (ub) [below right=of u] {};

  \path [every node/.style={fill=\backgroundcolor,circle}]
        (top) edge [bend left] (u)
        (top) edge [bend right] (d)
        (u) edge [bend left] (ub)
        (d) edge [bend right] (ub)
        (u) edge 
        [out=-55,in=-125,looseness=7] 
        node {$a$} (u)
        (d) edge 
        [out=55,in=125,looseness=7]
        node {$b$} (d);
\end{tikzpicture}
  &
\begin{tikzpicture}[->,>={Stealth[round]},shorten >=1pt,
                    node distance=1.5cm,semithick,
                    inner sep=2pt,bend angle=45]
  \tikzset{every state/.style={minimum size=20pt}}
  \tikzset{initial text={}}
  \tikzset{opp state/.style={draw,square,minimum size=20pt}}
  \node[initial,state] (top) {};
  \node (prout) [below=of top] {};
  \node[opp state]         (low) [below=of top] {};
  \node[state, accepting right]         (acc) [right=of top] {};
  \node[state] (dummy) [right=of low] {};

  \path [every node/.style={fill=\backgroundcolor,circle}]
        (top) edge [bend left] (acc)
        (top) edge [bend right] (low)
        (low) edge [bend left] node {$a$} (dummy)
        (low) edge [bend right] node {$b$} (dummy)
        (dummy) edge [bend right] (top);
\end{tikzpicture}
\\
  $a^\diamond \sqcap b^\diamond$
  &
  $(a\sqcap b)^\diamond$
\end{tabular}
  \end{center}
\caption{Two depictions of automata, which formally both have 6 states respectively.
$\exists$ and final states are circles, $\forall$ states are squares and oracle
states are just given by their polarity in $\Sigma$. No boundaries are drawn
around oracle states and the tip of the of the incoming edge is omitted.
Outgoing arrows with no target states allow to distinguish final and $\exists$ states.
The unique arrow with no source state indicates which is the initial state.}
\label{fig:autos}
\end{figure}

\begin{rem}
$\pol(e \star 1)$ could have also been $\forall$ without affecting the rest
of the paper.
\end{rem}

\begin{exa}
The 6 spawns of $(a \sqcap b)^\diamond$ are
\[
(a \sqcap b)^\diamond \qquad
(a \sqcap b)^\diamond \star (a \sqcap b) \qquad
(a \sqcap b)^\diamond \star a \qquad
(a \sqcap b)^\diamond \star b \qquad
(a \sqcap b)^\diamond \star 1 \quad \text{and} \quad
1\]
The automaton $\reAut^{((a \sqcap b)^\diamond)}$ is faithfully depicted on
the right of~\Cref{fig:autos}.
However, $\reAut^{(a^\diamond \sqcap b^\diamond)}$ is \emph{not quite} the automaton
depicted on its left, although it is semantically equivalent for all intent and
purposes.
Formally, the picture is missing states for the spawns $a^\diamond \star 1$
and $b^\diamond \star 1$.
\end{exa}

\begin{rem}
An expression $e$ has at most $2 \cdot \size{e}$-many spawns and
if $e'$ is a spawn of $e$, then $\size{e'} \le \size{e}^2$.
Furthermore the map $e \mapsto \reAut^{(e)}$ is computable in
quadratic time.
\end{rem}

We now define the semantics of finite automata $\mfA^{(e)}$, which given a valuation $\rho$
of letters into partial Weihrauch problems yields a partial Weihrauch problem
$\interp{\mfA^{(e)}}_\rho$.
At a very high level and in the style of~\cite{topol-comput-neumann-pauly}, 
the elements of $\dom(\interp{\mfA^{(e)}}_\rho)$ can be thought of as codes for
type-2 programs (and infinite advice strings) that can make oracle calls to some $\rho(a)$ for $a \in \Sigma$,
but with some bespoke restrictions. The programs should have some additional
memory $M$ that contains either a state or a finite set of states $C$.
This memory, initialized with $e$, drives
whether the program can/should terminate and make oracle calls (the program
is otherwise free to read or write other parts of memory):
\begin{itemize}
\item if $M$ contains a state $e'$, then the program cannot terminate
nor make oracle calls until it writes in $M$ a finite set $C$
of states of $\mfA$ that are winning position in a game played on the
subgraph of $(Q, \to)$ induced by states of $\mfA^{(e')}$ without intermediate
oracle states\footnote{Later in this section, we make formal a similar notion of \emph{execution strategy} for the sake of a proof.}.
\item if $M$ contains a finite set $C \subseteq Q$, then we have several scenarios:
\begin{itemize}
\item if $C$ contains any $\exists$ state, then the program is forced to loop forever
\item if $C$ contains only final states, then the program is free to declare it won't make
further oracle calls and terminate by outputting something (in $\Baire$)
\item otherwise, after saving some finite string $s$ to be used later, the program
should write an input 
to the problem
\[\bigsqcap_{e' \in C} \rho(\pol(e'))\]
and get back an answer after the oracle call that will point to
a successor state $e''$ to some $e' \in C$. The memory $M$ then gets
rewritten to be $e''$, and the program has in memory $s$, the input and all results of oracle calls.
The program can \emph{crash} if $\bigsqcap_{e' \in C} \rho(\pol(e'))$ has no
possible answers.
\end{itemize}
\end{itemize}
Then $\dom(\mfA^{(e)})$ consists of the programs that always either crash or
terminate, no matter how the oracle calls are resolved. Then an answer to such
a question can be a successful execution trace, or equivalently, a list of
consistent oracle answers along a successful execution.

The informal description we just concluded has a ``big step'' flavour, where
the running of $\mfA$ is squashed into a game definition. Below we formalize an
equivalent definition which has rather a more ``small steps'' flavour; we leave
figuring out the slick formal definition of the ``big step'' definition and
proving it to be equivalent to the next definition as a somewhat tedious
exercise.

\begin{defi}
For any preautomaton $\mfA = (Q, \to, \pol)$ and valuation $\rho$
mapping $\Sigma$ to problems, let us simultaneously define problems
$\interp{\mfA^{(e)}}$ for all $e \in Q$. We define the domains inductively
as follows:
\begin{itemize}
\item If $\pol(e) = \done$, then $\dom(\interp{\mfA^{(e)}}_\rho) = \{\tuple{ }\}$.
\item If $\pol(e) = \duppol$,
$e \to e'$ and $c \in \dom(\interp{\mfA^{(e')}}_\rho)$,
then $\tuple{\code{e'},c} \in \dom(\interp{\mfA^{(e)}}_\rho)$.
\item If $\pol(e) = \spopol$, note that
$\{e' \mid e \to e'\}$ is finite and is canonically ordered by the computable
structure of $(Q, \to)$. Let 
$e'_1, \ldots, e'_n$ be the enumeration of its elements 
in order. Then for every tuple $(c_1, \ldots, c_n) \in \prod_{i = 1}^n \dom(\interpaut{e'_i}_\rho)$,
we have $\tuple{c_1, \ldots,c_n} \in \dom(\interpaut{e}_\rho)$.
\item If $\pol(e) = a \in \Sigma$ and $e \to e'$, then when $u \in \dom(\rho(a))$ and $\alpha$
is a code for a function $\rho(a)(u) \to \dom(e')$, then $\tuple{u,\alpha} \in \dom(\interpaut{e}_\rho)$.
\end{itemize}
We assign to every state $e \in Q$ and input $u \in \dom(\interpaut{e}_\rho)$
an ordinal rank that we call $\rank^{(e)}(u)$ in the obvious way, keeping the same notations
as above for successor states:
\begin{itemize}
\item for $\pol(e) = \done$, $\rank^{(e)}(\langle\rangle) = 0$,
\item for $\pol(e) = \duppol$, $\rank^{(e)}(\langle \code{e'}, c\rangle) = \rank^{(e')}(c) + 1$,
\item for $\pol(e) = \spopol$, $\rank^{(e)}(\langle c_1, \ldots, c_n\rangle) =
  \bigoplus\limits_{i = 1}^n \{ \rank^{(e_i')}(c_i) \mid 1 \le i \le n\} + 1$ where
  $\oplus$ denotes the natural sums of ordinals\footnote{Recall that $\alpha \oplus \beta$ is the smallest ordinal
    strictly dominating $\{\alpha \oplus \gamma \mid \gamma < \beta\} \cup \{ \gamma \oplus \beta \mid \gamma < \alpha\}$,
  which makes it the smallest sensible commutative and associative sum on ordinals.},
\item and for $\pol(e) = a \in \Sigma$,
$\rank^{(e)}(\langle u, \alpha\rangle) = \sup \{ \rank^{(e')}(\alpha \cdot x) \mid x \in \rho(a)(u)\} + 1$.
\end{itemize}
The solution sets $\interpaut{e}_\rho(u)$ are defined by mutual recursion on the ordinals
$\rank^{(e)}(u)$:
\begin{itemize}
\item If $\pol(e) = \done$, then $\interpaut{e}_\rho(\tuple{ }) = \{\tuple{ }\}$.
\item If $\pol(e) = \duppol$, a solution for
$\tuple{\code{e'},c} \in \dom(\interpaut{e}_\rho)$ is a solution for $c \in \dom(\interpaut{e'}_\rho)$.
\item If $\pol(e) = \spopol$, a solution for 
$\tuple{c_1, \ldots, c_n} \in \dom(\interpaut{e}_\rho)$
    is a pair $\tuple{\code{i}, s}$ where $1 \le i \le n$ and $s$ is a solution for $c_i$ in $\interpaut{e'_i}_\rho$.
\item If $\pol(e) = \orapol{a} \in \Sigma$ and $e \to e'$, a solution for $\tuple{u,f} \in \dom(\interpaut{e}_\rho)$ is a pair $\tuple{x, s}$ where $x$ is a solution for $u$ in $\rho(a)$ and
$s$ is a solution for $f \pcaapp s$ in $\interpaut{e'}_\rho$.
\end{itemize}
\end{defi}

\begin{lem}
\label{lem:regE-to-aut}
For every $e, f \in \RegE{\Sigma}$ and $\rho$, we have Weihrauch equivalences
\[\begin{array}{rcl !\qquad rcl}
 \multicolumn{6}{c}{
 \interpt{0}_\rho ~~\equivW~~ 0 \qquad  
 \interpt{\top}_\rho ~~\equivW~~ \top \qquad
 \interpt{1}_\rho ~~\equivW~~ 1 \qquad
 \interpt{a}_\rho ~~\equivW~~ \rho(a)} \\\\
 \interpt{e}_\rho \sqcup \interpt{f}_\rho &\equivW& \interpt{e \sqcup f}_\rho &
 \interpt{e}_\rho \sqcap \interpt{f}_\rho &\equivW& \interpt{e \sqcap f}_\rho \\\\
 \interpt{e}_\rho \star \interpt{f}_\rho &\equivW& \interpt{e \star f}_\rho
 & \interpt{e}_\rho^\diamond &\equivW& \interpt{e^\diamond}_\rho\\
\end{array}
  \]
\end{lem}
\begin{proof}
The first six reductions are easy, and we thus skip them.
The cases of composition and iteration are more involved, although the
proof is rather dry.

Now let us sketch how to prove 
$\interpt{e}_\rho \star \interpt{f}_\rho \equivW \interpt{e \star f}_\rho$
by programming reductions both ways.
Before going further,
note that the cases of $f = 1$ and $\pol(f) \in \Sigma$ are easily handled.
\begin{itemize}
\item For any preautomaton $\mfA$ and state $e$ of $\mfA$ with
$e \to e'$ with $\pol(e) = a \in \Sigma$, by definition, we
have the equality of problems $\interp{\mfA^{(e)}}_\rho = \interp{\mfA^{(e')}}_\rho \star \rho(a)$.
\item 
Similarly, in the specific case of $\mfA = \reAut$, we easily have that
$\interpt{e \star 1}_\rho \equivW \interpt{e}_\rho \equivW \interpt{e}_\rho \star 1$.
The first reductions are easy by inspecting the definitions: $\pol(e \star 1) = \exists$ and thus elements of $\dom(\interpt{e \star 1}_\rho)$
have shape $\langle \langle\rangle, h\rangle$ for $h$ such that $h \cdot \tuple{ } \in \dom(\interpt{e}_\rho)$,
and $\interpt{e \star 1}_\rho(\tuple{ \tuple{ }, h}) = \interpt{e}_\rho(h \cdot \tuple{ })$.
\end{itemize}

We thus have to define those reductions for other $f$s. In those cases,
they will need to be be mutually defined for all spawns of $f$ and $e \star f$
that we have not treated yet. The definitions are recursive, and can
can be a posteriori checked to be well-defined and correct by
induction of the ranks of some questions:
\begin{itemize}
\item For the reduction 
$\interpt{e}_\rho \star \interpt{f}_\rho \leqW \interpt{e \star f}_\rho$, the
forward part of the reduction $\varphi^\to_{e \star f}$ takes as input pairs $\langle u, g\rangle$
with $u \in \dom(\interpt{f}_\rho)$ and $g$ a legal continuation of $u$ into $\interpt{e}_\rho$ (i.e., such that $g \cdot x \in \dom(\interpt{e}_\rho)$
for all $x \in \interpt{f}_\rho(u)$). The backward part $\psi^{\to}_{e \star f}$ additionally
takes as input some $y \in \varphi^\to_{e \star f}(\tuple{u, g})$.
We perform a case analysis according to $\pol(f)$ to define both by a recursive
algorithm.
The recursion can then be checked to be correct and total by induction on the rank of $u$.
The cases $\pol(f) \in \Sigma \cup \{\done\}$ being already
treated above, we have two cases left:
\begin{itemize}
  \item If $\pol(f) = \exists$, $u = \langle \code{f'}, c \rangle$ with $f \to f'$ and
    $c \in \dom(\interpt{f'}_\rho)$. We set
\[
  \begin{array}{lcl}
\varphi^{\to}_{e \star f}(\langle \langle \code{f'}, c\rangle, g \rangle)
&=&
\langle \code{e \star f'}, \varphi^{\to}_{e \star f'}(\langle c, g\rangle)\rangle
\\
\psi^{\to}_{e \star f}(\langle \langle \code{f'}, c\rangle, g \rangle, x)
&=&
\psi^{\to}_{e \star f}(\langle c, g\rangle, x)
  \end{array}\]
  \item If $\pol(f) = \forall$, $u = \langle c_1, \ldots, c_n \rangle$ with $c_i \in \dom(\interpt{f'_i}_\rho)$ where $f'_i$ is the $i$th successor of $f$.
    We set the following, noting that $x$ must have shape $\tuple{\code{i}, x'}$ with $1 \le i \le n$ and $x' \in \interpt{f'_i}_\rho(c_i)$.
\[\small
  \begin{array}{lcl}
    \varphi^{\to}_{e \star f}(\tuple{\langle c_1, \ldots, c_n \rangle, g}) &=&
  \tuple{
    \varphi^\to_{e \star f'_1}(\tuple{c_1, \lambda x. \; g \cdot \tuple{\code{1}, x}}), \ldots,
    \varphi^\to_{e \star f'_n}(\tuple{c_n, \lambda x. \; g \cdot \tuple{\code{n}, x}})
}\\
    \psi^{\to}_{e \star f}(\tuple{\langle c_1, \ldots, c_n \rangle, g}, \langle \code{i}, x'\rangle) &=&
    \langle\langle \code{i}, y\rangle, z\rangle \quad \text{when $\psi^\to_{e \star f'_i}(\tuple{c_i, g}, x') = \tuple{y, z}$}
  \end{array}
\]
\end{itemize}
\item The case of the reductions $(\varphi^\leftarrow_{e\star f}, \psi^\leftarrow_{e\star f})$
  is handled similarly. Let us simply give the key definitions, first for $\pol(f) = \exists$,
when $\varphi^{\leftarrow}_{e \star f'}(c) = \tuple{c', g}$:
\[
  \begin{array}{lcl}
\varphi^{\leftarrow}_{e \star f}(\langle \code{e \star f'}, c\rangle)
&=&
\tuple{\tuple{\code{f'}, c'}, g}
\\
\psi^{\leftarrow}_{e \star f}(\langle \code{f'}, c\rangle, \langle x, y\rangle)
&=&
\psi^{\leftarrow}_{e \star f'}(c, \tuple{x, y})
  \end{array}\]
  And then for $\pol(f) = \forall$,
where $\varphi^\leftarrow_{e \star f'_i}(c_i) = \tuple{c'_i, g_i}$ for $1 \le i \le n$ and $\hat{g}$ is a code realizing
$\tuple{\code{i}, x'} \mapsto g_i
    \cdot x$:
\[\small
  \begin{array}{lcl}
    \varphi^{\leftarrow}_{e \star f}(\tuple{c_1, \ldots, c_n}) &=&
    \tuple{\tuple{c'_1, \ldots, c'_n}, \hat{g}}
\\
    \psi^{\leftarrow}_{e \star f}(\tuple{c_1, \ldots, c_n}, \tuple{\tuple{\code{i}, x'}, y}) &=&
    \tuple{\code{i}, \psi^{\leftarrow}(c_i, \tuple{x', y})}
  \end{array}
\]
\end{itemize}

This concludes our treatment of composition.

Now, turning to iterations, let us first prove that
$\interpt{e}_\rho^\diamond \leqW \interpt{e^\diamond}_\rho$. To do this,
by $\diamond$-induction, it suffices to observe that there is an easy reduction $1 \leqW \interpt{e^\diamond}_\rho$
and show
\[\interpt{e^\diamond}_\rho \star \interpt{e}_\rho \leqW \interpt{e^\diamond}_\rho\]
Since we know that $\interpt{e^\diamond}_\rho \star \interpt{e}_\rho \equivW \interpt{e^\diamond \star e}_\rho$,
it suffices to build a reduction $\interpt{e^\diamond \star e}_\rho \leqW \interpt{e^\diamond}_\rho$, which is again
trivial as $e^\diamond \to e^\diamond \star e$.

For the other direction, we want to define a reduction
\[(\varphi^\leftarrow_{e^\diamond}, \psi^\leftarrow_{e^\diamond}) \quad : \qquad \interpt{e^\diamond}_\rho \quad \leqW \quad \interpt{e}_\rho^\diamond\]
It will be obtained by composing two reductions together, using a recursive definition of $(\varphi^\leftarrow_{e^\diamond},
\psi^\leftarrow_{e^\diamond})$. In the right-to-left order, this goes as follows:
\begin{itemize}
\item First we start with a a preprocessing reduction, that inspects the input and decomposes it using our previous result on composition.
\[(\varphi_{\mathsf{pre}}, \psi_{\mathsf{pre}}) \quad:\qquad \interpt{e^\diamond}_\rho \quad \leqW \quad 1 \sqcup \interpt{e^\diamond}_\rho \star \interpt{e}_\rho\]
An important point is the following strict decrease in ranks for arbitrary $v\in \dom(\interpt{e^\diamond})$:
\[\varphi_{\mathsf{pre}}(v) = \inc_2(\langle u, g\rangle) \qquad \Longrightarrow \qquad \rank^{e^\diamond}(g \cdot x) < \rank^{e^\diamond}(v) \qquad
\text{for every $x \in \interpt{e}_\rho(u)$}\]
\item Then, we act recursively by lifting our would-be reduction
$(\varphi^\leftarrow_{e^\diamond}, \psi^\leftarrow_{e^\diamond})$ to a reduction
\[1 \sqcup \interpt{e^\diamond}_\rho \star \interpt{e}^\diamond_\rho \quad \leqW \quad 1 \sqcup \interpt{e}_\rho^\diamond \star \interpt{e}_\rho\]
Overall the composition can be seen to be well-defined by using our previous remark on ranks.
Finally we recall that the we have an equality of problems $1 \sqcup P^\diamond \star P = P^\diamond$, so the target of this second reduction is what we
want.
\end{itemize}
The preprocessing reduction is itself built as a composition of two reductions.
First, we have an easy reduction
\[(\varphi_{\mathsf{unfold}}, \psi_{\mathsf{unfold}}) \quad :\qquad \interpt{e^\diamond}_\rho \quad\leqW\quad 1 \sqcup \interpt{e^\diamond \star e}_\rho\]
such that $\varphi_{\mathsf{unfold}}(v) = \inc_2(\tuple{u , g})$ implies $\rank(v) = \rank(\tuple{u, g}) + 1$.
Next, recall we have already defined a reduction above in the proof
\[(\varphi^\leftarrow_{e^\diamond \star e}, \psi^\leftarrow_{e^\diamond \star e}) \qquad : \quad \interpt{e^\diamond \star e}_\rho \quad\leqW\quad \interpt{e^\diamond}_\rho \star
\interpt{e}_\rho\]
Coming back to the definition of $\varphi^\leftarrow_{e \star f}$, we can generally prove by induction on $\rank^{f}(u)$ that
\[\varphi^\leftarrow_{e \star f}(v) = \tuple{u, g} \qquad\Longrightarrow\qquad
\rank^{e \star f}(v) = \sup \{\rank^{e^\diamond}(g \cdot x) \mid x \in \interpt{e}_\rho\} + 1 + \rank^{f}(u)\]
Together with the observation about $\varphi_{\mathsf{unfold}}$, this allow us
to conclude $\varphi_{\mathsf{pre}}$ behaves as prescribed above on ranks.
\end{proof}

\begin{cor}
For every $e \in \RegE{\Sigma}$ and $\rho$, we have equivalences $\interp{e}_\rho \equivW \interpt{e}_\rho$.
\end{cor}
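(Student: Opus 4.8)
The plan is to leverage \Cref{lem:regE-to-aut} compositionally, proving the corollary by structural induction on the expression $e \in \RegE{\Sigma}$. In each case we want a Weihrauch isomorphism $\interp{e}_\rho \cong \interpt{e}_\rho$, and since isomorphisms compose and are closed under the operations (being witnessed by pairs of computable maps that compose to identities), the inductive step will be essentially a diagram chase.

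For the base cases, when $e = a \in \Sigma$ we have $\interp{a}_\rho = \rho(a)$ by definition, and $\interpt{a}_\rho$ unfolds: since $\polarity(a) = \orapol{a}$ and the only transition is $a \to 1$, an instance of $\interpt{a}_\rho$ is a pair $\tuple{u,f}$ with $u \in \dom(\rho(a))$ and $f$ coding a function $\rho(a)_u \to \dom(\interpt{1}_\rho)$; since $\interpt{1}_\rho \cong 1$ has an essentially trivial domain, this is isomorphic to $\rho(a)$, with the solution bookkeeping matching up. The constants $0, 1, \top$ are handled directly by the corresponding clauses $(\gamma_0,\delta_0)$, $(\gamma_1,\delta_1)$, $(\gamma_\top,\delta_\top)$ of \Cref{lem:regE-to-aut}, noting that $\interp{\mathbf k}_\rho = \mathbf k$ by definition of $\interp{-}_\rho$.

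For the inductive cases, suppose the claim holds for $e$ and $f$. For $e \boxempty f$ with $\boxempty \in \{\sqcup, \sqcap, \star\}$, we have $\interp{e \boxempty f}_\rho = \interp{e}_\rho \boxempty \interp{f}_\rho \cong \interpt{e}_\rho \boxempty \interpt{f}_\rho$ using the induction hypotheses and the fact that $\boxempty$, as an operation on extended Weihrauch problems, respects isomorphisms (one simply post-composes the component maps of the isomorphisms $\interp{e}_\rho \cong \interpt{e}_\rho$ and $\interp{f}_\rho \cong \interpt{f}_\rho$ with the structural maps defining $\boxempty$); then we apply $(\gamma_\boxempty, \delta_\boxempty)$ from \Cref{lem:regE-to-aut} to obtain $\interpt{e}_\rho \boxempty \interpt{f}_\rho \cong \interpt{e \boxempty f}_\rho$. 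The case $e^\diamond$ is analogous: $\interp{e^\diamond}_\rho = \interp{e}_\rho^\diamond \cong \interpt{e}_\rho^\diamond \cong \interpt{e^\diamond}_\rho$, where the first isomorphism uses that $(-)^\diamond$ preserves isomorphisms and the second is $(\gamma_\diamond, \delta_\diamond)$.

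The main obstacle is checking that the operations $\sqcup, \sqcap, \star, (-)^\diamond$ on extended Weihrauch problems genuinely preserve Weihrauch isomorphism — this is routine for the lattice operations and for $\star$, but for $(-)^\diamond$ it requires a transfinite induction on the rank of inputs in $\dom(P^\diamond)$ (mirroring the well-founded recursion already used in the proof of \Cref{lem:regE-to-aut}), since $\dom(P^\diamond)$ is defined as a least fixpoint and an isomorphism $P \cong P'$ must be lifted to $P^\diamond \cong P'^\diamond$ by recursing through the unfoldings. Granting that lemma on preservation of isomorphisms (which is anyway implicit in the statement of \Cref{lem:regE-to-aut}), the corollary is then immediate by the induction sketched above.
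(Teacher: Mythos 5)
Your proof is correct and follows essentially the same route as the paper, which argues by induction over $e$ using \Cref{lem:regE-to-aut} at each step; the points you spell out (that the operations preserve Weihrauch isomorphism, and the direct unfolding for the base case of a letter $a\in\Sigma$, which the lemma does not cover explicitly) are exactly the routine details the paper leaves implicit.
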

\begin{proof}
By induction over $e$, using \Cref{lem:regE-to-aut} at every step.
\end{proof}

\subsection{The simulation game}
\label{subsec:game}

We finally introduce the simulation game we are going to use to characterize
reducibility. Throughout we call our players Spoiler and Duplicator; they
should intuitively be thought as controlling respectively expressions with polarity
$\spopol$ and $\duppol$ in the simulator, and the opposite polarities
in the simulation attempts.
This game is a B\"uchi game, meaning that the plays may be
infinite, and in such a case, Duplicator wins if and only if
they visit some states infinitely often.

\begin{defi}
A \emph{B\"uchi arena} $\cA$ is a tuple
$(V, E, \pol, F)$ where $(V, E)$ is a directed graph,
$\pol : V \to \{\duppol, \spopol\}$ is a polarity function
stating which player controls the position and $F \subseteq E$
is the set of winning transitions for Duplicator. We call $V$
the set of \emph{positions} and $E$ the set of \emph{moves}.
A \emph{partial play} in $\cA$ is a non-empty sequence $(p_i)_{i}$ of length
at most $\omega$ such that $(p_i, p_{i+1}) \in E$.
A \emph{play} is a partial play that cannot be extended.
An infinite play $p$ is \emph{winning} for Duplicator if and only if
$\{ i \mid (p_i, p_{i+1}) \in F\}$ is infinite. A finite play $p$ is winning
if and only if $\pol(\last(p)) = \forall$.
A \emph{B\"uchi game} is a pair of an arena $\cA$ and a position $v$ in $\cA$;
we write those $\cA(v)$ in the sequel.
A play in $\cA(v)$ is a play that begins with $v$.
A \emph{strategy} $s$ for Duplicator in $\cA(v)$ is a prefix-closed non-empty set of partial  plays of $\cA(v)$ such that
\begin{enumerate}
\item \label{enumitem:strat1} if all prefixes of an infinite play $p$ are in $s$, so is $p$
\item \label{enumitem:strat2} for every finite partial play $p$, if $\pol(\last(p)) = \spopol$, for all $m$ such that $(\last(p), m) \in E$, we have $p \cons m \in s$
\item \label{enumitem:strat3} if $\pol(\last(p)) = \duppol$, then there exists a unique $m$ such that $(\last(p), m) \in E$ and $p \cons m \in s$.
\end{enumerate}
A strategy is winning if all maximal plays therein are. Spoiler strategies (winning or not) are
defined in the expected dual manner. A Duplicator \emph{positional strategy} $s$ is one which
is induced by a partial map $s_{pos} : \pol^{-1}(\duppol) \partto V$, i.e. it is the
least strategy closed under criteria \ref{enumitem:strat1} to \ref{enumitem:strat3} such that we additionally have
$p \in s \wedge \pol(\last(p)) = \duppol$ implies that $s_{pos}(\last(p))$ is
defined and we have $p \cons s_{pos}(\last(p)) \in s$.
\end{defi}

Recall that an infinite game is called \emph{(positionally) determined} if one of the
player has a positional winning strategy.
Payoff sets for B\"uchi games are $\mathbf{\Pi}^0_2$ and have thus long been
known to be determined~\cite{wolfe1955strict}. For convenience, we will use
a strengthening of this result which says that such games are positionally
determined, and that positional strategies can be computed efficiently if the
arena is finite.

\begin{thm}[{see e.g. \cite[Theorem 14]{2023regulargameschapter}}]
\label{thm:positional-det}
B\"uchi games are positionally determined: there is always a winning strategy
for one of the two players, and a
positional winning strategy may be computed in quadratic time from the set
of reachable positions in the game.
\end{thm}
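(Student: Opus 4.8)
The plan is to recall the standard proof for finite B\"uchi games, which factors through a fixpoint description of the two winning regions plus the usual amortized attractor analysis for the running time. First I would absorb the finite-play clause of the definition into the infinite case: a dead end $v$ with $v \notin F$ is turned into a losing sink for Duplicator and a dead end $v \in F$ into a winning one (concretely, add a self-loop and adjust membership in $F$), so that after this linear preprocessing every play is infinite and the B\"uchi condition ``$\{i \mid p_i \in F\}$ is infinite'' becomes exactly the parity condition with priority $2$ on $F$ and priority $1$ elsewhere. Writing $\mathrm{cpre}_{\duppol}(X)$ for the positions from which Duplicator can force a one-step move into $X$ (using $\pol$ to decide who moves) and $\mathrm{Attr}_{\duppol}(X) = \mu Y.\, X \cup \mathrm{cpre}_{\duppol}(Y)$ for the attractor — which carries a positional ``make progress towards $X$'' strategy — the candidate Duplicator region is the greatest fixpoint $W = \nu X.\, \mathrm{Attr}_{\duppol}\bigl(F \cap \mathrm{cpre}_{\duppol}(X)\bigr)$, stabilizing after at most $|V|$ unfoldings on a finite graph.

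Next I would verify that this $W$ witnesses positional determinacy together with the complementary region $V \setminus W$. On $W$, Duplicator plays the positional attractor strategy towards $F \cap \mathrm{cpre}_{\duppol}(W)$ inside $W$, and at each such $F$-vertex takes the positional move back into $W$ guaranteed by membership in $\mathrm{cpre}_{\duppol}(W)$; restarting the attractor phase every time $F$ is hit keeps the play inside $W$ and forces infinitely many visits to $F$. On $V \setminus W$, since by the fixpoint equation $V \setminus W$ is disjoint from $\mathrm{Attr}_{\duppol}(F \cap \mathrm{cpre}_{\duppol}(W))$, Spoiler has a positional strategy forcing the play to remain in $V \setminus W$ forever; one then argues, by stratifying $V \setminus W$ according to the finitely many stages of the $\nu$-iteration (or simply by quoting positional determinacy of parity games from the regular-games literature), that along any such play $F$ is visited only finitely often. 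This yields the disjoint cover $V = W \sqcup (V \setminus W)$ with positional winning strategies on each side.

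Finally, for the bound I would first restrict to the set of positions reachable from the designated start vertex by a linear-time search — this is what makes the complexity a function of the \emph{reachable} positions rather than of the whole (potentially exponential) arena — and then implement the outer $\nu$-iteration by repeatedly deleting from the current subgame the set $\mathrm{Attr}_{\duppol}(F \cap \mathrm{cpre}_{\duppol}(\cdot))$ together with the Spoiler-attractor of its complement, reading the positional strategy directly off these attractor computations. Each round removes at least one vertex, so there are at most linearly many rounds, and the positional strategy is assembled from the (positional) attractor strategies. The one delicate point, and the step I expect to be the real obstacle, is the amortized accounting showing that the successive attractor computations together inspect each edge only a bounded number of times over the whole run, so that the total cost is $\bigO(n^2)$ in the number $n$ of reachable positions; the purely game-theoretic content is classical and could, if one preferred, simply be cited rather than reproved.
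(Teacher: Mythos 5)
This theorem is not proved in the paper at all: it is imported verbatim from the regular-games literature (the cited chapter), so there is no ``paper proof'' to match. Your reconstruction is precisely the classical argument that the cited source gives: absorb dead ends, characterize Duplicator's region as $\nu X.\,\mathrm{Attr}_{\duppol}\bigl(F \cap \mathrm{cpre}_{\duppol}(X)\bigr)$, read off the positional strategy from the attractor computations, and solve the game by iterated attractor removal. So in substance you are re-proving the quoted result by its standard proof, which is fine, and your offer to simply cite it is exactly what the paper does.

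Two soft spots worth tightening if you did write it out. First, on the Spoiler side, ``force the play to remain in $V \setminus W$ forever'' is by itself not a winning strategy: a play can stay in $V \setminus W$ and still hit $F$ infinitely often (e.g.\ a Spoiler-controlled $F$-vertex with a self-loop and an escape to a losing sink). The positional strategy must be the stratified one you mention parenthetically --- in the $i$-th stratum of the $\nu$-iteration Spoiler plays so that any visit to $F$ forces a strictly smaller stratum index --- so that clause is the actual content, not an afterthought. Second, the complexity paragraph overreaches in one direction and garbles the algorithm in another: what is deleted in each round is only the Spoiler-attractor of the complement of $\mathrm{Attr}_{\duppol}(F)$ in the current subgame (deleting ``the attractor together with the Spoiler-attractor of its complement,'' as written, empties the arena in one round), and no amortized ``each edge inspected $\bigO(1)$ times'' claim is true or needed --- the standard accounting is at most $n$ rounds, each costing time linear in the reachable subgame, giving $\bigO(n\cdot m)$, i.e.\ quadratic in the size (positions plus moves) of the reachable part, which is the sense in which the quoted theorem's ``quadratic'' should be read.
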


Now, we are almost ready to define our simulation game that will give us the ability
to compare two states in a preautomaton. In all that follows $\mfA$ is a fixed preautomaton
$(Q, \to, \pol)$ over $\Sigma$.

\begin{defi}
 \label{def:simgame}
Define the simulation arena $\simGameAut$ as follows:
\begin{itemize}
\item positions are triples $(\Gamma, X, e)$ with $\Gamma \subseteq \Sigma$, $X$ a finite
subset of $Q$ and $e \in Q$. To make clear that we are talking about
positions in the game, we write them $\position{\Gamma}{X}{e}$
\item
We distinguish three kinds of positions $\position{\Gamma}{X}{e}$ according to properties of $X$ and $e$,
which will determine their polarity.
\begin{itemize}
\item If $\pol(X) \cap \{\forall, \exists\} \neq \emptyset$, we call $\position{\Gamma}{X}{e}$ a \emph{left position}
and it is controlled by Spoiler.
\item Otherwise $\pol(X) \subseteq \Sigma \cup \{\checkmark\}$ and we call $\position{\Gamma}{X}{e}$ a \emph{right position}.
Such a position is a Spoiler position if and only if $\pol(e) = \forall$ or $\pol(e) = \done \in \pol(X)$.
We sometimes say right $P$ position if $P$ is the polarity of the position.
\end{itemize}
We overload the notation $\polarity$, and use it to denote the polarity of positions.
\item The set of moves is listed in \Cref{fig:simgame-moves}. By convention,
we call a move starting from an $\alpha$ position an
$\alpha$ move for $\alpha$ being any of the adjectives ``left'', ``right'', ``right $\forall$'' or ``right $\exists$''.
\item The winning transitions for Duplicator are the left moves (\textsc{explore}, \textsc{fork})
  and the \textsc{fill} moves.
\end{itemize}
\end{defi}

\begin{figure}
\[
\begin{array}{c@{\qquad}lcl@{\quad}l}
\multicolumn{5}{c}{
\text{
\begin{tabular}{c}
\Large{Left moves}\\
(potentially available when $\pol(X) \cap \{\forall, \exists\} \neq \emptyset$)
\end{tabular}}
}
\\\\
\text{\textsc{explore}} &
\position{\Gamma}{X \uplus \{t\}}{e} &\longto& \position{\Gamma}{X \cup \{t'\}}{e} &
        \text{when $\pol(t) = \duppol$ and $t \to t'$}\\
\text{\textsc{fork}} &
\position{\Gamma}{X \uplus \{t\}}{e} &\longto& \position{\Gamma}{X \cup \{t' \mid t \to t'\}}{e} &
\text{when $\pol(t) = \spopol$}\\\\
\multicolumn{5}{c}{
\text{
\begin{tabular}{c}
\Large{Right $\exists$ moves}\\
(potentially available when $\pol(X) \subseteq \Sigma \cup \{\done\}$ and $\pol(e) \in \Sigma \cup \{\exists\}$)
\end{tabular}}
}\\\\
\text{\textsc{choose}} &
\position{\Gamma}{X}{e} &\longto& \position{\Gamma}{X}{e'} &
\text{when $e \to e'$ and $\pol(e) = \exists$} \\
\text{\textsc{junk}} &
\position{\Gamma}{X}{e} &\longto& \position{\Gamma}{X}{e'} &
\text{when 
$\pol(e) \in \Gamma$ and $e \to e'$}\\
\text{\textsc{fill}} &
\position{\Gamma}{X \uplus \{t\}}{e} &\longto& \position{\Gamma \cup \{a\}}{X \cup \{t'\}}{e'} &
\hspace{-0.5em}
\left\{
\hspace{-0.5em}
\begin{array}{r@{~}l}
\text{when} & \text{$\pol(e) = \pol(t) = \orapol{a}$,} \\
& \text{$e \to e'$ and  $t \to t'$}
\end{array}
\right.\\\\
\multicolumn{5}{c}{
\text{
\begin{tabular}{c}
\Large{Right $\forall$ moves}\\
(potentially available when $\pol(X) \subseteq \Sigma \cup \{\done\}$ and $\pol(e) = \forall$)
\end{tabular}}
}\\\\
\text{\textsc{alea}} &
\position{\Gamma}{X}{e} &\longto& \position{\Gamma}{X}{e'} &
\text{when $e \to e'$} \\\\
\end{array}
\]
\caption{Moves in the arena $\simGameAut$ (this is part of \Cref{def:simgame}).}
\label{fig:simgame-moves}
\end{figure}

Let us now try to get a sense of how the game is played. For this, it is
helpful to analyze separately left and right positions, and note that
the $\textsc{fill}$ move is the only move that allows one to get from a right
position to a left position. So the broad dynamic of the game is captured in
the picture in~\Cref{fig:gameOutline}.
\begin{figure}
\begin{center}
\begin{tikzpicture}[->,>={Stealth[round]},shorten >=1pt,
                    node distance=0cm and 5cm,semithick,
                    inner sep=2pt,bend angle=45]
  \tikzset{every state/.style={minimum size=20pt}}
  \tikzset{opp state/.style={draw,minimum size=40pt}}
  \tikzset{initial text={}}
  \node[opp state] (left) {left $\forall$};
  \node[opp state]     (rightspo) [above right=of left] {right $\forall$};
  \node[state]     (rightdup) [below right=of left] {right $\exists$};

  \path [every node/.style={fill=\backgroundcolor,circle}]
        (left) edge [bend left=30] (rightspo)
        (left) edge [bend left=10] (rightdup)
        (rightspo) edge [<->] %
        (rightdup)
        (rightdup) edge [bend left] node {\textsc{fill}} (left);
\end{tikzpicture}
\end{center}
\caption{Broad outline of the game.}
\label{fig:gameOutline}
\end{figure}
The basic idea is that, from any given left position $\position{\Gamma}{X}{e}$,
Spoiler is responsible for explaining what is the shape of an input to the
problem $\bigsqcap_{t \in X} \interp{\mfA^{t}}_\rho$ via left moves, at least insofar
as Duplicator should be able to observe it. Then we move to a right position,
where all the problem $\bigsqcap_{t \in X} \interp{\mfA^{t}}_\rho$ turns out to be
equivalent to the problem
\[X_\done \sqcap \bigsqcap_{\substack{t \in X \\ t' \mid t \to t'}} \interp{\mfA^{t'}}_\rho \star \rho(\pol(t)) \qquad \qquad
  \text{where $X_\done = \left\{
      \begin{array}{ll}
        1 & \text{if $1 \in X$} \\
        \top & \text{otherwise} \\
  \end{array}\right.$}
\]
whose inputs may not be observed more precisely by Duplicator. In such a case,
they are forced to start running the right-hand side $e$ (with Spoiler chipping
in when $e$ features demonic non-determinism). This ends when at some point Duplicator
plays a \textsc{fill} move, which is the only type of move which is able to modify all
components of the game at once. There the idea is that $e$ with $\pol(e) = a \in \Sigma$
gives the ability to make an oracle call to $\rho(a)$. If $\pol(t) = a$ for
some $t \in X$, then we can thus take the corresponding $\rho(a)$-input from there
and make progress on $t$. This may make $t$ transition to some $t'$ which has
a partly observable input, which may get us back into a left position.

The $\Gamma$ component records all the $a \in \Sigma$ such that Duplicator
should have seen an input to $\rho(a)$ before in the play. This should give
Duplicator the ability to ask a $\rho(a)$-question at any time, even if there
is no guarantee of getting a useful answer. This capability corresponds to
the $\textsc{junk}$ move.

Finally, to understand the winning condition, recall that $\dom(\interp{\mfA^e}_\rho)$
is composed of inputs that have certain ordinal ranks. In the course of a run of
a simulation, those ranks are meant to strictly decrease. Hence Spoiler should be
forbidden from unfolding their input infinitely many times, which explains
why left moves and \textsc{fill} are winning. On the other hand, if Spoiler
does not adopt a non-terminating behaviour, Duplicator should not run their simulation
forever and eventually get Spoiler stuck in an unfavourable position.
This essentially happens in three ways:
\begin{itemize}
\item if there is some $t \in X$ with $\pol(t) = \exists$ and no successors (for $\mfA = \reAut$, $0 \in X$),
\item if $\pol(e) = \forall$ and $e$ has no successors (for $\mfA = \reAut$, $e = \top$),
\item or if $e$ is final and $X$ contains a final state (in $\mfA = \reAut$, $e = 1 \in X$).
\end{itemize}

Now, let us turn to a couple of examples, where we take $\mfA = \reAut$.
First in~\Cref{fig:arena-ex}, we picture the arena that corresponds to the
reduction $b \star a \sqcup c \star a \le (b \sqcup c) \star a$, which we alluded
to in~\Cref{ex:intro-init}). It is possible for Duplicator to lose, but it is
trivial to check that Duplicator has a winning strategy (and Spoiler does not
meaningfully move).

\begin{figure}
\[
\begin{tikzpicture}[
sponode/.style={rectangle, draw=orange!60, fill=orange!5, very thick, minimum size=7mm},
dupnode/.style={rectangle, draw=violet!60, fill=violet!5,  very thick, minimum size=5mm},
nodes={%
   execute at begin node=$,%
   execute at end node=$%
  }%
]
\node (dummyabove) { };
\node[sponode]
   (initpos)
   [below = of dummyabove]
   {\position{\emptyset}{\{ b \star a \sqcup c \star a\}}{(b \sqcup c) \star a}};
\node[dupnode]
   (b1)
   [below left=of initpos]
   {\position{\emptyset}{\{b \star a\}}{(b \sqcup c) \star a}};
\node[sponode]
   (b2)
   [below=of b1]
   {\position{\{a\}}{\{ b \star 1\}}{(b \sqcup c) \star 1}};
\node[dupnode]
   (b3)
   [below=of b2]
   {\position{\{a\}}{\{ b\}}{(b \sqcup c) \star 1}};
\node[dupnode]
   (b4)
   [below=of b3]
   {\position{\{a\}}{\{ b\}}{b \sqcup c}};
\node[dupnode]
   (b5)
   [below=of b4]
   {\position{\{a\}}{\{ b\}}{b}};
\node[dupnode]
   (b5l)
   [right=of b5]
   {\position{\{a\}}{\{ b\}}{c}};
\node[sponode]
   (b6)
   [below=of b5]
   {\position{\{a, b\}}{\{ 1\}}{1}};
\node[dupnode]
   (c1)
   [below right=of initpos]
   {\position{\emptyset}{\{c \star a\}}{(b \sqcup c) \star a}};
\node[sponode]
   (c2)
   [below=of c1]
   {\position{\{a\}}{\{ c \star 1\}}{(b \sqcup c) \star 1}};
\node[dupnode]
   (c3)
   [below=of c2]
   {\position{\{a\}}{\{ c\}}{(b \sqcup c) \star 1}};
\node[dupnode]
   (c4)
   [below=of c3]
   {\position{\{a\}}{\{ c\}}{b \sqcup c}};
\node[dupnode]
   (c5)
   [below=of c4]
   {\position{\{a\}}{\{ c\}}{c}};
\node[dupnode]
   (c5l)
   [left=of c5]
   {\position{\{a\}}{\{ c\}}{b}};
\node[sponode]
   (c6)
   [below=of c5]
   {\position{\{a, c\}}{\{ 1\}}{1}};

\path[->] (dummyabove.south) edge (initpos.north)
          (b3) edge node[left] {\footnotesize \textsc{choose}} (b4)
          (c3) edge node[left] {\footnotesize \textsc{choose}} (c4)
          (b4) edge node[left] {\footnotesize \textsc{choose}} (b5)
          (c4) edge node[left] {\footnotesize \textsc{choose}} (c5)
          (b4) edge [bend left] node[above right] {\footnotesize \textsc{choose}} (b5l)
          (c4) edge [bend right] node[above left] {\footnotesize \textsc{choose}} (c5l)
;
\path[->, line width = 0.5mm] (initpos.west) edge [bend right] node[above left] {\footnotesize \textsc{explore}} (b1)
(initpos.east) edge [bend left] node[above right] {\footnotesize \textsc{explore}} (c1)
          (b1) edge node[left] {\footnotesize \textsc{fill}} (b2)
          (c1) edge node[left] {\footnotesize \textsc{fill}} (c2)
          (b2) edge node[left] {\footnotesize \textsc{explore}} (b3)
          (c2) edge node[left] {\footnotesize \textsc{explore}} (c3)
          (b5) edge node[left] {\footnotesize \textsc{fill}} (b6)
          (c5) edge node[left] {\footnotesize \textsc{fill}} (c6)
;
\end{tikzpicture}
\]
\caption{
  The arena of $\simGame(\position{\emptyset}{\{b \star a \sqcup c \star a\}}{(b \sqcup c) \star a})$.
The Spoiler positions are coloured in orange (and always have multiple or no outgoing edges) while the Duplicator positions are coloured violet. Winning
Duplicator transitions are denoted with bold arrows. Recall that if a player is
stuck, they lose.
}
\label{fig:arena-ex}
\end{figure}

\begin{exa}
Coming back to~\Cref{ex:introJunk}, i.e. $b \star a \le a \star b \star a$,
we can see there is a unique path in the arena of
$\simGame(\position{\emptyset}{b \star a}{a \star b \star a})$ that Duplicator
wins.
\[\begin{tikzcd}[column sep=huge]
	{\position{\emptyset}{\{b \star a\}}{a \star b \star a}} & {\position{\{a\}}{\{b \star 1\}}{a \star b \star 1}} & {\position{\{a\}}{\{b\}}{a \star b \star 1}} \\
	{\position{\{a,b\}}{\{1\}}{a}} & {\position{\{a,b\}}{\{1\}}{a \star 1}} & {\position{\{a\}}{\{b\}}{a \star b}} \\
	& {\position{\{a,b\}}{\{1\}}{1}}
	\arrow["{\textsc{fill}}", from=1-1, to=1-2]
	\arrow["{\textsc{explore}}", from=1-2, to=1-3]
	\arrow["{\textsc{choose}}", from=1-3, to=2-3]
	\arrow["{\textsc{junk}}"', from=2-1, to=3-2]
	\arrow["{\textsc{choose}}"', from=2-2, to=2-1]
	\arrow["{\textsc{fill}}"', from=2-3, to=2-2]
\end{tikzcd}\]
On the other hand, $\simGame(\position{\emptyset}{a \star b}{a \star b \star a})$
contains a single $\exists$ position with no available moves, witnessing that
Duplicator has no generic way of producing an input for the problem denoted by $a$.
\end{exa}

Finally, in~\Cref{fig:ex-strat}, we give a picture of a non-trivial
Duplicator positional winning strategy that corresponds to the reduction
discussed in~\Cref{ex:derivation}. To make things more bearable, we omit the
$\Gamma$ component from positions. More general, we need to discuss systematically
the variant of $\simGameAut$, where we fix $\Gamma = \Sigma$ rather than $\emptyset$.
This essentially corresponds to unconditionally allowing the $\textsc{junk}$, and we shall see
that this captures universal validity for pointed partial degrees. Let us formally
thus define this game.

\begin{defi}
\label{def:psim-game}
Define the arena $\psimGameAut$
as follows:
\begin{itemize}
    \item take positions $V_{\psimGameAut}$ to be pairs $(X, e)$ with $X$ a finite subset of $\RegE{\Sigma}$
and $e \in \RegE{\Sigma}$; write them $\pposition{X}{e}$ in the sequel
    \item take the polarity of a position $\pposition{X}{e}$
to be the same as the polarity of $\position{\Sigma}{X}{e}$ in $\simGameAut$
    \item generate the set of moves $\pposition{X}{e} \longrightarrow \pposition{X'}{e'}$
    from all moves $\position{\Sigma}{X}{e} \longrightarrow \position{\Sigma}{X'}{e'}$
    \item say a position $\pposition{X}{e}$ is winning if and only if $\position{\Sigma}{X}{e}$ is
    winning in $\simGameAut$.
\end{itemize}
\end{defi}

\begin{figure}
\[
\begin{tikzpicture}[
rightsponode/.style={rectangle, draw=orange!60, fill=orange!5, very thick, minimum size=7mm},
sponode/.style={rectangle, draw=orange!60, fill=orange!5, very thick, minimum size=7mm},
dupnode/.style={rectangle, draw=violet!60, fill=violet!5,  very thick, minimum size=5mm},
nodes={%
   execute at begin node=$,%
   execute at end node=$%
  }%
]
\clip (-5,1) rectangle (15, -16);
\node (dummyabove) {};
\node[sponode]
   (initposa)
   [below=of dummyabove]
   {\pposition{\{a^\diamond \sqcap b^\diamond\}}{(a \sqcap b)^\diamond}};
\node[sponode]
   (initpos)
   [below=of initposa]
   {\pposition{\{a^\diamond, b^\diamond\}}{(a \sqcap b)^\diamond}};
\node[dupnode]
   (ezwin)
   [right=of initpos]
   {\pposition{\{1, 1\}}{(a \sqcap b)^\diamond}};
\node (dummyezwin) [below=of ezwin] {};
\node[sponode]
   (ezwon)
   [right=of ezwin]
   {\pposition{\{1, 1\}}{1}};
\node[dupnode]
   (main)
   [below=of dummyezwin]
   {\pposition{\{a^\diamond \star a, b^\diamond \star b\}}{(a \sqcap b)^\diamond}};
\node[rightsponode]
   (mainspo)
   [below=of main]
   {\pposition{\{a^\diamond \star a, b^\diamond \star b\}}{(a \sqcap b)^\diamond \star (a \sqcap b)}};
\node
  (dummy)
  [below= of mainspo] {};
\node[dupnode]
   (dupfilla)
   [left= of dummy]
   {\pposition{\{a^\diamond \star a, b^\diamond \star b\}}{(a \sqcap b)^\diamond \star a}};
\node[sponode]
   (spofilla)
   [below =of dupfilla]
   {\pposition{\{a^\diamond, b^\diamond \star b\}}{(a \sqcap b)^\diamond}};
\node[dupnode]
   (wina)
   [below =of spofilla]
   {\pposition{\{1, b^\diamond \star b\}}{(a \sqcap b)^\diamond}};
\node[sponode]
   (wona)
   [below =of wina]
   {\pposition{\{1, b^\diamond \star b\}}{1}};
\node[dupnode]
   (dupfillb)
   [right= of dummy]
   {\pposition{\{a^\diamond \star a, b^\diamond \star b\}}{(a \sqcap b)^\diamond \star b}};
\node[sponode]
   (spofillb)
   [below =of dupfillb]
   {\pposition{\{a^\diamond \star a, b^\diamond \}}{(a \sqcap b)^\diamond}};
\node[dupnode]
   (winb)
   [below =of spofillb]
   {\pposition{\{a^\diamond \star a, 1\}}{(a \sqcap b)^\diamond}};
\node[sponode]
   (wonb)
   [below =of winb]
   {\pposition{\{a^\diamond \star a, 1\}}{1}};
\draw[->] (dummyabove.south) to (initposa.north);
\draw[->, line width = 0.5mm] (initposa.south) -- node[left] {\footnotesize \textsc{fork}} (initpos.north);
\draw[->, line width = 0.5mm] (initpos.south) to[out=270,in=90] node[above] {\footnotesize {\textsc{explore}}} (main.north);
\draw[->, line width = 0.5mm] (initpos.east) to (ezwin.west);
\draw[->] (ezwin.east) to (ezwon.west);
\draw[->] (main.south) to node[left] {\footnotesize {\textsc{choose}}} (mainspo.north);
\draw[->] (mainspo.west) to[bend right] node[left] {\footnotesize {\textsc{alea}}} (dupfilla.north);
\draw[->, line width = 0.5mm] (dupfilla.south) to node[left] {\footnotesize {\textsc{fill}}} (spofilla.north);
\draw[->, line width = 0.5mm] (spofilla.south) to node[left] {\footnotesize {\textsc{explore}}} (wina.north);
\draw[->] (wina.south) to node[left] {\footnotesize {\textsc{choose}}} (wona.north);
\draw[->, line width = 0.5mm] (spofilla.west) to[out=180,in=180, looseness=1.5] node[left] {\footnotesize {\textsc{explore}}} (main.west);
\draw[->] (mainspo.east) to[bend left] node[right] {\footnotesize {\textsc{alea}}} (dupfillb.north);
\draw[->, line width = 0.5mm] (dupfillb.south) to node[right] {\footnotesize {\textsc{fill}}} (spofillb.north);
\draw[->, line width = 0.5mm] (spofillb.south) to node[right] {\footnotesize {\textsc{explore}}} (winb.north);
\draw[->] (winb.south) to node[right] {\footnotesize {\textsc{choose}}} (wonb.north);
\draw[->, line width = 0.5mm] (spofillb.east) to[out=0,in=0, looseness=1.5] node[right] {\footnotesize {\textsc{explore}}} (main.east);
\draw[->, line width = 0.5mm] (initpos.west) to[out=180,in=180] (wina.west);
\draw[->, line width = 0.5mm] (initpos.north east) to[out=45,in=0, looseness=1.75] (winb.east);
\end{tikzpicture}
\]
\caption{
A subgraph of $\psimGame$ corresponding to a positional  Duplicator winning strategy
in the game witnessing the valid inequality $a^\diamond \sqcap b^\diamond \le (a \sqcap b)^\diamond$ in the pointed Weihrauch degrees.
For legibility, we squashed some \textsc{explore} moves together
on the second position and omitted labelling some move names from some edges.
The Spoiler positions are coloured in orange (and always have multiple or no outgoing edges) while the Duplicator positions are coloured violet. Winning
Duplicator transitions are denoted with bold arrows.
}
\label{fig:ex-strat}
\end{figure}

\begin{rem}
  The game $\simGame$ can also be regarded, from a proof-theoretic
perspective, as a sort of infinitary proof system, which have been used in the
literature to study Kleene algebras~\cite{lhcka22} and regular languages~\cite{dasdeomega25}. The intuition there
is that Duplicator strategies can be regarded as (possibly infinitary)
\emph{preproofs} and
that the winning condition tells us which ones are genuine \emph{proofs}.
Positional winning strategies can be regarded as \emph{cyclic proofs}.
The main difference with this tradition is that here we use sets $\Gamma \subseteq \Sigma$
and $X \subseteq Q$ instead of lists of elements of $Q$ to maintain a finite arena
(at the cost of losing the ability to develop a principled theory of proof
composition, which we don't care about for this paper).
\end{rem}

\subsection{The game characterizes Weihrauch reducibility}

Now we show that universal validity is characterized by the simulation game $\simGameAut$
in the following sense.

\begin{thm}
  \label{thm:gameAutEquiv}
Let $\mfA = (Q, \to, \pol)$ be a preautomaton over $\Sigma$, $\Gamma \subseteq \Sigma$
and $t, e \in Q$.
Then Duplicator has a winning strategy in $\simGameAut(\position{\emptyset}{\{t\}}{e})$ if and only if,
for every valuation $\rho$ in the partial Weihrauch degrees,
\[%
\interp{\mfA^{(t)}}_\rho \quad \leqW \quad \interp{\mfA^{(e)}}_\rho\]
\end{thm}

From this will follow the following corollary that will go towards proving~\Cref{thm:mainloop}.

\begin{restatable}{cor}{maingame}
\label{cor:maingame}
For any $e,f \in \RegE{\Sigma}$, $e \le f$ is valid in the 
partial Weihrauch degrees if and only if Duplicator has a winning strategy in $\simGame(\position{\emptyset}{\{e\}}{f})$.
\end{restatable}

To prove~\Cref{thm:gameAutEquiv}, we will first use determinacy of B\"uchi games and then
analyze the situation according to who wins. If Duplicator wins, we can compute
a reduction witness that works universally. If Spoiler wins, then we exhibit that the
reduction cannot hold for a fixed generic interpretation of the letters.

Let us start with the easiest part, fixing the preautomaton $\mfA = (Q, \to, \pol)$ for the
rest of this subsection.

\begin{lem}
\label{lem:dup-strat-to-red}
If Duplicator has a winning strategy in $\simGameAut(\position{\Gamma}{X}{e})$,
then for every $\rho$, there is a Weihrauch reduction
\[\bigsqcap_{b \in \Gamma} (\top \star \rho(b)) ~ \sqcap ~ \bigsqcap_{t \in X} \interp{\mfA^{(t)}}_\rho \quad \leqW \quad \interp{\mfA^{(e)}}_\rho\]
\end{lem}
\begin{rem}
The problem $\bigsqcap_{b \in \Gamma} (\top \star \rho(b))$ is equivalent to
the problem which take as input a map $p \in (\Baire)^\Gamma$ such that
$p(b) \in \dom(\rho(b))$, and never answers. We will implicitly
use that in the proof.
\end{rem}
\begin{proof}
Fix a positional\footnote{It is not strictly necessary that the strategy
be positional to carry out this argument, but it makes the bureaucracy more bearable.} winning strategy for Duplicator.
We define by mutual recursion a family of maps $(\varphi_{\Gamma,X,f})_{\Gamma \subseteq \Sigma, X \subseteq Q, f \in Q}$ taking
as inputs:
\begin{itemize}
  \item $p \in (\Baire)^\Gamma$ that serves as input to $\bigsqcap_{b \in \Gamma} (\top \star \rho(b))$,
\item $q \in (\Baire)^X$ such that $q_t \in \dom(\interpaut{t})$ for all $t \in X$,
\end{itemize}
and outputs $\varphi_{\Gamma,X,e}(p,q) \in \dom(\interpaut{e}_\rho)$. The map
will be guided by the strategy as follows:
\begin{itemize}
  \item If we are in a left position, we can stay in the strategy by whichever
    move Spoiler wants to make. For this to happen, we need to have $X = X' \uplus \{t\}$
    for some $t$ with $\pol(t) \in \{\forall, \exists\}$, which we can pick in
    a canonical way. Let us treat the two cases:
    \begin{itemize}
      \item if $\pol(t) = \forall$, then $q_t = \tuple{c_1, \ldots, c_n}$.
        Set $X'' = X' \cup \{t_1, \ldots, t_n\}$, so that we know that
        $\position{\Gamma}{X''}{e}$ is still in the strategy following a $\textsc{fork}$
        move. Then we can set $\varphi_{\Gamma,X,e}(p,q) = \varphi_{\Gamma,X'',e}(p,q')$
        with $q'$ is the map such that $q'(t_i) = c_i$ and $q'(t) = q(t)$
        for $t \in X' \setminus \{t_1, \ldots, t_n\}$.
      \item if $\pol(t) = \exists$, then $q_t = \tuple{t', c}$ for some $t \to t'$
        and $c \in \dom(\interp{\mfA^{(t')}})$. Then we can set $\varphi_{\Gamma, X, e}(p,q) = \varphi_{\Gamma, X'' \cup \{t'\}, e}(p,q')$
        with $q'(t') = c$ and $q'(t) = q(t)$ otherwise.
    \end{itemize}
  \item If we are in a left $\forall$ position, we are in one of two cases:
    \begin{itemize}
      \item if $\pol(e) = \done$, then we have simply $\varphi_{\Gamma, X, e}(p,q) = \tuple{ }$ (here the winning condition will ensure
        the backward part of the reduction is equally trivial to define).
      \item if $\pol(e) = \forall$, let $e'_1, \ldots, e'_n$ the successors of $e$ in order.
      We then have
    \[\varphi_{\Gamma, X, e}(p,q) = \tuple{\varphi_{\Gamma, X, e'_1}(p,q), \ldots, \varphi_{\Gamma, X, e'_n}(p,q)}\]
    \end{itemize}
  \item If we are in a right $\exists$ position, we have three subcases according
    to Duplicator's move in the strategy:
    \begin{itemize}
      \item if Duplicator plays a $\textsc{choose}$ move
        \[\position{\Gamma}{X}{e} \quad \xrightarrow{\textsc{choose}} \quad \position{\Gamma}{X}{e'}\]
        then we set $\varphi_{\Gamma, X,e}(p,q) = \langle \code{e'}, \varphi_{\Gamma, X, e'}(p,q)\rangle.$
      \item if Duplicator plays a $\textsc{fill}$ move, with $\pol(t) = \pol(e) = a$, $t \to t'$ and $e \to e'$:
        \[\position{\Gamma}{X \uplus \{t\}}{e}
          \quad \xrightarrow{\textsc{fill}} \quad 
        \position{\Gamma \cup a}{X \cup \{t'\}}{e'}\]
        Then we set $\varphi_{\Gamma, X, e}(p, q) = \tuple{ u, \lambda x. \varphi_{\Gamma \cup \{a\}, X' \cup \{t'\}, e'}(p', q_x)}$
        where $q(t') = \tuple{u , f}$:
        \[
          p'(b) = \left\{ \begin{array}{ll}
              u &\text{for $b = a$}\\
              p(b) & \text{otherwise}
            \end{array} \right.
          \qquad \text{and} \qquad
          q_x(v) = \left\{ \begin{array}{ll}
              f \cdot x &\text{for $v = t'$}\\
              q(v) & \text{otherwise}
            \end{array} \right.
        \]
      \item if Duplicator plays a $\textsc{junk}$ move where $\pol(e) = a \in \Gamma$ and $e \to e'$
        \[\position{\Gamma}{X}{e}
          \quad \xrightarrow{\textsc{fill}} \quad 
        \position{\Gamma \cup a}{X}{e'}\]
        then we set $\varphi_{\Gamma, X, e}(p, q) = \tuple{ u, \lambda x. \varphi_{\Gamma, X' \cup \{t'\}, e'}(p, q)}$
        (here note that the $x$ in the $\lambda$ simply is not used).
     \end{itemize}
\end{itemize}

That this map is well-defined (i.e., that $p_b \in \dom(\rho(b))$ and $q_e \in \dom(\interpaut{e}_\rho)$ imply that $\varphi_{\Gamma,X,f}(p,q)$ is defined and in $\dom(\interpaut{f}_\rho)$) can be proven
by strong lexicographic induction on the pair $(N, \rank(q))$, where $N \in \Nat$ is the largest number of moves played in the Duplicator strategy before a
winning move is to be played (a \textsc{fill} move or a left move) or the game is ended, taking $\rank(q) = \bigoplus_{t \in X} \rank(q(t))$. That this pair decreases strictly along each
move is ensured by the fact that the strategy is winning.

We leave the similar definition of the backward parts of those reductions
$\psi_{\Gamma, X, e}$ as a formal exercise.
\end{proof}

Now we will explain how to prove a converse. By determinacy of B\"uchi game,
we'll see it suffices to prove the following.

\begin{lem}
\label{lem:spo-strat-to-nonred}
There is a valuation $G$ such that, if Spoiler has a winning strategy in $\simGameAut(\position{\emptyset}{\{t\}}{e})$,
then there is no Weihrauch reduction
\[\interp{\mfA^{(t)}}_G \quad \leqW \quad \interp{\mfA^{(e)}}_G\]
\end{lem}

The informal idea used to define $G(a)$ is that inputs should
be tokens that allow to make a specific $a$-labelled $t \to t'$
transition in $\mathfrak{A}$. The token also encodes an oracle budget $n \in \Nat$.
The answer to such a token will be a list of tokens holding budget $n - 1$,
corresponding to a strategy to run $\mfA^{(t)}$ until further oracle states
or final states are hit.
If a final state is hit, then a special completion token is emitted in addition
to the state tokens.

Then the separation will be obtained as follows: the problem
``given a token for a non-trivial state $t$ with budget $n$, produce a completion token'' will
typically be reducible to
$\interpaut{t}_G$ for high enough budgets, but any
$e$ such that Spoiler wins in $\simGameAut(\position{\emptyset}{\{t\}}{e})$, $\interpaut{e}_G$
won't be able to solve that problem for arbitrarily large budgets.

\begin{rem}
Here we might note that the proof will go through by manipulating finite budgets,
and that all bounds are computable in $\mfA$, the states involved and the Spoiler
strategy. These observation can be made more formal as follows: if Spoiler wins
$\simGameAut(\position{\emptyset}{\{t\}}{e})$, then there exists a \emph{finite loop-free}
automaton $\mfB^{(t')}$ such that $\interp{\mfB^{(t')}}_G \not\leqW \interpaut{e}_G$
and there is an automaton morphism $\mfB^{(t')} \to \mfA^{(t)}$ which a fortiori
witnesses that $\interp{\mfB^{(t')}}_\rho \leqW \interp{\mfA^{(t)}}_\rho$ for all $\rho$.
Similarly, when looking at two expressions $t, e \in \RegE{\Sigma}$ with $t \not\le e$,
one can find a $\diamond$-free syntactic under-approximation $t'$ of $t$ such that $\interp{t'}_G \not\leqW \interp{e}_G$.
However proving that is not necessary for our goal; we leave
the formalization of the notions of morphisms, $\diamond$-free syntactic under-approximation and the
proof of these results as an exercise.
\end{rem}

To make all of this work, we need to encode the tokens in such a way that
a reduction cannot cheat by, say, forging tokens. To do this, tokens are formally
defined as sufficiently incomparable Turing degrees. Given a lattice $L$, recall that a
\emph{strong antichain} is an antichain $A \subseteq L$ such that, for any
finite sets $X, Y \subseteq A$, we have that $\bigvee X \le \bigvee Y$ implies
$X \subseteq Y$. Countable strong antichains exist in the
Turing degree\footnote{For instance, the aforementioned existence of continuum-sized
antichains is enough to imply this.}, and that will be the starting point for
our definitions. Let us now proceed.

\begin{defi}
\label{def:genericinterp}
Call $O \subseteq Q$ the set of states $t \in Q$ such that $\pol(t) \in \Sigma$.
Fix a map $\token : \{\done\} \uplus \Nat \times O \to \Cantor$ whose range $\Tokens \subseteq \Cantor$
maps to a strong antichain in the Turing degrees.
Call $\top_\done$ the problem with $\dom(\top_\done) = \{\token(\done)\}$
and $\top_\done(\token(\done)) = \emptyset$.

We define auxiliary interpretations by recursion on $n$,
starting with $G_0(a) = G^{\top}_0(a) = 0$ and the following clauses, so that $G = G_{\ge 0}$:
\[
\begin{array}{rcl}
  \dom(G_{n+1}^\top(a)) = \dom(G_{n+1}(a)) &=& \{\token(n,t) \mid \pol(t) = a, t \to t'\}
\\
G_{n+1}^\top(a)(\token(n, t)) &=&  \emptyset \\
G_{n+1}(a)(\token(n, t)) &=&  \dom(\top_\done \star \interpaut{t'}_{G_n^\top}) ~~~ \text{when $t \to t'$}
\\
G_{\ge n}(a) &=& \bigcup\limits_{k \ge n} G_k(a)
\end{array}\]
\end{defi}
\begin{rem}
We trivially have $G^\top_n(a) \equivW \top \star G_n(a)$; we adopt the definition
above rather than $\top \star G_n$ to make notations later on less heavy.
\end{rem}

To understand the whole definition, it is helpful to focus on what is
the set $\ExeStrat_n(t) = \dom(\interpaut{t}_{\top \star G_n})$, elements of which we
call \emph{execution strategies} with budget $n$. Its definition only depends on
the states of $\mfA$ that are accessible from $t$ without going through an
oracle state. Execution strategies for $t$ describe how run through $\mfA$
as $\exists$ until oracle states or terminating states are reached. They need
to provide inputs of weight $n-1$ to all oracle states they may reach.
We call elements of $\ExeStrat_n^\done(t) = \dom(\top_\done \star \interp{t}_{\top \star G_n})$
\emph{execution strategies with (budget $n$ and) payoffs (for $t$)}. Up to encoding details, they can be
thought of as tuples $\tuple{s, p}$
where $s \in \dom(\interp{t}_{\top\star G_n})$
is an execution strategy for $t$ and $p = \token(\done)$ if $s$ reaches a final state.
Let us call $s \mapsto \underline{s}$ the map $\ExeStrat_n(t) \to \ExeStrat_n^\done(t)$
which augments $s$ with payoffs by putting $p = \tuple{ }$ if $s$ does not reach a final state.

For any state $t$ an $n \in \Nat$ with $\token(n,t) \in \dom(G(a))$ and $t \to t'$, there
is a canonical input $\tuple{\token(n,t), \execute \cdot \code{t'}}$ of
$\interpaut{t}_{G_n}$ where $\execute$ is the code of a function which
takes as inputs a state $t$ and an execution strategy with payoffs for $t$.
It essentially follows along any execution strategy provided by oracle calls.
Concretely, it can be presented by the following recursion (assuming the
$\tuple{t'_1, \ldots, t'_m}$ is the canonically ordered list of successors of $t$ in the case
where $\pol(t) = \forall$ case and $t \to t'$ in the second case):
\[
\begin{array}{lcll}
\execute \cdot \code{t} \cdot z &=& z & \text{if $\pol(t) = \done$} \\
\execute \cdot \code{t} \cdot \tuple{\tuple{u, \_}, \_} &=&
\tuple{u, \execute \cdot \code{t'}} & \text{if $\pol(t) \in \Sigma$}
\\
\execute \cdot \code{t} \cdot \tuple{\tuple{\code{t'}, c}, k} &=& \execute \cdot \code{t'} \cdot \tuple{c, k} & \text{if $\pol(t) = \exists$} \\
\execute \cdot \code{t} \cdot \tuple{\tuple{c_1, \ldots, c_m}, k}
&=&
\left\langle \begin{array}{c@{}l}
\tuple{\execute \cdot \code{t'_1} \cdot c_1, \lambda x. k \cdot \inc_1(x)} &,\\
\ldots &, \\
\tuple{\execute \cdot \code{t'_m} \cdot c_m, \lambda x. k \cdot \inc_m(x)}
\end{array}\right\rangle
& \text{if $\pol(t) = \forall$}
\end{array}
\]

Now let us start analyzing Spoiler winning strategies.
By inspection of
the winning condition, it must be the case that in any play, the number of
performed $\textsc{fill}$ moves is finite.
Since there are only finitely many moves available at each step, we can define
the \emph{price} (or \textsc{fill}-depth) of a position as the maximal number of \textsc{fill} moves that can be performed
by the strategy.

The other thing that Spoiler winning ensures is that there are only finitely
many visits through left positions. Left moves are precisely the information
that we need to build the input part of our counter-witnesses to potential reductions. Those inputs will always be tupling of the shape
$\execute \cdot \code{t} \cdot s_t$ for appropriate $s_t \in \ExeStrat_n(t)$.
Let us define them.

\begin{defi}
Given a positional Spoiler winning strategy, a position $P = \position{\Gamma}{X}{e}$, $t \in X$ and
a budget $n > 0$, define by recursion the execution strategy $s^P_{n,t} \in \ExeStrat_n(t)$ as follows.
\begin{itemize}
\item If $P$ is a right position, then $\pol(t) \in \Sigma$ and $s^P_{n,t} = \token(n,t)$.
\item If $P$ is a left position, Spoiler chooses to make it evolve into a position $P'$. If $t \in P'$,
we take $s^P_{n,t} = s^P_{n,t'}$. Otherwise, it depends on whether a $\textsc{fork}$ or $\textsc{explore}$
move was played (which is determined by $\pol(t)$):
\begin{itemize}
\item If it was a $\textsc{fork}$ move, then letting $t_1', \ldots, t_m'$ be the
successors of $t$, we set $s^P_{n,t} = \tuple{s^{P'}_{n,t_1'}, \ldots, s^{P'}_{n,t_m'}}$.
\item If it was an $\textsc{explore}$ move $\position{\Gamma}{X \uplus \{t\}}{e} \xrightarrow{\textsc{explore}} \position{\Gamma}{X \uplus \{t'\}}{e}$,
  we set $s^P_{n,t} = \tuple{ \code{t'}, s^P_{n, t'}}$.
\end{itemize}
\end{itemize}
\end{defi}

\begin{defi}
Let $P = \position{\Gamma}{X}{e}$ be a position in a positional Spoiler winning
strategy
A $P$-history with budget $n$ is a finite set $H$ that contains
\begin{itemize}
\item some $\underline{s^P_{k,t}}$ for $k \ge n$ and $t \in X$,
\item some tokens $\token(k, t)$ for $k > n$ and arbitrary $t \in S$, subject to
the following restrictions:
\begin{itemize}
\item $\pol(t) \in \Gamma$
\item there should exist an element of $G(\pol(t))(\token(k,t))$ which is computable
from $H$.
\end{itemize}
\end{itemize}
Furthermore a $P$-history must contain:
\begin{itemize}
\item at least one $\underline{s^P_{k,t}}$ for every $t \in X$
\item at least one $\token(k,t)$ with $\pol(t) = b$ for every $b \in \Gamma$.
\end{itemize}
\end{defi}

The following probably illustrates best the use of histories in the argument to come.

\begin{lem}
  \label{lem:negProofFillMove}
Assume a positional Spoiler winning strategy in $\simGameAut(P)$,
where $P = \position{\Gamma}{X}{e}$ is a right position with $\pol(e) = a \in \Sigma$.
Assume further a $P$-history $H$ with budget $n$ greater than the price of the Spoiler
strategy. Then:
\begin{itemize}
\item $H$ is a $P'$-history with budget $n$ for any move $P \xrightarrow{\textsc{junk}} P'$.
\item for any move $P = \position{\Gamma}{X' \uplus \{t\}}{e} \xrightarrow{\textsc{fill}} \position{\Gamma \cup \{a\}}{X' \cup \{t'\}}{e'} = P'$,
with $\underline{s^P_{k,t}} \in H$, then there is a unique  $\underline{s^{P'}_{k-1,t'}} \in G(a)(\token(k, t))$ such that
\[H' ~~=~~
  (H 
\setminus \{\underline{s^P_{k,t}}\})
\cup \{\token(k,t), \underline{s^{P'}_{k-1,t'}}\} 
\]
is a $P'$-history with budget $n-1$.
\end{itemize}
Furthermore, the price of the Spoiler strategy stays the same when going from $P$ to $P'$ in the first
case, while it strictly drops in the second.
\end{lem}
\begin{proof}
The first point is obvious, as following a $\textsc{junk}$ move ensures
$P' = \position{\Gamma}{X}{e'}$ for $e \to e'$ and the definition of histories
does not depend on the right-hand side term of a position.

To prove the second point, first observe that $\underline{s^{P}_{k,t}} = \tuple{\token(k,t), \tuple{ }}$,
and thus that is is computably equivalent to $\token(k,t)$.
The rest is then a matter of
unfolding the definitions, and checking that $k-1 \ge 0$ thanks to the notion of
price of a strategy.
\end{proof}

We are now ready to embark on our main inductive argument.

\begin{lem}
\label{lem:main-spo-technical}
Suppose that $P = \position{\Gamma}{X}{e}$ is a position in a Spoiler winning
strategy. Assume that $H$ is a $P$-history with budget $n$ exceeding the price
of Spoiler's strategy.

Then for every $H$-computable input $u \in \dom(\interpaut{e}_G)$,
there is $x \in \interpaut{e}_G(u)$ such that $H \cup \{x\}$ cannot compute $\token(\done)$.
\end{lem}

\begin{rem}
Let us stress there are two (realizable) scenarios where the conclusion can hold:
either there is no $u \in \dom(\interpaut{e}_G)$ whatsoever which is computable
in $H$, or any such $u$ simply cannot ask the right questions to problems in $G$.
The first will necessarily be the case for instance when $1 \in X$,
while the second will happen when $e = 1$, $X \neq \emptyset$ and $\min \{\price(t) \mid t \in X\} > 0$.
\end{rem}

\begin{proof}
The overall proof proceeds by induction on the rank of $u$ and the number of moves
required to get to a right position by the Spoiler strategy. Within that
induction, we perform a case analysis on what the position $P$ is.
\begin{itemize}
\item If $P$ is a left position and the Spoiler strategy performs the move
$P \to P'$, it suffices to compute a $P'$-history $H'$ from the given $P$-history $H$
to conclude by appealing to the inductive hypothesis. This is relatively straightforward.
For instance, if
\[P = \position{\Gamma}{X' \uplus \{t\}}{e} \qquad \xrightarrow{\textsc{explore}}\qquad \position{\Gamma}{X' \cup \{t'\}}{e}\]
with $t \to t'$, we simply replace any occurrence of $\underline{s_{k,t}^P}$ in $H$ with $\underline{s_{k,t'}^{P'}}$ in $H'$
\[ H' = (H \setminus \{\underline{s_{k,t}^P} \mid k \in \Nat\}) \cup \{\underline{s_{k, t'}^{P'}} \mid k \in \Nat \wedge \underline{s_{k,t}^P} \in H\}\]
$H$ can easily seen to be computable from $H'$, hence $u$ is computable from $H'$ as long as it is computable from $H$.
$\textsc{fork}$ moves are handled in a very similar manner.
\item If $P$ is a right position, we perform a case analysis according to $\pol(e)$:
  \begin{itemize}
    \item If $\pol(e) = \forall$, then let $e_1', \ldots, e'_m$ be the successors of $e$ in $\mfA$ in the canonical order.
      We then have $u = \tuple{u_1, \ldots, u_m}$ and the Spoiler strategy includes a move
      \[P = \position{\Gamma}{X}{e} \qquad \xrightarrow{\textsc{alea}} \qquad \position{\Gamma}{X}{e'_i} = P' \qquad \text{for some $i \in \{1, \ldots, m\}$}\]
      $H$ is also a $P'$-history, so by the induction hypothesis, there is some $x \in \interpaut{e'_i}_G(u_i)$ such that $H \cup \{x\}$ cannot
      compute an answer to $H$. Then, a fortiori, $\inc_i(x) \in \interpaut{e}_G(u)$ and $H \cup \{\inc_i(x)\}$ cannot compute $\token(\done)$.
    \item If $\pol(e) = \exists$, then $u = \tuple{\code{e'}, u'}$ and $\position{\Gamma}{X}{e'}$ is a position in the Spoiler strategy.
      Then we can directly apply the induction hypothesis and get a suitable $x \in \interpaut{e'}_G(u') = \interpaut{e}_G(u)$.
    \item If $\pol(e) = a \in \Sigma$, then we are in the specific situation outlined in~\Cref{lem:negProofFillMove} and we must
      have $u = \tuple{\token(k, t), f}$ for some $k$ and $t$. Since $\token(k,t)$ needs to be computable from $H$, then it must be the case
      that we have either $\token(k, t) \in H$ or $\underline{s_{k,t}^P} = \tuple{\token(k,t), \tuple{ }} \in H$ because $\Tokens$ is
      a strong antichain in the Turing degrees and $H$ contains only elements of the shape $\token(k',t')$ or $\tuple{\token(k',t'), \tuple{ }}$.
      We proceed by analyzing the two cases:
      \begin{itemize}
        \item If we have $\token(k,t) \in H$ and $t \to t'$, then by definition of histories, we have
          that $\position{\Gamma}{X}{t'}$ belongs to the Spoiler strategy (following a $\textsc{junk}$ move) and
          $H$ computes some $x \in G(a)(\token(k,t))$. So we actually have that $f \cdot x$ is computable from $H$,
          and we can apply the induction hypothesis to conclude that an answer $x'$ to $f \cdot x$ cannot compute $\token(\done)$,
          and neither can the answer $\tuple{x, x'}$ to $\tuple{\token(k,t), f}$.
        \item If we have $\underline{s_{k,t}^P} = \tuple{\token(k,t),\tuple{ }}$, it means we have a move
          \[P = \position{\Gamma}{X' \uplus \{t\}}{e} \qquad \xrightarrow{\textsc{fill}} \qquad \position{\Gamma \cup \{a\}}{X' \cup \{t'\}}{e'} = P'\]
          Taking $H'$ as in~\Cref{lem:negProofFillMove}, then we obviously have
          $f \cdot \underline{s^{P'}_{k-1,t'}}$ computable in $H'$, so we can apply the inductive hypothesis
          and obtain some $x \in \interpaut{e'}_G(f \cdot \underline{s^{P'}_{k-1,t'}})$ such that $H' \cup \{x\}$ cannot compute an $H'$-answer.
          Then note that $\tuple{\underline{s^{P'}_{k-1, t'}}, x} \in \interpaut{e}_G(u)$ and $H$ are computable from $H' \cup \{x\}$.
          So in particular, it means that $H \cup \{\tuple{\underline{s^{P'}_{k-1,t'}}, x}\}$ cannot compute $\token(\done)$,
      \end{itemize}
    \item Finally, if $\pol(e) = \done$, then we are in a final position
          and $u = \tuple{ }$. Since the position is winning for Spoiler,
          we have that $\token(\done)$ is not computable from $H$, nor from
          $H \cup \{\tuple{ }\}$.
  \end{itemize}
\end{itemize}
\end{proof}

We are now in a position where we can finally conclude for our converse statement,
and then put our equivalence together.

\begin{proof}[Proof of~\Cref{lem:spo-strat-to-nonred}]
Assume $t, e \in Q$, call $P$ the position $\position{\emptyset}{\{t\}}{e}$
and fix a positional Spoiler winning strategy in $\simGameAut(P)$
Let us assume that for now, that $\pol(t) \in \Sigma$.
Let $n$ be the price of the strategy and consider $u = \tuple{\token(n, t), \tuple{ }}$.
Then $\{u\}$ is a $P$-history, and we have that any
$x \in \interpaut{t}_G(\execute \cdot \code{t} \cdot u)$
computes $\token(\done)$. So, if we had $\interpaut{t}_G \leqW \interpaut{e}_G$,
$u$ should compute a question $u' \in \dom(\interpaut{e}_G)$ whose
answers all compute $\token(\done)$, which contradicts~\Cref{lem:main-spo-technical}.

The general case where $\pol(t) \not\in \Sigma$ can be carried out with the
following hack: consider some $\bullet \not\in \Sigma$, $t_{-1}, e_{-1} \not\in Q$, and extend the preautomaton
$\mfA$ into the preautomaton $\mfA' = (Q \cup \{t_{-1}, e_{-1}\}, \to \cup \{(t_{-1}, t), (e_{-1}, e)\}, \pol')$
over $\Sigma \cup \{\bullet\}$
with $\pol'(q) = \pol(q)$ for $q \in Q$ and $\pol'(t_{-1}) = \pol'(e_{-1}) = \bullet$.
Then a positional Spoiler winning strategy for $\simGameAut(\position{\emptyset}{\{t\}}{e})$
induces one for $\simGamePlain^{{\mfA'}}(\position{\emptyset}{\{t_{-1}\}}{\{e_{-1}\}})$, from which we get
that there is no reduction $\interp{{\mfA'}^{(t_{-1})}}_{G_\bullet} \leqW \interp{{\mfA'}^{(e_{-1})}}_{G_\bullet}$ by
replaying our argument for $\mfA'$. By noting that $G \subseteq G_\bullet$, we can
deduce that there is a fortiori no reduction $\interpaut{t}_G \leqW \interpaut{e}_G$.
\end{proof}

\begin{proof}[Proof of~\Cref{thm:gameAutEquiv}]
Fix a preautomaton $\mfA = (Q, \to, \pol)$ and $t, e \in Q$.

For the left-to-right direction, assume we have $\interpaut{t}_\rho \leqW \interpaut{e}_\rho$ for
every $\rho$. By determinacy (\Cref{thm:positional-det}), either Duplicator has a winning strategy,
or Spoiler does. In the first case we are done. In the second case, we run into a contradiction with~\Cref{lem:spo-strat-to-nonred}.

The right-to-left direction is handled by~\Cref{lem:dup-strat-to-red}.
\end{proof}

\subsection{Weihrauch problems and pointed (partial) Weihrauch problems}

Before closing this section, let us briefly discuss what happens if we attempt
to handle natural subclasses of the partial Weihrauch problems: the pointed partial
Weihrauch problems, and what happens when we drop ``partial''.

When it comes to pointed partial Weihrauch degrees, only some minor modifications of
the proof above are required.

\begin{thm}
\label{thm:maingame-pointed}
For any preautomaton $\mfA = (Q, \to, \pol)$ and $t,e \in Q$, the following are equivalent:
\begin{enumerate}
    \item \label{enumitem:mgp-pt}
      $\interpaut{t}_\rho \le \interpaut{e}_\rho$ is valid in the pointed partial Weihrauch degrees.
    \item \label{enumitem:mgp-game}
    Duplicator has a winning strategy in the simulation game $\psimGameAut(\pposition{\{t\}}{e})$.
\end{enumerate}
\end{thm}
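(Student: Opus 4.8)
The plan is to follow the proof of \Cref{thm:maingame} almost verbatim, the only structural change being that $\psimGame(\pposition{X}{e})$ is \emph{literally} the game $\simGame(\position{\Sigma}{X}{e})$: every move of $\simGame$ issued from a position whose first component is $\Sigma$ lands in another such position (a \textsc{fill} move merely replaces $\Gamma$ by $\Gamma\cup\{a\}=\Sigma$), so freezing the first component to $\Sigma$ carves out a genuine subgame. Conceptually this is the right restriction because a pointed valuation $\rho$ supplies, for every letter $a$, a recursive instance of $\rho(a)$, so Duplicator may behave from the start as if a valid input to each $\rho(a)$ had already been encountered — exactly what having $a\in\Gamma$ licenses through the \textsc{junk} move.

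For \ref{enumitem:mgp-game} $\Rightarrow$ \ref{enumitem:mgp-pt}: a winning Duplicator strategy in $\psimGame(\pposition{\{e\}}{f})$ is the same data as one in $\simGame(\position{\Sigma}{\{e\}}{f})$, so \Cref{lem:dup-strat-to-red} produces, for every $\rho$, a reduction $\bigsqcap_{b\in\Sigma}(\top\star\rho(b))\sqcap\interpt{e}_\rho\leqW\interpt{f}_\rho$. When $\rho$ takes values in genuine pointed Weihrauch problems, each $\top\star\rho(b)$ is pointed — this is where the pointedness hypothesis enters — while having no valid solutions, so $\top\star\rho(b)\equivW\top$; since $\top$ is the top of the lattice, the meet with it is trivial, leaving $\interpt{e}_\rho\leqW\interpt{f}_\rho$. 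Freeness of $e,f$ from $0$ and $\top$ makes $\interpt{e}_\rho$ and $\interpt{f}_\rho$ themselves genuine pointed Weihrauch problems, so this is precisely validity in the pointed degrees.

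For \ref{enumitem:mgp-pt} $\Rightarrow$ \ref{enumitem:mgp-game} I would argue by contraposition through positional determinacy (\Cref{thm:positional-det}): if Duplicator has no winning strategy, Spoiler does, and I want a genuinely pointed valuation refuting $e\le f$. Running the generic-interpretation machinery of \Cref{sec:game} with the first component of positions frozen to $\Sigma$, \Cref{cor:spo-strat-to-nonred} already gives $\bigsqcap_{b\in\Sigma}(\top\star G(b))\sqcap\interpt{e}_G\not\leqW\interpt{f}_G$. To replace $G$ by a pointed valuation I would adjoin to each $G(a)$ a single fresh recursive instance whose unique solution is recursive (say $\unitelt$), obtaining a genuine pointed $G^+$, and re-verify \Cref{lem:exe}, \Cref{lem:spo-strat-to-trap} and \Cref{cor:spo-strat-to-nonred} for $G^+$. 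These re-verifications should go through almost verbatim, because the only new behaviour granted by the extra instances is the ability to perform an oracle call that returns no information — exactly the capability modelled by the \textsc{junk} moves available once $\Gamma=\Sigma$, and already anticipated in the proof of \Cref{lem:spo-strat-to-trap} by its ``$e=\top\star a$, $A(e,k-1)$ contains a recursive point'' case. One then concludes $\interpt{e}_{G^+}\not\leqW\interpt{f}_{G^+}$ with $G^+$ pointed, contradicting validity; as before, $0$- and $\top$-freeness of $e,f$ is what makes $\interpt{e}_{G^+}$ and $\interpt{f}_{G^+}$ genuine Weihrauch problems.

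The step I expect to be the main obstacle is precisely this last re-verification: one must confirm that the ``trivial'' recursive instances of $G^+$ cannot be leveraged by the forward functional of a hypothetical reduction to obtain any genuinely new power, so that the strong-antichain trap built in \Cref{lem:spo-strat-to-trap} still catches every input that functional can produce. Put differently, one has to check that freezing $\Gamma$ to $\Sigma$ in the arena faithfully captures — no more and no less — the extra reductions made available by pointedness.
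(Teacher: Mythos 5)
Your proposal is correct and follows essentially the same route as the paper: the identification of $\psimGame$ with the $\Gamma=\Sigma$ subgame of $\simGame$, and the replacement of the generic interpretation by $G^+(a)\cong 1\sqcup G(a)$ together with a re-verification of \Cref{lem:spo-strat-to-trap} (where the only new case, a trivial instance $\inc_1(\unitelt)$ produced by the forward functional, is absorbed exactly like a \textsc{junk} move), are precisely the paper's proof idea. Your treatment of the direction \ref{enumitem:mgp-game} $\Rightarrow$ \ref{enumitem:mgp-pt} is marginally slicker than the paper's --- you invoke \Cref{lem:dup-strat-to-red} as a black box at $\Gamma=\Sigma$ and then absorb the factors via $\top\star\rho(b)\equivW\top$ for pointed $\rho$, whereas the paper re-runs that lemma's construction feeding a chosen recursive point of $\dom(\rho(a))$ into each \textsc{junk} move --- but both are the same use of pointedness.
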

\begin{proof}[Proof idea]
  Let us only highlight the key differences with the proof of~\Cref{thm:gameAutEquiv}

For the ``if'' analogous to~\Cref{lem:dup-strat-to-red}, the only
difference is that we no longer have the input $p \in (\Baire)^\Gamma$ to deal with
when defining the reduction. Then the only notable impact is when handling
a \textsc{junk} move
\[ \pposition{X}{e} \xrightarrow{\textsc{junk}} \pposition{X}{e'}\]
To define the forward component of the reduction $\varphi_{X,e}$ from $\varphi_{X,e'}$
in this case, we can use a recursive point
$u_a \in \dom(\rho(a))$ chosen in advance and simply set $\varphi_{X,e}(q) = \tuple{u_a, \lambda x. \; \varphi_{X,e'}(q)}$.

For the ``only if'' part, we need to modify the ``generic'' interpretation $G$
given in \Cref{def:genericinterp} by taking $G'(a) = 1 \sqcup G(a)$.
Then the key difference is that, in the proof of the analogue of \Cref{lem:main-spo-technical},
where in the case where $\pol(e) \in \Sigma$, we me have to potential inputs $u$
to consider: either $u = \tuple{\inc_1(\tuple{ }), f}$ or $u = \tuple{\inc_2(\token(k, t)), f}$.
The latter case is dealt with as before in the proof of \Cref{lem:main-spo-technical}, while the
former is trivial to handle by using a junk move, $f \cdot \unitelt \in \dom(\interpaut{e'}_G)$ and keeping the history unchanged.
\end{proof}

If we drop ``partial'', then we run into other issues. The main one is that
the game no longer works as is, as for instance, Spoiler wins in $\simGame(\position{\emptyset}{0 \star a}{0})$,
while $0 \star a \le 0$ is valid in the Weihrauch degree. This issue comes up
more generally in automata $\mfA^{(e)}$ with states $e$ such that $\interpaut{e}_1 = 0$
(which entails $\interpaut{e}_\rho = 0$ for any $\rho$ valued in Weihrauch problems.
Let us call such states \emph{null}. If $\mfA^{(e)}$ is finite, the it is
easy to compute the set of nullable states. Then, one can 
compute the restriction $\mfA_1^{(e)}$ of $\mfA^{(e)}$ to non-null states
and check that we have $\interp{\mfA_1^{(e)}}_\rho = \interpaut{e}_\rho$ for
any $\rho$ valued in Weihrauch problems. When either $e$ or $t$ is null,
we can trivially
decide whether $\interp{\mfA^{(t)}}_\rho \leqW \interp{\mfA^{(t)}}_\rho$ is
valid for all $\rho$ valued in Weihrauch problem by case analysis. Hence,
we can restrict to the case of preautomata with non-null states without great loss
of mathematical content\footnote{If one insists to handle both null states and non-null states in a single game, it would also be possible to
modify $\simGameAut$ to have special ending conditions when null states are involved
in positions.}.

\begin{thm}
\label{thm:maingame-nonpartial}
For any preautomaton $\mfA = (Q, \to, \pol)$ that does no contain null states and $t,e \in Q$, the following are equivalent:
\begin{enumerate}
    \item 
      $\interpaut{t}_\rho \le \interpaut{e}_\rho$ is valid in the partial Weihrauch degrees.
    \item
      Duplicator has a winning strategy in the simulation game $\simGameAut(\position{\emptyset}{\{t\}}{e})$.
\end{enumerate}
\end{thm}
\begin{proof}[Proof idea]
The direct implication analogous to~\Cref{lem:dup-strat-to-red} requires virtually
no modifications. However, the reverse implication requires more care, as the
generic interpretation given in~\Cref{def:genericinterp} is not valued in Weihrauch
problems. To fix this, one should rather consider the following alternative definition,
taking $G = G_{\ge 0}$, where the clause for domains is modified to filter out
inputs with no answers:
\[
\begin{array}{rcl}
  \dom(G_{n+1}^\top(a)) = \dom(G_{n+1}(a)) &=& \{\token(n,t) \mid \pol(t) = a, t \to t', \interpaut{t}_{G_n} \neq 0\}
\\
G_{n+1}^\top(a)(\token(n, t)) &=&  \emptyset \\
G_{n+1}(a)(\token(n, t)) &=&  \dom(\top_\done \star \interpaut{t'}_{G_n^\top}) ~~~ \text{when $t \to t'$}
\\
G_{\ge n}(a) &=& \bigcup\limits_{k \ge n} G_k(a)
\end{array}\]
Here one difference with the previous generic interpretation is that domains are smaller.
For instance, for $\mfA = \reAut$ (minus its null states), we no longer have $\token(0,a \star 1) \in \dom(G(a \star a))$.
So, for instance, if one wants to witness, say, the non-reduction $a \star a \star a \not\le a$, one
needs to consider an input $\execute \cdot \code{a \star a \star a} \cdot \tuple{\token(n, a \star a \star a)}$
for $n \ge 3$, which exceeds the number of $\textsc{fill}$ moves executed in the Spoiler strategy.
Hence, to adapt the argument, one needs to adapt the notion of price of a position.
For this, it is useful to introduce the notion of \emph{price} of a state defined by
\[\price(t) = \inf \{n \in \Nat \mid \interpaut{t}_{G_n} \neq 0\}\]
If $t$ is a non-null state, this quantity is finite and morally corresponds to
the least number of oracle calls one needs to make to $t$.
Then state that the \emph{price} of a positional winning Spoiler strategy for
is given by the following sum, where 
$\chi \subseteq Q$
ranges over all terms occurring in a reachable position:
\[ \text{\textsc{fill}-depth} \qquad + \qquad \max_{t \in \chi} \price(t)\]
Calling $\price(P)$ the price of the positional winning Spoiler strategy when starting
from a position $P = \position{\Gamma}{X}{e}$, we can then prove that
$\execute \cdot \code{t} \cdot \underline{s_{k,t}^P} \in \dom(\interpaut{t}_{G})$
for all $k \ge \price(P)$. Once that is done, one can straightforwardly adapt
the proofs above, checking that the invariants on prices are indeed satisfied in
the analogue of~\Cref{lem:main-spo-technical}.
\end{proof}

Finally, the case of the pointed Weihrauch degrees can be handled by combining
the adaptations discussed for the previous two theorems. Let us simply state the
result without further explanations.

\begin{thm}
\label{thm:maingame-pointed-nonpartial}
For any preautomaton $\mfA = (Q, \to, \pol)$ that does no contain null states and $t,e \in Q$, the following are equivalent:
\begin{enumerate}
    \item 
      $\interpaut{t}_\rho \le \interpaut{e}_\rho$ is valid in the pointed Weihrauch degrees.
    \item
      Duplicator has a winning strategy in the simulation game $\psimGameAut(\pposition{\{t\}}{e})$.
\end{enumerate}
\end{thm}

\section{Completeness}
\label{sec:completeness}

We show that $\SRKAM$, axiomatized in \Cref{fig:axioms},
is complete with respect to our
game characterization, and thus
universal Weihrauch reducibility.

\begin{thm}
\label{thm:completeness}
If Duplicator wins in $\simGame(\position{\emptyset}{\{e\}}{f})$ with $e, f \in \RegE{\Sigma}$, then $\SRKAM$ proves $e \le f$.
\end{thm}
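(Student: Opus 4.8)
The plan is to extract from a winning Duplicator positional strategy in $\simGame(\position{\emptyset}{\{e\}}{f})$ a cyclic/infinitary derivation of $e \le f$ in $\SRKAM$, reading the strategy as a preproof (as hinted in the remark following \Cref{fig:ex-strat}) and using the B\"uchi winning condition to collapse it to a finite proof. More precisely, I would associate to every reachable position $\position{\Gamma}{X}{g}$ the candidate sequent $\bigsqcap_{b \in \Gamma} (\top \star b) \; \sqcap \; \bigsqcap_{e' \in X} e' \;\le\; g$ (matching exactly the reduction proved semantically in \Cref{lem:dup-strat-to-red}), and show that each of the six move types of \Cref{fig:simgame-moves} is \emph{sound in reverse}, i.e.\ corresponds to a derivation step in $\SRKAM$ that reduces the conclusion at the source position to the conclusions at the target position(s). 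The \textsc{explore} move (where $t \to t'$ with $\pol(t) = \duppol$, so $t$ is a join, an iterate, a variable, a unit, or $e \star 1$) uses that $\sqcup$ is a least upper bound together with the fixpoint unfolding axioms $1 \le a^\diamond$, $a^\diamond \star a \le a^\diamond$ and monoid laws; \textsc{fork} (where $\pol(t) = \spopol$, so $t$ is a meet or $\top$) uses that $\sqcap$ is a greatest lower bound and that $\top$ absorbs on the appropriate side; \textsc{alea}/\textsc{choose} dually handle the right-hand side $g$; \textsc{junk} uses $\top \star a \ge$ "call $a$ and discard", concretely $a \star \top \ge \top$ and monotonicity after pushing the relevant $\top \star b$ factor; and \textsc{fill} uses the half-distributivity axiom $(a \star b) \sqcap c \le (a \sqcap c) \star b$ and right-distributivity of $\sqcap$ over $\star$ to peel one oracle call $a$ off both sides simultaneously.

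The delicate point is that these local reductions only give a (possibly infinite) tree of sequents, not a proof; finiteness must come from the B\"uchi condition. Since the strategy is positional and the arena is finite, the preproof is a finite cyclic graph whose back-edges are exactly repeated positions. I would therefore organize the argument around the $\diamond$-induction scheme of \Cref{fig:axioms}: at the positions where the B\"uchi accepting set is visited (Spoiler positions where no \textsc{alea} move is available — morally where a simulation has been completed, or $g \equiv \top$, or something $\equiv 0$ sits in $X$), we have an outright axiom instance ($a \le \top$, $0 \le a$, or $a \le a$), and along any cycle not through an accepting position the "progress" that the B\"uchi condition guarantees is precisely an unfolding of some $(-)^\diamond$ on the right-hand side; collapsing such a cycle into a single application of parameterized $\diamond$-induction $a \sqcap (b \star c) \le b \Rightarrow a \sqcap (b \star c^\diamond) \le b$ (with $a = \bigsqcap_{e' \in X} e'$, $b = $ the "head" of $g$, $c$ the iterated body) yields a finite derivation. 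The bookkeeping here — matching the shape of a winning cycle to the exact form of the induction axiom, and handling nested $\diamond$'s by iterating the scheme as in \Cref{ex:derivation} — is what I expect to be the main obstacle, together with verifying that the generalized sequents with the $\top \star b$ factors (needed so that \textsc{fill} and \textsc{junk} typecheck) are closed under all moves and reduce to $e \le f$ at the initial position where $\Gamma = \emptyset$.

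Concretely I would: (1) fix a positional winning Duplicator strategy $\sigma$ and let $R$ be its (finite) set of reachable positions; (2) define the sequent $\mathrm{seq}(\position{\Gamma}{X}{g})$ as above and check $\mathrm{seq}(\position{\emptyset}{\{e\}}{f})$ is exactly $e \le f$; (3) prove the "local soundness in reverse" lemma, one case per move type, each a short $\SRKAM$-derivation; (4) argue that $\sigma$, being positional on a finite arena, presents $e \le f$ as a cyclic preproof, stratify positions by whether they lie on a cycle, and for the innermost cycles apply $\diamond$-induction to discharge the back-edge, then induct outward on the number of remaining cycles until an acyclic finite derivation remains; (5) conclude by transitivity that $\SRKAM \vdash e \le f$. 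Throughout, soundness of $\SRKAM$ (\Cref{lem:soundness}) plus the semantic direction \Cref{lem:dup-strat-to-red} can be used as a sanity check that the chosen sequents are the right ones, though the proof itself stays purely syntactic.
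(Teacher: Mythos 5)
Your overall architecture --- assign a sequent to each position, verify that every move of \Cref{fig:simgame-moves} is locally sound in reverse, and discharge the cycles of a positional winning strategy with parameterized $\diamond$-induction --- is in spirit what the paper does (via its notions of \emph{places} and \emph{covers}). But the step you yourself flag as ``the main obstacle'' is exactly where the proof lives, and your sketch of it would fail as stated. First, a factual slip: the accepting positions of $\simGame$ are not terminal axiom leaves; they are chiefly the positions where Spoiler is forced to decompose the left component $X$ by \textsc{fork}/\textsc{explore}, and every cycle of a winning Duplicator strategy passes through one. Consequently the ``progress'' around a winning cycle is an unfolding $t^\diamond \to t^\diamond \star t$ of an iterate sitting in $X$, i.e.\ on the \emph{left} of your sequent, not on the right: right-hand unfoldings of $g^\diamond$ are already handled in the local-soundness step for \textsc{choose}/\textsc{alea} by the fixpoint axiom $a^\diamond \star a \le a^\diamond$ and need no induction at all.

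Second, and more seriously: with your sequents $\bigsqcap_{b}(\top\star b)\sqcap\bigsqcap_{e'\in X}e' \le g$, the cycle head reads $A \sqcap t^\diamond \le g$, which is not an instance of the conclusion $a \sqcap (b\star c^\diamond)\le b$ of the induction axiom --- there is no $g\star{-}$ prefix on $t^\diamond$, and $t^\diamond \le g \star t^\diamond$ is not derivable in general. Manufacturing that prefix, and keeping track of the partially consumed copies $t^\diamond\star t'$ that populate $X$ between the unfolding and the return to the cycle head, is precisely why the paper generalizes the sequents: a place carries an $S$-indexed family of sets and its cover entailment prefixes each group with a fresh variable, $\bigsqcap_{s\in S}\bigl(x_s\star\bigsqcap_{t\in X_s}t\bigr)\le_\Gamma f$ (\Cref{def:place}, \Cref{def:cover}); the induction then runs over a bespoke well-founded order on places (alphabet growth, then a subobject-meet order on $f$, then a multiset order on the $X_s$), and in the iteration case (\Cref{lem:cover-iter}) the hypothesis of parameterized $\diamond$-induction is obtained only after substituting $(x_{s_0}\star t^\diamond)\sqcap f$ for the fresh variable and rearranging with half-distributivity. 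Without some such strengthening of the induction statement, your step (4) --- ``innermost cycles first, then induct outward'' --- has no induction hypothesis of the right shape to apply. So the proposal is not wrong in outline, but it is incomplete at its load-bearing point, and repairing it essentially requires reinventing the places-and-covers machinery.
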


In order to prove \Cref{thm:completeness}, we will perform what is essentially going to
amount to an induction on the syntax of positions in $\simGame$.
Because of the presence of $(-)^\diamond$ however, a naive induction would
not work. We resolve this by proving a general result that allows to build
partial derivations from certain decompositions of winning Duplicator positions
in $\simGame$ that we call \emph{places}.
That result (\Cref{thm:cover}) will occupy the bulk of this section.

\subsection{Preliminary: derivability modulo $\Gamma$}

We first give an auxiliary definition that will make it more
convenient to relate the first component $\Gamma \subseteq \Sigma$
of a position in $\simGame$ to derivability.

\begin{defi}
\label{def:le-alph}
For $\Gamma \subseteq \Sigma$
and $e,f \in \RegE{\Sigma}$, we say that $\SRKAMT$
derives $e \le f$ modulo $\Gamma$ (and simply write $e \le_\Gamma f$) if $\SRKAMT$ derives
\[\bigsqcap_{b \in \Gamma} (\top \star b) ~~ \sqcap ~~ e \quad \le \quad f \]
\end{defi}

A consequence of completeness will be that Duplicator wins in $\simGame(\position{\Gamma}{\{e\}}{f})$
if and only if $e \le_\Gamma f$ holds. Note in particular that $t \le_\emptyset u$ is the same as $\SRKAMT$ deriving $t \le u$. Before we proceed, we note the following
useful properties of the relation $\le_\Gamma$ that we shall use repeatedly in the sequel.

\begin{lem}
\label{lem:le-alph-sanity}
For any $e, f, e', f' \in \RegE{\Sigma}$ and $\Gamma \subseteq \Theta \subseteq \Sigma$:
\begin{enumerate}
\item If $e \le_{\Gamma} f$, then $e \le_{\Theta} f$.
\item If $\bigwedge_{i \in I} e_i \le f_i  ~ \Rightarrow ~ e' \le f'$
is an axiom of $\SRKAMT$, then $\bigwedge_{i \in I} e_i \le_\Gamma f_i ~ \Rightarrow e' \le_\Gamma f'$ is derivable.
\item If the free variables of $f$ are among $\Gamma$, then $1 \le_\Gamma f$.
\end{enumerate}
\end{lem}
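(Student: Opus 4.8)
The plan is to prove the three items essentially independently, each by a short unpacking of Definition~\ref{def:le-alph} followed by a routine derivation in $\SRKAM$. Write $\Gamma^\star$ as shorthand for $\bigsqcap_{b \in \Gamma}(\top \star b)$, so that $t \le_\Gamma u$ means $\SRKAM \vdash \Gamma^\star \sqcap t \le u$.

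For item (1), suppose $\Gamma \subseteq \Theta$ and $t \le_\Gamma u$. Since $\sqcap$ is a meet and $\Theta^\star$ is the meet over a superset, the projection axioms give $\Theta^\star \le \Gamma^\star$, hence $\Theta^\star \sqcap t \le \Gamma^\star \sqcap t \le u$ by monotonicity of $\sqcap$ (derivable from the meet axioms) and transitivity; this is exactly $t \le_\Theta u$.

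For item (2), I would argue uniformly over the shape of an $\SRKAM$-axiom of the form $\bigwedge_{i \in I} t_i \le u_i \Rightarrow t' \le u'$. Assuming $t_i \le_\Gamma u_i$ for all $i$, i.e. $\Gamma^\star \sqcap t_i \le u_i$, we want $\Gamma^\star \sqcap t' \le u'$. The key observation is that for every connective appearing in the axioms, prefixing with a fixed ``context'' $\Gamma^\star \sqcap (-)$ and reasoning under $\SRKAM$ lets one push the hypotheses through: the lattice axioms are closed under intersecting with a fixed element; for the half-distributivity $(a \star b) \sqcap c \le (a \sqcap c) \star b$ one uses that $\Gamma^\star \sqcap (a \star b) \le (\Gamma^\star \sqcap a) \star b$, a special case of the half-distributivity axiom itself; and for the parameterized $\diamond$-induction axiom $a \sqcap (b \star c) \le b \Rightarrow a \sqcap (b \star c^\diamond) \le b$ one instantiates with $a := \Gamma^\star \sqcap a$, since $\Gamma^\star$ is merely an extra conjunct that can be absorbed into the ``$a$'' slot. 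Concretely, each axiom is handled by instantiating it with terms of the form $\Gamma^\star \sqcap (\cdots)$ and then using associativity/commutativity/idempotence of $\sqcap$ (all derivable) to rearrange. The main obstacle here is bookkeeping rather than mathematical depth: one must check each of the roughly dozen axiom schemes, but the parameterized $\diamond$-induction scheme is the only one where the argument is not immediate, and even there the point is just that $\Gamma^\star$ can be folded into the unconstrained parameter $a$.

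For item (3), suppose the free variables of $f$ are all in $\Gamma$. I would prove $1 \le_\Gamma f$, i.e. $\Gamma^\star \sqcap 1 \le f$, by induction on the structure of $f$. For a variable $a \in \Gamma$: from $\top \star a \le \top \star a$ and $\Gamma^\star \le \top \star a$ (a projection) we get $\Gamma^\star \sqcap 1 \le \top \star a$; then half-distributivity $(\top \star a) \sqcap 1 \le (\top \sqcap 1) \star a$ together with $\top \sqcap 1 \le 1$ and unitality gives $\le a$. For $0$ there is nothing since $f=0$ has no free variables to constrain, so this case is vacuous (or: $f = \top$ is immediate from $\Gamma^\star \sqcap 1 \le \top$). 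For $1$: immediate. For $f = g \sqcup h$: by induction $\Gamma^\star \sqcap 1 \le g$ (the free variables of $g$ are a subset), then $g \le g \sqcup h$. For $f = g \sqcap h$: by induction $\Gamma^\star \sqcap 1 \le g$ and $\le h$, so $\le g \sqcap h$ since $\sqcap$ is the greatest lower bound. For $f = g \star h$: by induction $\Gamma^\star \sqcap 1 \le g$ and $\Gamma^\star \sqcap 1 \le h$; using $1 = 1 \star 1$, monotonicity of $\star$, and idempotence of $\sqcap$ one gets $\Gamma^\star \sqcap 1 \le g \star h$. For $f = g^\diamond$: by the fixpoint-unfolding axiom $1 \le g^\diamond$, so $\Gamma^\star \sqcap 1 \le 1 \le g^\diamond$, and no induction is even needed. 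I expect no real obstacle in item (3); the only mildly non-obvious case is the variable base case, which is the same computation already used in the discussion of half-distributivity above.
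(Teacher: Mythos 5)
Items (1) and the overall shape of (3) are fine and agree with the paper. The genuine gap is in item (2): your case analysis never addresses the monotonicity axiom for $\star$, which is in fact the one case that requires a real idea (you single out parameterized $\diamond$-induction as ``the only one where the argument is not immediate'', but that case is handled exactly as you say, by folding $\Gamma^\star$ into the free parameter $a$; the hard case is elsewhere). For monotonicity of $\star$, from $\Gamma^\star \sqcap a \le a'$ and $\Gamma^\star \sqcap b \le b'$ you must derive $\Gamma^\star \sqcap (a \star b) \le a' \star b'$. Instantiating the axiom at $\Gamma^\star \sqcap (\cdots)$ gives $(\Gamma^\star \sqcap a) \star (\Gamma^\star \sqcap b) \le a' \star b'$, but you are then missing the step $\Gamma^\star \sqcap (a \star b) \le (\Gamma^\star \sqcap a) \star (\Gamma^\star \sqcap b)$. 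Half-distributivity only lets you push a meet into the \emph{left} factor of a composition; no axiom pushes it into the right factor. The paper closes this by first proving $\Gamma^\star = a \star \Gamma^\star$ --- using the absorption axiom $\top \le a \star \top$ together with the right-distributivity of $\sqcap$ over $\star$ --- whence $\Gamma^\star \sqcap (a \star b) = a \star (\Gamma^\star \sqcap b)$. This is a special property of the particular term $\bigsqcap_{b \in \Gamma} (\top \star b)$, not an instance of generic context manipulation, and it is the key content of the lemma.

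The same gap resurfaces in your $\star$ case of item (3): $\Gamma^\star \sqcap 1 \le (\Gamma^\star \sqcap 1) \star (\Gamma^\star \sqcap 1)$ does not follow from ``$1 = 1 \star 1$, monotonicity of $\star$ and idempotence of $\sqcap$'' (that route would need $1 \le \Gamma^\star$); the intended argument is to apply the already-relativized monotonicity of $\star$ from item (2) to $1 \le_\Gamma g$ and $1 \le_\Gamma h$. Two smaller points: for the least-upper-bound axiom in item (2) you should say explicitly that distributivity of $\sqcap$ over $\sqcup$ is what makes ``intersecting with a fixed element'' work; and your dismissal of $f = 0$ in item (3) as vacuous is wrong ($0$ has no free variables, so the hypothesis is satisfied), although the paper's own proof is equally silent on that case.
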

\begin{proof}
Within this proof, let us write $i_\Gamma$ for the term
$\bigsqcap_{b \in \Gamma} (\top \star b)$.
The first point follows from $i_\Gamma \le i_\Theta$,
which holds trivially since $\Gamma \subseteq \Theta$.
For the second point, the case where $I = \emptyset$ is straightforward using the standard properties of meets, so we only discuss the cases where $I \neq \emptyset$.
\begin{itemize}
\item The transitivity axiom is straightforward to handle using the properties of meets.
\item For the axiom stating that $\sqcup$ is the least upper bound, we need to
rely on the distributivity of $\sqcup$ over $\sqcap$ before applying the relevant
axiom. Indeed, assuming
$i_\Gamma \sqcap e \le g$ and $i_\Gamma \sqcap f \le g$, we can derive
\[i_\Gamma \sqcap (e \sqcup f)
\le (i_\Gamma \sqcap e) \sqcup (i_\Gamma \sqcap e) \le g \]
\item The axiom stating that $\sqcap$ is the greatest lower bound
is trivial to handle.
\item For the monotonicity of $\star$, let us assume we have $i_\Gamma \sqcap e \le e'$
and $i_\Gamma \sqcap f \le f'$. Then applying monotonicity of $\star$, we get
$(i_\Gamma \sqcap e) \star (i_\Gamma \sqcap f) \le e' \star f'$.
Using idempotence of $\sqcap$, left half-distributivity of $\sqcap$ over $\star$,
and left-distributivity of $\star$ over $\sqcap$, we can conclude by deriving
\[i_\Gamma \sqcap (e \star f) \le i_\Gamma \sqcap (i_\Gamma \sqcap (e \star f))
\le
i_\Gamma \sqcap ((i_\Gamma \sqcap e) \star f)
=
(i_\Gamma \sqcap e) \star (i_\Gamma \sqcap f) \le e' \star f'
\]
The only non-straightforward part is the equality (third step). It holds because
we can prove more generally that $i_\Gamma \sqcap (a \star b) = a \star (i_\Gamma \sqcap b)$.
This is because, using left-distributivity of $\star$ over $\sqcap$ on the right
and $\top = a \star \top$, we can derive $i_\Gamma = \bigsqcap\limits_{b \in \Gamma} a \star \top \star b = a \star i_\Gamma$.
We can then use left-distributivity of $\star$ over $\sqcap$ again to derive the
desired equality.
\item Finally, for parameterized $\diamond$-induction, assume we have
$i_\Gamma \sqcap (e \sqcap (f \star g)) \le e$. Then we can simply use associativity
of $\sqcap$ and apply parameterized $\diamond$-induction with $a = i_\Gamma \sqcap e$ to conclude.
\end{itemize}
For the last point, it is proven by induction over $f$. All cases are
trivial applications of monotonicity and using that $\SRKAMT$ proves
$1 = 1 \sqcup 1 = 1 \sqcap 1 = 1 \star 1 = 1^\diamond$, except when we have a letter
$b \in \Gamma$. In such a case, we derive
\[i_\Gamma \sqcap 1 \le (\top \star b) \sqcap 1 \le
(\top \sqcap 1) \star b \le 1 \star b \le b\]
\end{proof}

\subsection{Places and covers}

We now introduce the notion of \emph{places} that morally decomposes any
winning Duplicator position, up to some equivalence of problems.

\begin{defi}
    \label{def:place}
Given a finite set $S \subseteq \RegE{\Sigma}$ a
\emph{$S$-place} is a tuple $(\Gamma, (X_s)_{s \in S}, f)$ where:
\begin{itemize}
    \item $\Gamma \subseteq \Sigma$
    \item $X_s \subseteq \RegE{\Sigma}$ is a finite non-empty set
    \item $f \in \RegE{\Sigma}$
    \item Duplicator wins in $\simGame(\position{\Gamma}{\{ s \star t \mid s \in S, t \in X_s\}}{f})$.
\end{itemize}
Call a place \emph{resolved at $s \in S$} if $1 \in X_s$; call a place \emph{resolved}
if it is resolved at some $s \in S$.
\end{defi}

We now turn to defining a well-founded order over places
that will allow us to reason by induction later on.
For this we will manipulate multisets of natural numbers.
We write $\multiset{\phantom{a}}$ for a multiset expression and
write disjoint unions of multisets additively. Recall that the
\emph{multiset ordering} of two multisets of natural numbers
is given by taking $M < N$ if and only if there are multisets
$X, Y$ such that
\[M = N - X + Y \qquad \text{and} \qquad \forall y \in Y. \exists x \in X. ~ y < x\]
This is a well-founded order~\cite{multisetwf}.

\begin{defi}
\label{def:place-wf}
Given two places
$\alpha = (\Gamma, (X_s)_{s \in S}, f)$
and
$\beta = (\Theta, (Y_r)_{r \in R}, g)$, say that $\alpha < \beta$
if and only if one of the following holds:
\begin{itemize}
\item $\Theta$ is strictly included in $\Gamma$
\item $\Gamma = \Theta$ and $f$ is a meet
of spawns of $g$ together with $f <_\Gamma g$.
\item $\Gamma = \Theta$, $f \le_\Gamma g$ and we have a strict inequality for the multisets
\[\sum_{s \in S} \multiset{ \size{e} \mid e \in X_s} \quad < \quad \sum_{r \in R} \multiset{ \size{t} \mid t \in Y_r}\]

\end{itemize}
\end{defi}

\begin{lem}
The order on places given in \Cref{def:place-wf} is well-founded.
\end{lem}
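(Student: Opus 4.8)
The statement asserts that the order on places from \Cref{def:place-wf} is well-founded. The plan is to exhibit a measure from places into a well-order (or into a product of well-orders ordered lexicographically, which is again well-founded) that is strictly decreasing along $<$. I would assign to a place $\alpha = (\Gamma, (X_s)_{s \in S}, f)$ the triple
\[
\mu(\alpha) \quad=\quad \bigl(\, |\Sigma \setminus \Gamma|,\ \nu(f),\ \textstyle\sum_{s \in S} \multiset{ \size{e} \mid e \in X_s}\,\bigr),
\]
and argue that $\alpha < \beta$ implies $\mu(\alpha) <_{\mathrm{lex}} \mu(\beta)$, where the first component is compared with the usual ordering on $\Nat$, the third with the multiset ordering, and $\nu(f)$ is an auxiliary ordinal-valued measure on terms (more on this below) compared with ordinal $<$. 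Since each of these three orders is well-founded and a finite lexicographic product of well-founded orders is well-founded, the claim follows. So the real content is (i) checking that $<$ on places descends through $<_{\mathrm{lex}}$ on the $\mu$-images, and (ii) choosing $\nu$ correctly so that the middle case of \Cref{def:place-wf} decreases it.

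For (i), I would go through the three disjunctive clauses of \Cref{def:place-wf}. In the first clause, $\Theta \subsetneq \Gamma$, so $|\Sigma \setminus \Gamma| < |\Sigma \setminus \Theta|$ (using that $\Sigma$ is finite), and the first component of $\mu$ strictly drops, regardless of the other two. In the second clause, $\Gamma = \Theta$ so the first component is equal, and we must see that $\nu(f) < \nu(g)$; this is where the hypothesis ``$f$ is a meet of subobjects of $g$'' together with $f <_\Gamma g$ must be exploited. In the third clause, $\Gamma = \Theta$ and $f \le_\Gamma g$; here I would need the first two components of $\mu$ to be \emph{equal} (not just $\leq$), which forces $\nu$ to be chosen so that it is invariant under $\equiv_\Gamma$ — or, more carefully, I would build $\nu$ so that $f \le_\Gamma g$ and $g \le_\Gamma f$ imply $\nu(f) = \nu(g)$, and in the third clause observe that we additionally know $g \le_\Gamma f$ is \emph{not} needed, only $f \le_\Gamma g$; so $\nu$ must actually be a function of the $\le_\Gamma$-equivalence class, hence the third clause really must carry the multiset decrease. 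The cleanest route is to let $\nu(f)$ depend only on the $\equiv_\Gamma$-class of $f$, via an ordinal rank of $\le_\Gamma$ restricted to the relevant finite set of meets-of-subobjects. That restriction is legitimate because in all three clauses the terms $f,g$ involved range over meets of subobjects of a fixed starting term, and \Cref{def:subobject} tells us there are only finitely many subobjects, hence finitely many such meets up to the relation.

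**The main obstacle.** The delicate point — and the step I expect to take the most care — is defining $\nu$ in the middle clause and reconciling it with the third clause. We are told $\le_\Gamma$ is a preorder derived inside $\SRKAMT$; it need not be antisymmetric, and a priori it need not be well-founded on infinite sets. The resolution is that in \Cref{def:place-wf} the term $f$ in clause two is constrained to be a meet of subobjects of $g$, and iterating this within a single well-foundedness argument one only ever produces meets of subobjects of one fixed term $g_0$; by \Cref{def:subobject} the set $\mathcal{M}(g_0)$ of all meets of subobjects of $g_0$ is finite, so $\le_\Gamma$ restricted to $\mathcal{M}(g_0)$ is a finite preorder, hence its strict part is well-founded, and I can take $\nu$ to be the rank function of that finite strict preorder (quotiented by the induced equivalence). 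Then clause two gives $\nu(f) < \nu(g)$ by definition, and clause three gives $\nu(f) \le \nu(g)$ with equality allowed, so the multiset component must strictly decrease — exactly as required. I would phrase the lemma's proof so that this finiteness observation is made explicitly at the start (``fix the starting place; all places below it have $f$-component in the finite set $\mathcal{M}(g_0)$ and $\Gamma$-component among the finitely many subsets of $\Sigma$''), after which the lexicographic argument is routine. An alternative, avoiding any rank construction, is a direct ``no infinite descending chain'' argument: given a putative infinite descent $\beta_0 > \beta_1 > \cdots$, the $\Gamma$-components are weakly decreasing subsets of a finite set, hence eventually constant; from that point on the $f$-components are weakly decreasing in the finite preorder $(\mathcal{M}(g_0), \le_\Gamma)$ and strictly decrease whenever clause two fires, so clause two fires only finitely often; from that point clause three always applies and the multisets strictly decrease forever, contradicting well-foundedness of the multiset order~\cite{multisetwf}. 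I would likely present this second form, as it sidesteps having to argue that the finite preorder's strict part is well-founded as a separate lemma.
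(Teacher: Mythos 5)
Your proposal is correct and follows essentially the same route as the paper: a lexicographic product of three well-founded orders, namely reverse inclusion of alphabets (well-founded since $\Sigma$ is finite), a rank on the $\le_\Gamma$-preorder restricted to the finitely many meets of subobjects of a fixed term (finite up to provable equivalence, by finiteness of the subobject set and idempotence of $\sqcap$), and the multiset ordering on sizes. Your discussion of the middle component — in particular why its well-foundedness must be localized to the finite set $\mathcal{M}(g_0)$ rather than claimed for $<_\Gamma$ on all terms — spells out more explicitly the point the paper's proof compresses into one sentence, but the argument is the same.
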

\begin{proof}
This is because we are considering a lexicographic product of well-founded orders.
Reverse inclusion on the alphabets is well-founded since we have a fixed maximal alphabet
$\Sigma$. 
That the second component is well-ordered follows from the fact that any term $g$
has only finitely many spawns, and since meets are provably idempotent in $\SRKAMT$,
only finitely many meets of spawns up to the equivalence generated by $\le_\Gamma$.
As a result, $<_\Gamma$ is well-founded.
The last component is well-ordered because it is a lifting of the multiset ordering
on the natural numbers.
\end{proof}

Now we give the definition of covers, which is a notion that
witnesses the existence of partial derivations in $\SRKAMT$ that make progress
towards deriving the natural inequality associated to a place.

\begin{defi}
\label{def:cover}
A \emph{cover} of an $S$-place $\alpha = (\Gamma, (X_s)_{s \in S}, f)$ is a set of
$S$-places $\cC$ such that
$\pSRKAM$ derives the following entailment for any fresh family of variables
$(x_s)_{s \in S}$:
\[
\left[\bigsqcap_{s \in S} \left( x_s \star \bigsqcap_{t' \in Y_{s}} t' \right) ~~ \le_\Theta ~~ g ~~\text{ for all $(\Theta,(Y_s)_{s \in S},g) \in \cC$} \right]
\quad \Longrightarrow \quad
\bigsqcap_{s \in S} \left( x_s \star \bigsqcap_{t \in X_{s}} t \right) ~~ \le_\Gamma ~~ f
\]
and such that, for any $\beta \in \cC$, we have $\beta \le \alpha$ or $\beta$ is resolved.
We say that a $\cC$ cover of $\alpha$ is \emph{progressing} if for every
$\beta \in \cC$, either $\beta$ is resolved or $\beta < \alpha$.
We say that a cover is \emph{resolving} if all the places in it are resolved.
\end{defi}

Let us first start with the observation that covers compose.

\begin{lem}
\label{lem:cover-compose}
If we have a cover $\cC$ of a $\alpha$, and every $\beta \in \cC$
has a cover $\cC'_\beta$, then $\bigcup_{\beta \in \cC} \cC'_\beta$ is a cover
of $\alpha$, and it is resolving if and only if all the $\cC'_\beta$ are.
\end{lem}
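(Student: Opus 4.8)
\textbf{Proof plan for \Cref{lem:cover-compose}.}
The plan is to unfold the definition of cover (\Cref{def:cover}) for $\cC$ and each $\cC'_\beta$, and chain the resulting entailments in $\pSRKAM$. First I would observe that, by hypothesis, for each $\beta = (\Theta_\beta, (Y^\beta_s)_{s \in S}, g_\beta) \in \cC$ the set $\cC'_\beta$ is a cover of $\beta$, so $\pSRKAM$ derives
\[
\left[\bigsqcap_{s \in S}\left(x_s \star \bigsqcap_{t' \in Z_s} t'\right) \le_\Lambda h ~~\text{ for all }(\Lambda,(Z_s)_{s \in S},h) \in \cC'_\beta\right] \;\Longrightarrow\; \bigsqcap_{s \in S}\left(x_s \star \bigsqcap_{t' \in Y^\beta_s} t'\right) \le_{\Theta_\beta} g_\beta.
\]
Now suppose we are given, as antecedent for the claimed cover $\bigcup_{\beta \in \cC}\cC'_\beta$, the family of hypotheses $\bigsqcap_{s}(x_s \star \bigsqcap_{t' \in Z_s} t') \le_\Lambda h$ ranging over all $(\Lambda,(Z_s)_s,h) \in \bigcup_{\beta}\cC'_\beta$. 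For each fixed $\beta$, these in particular include all the hypotheses needed to fire the entailment for $\cC'_\beta$, yielding $\bigsqcap_s(x_s \star \bigsqcap_{t' \in Y^\beta_s} t') \le_{\Theta_\beta} g_\beta$. Collecting these conclusions over all $\beta \in \cC$ gives exactly the antecedent of the entailment witnessing that $\cC$ is a cover of $\alpha$; firing that entailment yields $\bigsqcap_s(x_s \star \bigsqcap_{t \in X_s} t) \le_\Gamma f$, which is the required consequent. Since all these steps are implications derivable in $\pSRKAM$, their composition is too; this is essentially just cut/transitivity of entailment, so no real calculation is needed.

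It remains to check the side condition of \Cref{def:cover}: for every $\gamma \in \bigcup_{\beta}\cC'_\beta$, we must have $\gamma \le \alpha$ or $\gamma$ resolved. Fix such a $\gamma$, say $\gamma \in \cC'_\beta$. Since $\cC'_\beta$ covers $\beta$, either $\gamma$ is resolved (and we are done) or $\gamma \le \beta$. Since $\cC$ covers $\alpha$, either $\beta$ is resolved or $\beta \le \alpha$. If $\beta \le \alpha$ then $\gamma \le \beta \le \alpha$ by transitivity of the place order and we are done. The only delicate case is when $\beta$ is resolved but $\gamma \le \beta$ with $\gamma$ not resolved; here I would note that $\le$ on places is a preorder whose comparabilities need not respect resolvedness, so one cannot immediately conclude $\gamma \le \alpha$. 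I expect this to be the one point requiring a small argument: either the intended reading is that covers always produce $\beta \le \alpha$ \emph{or} $\beta$ resolved with the resolved ones being terminal (so $\cC'_\beta$ for resolved $\beta$ can be taken to be $\{\beta\}$ itself, making $\gamma = \beta$ resolved), or one appeals to the fact that a resolved place has a trivial/resolving cover and so contributes only resolved places. I would therefore split on whether $\beta$ is resolved: if it is, invoke that a resolved place's only sensible cover consists of resolved places (indeed $\{\beta\}$ is a resolving cover of $\beta$), so $\gamma$ is resolved; if it is not, use $\gamma \le \beta \le \alpha$.

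Finally, for the "resolving" clause: $\bigcup_\beta \cC'_\beta$ is resolving iff every $\gamma$ in it is resolved, iff every $\cC'_\beta$ is resolving, which is the stated equivalence. The main obstacle, as noted, is the bookkeeping around the side condition when some $\beta \in \cC$ is already resolved; everything else is a routine chaining of $\pSRKAM$-entailments and transitivity of the well-founded place order of \Cref{def:place-wf}.
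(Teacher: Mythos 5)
Your argument is correct and is essentially the paper's proof, which records the lemma as immediate from \Cref{def:cover}: composing the two entailments is exactly the cut you describe, and the ``resolving iff all $\cC'_\beta$ are'' clause is the set-theoretic triviality you state. The one substantive point you raise --- the case where some $\beta \in \cC$ is resolved but not $\le \alpha$, while $\cC'_\beta$ contains an unresolved $\gamma \le \beta$ --- is a genuine imprecision if the statement is read in full generality, since resolvedness and the order of \Cref{def:place-wf} are unrelated notions and nothing in \Cref{def:cover} forces a cover of a resolved place to consist of resolved places (so your fallback ``a resolved place's only sensible cover consists of resolved places'' is not available as stated). The correct way out is the other option you give: in the only use of the lemma, namely the induction establishing \Cref{thm:cover}, each $\cC'_\beta$ is a \emph{resolving} cover (for resolved $\beta$ one takes $\cC'_\beta = \{\beta\}$, for $\beta < \alpha$ one invokes the induction hypothesis), so every $\gamma$ in the union is resolved and the disjunctive side condition holds vacuously. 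So your proof goes through, with the caveat that the delicate case should be discharged by restricting attention to resolving $\cC'_\beta$ rather than by an appeal to properties of covers of resolved places.
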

\begin{proof}
Straightforward from the definition.
\end{proof}

Now the main technical result of this subsection, and arguably the whole section,
is the following:

\begin{thm}
\label{thm:cover}
Every place has a resolving cover.
\end{thm}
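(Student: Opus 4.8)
The proof of \Cref{thm:cover} will proceed by well-founded induction on the order on places introduced in \Cref{def:place-wf}.  Fix an $S$-place $\alpha = (\Gamma, (X_s)_{s\in S}, f)$.  By \Cref{lem:cover-compose} it suffices to produce a \emph{progressing} cover of $\alpha$, i.e.\ a cover $\cC$ such that each $\beta\in\cC$ is either resolved or strictly below $\alpha$: one then applies the induction hypothesis to the non-resolved members and composes.  So the real content is: every place admits a progressing cover.  This will be obtained by a careful case analysis on the shape of a winning Duplicator strategy in $\simGame(\position{\Gamma}{\{s\star t\mid s\in S,\,t\in X_s\}}{f})$, following the first move that strategy prescribes and reading off, from that move, a valid $\SRKAM$ entailment together with the resulting (smaller or resolved) places.

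\medskip\noindent\textbf{Case analysis on the first move.}  I would split according to the polarity of the starting position and the available move.  If $f$ is a meet, $f = f_1\sqcap f_2$: the game position has Spoiler polarity coming from $f$, Spoiler's strategy must handle both $f\to f_1$ and $f\to f_2$, and we take $\cC = \{(\Gamma,(X_s)_s,f_1),(\Gamma,(X_s)_s,f_2)\}$; the entailment is just that $\sqcap$ is the greatest lower bound (via \Cref{lem:le-alph-sanity}(2)), and each new place is below $\alpha$ through the second clause of \Cref{def:place-wf} (a meet of subobjects of $g$ with a strictly simpler first component under $<_\Gamma$) — here one uses that $f_i \le_\Gamma f$ and that being a meet of subobjects is preserved.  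If some $X_s$ contains a non-trivial $\spopol$- or $\duppol$-headed term, the strategy begins with a \textsc{fork} or \textsc{explore} move; this decomposes that term, strictly decreasing the multiset $\sum_s\multiset{\size e\mid e\in X_s}$ (using that the relevant $\to$-successors are strictly smaller, modulo the usual adjustment $s\star 1\to s$ that keeps the $s$-slot shape), so the single resulting place is below $\alpha$ via the third clause, and the entailment is soundness of the distributivity/least-upper-bound axioms lifted by \Cref{lem:le-alph-sanity}.  If instead everything in the $X_s$ is of the form $s\star 1$ (i.e.\ all resolved) we are done with the empty cover.  Otherwise the position has been reduced to one where all the $X_s$-terms are oracle-headed or $1$, and the game explores $f$: an \textsc{alea} move corresponds to $f=f_1\sqcap f_2$ handled above generalised to $\spopol$-polarity; a \textsc{choose} move to $f=f_1\sqcup f_2$ or $f = f^\diamond$, where for the join we use the least-upper-bound axiom and strict decrease of $f$ under $<_\Gamma$, and for $f^\diamond$ we use the fixpoint-unfolding $f^\diamond\to 1$ and $f^\diamond\to f^\diamond\star f$ together with the $\diamond$-induction machinery (this is where the second clause of the place ordering, $f$ a meet of subobjects of $g$, is essential, exactly as foreshadowed in the discussion before the theorem); a \textsc{junk} move corresponds to applying $1\le_\Gamma b$ from \Cref{lem:le-alph-sanity}(3) when $a\in\Gamma$; and a \textsc{fill} move simultaneously consumes an oracle-headed $s\star t$ from some $X_s$ and an oracle-headed subterm of $f$, using right-half-distributivity of $\star$ over $\sqcap$ and monotonicity of $\star$, enlarging $\Gamma$ — which makes the new place drop via the \emph{first} clause of \Cref{def:place-wf} (strictly larger alphabet).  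In the degenerate good cases ($f\equiv\top$, or some $X_s$ reduces so that $1\in X_s$, or $f=1$ and the relevant slot is $\{1\}$) Duplicator has already won and we produce a resolving/empty cover directly.

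\medskip\noindent\textbf{Bookkeeping and the main obstacle.}  The entailment required of a cover is phrased with fresh variables $x_s$ standing in for the $s$'s; at each move one must check the corresponding $\SRKAM$-entailment with these variables in place, which is always one of the axioms (possibly the $\diamond$-induction scheme) post-composed via monotonicity, transitivity and the bookkeeping lemmas of \Cref{lem:le-alph-sanity} — routine but voluminous.  The genuinely delicate point, and the one I expect to be the main obstacle, is the \textsc{choose}-on-$f^\diamond$ case: peeling $f^\diamond$ via $f^\diamond\to f^\diamond\star f$ does \emph{not} by itself make $f$ smaller, so one cannot recurse naively, and the progress must instead be accounted for by the "$f$ is a meet of subobjects of $g$ together with $f<_\Gamma g$" clause of the place ordering — meaning one has to argue that after unfolding, the new first component is still (provably, modulo $\le_\Gamma$) a meet of subobjects of the original $f$ and is strictly below it in $<_\Gamma$, and that the accumulated unfoldings can be collapsed back using parameterized $\diamond$-induction (cf.\ \Cref{ex:derivation} for the prototype of this move).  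Ensuring that this invariant threads correctly through an arbitrary winning strategy — rather than through a single syntactic induction — is the crux; the rest is a long but mechanical verification that each simulation-game move is mirrored by a sound $\SRKAM$-derivation step producing places that are resolved or genuinely smaller.
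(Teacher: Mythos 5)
Your top-level skeleton (well-founded induction on the place order, reduce to progressing covers, compose via \Cref{lem:cover-compose}) matches the paper, but two of your key progress claims fail, and they are exactly where the content of the theorem lives. First, the second-phase moves that explore $f$ do \emph{not} produce places below $\alpha$. When $f = f_1 \sqcap f_2$ an \textsc{alea} move replaces $f$ by $f_1$, and $f_1 \sqcap f_2 \le f_1$ goes the \emph{wrong} way: the new right-hand component is larger, not smaller, under $\le_\Gamma$, so the second clause of \Cref{def:place-wf} does not apply; likewise \textsc{junk} and \textsc{choose} on $f^\diamond$ need not be strict. The paper never tries to make these moves decrease the place order. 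Instead (\Cref{lem:cover-letter}) it first reduces to the case where every $X_s$ consists of letters only, and then runs the \emph{entire} second phase in one go by an induction on the number of moves until the next \textsc{fill} — finite because a winning Duplicator strategy cannot idle under the B\"uchi condition — producing places that are \emph{resolved} (hence exempt from the ordering requirement), with right-hand component $f'' \sqcap f$ to stay within meets of subobjects of $f$.

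Second, your claim that \textsc{fork}/\textsc{explore} strictly decrease the multiset $\sum_s \multiset{\size{e} \mid e \in X_s}$ is false precisely in the hard case: Spoiler may \textsc{explore} $t^\diamond \to t^\diamond \star t$ on the left, which \emph{increases} the size, and can do so unboundedly often, so no measure on the raw position decreases. The paper's device for this — absent from your proposal — is to enlarge the index set, replacing the $S$-place by an $S'$-place with $S' = S \cup \{s_0 \star t_1\}$ (resp.\ $S \cup \{s_0 \star t^\diamond\}$) so that only the strictly smaller suffix $t_2$ (resp.\ $t$) remains in the second component (\Cref{lem:cover-star,lem:cover-iter}); the resolving cover of that smaller place is then post-processed, and for iteration the resulting cycle back to an equivalent place is closed using parameterized $\diamond$-induction. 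You instead attach the $\diamond$-induction machinery to \textsc{choose} moves on $f^\diamond$ on the \emph{right}, where only the fixpoint-unfolding axioms $1 \le f^\diamond$ and $f^\diamond \star f \le f^\diamond$ are needed and no induction is required. Without the index-extension trick and with the induction scheme aimed at the wrong side, the argument does not go through for left-hand occurrences of $\star$ and $(-)^\diamond$, which is the crux of the theorem.
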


To prove this by induction on the well-founded order we defined on places,
we argue by case analysis on the shape of the place
under consideration, and more specifically, the terms that occur in the
second component. We give below proofs of all of the key lemmas, besides \Cref{lem:cover-compose},
that allow
to run this induction.

First let us handle the case when this component only
has letters in it, which is the more interesting
base case of the overarching induction.

\begin{lem}
\label{lem:cover-letter}
Any place $(\Gamma, (X_s)_{s \in S}, f)$ where $X_s \subseteq \Sigma$ for every $s \in S$
has a resolving cover.
\end{lem}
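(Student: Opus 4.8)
The plan is to prove \Cref{lem:cover-letter} directly, since the terms in each $X_s$ are letters and there is no iteration or lattice structure left to decompose, so this is the "base case" of the induction for \Cref{thm:cover}. I expect this to reduce to a purely syntactic analysis of the $\simGame$-moves available at a position whose second component consists of terms of the form $s \star b$ with $b \in \Sigma$.

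First I would unfold the definition: a place $(\Gamma, (X_s)_{s \in S}, f)$ with each $X_s \subseteq \Sigma$ means Duplicator wins in $\simGame(\position{\Gamma}{\{s \star b \mid s \in S, b \in X_s\}}{f})$. The key observation is that every term $s \star b$ with $b \in \Sigma$ has polarity $\orapol{b}$ (since $\polarity(e \star f) = \polarity(f)$ and $\polarity(b) = \orapol{b}$), so no \textsc{explore} or \textsc{fork} move is ever available on the $X$-side; hence $\polarity$ of the position is determined by $f$. I would then follow the winning Duplicator strategy from this position and do a case analysis on the first move it prescribes: it must be one of \textsc{alea}, \textsc{choose}, \textsc{junk}, or \textsc{fill} (or the position is already winning, in which case $f$ is provably $\top$ or $f = 1$ with $1$ among the $X$-terms — but the latter is impossible since all $X$-terms are $s \star b$; so $f \equivW \top$ and $\{\alpha\}$ itself, suitably massaged, or rather the empty cover suffices because $1 \star (\text{anything}) \le_\Gamma \top$ is derivable). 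For the \textsc{choose} and \textsc{alea} moves (manipulating $f$), I would push them into the derivation using monotonicity, the $\sqcup$/$\sqcap$ axioms, and the fact that $f \to f'$ has a syntactic meaning translatable into $\SRKAM$-provable inequalities; for \textsc{junk} ($\pol(f) = \orapol{a}$, $a \in \Gamma$), I would use \Cref{lem:le-alph-sanity}(3) or the derivation $i_\Gamma \sqcap 1 \le a$ to absorb the discarded oracle call. The crucial move is \textsc{fill}: it fires when $\pol(f) = \pol(s\star b) = \orapol{a}$ for some $s \star b$ in the bag, i.e. $b = a$, and it simultaneously advances $f \to f'$, replaces $s \star a$ by $s \star 1 = s$ (using $a \to 1$), and adds $a$ to $\Gamma$. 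The resulting position is $\position{\Gamma \cup \{a\}}{\{s \star 1\} \cup \dots}{f'}$, and since $s \star 1 = s$ and $1 \in \{1\}$ — wait, the new term is $s$ itself, which is resolved only if $s = 1$; in general it is not. So the \textsc{fill} move yields a new place whose $X_s$ component now contains $1$ (because $a \to 1$ gives $s \star a \to s \star 1$, and $s \star 1 \to s$, so after an \textsc{explore} the bag at $s$ contains $1$), making that place \emph{resolved at $s$}. This is exactly what we need: the cover produced consists of resolved places, so it is a resolving cover.

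The main obstacle I anticipate is bookkeeping the interaction between the $\simGame$ position structure and the place structure: a place carries a \emph{single} set $X_s$ per $s$, whereas a general $\simGame$ position carries an arbitrary bag, so I need to argue that along the winning strategy the bag stays of the form $\{s \star t \mid s \in S, t \in X_s\}$ until a \textsc{fill} move resolves some component. Since the only terms are $s \star b$, the only moves on the bag side are \textsc{fill} moves (no \textsc{explore}/\textsc{fork} because the bag elements all have $\orapol{\cdot}$ polarity), and each \textsc{fill} does exactly $s \star b \mapsto s \star 1$ for the matching letter, after which an \textsc{explore} gives $s \star 1 \to s$; but to fit the place format I package $\{s, 1\}$ — no, I should be careful: the definition of resolved only requires $1 \in X_s$, and after the \textsc{fill}-then-\textsc{explore} we can take $X_s' \ni 1$, so the target place is resolved. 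I would formalize this by: (i) invoking positional determinacy (\Cref{thm:positional-det}) to fix the strategy; (ii) showing the first bag-affecting move in any play is a \textsc{fill}; (iii) building the cover as the set of places reached "just after the first \textsc{fill} in each maximal branch of Spoiler choices," each of which is resolved; (iv) verifying the $\pSRKAM$-entailment of \Cref{def:cover} by translating the finite sequence of $f$-side moves (\textsc{alea}/\textsc{choose}/\textsc{junk}) preceding the \textsc{fill} into a derivation, exactly as in the forward-reduction construction of \Cref{lem:dup-strat-to-red} but at the proof-theoretic level, using half-distributivity $(a \star b) \sqcap c \le (a \sqcap c) \star b$ to commute the $i_\Gamma$ context past the leading $x_s \star (-)$ (cf. the computation in \Cref{lem:le-alph-sanity}). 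The genuinely delicate point is ensuring termination: the strategy is winning, so only finitely many \textsc{alea}/\textsc{choose}/\textsc{junk} moves occur before a \textsc{fill} (otherwise Spoiler would win by the B\"uchi condition, these positions being Spoiler-winning when no \textsc{alea} is available — actually the winning set is $\polarity^{-1}(\spopol)$ minus \textsc{alea}-enabled positions, so Duplicator idling forever without \textsc{fill} loses), which is what lets us do finite induction on this prefix.
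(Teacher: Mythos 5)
Your proposal matches the paper's proof essentially move for move: observe that every bag element $s \star b$ has oracle polarity so only second-phase moves are available, follow the winning Duplicator strategy until the first \textsc{fill} (which must occur after finitely many steps by the B\"uchi condition), note that the place reached there is resolved since the \textsc{fill} rewrites $s_0 \star c$ to $s_0 \star 1$ (i.e., $1 \in X'_{s_0}$), and verify the cover entailment by induction on the number of moves preceding the \textsc{fill}. The only cosmetic difference is that the paper takes $f'' \sqcap f$ as the right-hand term of the resulting place (a precaution relevant to the well-founded order on places, which is immaterial here since resolved places need not be smaller), and your momentary worry about needing an extra \textsc{explore} after the \textsc{fill} is unfounded — the \textsc{fill} move itself performs the step $s_0 \star c \to s_0 \star 1$, which already puts $1$ into the $X_{s_0}$-component of the place.
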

\begin{proof}
Here we will exploit a winning strategy of Duplicator in
$\simGame(\position{\Gamma}{\{ s \star b \mid s \in S, b \in X_s\}}{f})$.
The only way to reach a winning position from
$\position{\Gamma}{\{ s \star b \mid s \in S, b \in X_s\}}{f}$ is to play moves
of the second phase until a \textsc{fill} move of the following shape is played
for some $c \in \Sigma$ and $s_0 \in S$:
\[\position{\Gamma}{\{ s \star b \mid s \in S, b \in X_s\}}{f'})
\qquad \xrightarrow{\textsc{fill}} \qquad
\position{\Gamma \cup \{c\}}{\{s_0\} \cup \{ s \star b \mid s \in S, b \in X_s\} \setminus \{s_0 \star x\}}{f''}\]
If we follow a winning Duplicator strategy, we should
reach such a move after a fixed number of steps.
In such a situation, calling $(X'_s)_{s \in S}$ the family given by $X'_s = \{ 1 \mid s = s_0\} \cup X_s \setminus \{c \mid s = s_0\}$, this tells us that $(\Gamma \cup \{c\}, (X'_s)_{s \in S}, f'' \sqcap f)$
is a resolved place. We can then prove by induction on the number of moves of Duplicator
until a \textsc{fill} that joining all such places obtainable this way yields a resolving cover.
\end{proof}

From there on, we can then focus on the case where we know there is at least
one non-letter member of the second component of the place under question.
We then do a case analysis on what can happen there.
The case where $1$ occurs in the second component of a place is trivial, as the singleton
set containing the place is then a resolving cover. Other than this, the easiest cases are
those of $\sqcup$, $\sqcap$ and their units.
\begin{lem}
\label{lem:cover-sqcup-sqcap}
Any place $\alpha = (\Gamma, (X_s)_{s \in S}, f)$ with either $0$, $\top$, $t_1 \sqcup t_2$ or $t_1 \sqcap t_2$
belonging to $\bigcup_{s \in S} X_s$ has a progressing cover.
\end{lem}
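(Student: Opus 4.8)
The plan is to single out one ``problematic'' non-letter term of $\bigcup_{s\in S}X_s$ and decompose it by a single phase-one move, reading the cover off the resulting game positions. Fix $s_0\in S$ and $t_0\in X_{s_0}$ with $t_0\in\{0,\top\}$ or $t_0=t_1\sqcup t_2$ or $t_0=t_1\sqcap t_2$, and write $X=\{s\star t\mid s\in S,\ t\in X_s\}$ for the second component of the game position attached to $\alpha$. Since $s_0\star t_0\in X$, $s_0\star t_0\neq 1$ and $\pol(s_0\star t_0)=\pol(t_0)\notin\{\orapol{a}\mid a\in\Sigma\}$, the position $\position{\Gamma}{X}{f}$ is Spoiler-controlled; hence every position reachable from it by one legal move is again winning for Duplicator, and in particular so are the targets of the \textsc{explore}/\textsc{fork} moves that act on $s_0\star t_0$. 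In each case below the cover will consist of (the $S$-places corresponding to) some of these targets, and the two things to verify are that they form \emph{valid} $S$-places lying strictly below $\alpha$ in the order of \Cref{def:place-wf}, and that the required $\pSRKAM$-entailment of inequations holds.

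For $t_0=0$ we take the \emph{empty} cover, which is progressing vacuously and is a cover because $0\in X_{s_0}$ gives $\bigsqcap_{t\in X_{s_0}}t\le 0$, hence $x_{s_0}\star\bigsqcap_{t\in X_{s_0}}t\le x_{s_0}\star 0\le 0\le f$ by monotonicity of $\star$, the absorption $a\star 0\le 0$ and the bottom axiom, so $\bigsqcap_{s\in S}\bigl(x_s\star\bigsqcap_{t\in X_s}t\bigr)\le_\Gamma f$ follows via \Cref{lem:le-alph-sanity}. For $t_0=t_1\sqcap t_2$ the \textsc{fork} move on $s_0\star(t_1\sqcap t_2)$ leads to $\position{\Gamma}{X'}{f}$ with $X'_{s_0}=(X_{s_0}\setminus\{t_1\sqcap t_2\})\cup\{t_1,t_2\}$ and $X'_s=X_s$ otherwise; the singleton $\{(\Gamma,(X'_s)_{s\in S},f)\}$ is a progressing cover, since it is a valid $S$-place (a Duplicator-winning successor), it is $<\alpha$ because $\size{t_i}<\size{t_1\sqcap t_2}$ makes the multiset component of the order strictly drop while $\Gamma,f$ are unchanged, and the entailment is trivial because $\bigsqcap_{t\in X'_{s_0}}t=\bigsqcap_{t\in X_{s_0}}t$ is provable from associativity, commutativity and idempotence of $\sqcap$. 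The case $t_0=\top$ is the same, except the \textsc{fork} move just deletes $s_0\star\top$ (there is no $t'$ with $\top\to t'$); when $X_{s_0}\neq\{\top\}$ we argue verbatim with $X'_{s_0}=X_{s_0}\setminus\{\top\}$, using $a\sqcap\top=a$ instead of idempotence, and when $X_{s_0}=\{\top\}$ the factor $x_{s_0}\star\top=\top$ is neutral for $\sqcap$, so $\alpha$ reduces to the strictly smaller place obtained by dropping $s_0$ from the index set (an $\emptyset$-place when $S=\{s_0\}$).

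For $t_0=t_1\sqcup t_2$ there are two \textsc{explore} moves on $s_0\star(t_1\sqcup t_2)$, giving $\position{\Gamma}{X^{(i)}}{f}$ with $X^{(i)}_{s_0}=(X_{s_0}\setminus\{t_1\sqcup t_2\})\cup\{t_i\}$ for $i=1,2$; we take $\cC=\{(\Gamma,(X^{(1)}_s)_{s\in S},f),(\Gamma,(X^{(2)}_s)_{s\in S},f)\}$, both valid $S$-places and both $<\alpha$ again because $\size{t_i}<\size{t_1\sqcup t_2}$, so $\cC$ is progressing. For the entailment, put $R=\bigsqcap_{t\in X_{s_0}\setminus\{t_1\sqcup t_2\}}t$ and $P=\bigsqcap_{s\in S\setminus\{s_0\}}\bigl(x_s\star\bigsqcap_{t\in X_s}t\bigr)$: distributivity of $\sqcup$ over $\sqcap$ gives $R\sqcap(t_1\sqcup t_2)\le(R\sqcap t_1)\sqcup(R\sqcap t_2)$, then monotonicity of $\star$ and right-distributivity of $\sqcup$ over $\star$ give $x_{s_0}\star\bigsqcap_{t\in X_{s_0}}t\le\bigl(x_{s_0}\star\bigsqcap_{t\in X^{(1)}_{s_0}}t\bigr)\sqcup\bigl(x_{s_0}\star\bigsqcap_{t\in X^{(2)}_{s_0}}t\bigr)$, and one more use of distributivity of $\sqcup$ over $\sqcap$ gives $\bigsqcap_{s\in S}\bigl(x_s\star\bigsqcap_{t\in X_s}t\bigr)\le\bigsqcap_{s\in S}\bigl(x_s\star\bigsqcap_{t\in X^{(1)}_s}t\bigr)\sqcup\bigsqcap_{s\in S}\bigl(x_s\star\bigsqcap_{t\in X^{(2)}_s}t\bigr)$; since $\sqcup$ is the least upper bound, transported to $\le_\Gamma$ by \Cref{lem:le-alph-sanity}, the two hypotheses $\bigsqcap_{s\in S}\bigl(x_s\star\bigsqcap_{t\in X^{(i)}_s}t\bigr)\le_\Gamma f$ combine into $\bigsqcap_{s\in S}\bigl(x_s\star\bigsqcap_{t\in X_s}t\bigr)\le_\Gamma f$.

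The part requiring the most care is the bookkeeping relating game moves to places: one must check that the successor positions are genuinely of the form $\position{\Gamma}{\{s\star t\mid s\in S,\ t\in X'_s\}}{f}$ (which uses injectivity of $(s,t)\mapsto s\star t$ on terms), that the $X'_s$ stay non-empty (the delicate point being the degenerate $\top$-subcase above), and that every plain inequality derived in the $\sqcup$ case is first turned into a $\le_\Gamma$-inequality via \Cref{lem:le-alph-sanity} so that transitivity of $\le_\Gamma$ can chain the steps.
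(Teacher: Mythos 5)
Your proof is correct and follows essentially the same route as the paper's: one phase-one move (\textsc{fork}/\textsc{explore}) on the offending term, yielding a singleton, two-element, or empty cover whose progress is certified by the strict drop in the multiset component of the order on places. You go further than the paper in two welcome respects --- you actually write out the $\SRKAM$-entailments (the $\sqcup$ case via distributivity of $\sqcup$ over $\sqcap$ and right-distributivity of $\sqcup$ over $\star$ is exactly right) and you notice the degenerate sub-case $X_{s_0}=\{\top\}$ that the paper silently skips; the only friction there is that dropping $s_0$ produces an $(S\setminus\{s_0\})$-place rather than an $S$-place as \Cref{def:cover} literally requires, a harmless definitional bend of the same kind the paper itself makes explicit in \Cref{lem:cover-star}.
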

\begin{proof}
Assuming that $t_1 \sqcap t_2$ belongs to $X_{s_0}$ for some $s_0 \in S$,
setting $(Y_s)_{s \in S}$ to be
\[Y_s = \{t \in \{t_1, t_2\} \mid s = s_0\} \cup Y_s \setminus \{t_1 \sqcap t_2 \mid s = s_0\}\]
we have that the singleton $\{(\Gamma, (Y_s)_{s \in S}, f)\}$ is a progressing cover of $\alpha$.
We obtain a place because we can assume that Spoiler plays the relevant \textsc{fork} move in $\simGame$ and stay in the winning strategy of Duplicator. It is a progressing cover because the multiset corresponding to $(Y_s)_{s \in S}$ is strictly smaller than the one matching $(X_s)_{s \in S}$.
Similarly, if $\top$ belongs to some $X_{s_0}$, we can obtain
a strictly smaller place covering the original place by removing it.

If $t_1 \sqcup t_2$  belongs to $X_{s_0}$ for some $s_0 \in S$, we can
define two families
$(Y^{(i)}_s)_{s \in S}$ for $i \in \{1,2\}$ by putting
\[Y^{(i)}_s = \{t_i \mid s = s_0\} \cup Y_s \setminus \{t_1 \sqcup t_2 \mid s = s_0\}\]
and check that $\{(\Gamma, (Y^{(1)}_s)_{s \in S}, f), (\Gamma, (Y^{(2)}_s)_{s \in S}, f)\}$
is a progressing cover of $\alpha$. Its two elements are places because we can assume that Spoiler
play either relevant \textsc{explore} move while staying in a winning strategy of Duplicator.
If $0 \in \bigcup_{s \in S} X_s$, then $\emptyset$ covers the place.
\end{proof}

The next lemma deals
with the case where we have a composition $t_1 \star t_2$ there.

\begin{lem}
\label{lem:cover-star}
Assume we are given an
$S$-place $\alpha = (\Gamma, (X_s)_{s \in S}, f)$ and $t_1 \star t_2$ belongs to some $X_{s_0}$.
Set
\[S' = S \cup \{ s_0 \star t_1\} \qquad \text{and, for $s \in S'$,} \qquad
X'_s = \{t_2 \mid s = s_0 \star t_1\} \cup X_s \setminus \{t_1 \star t_2 \mid s = s_0\}\]
Given a resolving cover $\cC'$ of the
$S'$-place  $\alpha' = (\Gamma, (X'_s)_{s \in S'}, f)$, we can compute a
progressing cover for $\alpha$.
Furthermore we also always have $\alpha' < \alpha$.
\end{lem}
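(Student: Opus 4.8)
The plan is to prove the three assertions bundled into the statement in turn: that $\alpha'$ is a genuine place, that $\alpha' < \alpha$, and that a resolving cover $\cC'$ of $\alpha'$ can be transformed into a progressing cover of $\alpha$. The conceptual device underlying all of them is a \emph{collapse} operation that re-associates a formal product $(s_0 \star t_1) \star z$ into $s_0 \star (t_1 \star z)$. This is sound on the semantic/derivational side because of the monoid axiom $a \star (b \star c) = (a \star b) \star c$ together with the two $\sqcap$/$\star$ distributivity axioms of \Cref{fig:axioms}, and it is sound on the game side because the transition relation $\to$ only ever rewrites the rightmost argument of a composition (the congruence rule plus $e \star 1 \to e$), so that rebracketing along $\star$ leaves the dynamics of $\simGame$ — and all position polarities, since $\polarity(e \star f) = \polarity(f)$ — unchanged.

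First I would check $\alpha' < \alpha$. The alphabet component is unchanged and $f \le_\Gamma f$ holds by reflexivity, so we land in the third clause of \Cref{def:place-wf}; and the multiset attached to $\alpha'$ is obtained from that of $\alpha$ by deleting the summand $\size{t_1 \star t_2}$ (contributed by $t_1 \star t_2 \in X_{s_0}$) and inserting $\size{t_2}$ (contributed by $t_2 \in X'_{s_0 \star t_1}$), so it strictly decreases. The rebracketing observation also shows $\alpha'$ is a place: the term set of $\simGame(\position{\Gamma}{\{ s \star t \mid s \in S', t \in X'_s\}}{f})$ differs from that of the game witnessing that $\alpha$ is a place only in that the single term $s_0 \star (t_1 \star t_2)$ is replaced by $(s_0 \star t_1) \star t_2$; under $\to$ these generate bijective families of derived terms with matching polarities, coinciding verbatim from the point $s_0 \star t_1$ on, so a winning Duplicator strategy for one game transfers to the other.

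For the main part, given a resolving cover $\cC'$ of $\alpha'$, define for each $\beta' = (\Theta, (Z_s)_{s \in S'}, g) \in \cC'$ the collapsed tuple $\hat\beta' = (\Theta, (\hat Z_s)_{s \in S}, g)$ with $\hat Z_s = Z_s$ for $s \neq s_0$ and $\hat Z_{s_0} = Z_{s_0} \cup \{ t_1 \star z \mid z \in Z_{s_0 \star t_1}\}$; by the same rebracketing argument each $\hat\beta'$ is again a place. I claim $\hat\cC' := \{\hat\beta' \mid \beta' \in \cC'\}$ is a progressing cover of $\alpha$. For the cover entailment, substitute into the derivation witnessing that $\cC'$ covers $\alpha'$ the fresh variables $x_s := y_s$ for $s \in S$ and $x_{s_0 \star t_1} := y_{s_0} \star t_1$. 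Using associativity of $\star$, the axiom $(a \star b) \sqcap (a \star c) \le a \star (b \sqcap c)$ (and its converse by monotonicity of $\star$), one verifies the provable identity
\[ \Bigl( y_{s_0} \star \bigsqcap_{z \in Z_{s_0}} z \Bigr) \sqcap \Bigl( (y_{s_0} \star t_1) \star \bigsqcap_{z \in Z_{s_0 \star t_1}} z \Bigr) = y_{s_0} \star \bigsqcap_{w \in \hat Z_{s_0}} w \]
and, in the same way, $\bigl(y_{s_0} \star \bigsqcap_{z \in X'_{s_0}} z\bigr) \sqcap \bigl((y_{s_0} \star t_1) \star t_2\bigr) = y_{s_0} \star \bigsqcap_{w \in X_{s_0}} w$ (recall $X_{s_0} = X'_{s_0} \cup \{t_1 \star t_2\}$). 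Since $\le_\Theta$ and $\le_\Gamma$ respect provable equality (part~(2) of \Cref{lem:le-alph-sanity}, via transitivity), under this substitution each hypothesis of the $\alpha'$-entailment becomes exactly the hypothesis attached to $\hat\beta'$ in the desired $\alpha$-entailment, and its conclusion becomes the conclusion $\bigsqcap_{s \in S}\bigl(y_s \star \bigsqcap_{w \in X_s} w\bigr) \le_\Gamma f$; hence the latter is derivable, so $\hat\cC'$ covers $\alpha$. It remains to check that $\hat\cC'$ is progressing: each $\beta' \in \cC'$ is resolved at some $s^\star \in S'$; if $s^\star \in S$ then $1 \in \hat Z_{s^\star}$ and $\hat\beta'$ is resolved there, while if $s^\star = s_0 \star t_1$ — so $1 \in Z_{s_0 \star t_1}$ and $t_1 \star 1 \in \hat Z_{s_0}$ — then, since $\SRKAM$ proves $t_1 \star 1 = t_1$ and $t_1$ is a proper subobject of $t_1 \star t_2$, a short further decomposition of that thread restores a resolved place, and composing these extra covers (\Cref{lem:cover-compose}) yields the required progressing cover of $\alpha$.

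The main obstacle I expect is the bookkeeping in the last paragraph: threading the re-association simultaneously through the $\simGame$ side (to transfer Duplicator's winning strategy, and hence place-hood, from $\beta'$ to $\hat\beta'$) and through the derivation side (to turn the $\alpha'$-cover entailment into the $\alpha$-cover entailment via the substitution and the two provable identities), while keeping the ``resolved or strictly below $\alpha$'' status of every collapsed place intact — in particular the edge case in which the freshly introduced index $s_0 \star t_1$ is itself resolved. The multiset bound for $\alpha' < \alpha$ and the rebracketing invariance of $\simGame$ are routine by comparison.
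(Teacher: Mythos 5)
Your overall route is the paper's: re-absorb the fresh thread $s_0 \star t_1$ into the thread $s_0$, instantiate the cover entailment of $\alpha'$ with $x_{s_0 \star t_1} := x_{s_0} \star t_1$, and use associativity together with the two $\sqcap$/$\star$ distributivity axioms to turn the hypotheses and conclusion of that entailment into the ones required for $\alpha$. The verification that $\alpha' < \alpha$ (third clause of \Cref{def:place-wf}: the multiset drops $\size{t_1 \star t_2}$ and gains $\size{t_2}$), the two provable identities you state, and your explicit check that $\alpha'$ is a place (which the paper leaves implicit) are all fine.

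The gap is in the step you yourself flag as the delicate one: a cover place $\beta'$ that is resolved \emph{only} at the fresh index $s_0 \star t_1$. Your collapsed place $\hat\beta'$ then has $t_1 \star 1 \in \hat Z_{s_0}$ but no $1$ in any $\hat Z_s$, so it is not resolved, and your proposed repair --- ``a short further decomposition of that thread restores a resolved place'' --- does not work: rewriting $t_1 \star 1$ to $t_1$ leaves the arbitrary term $t_1$ in the thread, and resolving it would mean decomposing $t_1$ all the way down to $1$, which is exactly the recursive work that \Cref{thm:cover} performs and is not available inside this lemma. What a \emph{progressing} cover actually requires in this case is the other disjunct, $\hat\beta' < \alpha$, and this is why the paper's collapse replaces $t_1 \star 1$ by $t_1$ rather than keeping $t_1 \star 1$ as you do: one always has $\size{t_1} < \size{t_1 \star t_2}$, whereas $\size{t_1 \star 1}$ need not be strictly smaller than $\size{t_1 \star t_2}$ when $t_2$ is an atom, so with your $\hat Z_{s_0}$ the strict multiset decrease of \Cref{def:place-wf} can fail. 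The fix is to define $\hat Z_{s_0}$ so that $z = 1$ contributes $t_1$ instead of $t_1 \star 1$ (the cover entailment is unaffected since $\SRKAM$ proves $t_1 \star 1 = t_1$) and to argue $\hat\beta' < \alpha$ via the multiset clause in the resolved-at-$s_0 \star t_1$ case; with that change the rest of your argument goes through.
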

\begin{proof}
For each $\beta' = (\Theta, (Y'_s)_{s \in S'}, g) \in \cC'$, we define a corresponding
$S$-place $\beta$ as $(\Theta, (Y_s)_{s \in S}, g)$ by setting
\[Y_{s_0} = \{t_1 \mid 1 \in Y'_{s_0 \star t_1}\} \cup \{ t_1 \star e \mid e \in Y'_{s_0 \star t_1} \} \setminus \{t_1 \star 1\} \quad \text{and} \quad Y_s = Y'_s ~ \text{otherwise}\]
We have that either $\beta < \alpha$ (in the case where $1 \in Y'_{s_0 \star t_1}$, since
we are adding $t_1$) or $\beta \le \alpha$ and $\beta$ is resolving.
It is also a cover of $\alpha$: unfolding the definition and applying some trivial equivalences, we need to check that
\[
\left(x_{s_0} \star t_1 \star \bigsqcap_{t' \in Y'_{s_0 \star t_1}} t'\right) \sqcap \bigsqcap_{s \neq s_0} \left( x_s \star \bigsqcap_{t' \in Y'_{s}} t' \right) ~~ \le_\Theta ~~ g ~~\text{ for all $(\Theta,(Y'_s)_{s \in S},g) \in \cC'$}\]
entails
$
\bigsqcap_{s \in S} \left( x_s \star \bigsqcap_{t \in X_{s}} t \right) \le_\Gamma f
$
But this is equivalent to the entailment given to us by the fact that $\cC'$ is place if
we substitute $x_{s_0} \star t_1$ for $x_{s_0 \star t_1}$.
\end{proof}

Finally, we deal with the case of iteration, which is arguably the most challenging.
\begin{lem}
\label{lem:cover-iter}
Assume we are given an
$S$-place $\alpha = (\Gamma, (X_s)_{s \in S}, f)$ and $t^\diamond$ belongs to some $X_{s_0}$.
Set
\[S' = S \cup \{ s_0 \star t^\diamond\} \qquad \text{and, for $s \in S'$,} \qquad
X'_s = \{t \mid s = s_0 \star t^\diamond\} \cup X_s \setminus \{t^\diamond \mid s = s_0\}\]
If we have a resolving cover $\cC'$ of the
$S'$-place  $\alpha' = (\Gamma, (X'_s)_{s \in S'}, f)$, then $\alpha$ has a progressing cover.
Furthermore, $\alpha' < \alpha$.
\end{lem}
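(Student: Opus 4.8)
The plan is to mirror the argument of~\Cref{lem:cover-star}, the new ingredient being (parameterised) $\diamond$-induction to ``close the loop'' created by the fresh component $s_0 \star t^\diamond$. First the routine bookkeeping. To see that $\alpha'$ is a place I would restrict Duplicator's winning strategy for $\alpha$: from $\simGame(\position{\Gamma}{\{s \star e \mid s \in S,\ e \in X_s\}}{f})$ Spoiler may play a \textsc{fork} on $s_0 \star \bigsqcap_{e \in X_{s_0}} e$ followed by an \textsc{explore} on $s_0 \star t^\diamond$ unfolding $t^\diamond$ to $t^\diamond \star t$, reaching precisely the game of $\alpha'$ once $s_0 \star (t^\diamond \star t)$ is identified with $(s_0 \star t^\diamond) \star t$. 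For $\alpha' < \alpha$ I would invoke the third clause of~\Cref{def:place-wf}: $\Gamma$ and $f$ are unchanged, while the size-multiset loses the entry $\size{t^\diamond} = \size{t} + 1$ from the $s_0$-component and gains only the strictly smaller entry $\size{t}$ from $s_0 \star t^\diamond$.

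The algebraic core is the identity $x \star y^\diamond = x \sqcup (x \star y^\diamond \star y)$, which $\SRKAM$ proves: one inequality is monotonicity from $1 \le y^\diamond$ and $y^\diamond \star y \le y^\diamond$; for the other, a short $\diamond$-induction first gives $y^\diamond = 1 \sqcup (y^\diamond \star y)$, and then right-distributivity of $\sqcup$ over $\star$ yields $x \star y^\diamond \le (x \star 1) \sqcup (x \star y^\diamond \star y)$. Write $W$ for the term $\bigsqcap_{s \in S,\ s \ne s_0}\big(x_s \star \bigsqcap_{e \in X_s} e\big) \sqcap \big(x_{s_0} \star \bigsqcap_{e \in X_{s_0} \setminus \{t^\diamond\}} e\big)$; distributivity of $\sqcap$ over $\star$ shows $\bigsqcap_{s \in S}\big(x_s \star \bigsqcap_{e \in X_s} e\big) = W \sqcap (x_{s_0} \star t^\diamond)$. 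I would then build $\cC$ from two kinds of places. The ``terminated'' place $\alpha_1 = (\Gamma, (X^1_s)_{s \in S}, f)$, with $X^1_{s_0} = (X_{s_0} \setminus \{t^\diamond\}) \cup \{1\}$ and $X^1_s = X_s$ otherwise, is a resolved place because Spoiler may unfold $t^\diamond$ to $1$ inside $s_0 \star t^\diamond$. For each $\beta' = (\Theta, (Y'_s)_{s \in S'}, g) \in \cC'$ I reabsorb the fresh component into $s_0$: set $\beta'' = (\Theta, (Y_s)_{s \in S}, g)$ with $Y_{s_0} = Y'_{s_0} \cup \{t^\diamond \star e \mid e \in Y'_{s_0 \star t^\diamond}\}$ (reading $t^\diamond \star 1$ as $t^\diamond$) and $Y_s = Y'_s$ otherwise; each $\beta''$ is a place since, by associativity, its position in $\simGame$ has the same set of simulated terms as that of $\beta'$. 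Take $\cC = \{\alpha_1\} \cup \{\beta'' \mid \beta' \in \cC'\}$.

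To see that $\cC$ covers $\alpha$ I would instantiate the cover entailment of $\cC'$ by substituting $x_{s_0 \star t^\diamond} := x_{s_0} \star t^\diamond$. Using distributivity of $\sqcap$ over $\star$, each hypothesis becomes the natural inequation of the corresponding $\beta''$, and the conclusion becomes $W \sqcap (x_{s_0} \star t^\diamond \star t) \le_\Gamma f$. Adjoining the natural inequation of $\alpha_1$, namely $W \sqcap x_{s_0} \le_\Gamma f$, and combining with the identity above (and distributivity of $\sqcap$ over $\sqcup$) to obtain $W \sqcap (x_{s_0} \star t^\diamond) \le (W \sqcap x_{s_0}) \sqcup (W \sqcap (x_{s_0} \star t^\diamond \star t))$, the fact that each disjunct is $\le_\Gamma f$ yields $W \sqcap (x_{s_0} \star t^\diamond) \le_\Gamma f$, i.e.\ the natural inequation of $\alpha$; the $\le_\Gamma$-forms of the axioms used here are supplied by~\Cref{lem:le-alph-sanity}.

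The main obstacle is showing that $\cC$ is \emph{progressing}. Here $\alpha_1$ is resolved, and $\beta''$ is resolved whenever $\beta'$ is resolved at some $s \in S$. The delicate case is $\beta'$ resolved \emph{only} at the fresh component $s_0 \star t^\diamond$: the reabsorption then re-inserts $t^\diamond$ into the $s_0$-component and may even enlarge the size-multiset, so $\beta'' < \alpha$ cannot simply be inherited from $\beta' < \alpha'$. To dispose of this I would analyse how such a $\beta'$ arises inside a resolving cover of $\alpha'$: either its alphabet is strictly larger than $\Gamma$ (the body $t$ having been consumed through a \textsc{fill}), so $\beta'' < \alpha$ by the first clause of~\Cref{def:place-wf}, or else these loop-closing places are handled directly by an additional application of parameterised $\diamond$-induction so that they need not be listed in $\cC$ at all; in either case every member of $\cC$ ends up resolved or strictly below $\alpha$, so $\cC$ is a progressing cover of $\alpha$.
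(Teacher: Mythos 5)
There is a genuine gap, and it sits exactly at the crux of the lemma: the treatment of what you call the ``loop-closing'' places, i.e.\ the $\beta' = (\Theta,(Y'_s)_{s\in S'},g) \in \cC'$ that are resolved \emph{only} at the fresh component $s_0 \star t^\diamond$, with $\Theta = \Gamma$ and $g$ equivalent to $f$. Your main derivation discharges the substituted premise of $\cC'$ for such a $\beta'$ by appealing to the natural inequation of the reabsorbed place $\beta''$, but that inequation (with $t^\diamond \star 1 = t^\diamond$ back in $Y_{s_0}$) is essentially the goal $W \sqcap (x_{s_0}\star t^\diamond) \le_\Gamma f$ itself, so including $\beta''$ in $\cC$ makes $\cC$ non-progressing and the overall induction of \Cref{thm:cover} cannot close. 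Your single-unfolding identity $x \star t^\diamond \le x \sqcup (x \star t^\diamond \star t)$ is indeed derivable, but it only trades the goal for $W \sqcap (x_{s_0}\star t^\diamond\star t) \le_\Gamma f$, whose derivation from $\cC'$ in turn presupposes the loop-closing premises: the argument is circular, and no amount of finite unfolding can replace a genuine induction here. Of your two proposed escapes, the first (``the alphabet must have grown'') is false in general --- $t^\diamond$ can reach $1$ without any \textsc{fill}, e.g.\ when $t = u^\diamond$ or $t = 1 \sqcup u$, so a loop-closing place with $\Theta = \Gamma$ and $g \equiv_\Gamma f$ really can occur --- and the second (``handled by an additional application of parameterised $\diamond$-induction'') is precisely the missing content of the proof, not a detail.

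For comparison, the paper partitions $\cC'$ into the ``good'' places $\cC'_<$ (strictly larger alphabet, or $g <_\Gamma f$, or resolved at some $s \neq s_0 \star t^\diamond$), which are reabsorbed into the cover much as you do, and the loop-closing places $\cC'_\diamond$, which are \emph{not} put into $\cC$ at all. The key trick is to instantiate the cover entailment of $\cC'$ with $x_{s_0\star t^\diamond} := (x_{s_0}\star t^\diamond) \sqcap f$ rather than $x_{s_0}\star t^\diamond$: the extra $\sqcap f$ lets each loop-closing premise be discharged outright via $1 \in Z'_{s_0\star t^\diamond}$ and $((x_{s_0}\star t^\diamond)\sqcap f)\star 1 \le f$, and turns the conclusion into $(((x_{s_0}\star t^\diamond)\sqcap f)\star t)\sqcap \cdots \le_\Gamma f$. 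Half-distributivity then isolates $(f \star t) \sqcap (x_{s_0}\star t^\diamond) \sqcap \cdots \le_\Gamma f$, parameterised $\diamond$-induction upgrades $f \star t$ to $f \star t^\diamond$, and substituting the base assumption (your $\alpha_1$) for $f$ on the left closes the loop. Your bookkeeping (that $\alpha'$ is a place, that $\alpha' < \alpha$ by the multiset clause, and the reabsorption of genuinely progressing places) matches the paper, but without the $\sqcap f$ substitution and the explicit $\diamond$-induction step the proof does not go through.
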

\begin{proof}
Without loss of generality, let us assume that $X_{s_0} = \{t^\diamond\}$ and $s_0 \star t^\diamond \not\in S$.
First, let us partition $\cC'$ into the disjoint union $\cC' = \cC'_< \uplus \cC'_\diamond$
where we put places $\beta' = (\Theta,(Y'_s)_{s \in S'},g)$ in $\cC'_<$ whenever either
of the following holds:
\begin{itemize}
    \item $\Theta$ is a strict superset of $\Gamma$, or
    \item $g <_\Gamma f$, or
    \item $\beta'$ is resolved at some $s \in S' - \{s_0 \star t^\diamond\}$
\end{itemize}
For any such $\beta'$, define a corresponding $S$-place
$\beta = (\Theta,(Y_s)_{s \in S},g)$ by putting
$Y_{s_0} = \{ t^\diamond \star e \mid t \in Y'_{s_0 \star t^\diamond}\}$ and $Y_s = Y'_s$ for $s \neq s_0$.
Note that we have that either $\beta$ is resolved or $\beta < \alpha$. Write $\cC_<$ for
the set of all such $\beta$s.
Note that places $\gamma' \in \cC'_\diamond$ necessarily have shape $(\Gamma, (Z'_s)_{s \in S'}, h)$
with $h \le_\Gamma f$ (by the definition of a cover) but $h \not<_\Gamma f$. Hence
$h$ is provably equivalent to $f$ modulo $\Gamma$ in $\SRKAMT$, so we may as well assume
$h = f$.
Finally define the resolved place $\alpha_1 = (\Gamma, (X''_s)_{s \in S}, f)$ by
putting $X''_{s_0} = \{1\}$ and $X''_s = X_s$ otherwise.

By construction, if $\cC = \cC_< \cup \{\alpha_1\}$ is a cover at all, it is progressing.
Let us check that it is actually a cover. Unfolding the definitions, what we have to
check is that the following entailment holds:
\[\begin{array}{lrccl}
&x_{s_0} \sqcap \bigsqcap\limits_{s \in S}
\left( x_s \star \bigsqcap\limits_{e \in X_{s}} e \right) &\le_\Gamma& f & \text{\footnotesize (base assumption)} \\
\text{and} &
\left(x_{s_0} \star t^\diamond \star \bigsqcap\limits_{e \in Y'_{s_0}} e \right) \sqcap
\bigsqcap\limits_{s \in S \setminus \{s_0\}}
\left( x_s \star \bigsqcap\limits_{e \in Y'_{s}} e \right)
&\le_\Theta & g &\text{ for all $(\Theta,(Y'_s),g)_{s \in S} \in \cC'_<$} \\
\\
\text{imply} &
\left(x_{s_0} \star t^\diamond\right) \sqcap \bigsqcap\limits_{s \in S} \left( x_s \star \bigsqcap\limits_{t \in X_{s}} t \right) &\le_\Gamma& f & (\dagger)
\end{array}
\]
Let us assume the premises. Now we will use the fact that $\cC'$ is a cover
for $\alpha'$. This means that
\[\begin{array}{lrccl}
&
\left(x_{s_0 \star t^\diamond} \star \bigsqcap\limits_{e \in Y'_{s_0}} e \right) \sqcap
\bigsqcap\limits_{s \in S \setminus \{s_0\}}
\left( x_s \star \bigsqcap\limits_{e \in Y'_{s}} e \right)
&\le_\Theta & g &\text{ for all $(\Theta,(Y'_s),g)_{s \in S} \in \cC'_<$} \\
\text{and} &
\left(x_{s_0 \star t^\diamond} \star \bigsqcap\limits_{e \in Z'_{s_0 \star t^\diamond}} e\right) \sqcap
\bigsqcap\limits_{s \in S \setminus \{s_0\}}
\left( x_s \star \bigsqcap\limits_{e \in Z'_{s}} e \right)
&\le_\Gamma & f &\text{ for all $(\Gamma,(Z'_s),f)_{s \in S} \in \cC'_\diamond$} \\\\
\text{imply} &
\left(x_{s_0 \star t^\diamond} \star t\right) \sqcap \bigsqcap\limits_{s \in S} \left( x_s \star \bigsqcap\limits_{e \in X_{s}} e \right) &\le_\Gamma& f
\end{array}
\]
Now, if we substitute $(x_{s_0} \star t^\diamond) \sqcap f$ for $x_{s_0 \star t^\diamond}$ in the entailment above, we can see that its premises are fulfilled: for premises tied to the places coming
from $\cC'_<$, this is using our previous assumption and the fact that $(x_{s_0} \star t^\diamond) \sqcap f \le x_{s_0} \star t^\diamond$, for the other this follows from the fact that
we always have $1 \in Z'_{s_0 \star t^\diamond}$ and left-distributivity of $\star$
over $\sqcap$. So all in all, we have
\[\left(\left(\left(x_{s_0} \star t^\diamond\right) \sqcap f\right) \star t\right) \sqcap \bigsqcap\limits_{s \in S} \left( x_s \star \bigsqcap\limits_{e \in X_{s}} e \right) ~~\le_\Gamma~~ f\]
Using half-distributivity on the left, we can derive further that
\[
\left( f \star t\right) \sqcap \left(x_{s_0} \star t^\diamond\right)  \sqcap \bigsqcap\limits_{s \in S} \left( x_s \star \bigsqcap\limits_{t \in X_{s}} t \right)
~~\le_\Gamma~~ f\]
We can then apply parameterized $\diamond$-induction to get
\[
\left( f \star t^\diamond\right) \sqcap \left(x_{s_0} \star t^\diamond\right)  \sqcap \bigsqcap\limits_{s \in S} \left( x_s \star \bigsqcap\limits_{t \in X_{s}} t \right)
~~\le_\Gamma~~ f\]
Now using the base assumption, we can substitute $f$ on the left-hand side and
obtain
\[
\left( \left(x_{s_0} \sqcap \bigsqcap\limits_{s \in S}
\left( x_s \star \bigsqcap\limits_{e \in X_{s}} e \right)\right) \star t^\diamond\right) \sqcap \left(x_{s_0} \star t^\diamond\right)  \sqcap \bigsqcap\limits_{s \in S} \left( x_s \star \bigsqcap\limits_{t \in X_{s}} t \right)
~~\le_\Gamma~~ f\]
Using half-distributivity of $\sqcap$ over $\star$, we can shift the first occurrence
of $t^\diamond$ on the left so that it produces another occurrence of $x_{s_0} \star t^\diamond$,
and then conclude that, by idempotence of $\sqcap$, we have the desired inequality $(\dagger)$.
\end{proof}

With these lemmas proven, we can piece together the proof of \Cref{thm:cover}
by induction over the well-founded order on places as advertised.
The details are left to the reader.

\subsection{Proof of completeness}

We prove completeness in two steps: first, we do it for inequalities $e \le_\Gamma f$
where we have $e \le 1$ and then use that to derive the general case.
Semantically, recall that $P \leqW 1$ (or equivalently, $P \equivW 1 \sqcap P$)
means that any $P$-question has a computable answer. In such a case, we can
note that if we want a reduction $1 \sqcap P \leqW Q$, it is sufficient to provide the
forward function, as the backwards can simply disregard its input and output $\inc_1(\tuple{ })$.
Recalling that $\top \star Q$ is equivalent to the problem ``give me a $Q$-question, but there is no answer'',
having a reduction at all is then equivalent to $P \leqW \top \star Q$.
A similar reasoning, tells us that for any problems $P$ and $Q$ such that $Q$
questions do not have answers (equivalently, $Q \equivW \top \star Q$), a
reduction $P \leqW Q$ induces a reduction $1 \sqcap P \leqW Q$.

$\SRKAMT$ is strong enough to make this formal, working modulo $\Gamma$.

\begin{lem}
\label{lem:topstar-galois}
Provably in $\SRKAMT$, there
is a Galois connection $(1 \sqcap -, \top \star -)$ modulo $\Gamma$:
for every $e, f$, we have
$e \le_\Gamma \top \star (1 \sqcap e)$ and
$1 \sqcap (\top \star f) \le_\Gamma f$; this implies that
\[ 1 \sqcap e \le_\Gamma f \qquad \Longleftrightarrow \qquad e \le_\Gamma \top \star f\]
\end{lem}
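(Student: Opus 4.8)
The plan is to establish the two inequations $e \le_\Gamma \top \star (1 \sqcap e)$ and $1 \sqcap (\top \star f) \le_\Gamma f$ in $\SRKAMT$ directly from the axioms of \Cref{fig:axioms}, and then observe that the claimed biconditional follows from them by the standard argument for Galois connections, using transitivity and monotonicity of $\star$ and $\sqcap$.

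First I would prove $1 \sqcap (\top \star f) \le_\Gamma f$. Using the right-half-distributivity axiom $(a \star b) \sqcap c \le (a \sqcap c) \star b$ with $a = \top$, $b = f$, $c = 1$, we get $(\top \star f) \sqcap 1 \le (\top \sqcap 1) \star f$. Since $\top \sqcap 1 \le 1$ (as $\sqcap$ is a lower bound) and $1 \star f = f$ by the monoid axiom, monotonicity of $\star$ gives $(\top \sqcap 1) \star f \le 1 \star f = f$, and transitivity finishes this half. Note this does not even need the $\Gamma$-context: it holds with $\Gamma = \emptyset$, hence a fortiori modulo any $\Gamma$ by \Cref{lem:le-alph-sanity}(1).

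Next I would prove $e \le_\Gamma \top \star (1 \sqcap e)$. Here the point is that $e \le \top \star e$ is immediate (from $\top \le \top$, the axiom $1 \le a^\diamond$... no—rather from $\top \le a \star \top$ read suitably; in fact $e = 1 \star e \le \top \star e$ using $1 \le \top$ and monotonicity of $\star$), but we need $1 \sqcap e$ on the right, not $e$. The trick is to use that $e$ itself factors: we do \emph{not} in general have $e \le 1 \sqcap e$. Instead I would argue that since the right-hand side only needs to dominate $e$, it suffices to show $e \le \top \star (1 \sqcap e)$, and for this I would use the half-distributivity axiom in the form $(\top \star e') \sqcap e \le (\top \sqcap e) \star e'$ combined with the $0/\top$ $\star$-absorption $\top \le a \star \top$; more carefully, starting from the trivial $1 \sqcap e \le 1 \sqcap e$ and $1 \sqcap e \le e$, we get $\top \star (1 \sqcap e) \ge \top \star \ldots$ — the cleanest route is: $e \le e \sqcap (\top \star 1)$? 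No. The robust approach: use the already-established fact from the proof of \Cref{lem:le-alph-sanity} that $i_\Gamma \sqcap (a \star b) = a \star (i_\Gamma \sqcap b)$ when the analogous identity $i_\Gamma = a \star i_\Gamma$ holds; specialised with $a = \top$, and noting $\top \star i_\Gamma = i_\Gamma$ (since $\top \le \top \star \top$ and $i_\Gamma$ is a meet of terms $\top \star b$, each of which equals $\top \star \top \star b$), we obtain inside the $\le_\Gamma$ theory that $1 \sqcap e \le_\Gamma \top \star(1 \sqcap e)$ — wait, that's the wrong direction again. I expect the genuinely correct derivation to run: from $1 \le \top$ (top axiom) and monotonicity, $1 \sqcap e \le \top$; combined with $1 \sqcap e \le 1 \sqcap e$ this does not yet give what we want, so the real content is the identity $a = \top \star a$ is \emph{false}, but $\top \star (1 \sqcap e) = (\top \star 1) \sqcap (\top \star e)$? also false in general since $\sqcap$ need not distribute. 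So the actual derivation must genuinely use $\Gamma$ or some additional structure. The \textbf{main obstacle} I anticipate is precisely pinning down this first inequation $e \le_\Gamma \top \star(1 \sqcap e)$: it presumably requires expanding $i_\Gamma$ and using that $\top \star i_\Gamma \equiv_\Gamma i_\Gamma \equiv_\Gamma \top$ inside the theory modulo $\Gamma$ — i.e. the whole point is that modulo $\Gamma$ the element $i_\Gamma$ behaves like $\top$, so that $\top \star (1 \sqcap e) \equiv_\Gamma i_\Gamma \sqcap (\top \star (1 \sqcap e))$ and then the half-distributivity / absorption calculus of \Cref{lem:le-alph-sanity} can be replayed to pull the $1 \sqcap e$ out and dominate $e$.

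Once both inequations are in hand, the biconditional is routine: if $1 \sqcap e \le_\Gamma f$ then by monotonicity of $\star$ we get $\top \star (1 \sqcap e) \le_\Gamma \top \star f$, and composing with $e \le_\Gamma \top \star(1 \sqcap e)$ via transitivity yields $e \le_\Gamma \top \star f$; conversely if $e \le_\Gamma \top \star f$ then $1 \sqcap e \le_\Gamma 1 \sqcap (\top \star f) \le_\Gamma f$ using that $1 \sqcap -$ is monotone (it is, since $\sqcap$ is) and the second inequation together with transitivity. I would present the two inequations as the substance of the proof and dispatch the biconditional in one or two lines, flagging that all ingredients — transitivity, monotonicity of $\star$ and $\sqcap$, the monoid unit law, the half-distributivity and $\top$-absorption axioms, and \Cref{lem:le-alph-sanity} for handling $i_\Gamma$ — are already available.
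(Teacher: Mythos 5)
Your proof of the second inequation $1 \sqcap (\top \star f) \le_\Gamma f$ is correct, and your reduction of the biconditional to the two unit/counit inequations is the standard Galois-connection argument the paper also invokes. However, there is a genuine gap: you never actually derive the first inequation $e \le_\Gamma \top \star (1 \sqcap e)$. You explicitly flag it as the ``main obstacle,'' cycle through several false starts, and end by conjecturing that the derivation ``must genuinely use $\Gamma$ or some additional structure'' via an $i_\Gamma$-manipulation that you do not carry out. That diagnosis is wrong, and the step you dismissed is precisely the one that works: you reject ``$\top \star (1 \sqcap e) = (\top \star 1) \sqcap (\top \star e)$'' because $\sqcap$ need not distribute over $\star$ as an \emph{equality}, but you only need one direction, and $(\top \star 1) \sqcap (\top \star e) \le \top \star (1 \sqcap e)$ is literally the right-distributivity-of-$\sqcap$-over-$\star$ axiom of \Cref{fig:axioms}.

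The correct derivation, needing no $\Gamma$ at all, is:
\[
e \;=\; 1 \star e \;\le\; \top \star e \;\le\; (\top \star 1) \sqcap (\top \star e) \;\le\; \top \star (1 \sqcap e),
\]
where the first inequality uses $1 \le \top$ and monotonicity of $\star$, the second uses $\top \star e \le \top = \top \star 1$ (from $a \le \top$, the unit law, and the fact that $\sqcap$ is a greatest lower bound), and the third is the axiom just mentioned. Since this is derivable outright, it holds modulo any $\Gamma$ by \Cref{lem:le-alph-sanity}. This matches the paper's (terse) proof, which says both inequations follow from the half-/right-distributivities of $\star$ over $\sqcap$. With this repair your argument is complete; as submitted, the unit inequation of the Galois connection is missing.
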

\begin{proof}
Throughout, we use~\Cref{lem:le-alph-sanity} to employ the axioms of $\SRKAMT$
(and note in passing that $1 \sqcap -$ and $\top \star -$ are monotone with respect to
$\le_\Gamma$).
Let us treat the two inequalities in turn:
\begin{itemize}
\item For the first one we use that $\top$ is the unit for $\sqcap$,
that $1$ is the unit for $\star$, the monotonicity of $\star$ and finally, the
left-distributivity of $\star$ over $\sqcap$.
\[e =_\Gamma \top \sqcap e =_\Gamma  
(\top \star 1) \sqcap (1 \star e) \le_\Gamma 
(\top \star 1) \sqcap (\top \star e) \le_\Gamma
\top \star (1 \sqcap e)\]
\item For the second one, the key step is the left-half distributivity
of $\sqcap$ over $\star$.
\[1 \sqcap (\top \star f) \le_\Gamma (1 \sqcap \top) \star f =_\Gamma 1 \star f =_\Gamma f\]
\end{itemize}
The equivalence afterwards holding is a standard property of Galois connections
(see e.g.~\cite[Chapter 7]{davey2002introduction} for an introduction to Galois connections).
\end{proof}

\begin{lem}
\label{lem:completeness-le-one}
If Duplicator wins in $\simGame(\position{\Gamma}{X}{f})$
and $1 \in X$, then we have $\bigsqcap_{e \in X} e \le_\Gamma f$
provably in $\SRKAMT$.
\end{lem}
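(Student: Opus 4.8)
The plan is to use the Galois connection of \Cref{lem:topstar-galois} to remove the $1$ from the left-hand side, trading it for a $\top\star(-)$ on the right. This turns a position in which Duplicator wins ``for free'' (by keeping $1$ in the left component until the right-hand side is spent) into one whose winning condition is carried entirely by the right-hand side, which is exactly the shape the cover machinery of \Cref{thm:cover} is designed to handle. Concretely: since $1\in X$ we have $\bigsqcap_{e\in X}e = 1\sqcap E$ provably in $\SRKAMT$, where $E := \bigsqcap_{e\in X\setminus\{1\}}e$ (read as $1$ when $X=\{1\}$), so by \Cref{lem:topstar-galois} it suffices to derive $E\le_\Gamma \top\star f$.

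Next I would recognise $E\le_\Gamma\top\star f$ as the inequation attached, at the substitution $x_1:=1$, to the $\{1\}$-place $\alpha := (\Gamma,\,(X\setminus\{1\})_{s\in\{1\}},\,\top\star f)$. That $\alpha$ is a genuine place is obtained by combining \Cref{lem:dup-strat-to-red} and \Cref{cor:spo-strat-to-nonred} with positional determinacy: Duplicator wins $\simGame(\position{\Gamma}{X}{f})$ iff $\bigsqcap_{b\in\Gamma}(\top\star G(b))\sqcap\bigsqcap_{e\in X}\interpt{e}_G\leqW\interpt{f}_G$, and (using $1\in X$ together with the soundness of \Cref{lem:topstar-galois}, i.e. the semantic equivalence $1\sqcap A\leqW B\iff A\leqW\top\star B$) this is equivalent to $\bigsqcap_{b\in\Gamma}(\top\star G(b))\sqcap\bigsqcap_{e\in X\setminus\{1\}}\interpt{e}_G\leqW\interpt{\top\star f}_G$, which by the same two results means Duplicator wins $\simGame(\position{\Gamma}{X\setminus\{1\}}{\top\star f})$; the forced deterministic \textsc{explore} moves $1\star t\to t$ then promote this to a win in $\simGame(\position{\Gamma}{\{1\star t\mid t\in X\setminus\{1\}\}}{\top\star f})$. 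Since $1\notin X\setminus\{1\}$, the place $\alpha$ is \emph{not} resolved, so \Cref{thm:cover} yields a resolving cover $\cC$ of $\alpha$. Unfolding \Cref{def:cover} at $x_1:=1$, using $1\star a=a$ and the provable distributivity of $\sqcap$ over $\star$, the goal $E\le_\Gamma\top\star f$ reduces to deriving, for every $\gamma=(\Theta,(Z)_{s\in\{1\}},g)\in\cC$, the inequation $\bigsqcap_{t\in Z}t\le_\Theta g$.

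Finally I would discharge these leaves by induction along the well-founded order of \Cref{def:place-wf}, which is legitimate because by construction each $\gamma\in\cC$ is, relative to $\alpha$, either below it in that order (the cover construction enlarges the alphabet component via a \textsc{fill} move, or keeps it fixed while strictly decreasing the size multiset without raising the right-hand side) or is reducible to such a place. Each $\gamma$ is resolved, so $1\in Z$ and hence $\bigsqcap_{t\in Z}t\le 1$; if $Z=\{1\}$ then $g$ must have all its free variables in $\Theta$ — otherwise processing $g$ from $\position{\Theta}{\{1\}}{g}$ reaches a head letter outside $\Theta$ with neither \textsc{fill} nor \textsc{junk} available, contradicting that $\gamma$ is a place — so $1\le_\Theta g$ by \Cref{lem:le-alph-sanity}(3) and we conclude by transitivity; if $Z\supsetneq\{1\}$ we write $\bigsqcap_{t\in Z}t = 1\sqcap\bigsqcap_{t\in Z\setminus\{1\}}t$ and re-enter the argument (via \Cref{lem:topstar-galois}) on the strictly smaller place $(\Theta,(Z\setminus\{1\})_{s\in\{1\}},\top\star g)$. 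I expect the main obstacle to be the second paragraph: making precise that $\alpha$ is a place — i.e. that ``$1$ in the left component with right-hand side $f$'' is interchangeable with ``no $1$, right-hand side $\top\star f$'' — and checking that \Cref{thm:cover} applied to $\alpha$ genuinely descends in the well-founded order, so that the recursion over the leaves of $\cC$ terminates.
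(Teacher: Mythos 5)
Your opening move --- using the Galois connection of \Cref{lem:topstar-galois} to trade the $1$ in $X$ for a $\top\star(-)$ on the right-hand side --- matches the paper's, but the continuation has a genuine gap. You instantiate the cover machinery with the index set $S=\{1\}$ and substitute $x_1:=1$, so each resolved leaf $(\Theta,(Z)_1,g)$ of the resolving cover yields the proof obligation $\bigsqcap_{t\in Z}t\le_\Theta g$ with $1\in Z$ --- which is exactly another instance of the lemma you are trying to prove. Your plan to discharge these ``by induction along the well-founded order of \Cref{def:place-wf}'' does not work: \Cref{def:cover} only guarantees that a member of a cover is either $\le\alpha$ \emph{or resolved}, so the leaves of a resolving cover come with no descent guarantee at all, and the auxiliary places $(\Theta,(Z\setminus\{1\})_1,\top\star g)$ you propose to re-enter with are not below $\gamma$ (or $\alpha$) in that order either, since $\top\star g$ is neither a meet of subobjects of $g$ nor $\le_\Theta g$. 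The recursion can therefore fail to terminate. Your base case is also incorrect as stated: Duplicator winning from $\position{\Theta}{\{1\}}{g}$ does not force all free variables of $g$ to lie in $\Theta$ (take $g=a\sqcup b$ with $a\in\Theta$ and $b\notin\Theta$), so \Cref{lem:le-alph-sanity}(3) does not apply, and proving $1\le_\Theta g$ there is again an instance of the lemma itself.

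The paper avoids all of this with one device you are missing: it forms the $\{0\}$-place $(\Gamma,(\{\top\star e\mid e\in X\})_0,\top\star f)$ and substitutes $x_0:=0$ in the cover entailment. Every resolved premise then reads $0\star\bigsqcap_{t\in Y}t\le_\Theta g$ with $1\in Y$, which is derivable outright from $0\star(1\sqcap\cdots)\le 0\star 1=0\le g$ --- no further recursion is needed. The conclusion $0\star\top\star e_X\le_\Gamma\top\star f$ then yields $e_X\le_\Gamma\top\star f$ via $e_X=1\star e_X\le\top\star e_X\le 0\star\top\star e_X$, and the Galois connection finishes exactly as in your first paragraph. (Incidentally, the paper justifies that the modified position is still won by Duplicator by a direct strategy argument, rather than your detour through \Cref{lem:dup-strat-to-red}, \Cref{cor:spo-strat-to-nonred} and determinacy; that detour is workable but heavier than necessary, and it also leaves unaddressed the degenerate case $X=\{1\}$, where $X\setminus\{1\}$ is empty and your tuple fails the non-emptiness requirement of \Cref{def:place}.)
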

\begin{proof}
Write $e_X$ for $\bigsqcap_{e \in X} e$
throughout.
Duplicator also wins from position
\[\position{\Gamma}{\{0 \star \top \star e \mid e \in X\}}{\top \star f}
\] in $\simGame$: they may play as in a winning strategy in $\simGame(\position{\Gamma}{X}{f})$
as usual, unless the game gets stuck at a position which is losing for Spoiler (with $\top$
  on the right-hand side).
Since we have such a winning strategy, it means
that we have an $\{0\}$-place $(\Gamma, (\{ \top \star e \mid e \in X\})_0, \top \star f)$.
Let us take a resolving cover $\cC$ of that, which we know exists by \Cref{thm:cover}.
Substituting $0$ for $x_0$, the cover gives us that the following entailment is
derivable in $\SRKAMT$:
\[
\left[0 \star \bigsqcap_{t \in Y} t ~~ \le_\Theta ~~ g ~~\text{ for all $(\Theta,(Y)_0,g) \in \cC$} \right]
\quad \Longrightarrow \quad
0 \star \top \star e_X ~~ \le_\Gamma ~~ \top \star f
\]

Each of the premises of the entailment is trivially derivable in $\SRKAMT$: since
the cover is resolving, we have $1 \in Y$, in which case the left-hand-side is provably equivalent to
$0$ in $\SRKAMT$ via $0 \star (1 \sqcap t) \le 0 \star 1 = 0$.
Hence $\SRKAMT$ derives the conclusion of the above entailment and a fortiori
\[  e_X ~~ \le ~~ 1 \star e_X ~~ \le  ~~ \top \star e_X ~~ \le ~~ 0 \star \top \star e_X ~~ \le_\Gamma ~~ \top \star f \]
By \Cref{lem:topstar-galois}, we thus have a derivation of $1 \sqcap e_X \le_\Gamma f$, and since $1 \in X$,
we have $e_X \le 1 \sqcap e_X$ and can conclude that $\SRKAMT$ derives $e_X \le_\Gamma f$.
\end{proof}

\begin{thm}
\label{thm:completeness-expanded}
If Duplicator wins in $\simGame(\position{\Gamma}{X}{f})$, then we have $\bigsqcap_{e \in X} e \le_\Gamma f$
provably in $\SRKAMT$.
\end{thm}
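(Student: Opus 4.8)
The plan is to deduce \Cref{thm:completeness-expanded} from \Cref{thm:cover} (every place has a resolving cover) together with the base case already proven in \Cref{lem:completeness-le-one}. Two preliminary observations help. First, if $X = \emptyset$ we read $\bigsqcap_{e \in X} e$ as $\top$, and this case is subsumed by instead taking $X = \{\top\}$: from $\position{\Gamma}{\{\top\}}{f}$ the play reaches $\position{\Gamma}{\emptyset}{f}$ in a single forced step (Spoiler must apply the \textsc{fork} move that deletes the sink $\top$), so the winner is unchanged, and so is the meet; hence we may assume $X \neq \emptyset$. Second, a \emph{harmless-prefix} observation, immediate by transporting strategies verbatim: since $1 \star t$ always has the same polarity as $t$ and its $\to$-successors are exactly $\{1 \star t' \mid t \to t'\}$, with the one extra step $1 \star 1 \to 1$ available when $t = 1$, for all $\Gamma, Y, g$ Duplicator wins $\simGame(\position{\Gamma}{\{1 \star t \mid t \in Y\}}{g})$ if and only if Duplicator wins $\simGame(\position{\Gamma}{Y}{g})$.

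Now suppose Duplicator wins $\simGame(\position{\Gamma}{X}{f})$ with $X \neq \emptyset$, and identify $\{1\}$-places (\Cref{def:place}) with triples $(\Theta, Y, g)$ in which $Y$ is a non-empty finite set of terms. By the harmless-prefix observation $\alpha = (\Gamma, X, f)$ is a $\{1\}$-place, so \Cref{thm:cover} provides a resolving cover $\cC$ of $\alpha$. Calling the single fresh variable $x$, \Cref{def:cover} gives a derivation in $\pSRKAM$ of
\[
\Bigl[\, x \star {\textstyle\bigsqcap_{t \in Y}}\, t \ \le_\Theta\ g \ \text{ for all } (\Theta, Y, g) \in \cC \,\Bigr]
\quad\Longrightarrow\quad
x \star {\textstyle\bigsqcap_{e \in X}}\, e \ \le_\Gamma\ f .
\]
I would then substitute $x := 1$ and simplify using the monoid axiom $1 \star a = a$. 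Exactly as with the concrete substitution carried out in \Cref{lem:completeness-le-one}, this turns the $\pSRKAM$-derivation into an $\SRKAMT$-derivation, now of the entailment whose conclusion is $\bigsqcap_{e \in X} e \le_\Gamma f$ and whose premises are the inequations $\bigsqcap_{t \in Y} t \le_\Theta g$, one for each $(\Theta, Y, g) \in \cC$.

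It then remains to derive each premise in $\SRKAMT$. Fix $(\Theta, Y, g) \in \cC$. Since $\cC$ is resolving, this place is resolved, i.e.\ $1 \in Y$; by the definition of a place, Duplicator wins $\simGame(\position{\Theta}{\{1 \star t \mid t \in Y\}}{g})$, hence, by the harmless-prefix observation, Duplicator wins $\simGame(\position{\Theta}{Y}{g})$. Since $1 \in Y$, this is precisely the hypothesis of \Cref{lem:completeness-le-one} — which imposes no constraint on the size of the position, so it applies regardless of the standing of $(\Theta, Y, g)$ in the well-founded order on places — and we obtain that $\SRKAMT$ derives $\bigsqcap_{t \in Y} t \le_\Theta g$. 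Feeding these derivations into the entailment above, $\SRKAMT$ derives $\bigsqcap_{e \in X} e \le_\Gamma f$, as wanted.

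The principal difficulty has already been dealt with: it is \Cref{thm:cover}, and the rest is essentially bookkeeping. The two points that call for a little care are the harmless-prefix observation, which is used both to exhibit $\alpha$ as a place and to bring each place of $\cC$ within the scope of \Cref{lem:completeness-le-one}, and the passage from $\pSRKAM$ to $\SRKAMT$ effected by the substitution $x := 1$, which is handled exactly as in the already-proven \Cref{lem:completeness-le-one}.
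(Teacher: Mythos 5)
Your proof is correct and follows essentially the same route as the paper: view $(\Gamma, X, f)$ as a $\{1\}$-place, take a resolving cover via \Cref{thm:cover}, substitute $1$ for the fresh variable, and discharge each premise with \Cref{lem:completeness-le-one}. Your ``harmless-prefix'' observation and the treatment of $X = \emptyset$ merely make explicit identifications the paper leaves implicit, and do not change the argument.
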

\begin{proof}
Clearly we have that $(\Gamma, (X)_1, f)$ is a $\{1\}$-place.
By \Cref{thm:cover}, we know it has a resolving cover $\cC$.
Taking the corresponding entailment (see~\Cref{def:cover}), substituting the
variable $x_1$ with $1$ and performing the simplification inducted by $1 \star t =_\Gamma t$,
we have the following entailment:
\[
\left[\bigsqcap_{t \in Y} t ~~ \le_\Theta ~~ g ~~\text{ for all $(\Theta,(Y)_1,g) \in \cC$} \right]
\quad \Longrightarrow \quad
\bigsqcap_{e \in X} e ~~ \le_\Gamma ~~ f
\]
To conclude it suffices to show that every premise of this entailment is derivable in $\SRKAMT$.
So fix $(\Theta, (Y)_1, g) \in \cC$. Since the
cover is resolving, we have that $1 \in Y$ by~\Cref{def:place}.
Since $(\Theta, (Y)_1, g)$ is a place, Duplicator has a winning strategy in $\simGame(\position{\Theta}{Y}{g})$.
Hence we can apply 
\Cref{lem:completeness-le-one} to see that $\SRKAMT$ proves $\bigsqcap_{t \in Y} t \le_\Theta g$.
\end{proof}

It is clear that \Cref{thm:completeness} is a particular case
of \Cref{thm:completeness-expanded} (take $X = \{e\}$ and $\Gamma = \emptyset$).

\subsection{Completeness for the pointed Weihrauch degrees}

Finally, we finish by explaining how to derive some completeness results
for pointed Weihrauch degrees
by way of the game characterization (\Cref{thm:maingame-pointed}), using
a similar proof strategy as above. As with \Cref{thm:maingame-pointed},
the proofs can be adaptations of what we did for partial Weihrauch
degrees, so we only highlight key differences rather than spelling them out in
full.

\begin{thm}
\label{thm:completeness-pointed}
If we have expressions $e, f \in \RegE{\Sigma}$
that do not contain $0$ or $\top$, if Duplicator wins
$\psimGame(\pposition{\{e\}}{f})$, then $e \le f$
is derivable in the theory built as follow:
\begin{itemize}
    \item take the fragment of $\SRKAMT$ where all axioms
pertaining to $\top$ and $0$ are removed
   \item add the universal axiom $1 \le a$
\end{itemize}
\end{thm}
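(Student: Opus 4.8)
The plan is to mirror the completeness argument for the extended Weihrauch degrees --- \Cref{lem:completeness-le-one}, \Cref{thm:completeness-expanded} and the places/covers machinery culminating in \Cref{thm:cover} --- while transporting everything from $\simGame$ to $\psimGame$ and from $\SRKAMT$ to the $\top$- and $0$-free theory augmented with the axiom $1 \le a$, which I will denote $\mathcal{T}$ for brevity. The first thing to observe is that in $\mathcal{T}$ the inequation $1 \le f$ is derivable for \emph{every} $f \in \RegE{\Sigma}$: this is a straightforward induction on $f$, using the new axiom $1 \le a$ at the variable case and, as in the proof of \Cref{lem:le-alph-sanity}, the identities $1 = 1 \sqcup 1 = 1 \sqcap 1 = 1 \star 1$ and the unfolding axiom $1 \le a^\diamond$ at the remaining cases. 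This is the pointed counterpart of \Cref{lem:le-alph-sanity}(3) specialised to $\Gamma = \Sigma$, and as a consequence we no longer need a relativised derivability relation $\le_\Gamma$: we may work with plain derivability in $\mathcal{T}$ throughout.

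Next I would redevelop the places/covers apparatus over $\psimGame$. A $\psimGame$-$S$-place is a tuple $((X_s)_{s \in S}, f)$ --- with the first component of \Cref{def:place} dropped --- where each $X_s$ is a finite nonempty subset of $\RegE{\Sigma}$ and Duplicator wins $\psimGame(\pposition{\{s \star t \mid s \in S, t \in X_s\}}{f})$; \emph{resolved} still means $1 \in X_{s_0}$ for some $s_0 \in S$. The well-founded order is that of \Cref{def:place-wf} with the alphabet-inclusion clause deleted: $\alpha < \beta$ iff $f$ is a meet of subobjects of $g$ with $f < g$ strictly in $\mathcal{T}$, or $f \le g$ in $\mathcal{T}$ and the multiset $\sum_{s \in S} \multiset{\size{e} \mid e \in X_s}$ strictly decreases; well-foundedness is argued as before, since each term still has finitely many subobjects and $\sqcap$ is provably idempotent in $\mathcal{T}$. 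A cover is defined exactly as in \Cref{def:cover}, now requiring the entailment to be derivable in $\mathcal{T}$. Then \Cref{lem:cover-compose} is immediate; \Cref{lem:cover-sqcup-sqcap} applies in its $\sqcup$/$\sqcap$ cases only, as $0$ and $\top$ can never occur --- subobjects of $\top$/$0$-free terms are $\top$/$0$-free, which one checks directly on the transition rules of \Cref{fig:regexp-ss}; \Cref{lem:cover-star} and \Cref{lem:cover-iter} go through with the $\Gamma$-bookkeeping simply erased, since their derivations only use the monoid, lattice, $\sqcap$/$\star$-(half-)distributivity and parameterised $\diamond$-induction axioms, none of which mention $\top$ or $0$; and \Cref{lem:cover-letter} adapts by the same argument, the only new feature being that in $\psimGame$ the \textsc{junk} moves are always enabled at oracle-polarity Duplicator positions --- this is precisely the effect of the absent first component being frozen at $\Sigma$ --- and these moves strictly decrease the place (because $1 \le a$ gives $f' \le f$, but not conversely, for $f \to f'$ with $\pol(f) = \orapol{a}$), hence are harmless and do not obstruct reaching the \textsc{fill} moves that resolve a place. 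This yields the pointed analogue of \Cref{thm:cover}: every $\psimGame$-place has a resolving cover.

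Finally I would conclude in one step, collapsing the two-step argument of the extended case. Given that Duplicator wins $\psimGame(\pposition{\{e\}}{f})$, the tuple $((\{e\})_1, f)$ is a $\{1\}$-$\psimGame$-place, exactly as $(\Gamma,(X)_1,f)$ was a $\{1\}$-place in \Cref{thm:completeness-expanded}. Take a resolving cover $\cC$ and substitute $x_1 := 1$ in its defining entailment: every premise becomes $\bigsqcap_{t \in Y} t \le g$ for some resolved $((Y)_1, g) \in \cC$, and since $1 \in Y$ we get $\bigsqcap_{t \in Y} t \le 1$, whence $\bigsqcap_{t \in Y} t \le 1 \le g$ using that $1 \le g$ holds in $\mathcal{T}$. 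So all premises are derivable in $\mathcal{T}$, and therefore so is $e \le f$. The pointed analogue of \Cref{lem:completeness-le-one} is thus trivial here --- precisely because $1 \le g$ is always available --- which is also why the Galois-connection detour of \Cref{lem:topstar-galois}, which crucially relied on $\top \star -$, is not needed.

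The main obstacle is bookkeeping rather than ideas: one must re-verify that every cover lemma survives both the removal of $0$ and $\top$ and the passage to $\psimGame$. The fiddliest points are the letter case \Cref{lem:cover-letter}, where one tracks the interplay of \textsc{fill} and the now-always-available \textsc{junk} moves, and the iteration case \Cref{lem:cover-iter}, whose derivation must be reinspected to confirm that it never invokes $\top$ or $0$ (it does not).
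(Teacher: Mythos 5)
Your proposal is correct and follows essentially the same route as the paper's own proof: drop the $\Gamma$-component and the relativised relation $\le_\Gamma$, rerun the places/covers machinery over $\psimGame$ with the only substantive change in the letter-case lemma (where the new axiom $1 \le a$ handles the now-always-available \textsc{junk} moves), and observe that the final premises are trivially derivable because $1$ is provably the bottom element, so no analogue of \Cref{lem:completeness-le-one} or the $\top \star -$ Galois connection is needed. The one inessential inaccuracy is your parenthetical claim that \textsc{junk} moves strictly decrease the place (for $f = a^\diamond \star a \to a^\diamond$ both inequalities are derivable, so the decrease need not be strict); this is not load-bearing, since the letter-case argument inducts on the number of moves remaining until a \textsc{fill} in the winning strategy rather than on the place order.
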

\begin{proof}[Proof idea]
We can adapt the definition of places (\Cref{def:place}) as follows:
a pointed $S$-place is a pair $((X_s)_{s \in S},f)$ such that
Duplicator wins $\psimGame(\pposition{\{ s \star t \mid s \in S, t \in X_s\}}{f})$.
We can similarly adapt the definition of covers (\Cref{def:cover})
by simply replacing all occurrences of $\le_\Gamma$ (for all alphabets
$\Gamma$ occurring in the definition) with derivability in our theory. Then all pointed places also have covers (the analogue of \Cref{thm:cover}), following the exact same proof
strategy. The only difference is in the proof of the analogue of \Cref{lem:cover-letter}
where we need the axiom $1 \le a$ to check that our entailment holds when we follow
a \textsc{junk} move in the strategy.
Then, we can proceed directly by arguing as in \Cref{thm:completeness-expanded} and
check that the premise of the entailment holds because $1$ is the bottom element now (there is no analogue to \Cref{lem:completeness-le-one} that is required here, so no
need to introduce $\top$).
\end{proof}

\begin{defi}
\label{def:SRKA}
Call $\SRKA$ the theory obtained as follows:
\begin{itemize}
\item take the fragment of $\SRKAMT$ where all axioms
pertaining to $\top$, $0$ and $\sqcap$ are removed
\item add the $\diamond$-induction scheme
   $x \star y \le x ~~\Longrightarrow~~ x \star y^\diamond \le x$.
\end{itemize}
Call $\pSRKA$ the theory obtained by adjoining $1 \le a$ to $\SRKA$.
\end{defi}

\begin{lem}
\label{lem:pointed-spawn-le}
For every $e, f \in \RegE{\Sigma}$ not containing $0$, $\top$ or $\sqcap$,
if $f$ is a spawn of $e$, then $\pSRKA$ derives $f \le e$.
\end{lem}
\begin{proof}
We only need to establish that whenever $e \to f$, we have $f \le e$.
We do that by induction over $e$ and case analysis:
\begin{itemize}
\item if $e = e_1 \sqcup e_2 \to e_i = f$ for some $i \in \{1,2\}$,
this follows from the axiom $e_i \le e_1 \sqcup e_2$
\item if $e = e_1 \star 1 \to e_1 = f$, this is immediate by unitality of $1$
\item if $e = e_1 \star e_2 \to e_1 \star e_2' = f$, then the induction hypothesis
yields $e_2' \le e_2$, and by monotonicity we get the desired result
\item if $e = t^\diamond \to 1 = f$, this is immediate from an axiom
\item if $e = t^\diamond \to t^\diamond \star t$, this is also an axiom
\item we have covered all the cases as we cannot have $e = 1 \to f$ for any $f$.
\end{itemize}
\end{proof}

\begin{thm}
\label{thm:completeness-pointed-sqcap-free}
If we have expressions $e, f \in \RegE{\Sigma}$
that do not contain either of $0$ or $\top$ or $\sqcap$, if Duplicator wins
$\psimGame(\pposition{\{e\}}{f})$, then $e \le f$
is derivable in $\pSRKA$.
\end{thm}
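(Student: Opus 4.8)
The plan is to replay the machinery of \Cref{sec:completeness} — places, covers, the cover‑existence result \Cref{thm:cover}, and the extraction of completeness in \Cref{thm:completeness-expanded} — inside the $\sqcap$‑free pointed setting, using $\pSRKA$ in place of $\SRKAMT$ and $\psimGame$ in place of $\simGame$. The structural feature that makes this possible is a polarity observation: if $e$ and $f$ contain no $\sqcap$, $0$ or $\top$, then so do all their subterms and (by inspection of \Cref{fig:regexp-ss}) all their subobjects, and such terms never carry polarity $\spopol$. Hence in $\psimGame(\pposition{\{e\}}{f})$ no position ever enables a \textsc{fork} or an \textsc{alea} move: Spoiler can only play \textsc{explore} and Duplicator only \textsc{choose}, \textsc{junk} or \textsc{fill}. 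In particular the set component $X$ of a position, starting from the singleton $\{e\}$, is only ever modified by \textsc{explore} and \textsc{fill}, each of which replaces one element by another, so $X$ remains a singleton throughout.

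The first step is to set up a $\sqcap$‑free pointed analogue of places: a pair $((X_s)_{s \in S}, f)$ with $f$ and all $X_s$ free of $\sqcap$, $0$ and $\top$, with $S$ and every $X_s$ singletons (adopting the convention that $s_0$ is deleted from $S$ as soon as the construction would leave $X_{s_0}$ empty), such that Duplicator wins $\psimGame(\pposition{\{s \star t \mid s \in S,\ t \in X_s\}}{f})$; covers are copied from \Cref{def:cover} with every $\le_\Gamma$ replaced by derivability in $\pSRKA$, and the well‑founded order of \Cref{def:place-wf} is copied with ``meet of subobjects of $g$'' read as simply ``subobject of $g$'' — still well‑founded, since a term has finitely many subobjects (\Cref{def:subobject}). \Cref{lem:cover-compose} and the analogues of the easy cases carry over essentially verbatim: \Cref{lem:cover-sqcup-sqcap} reduces to its $\sqcup$ subcase, and \Cref{lem:cover-star} uses only the generic ordered‑monoid axioms present in $\pSRKA$. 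The analogue of \Cref{lem:cover-letter} also goes through, its one new feature being that when the winning strategy plays a \textsc{junk} move one needs the axiom $1 \le a$ (this is exactly the point flagged in the proof idea of \Cref{thm:completeness-pointed}), and \Cref{lem:pointed-subobject-le} lets one keep the right‑hand side of the produced place a genuine subobject of $f$, so that no $\sqcap$ reappears. Granting the resulting cover‑existence theorem, \Cref{thm:completeness-pointed-sqcap-free} follows as in \Cref{thm:completeness-expanded}: the $\{1\}$‑place $((\{e\})_1, f)$ has a resolving cover, and substituting $x_1 := 1$ turns each premise of the cover entailment into $1 \le g$ for a $\sqcap$‑, $0$‑, $\top$‑free $g$, which $\pSRKA$ derives by an easy induction on $g$ (using $1 \le a$, $1 \le a^\diamond$, $1 = 1 \star 1$ and $1 = 1 \sqcup 1$) because $1$ is its bottom element; unlike in the extended case, no analogue of \Cref{lem:completeness-le-one} is needed.

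The main obstacle is the iteration case, the analogue of \Cref{lem:cover-iter}, whose proof in \Cref{sec:completeness} uses parameterized $\diamond$‑induction together with a substitution of ``$(x_{s_0} \star t^\diamond) \sqcap f$'' for a variable, neither of which is available in $\pSRKA$, which has only the scheme $x \star y \le x \Rightarrow x \star y^\diamond \le x$. Here I would exploit the singleton structure. Write the place as $\alpha = (\{\sigma\}, \{t^\diamond\}, f)$ and its reduct as $\alpha' = (\{\sigma \star t^\diamond\}, \{t\}, f)$ with $\alpha' < \alpha$, and let $\cC'$ be a resolving cover of $\alpha'$ supplied by the induction hypothesis. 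Because $\cC'$ is resolving and $S' = \{\sigma \star t^\diamond\}$ is a singleton, every place of $\cC'$ has $X_{\sigma \star t^\diamond} = \{1\}$, so the cover entailment of $\cC'$ says that the family of inequations $x \le g$, as $g$ ranges over the right‑hand sides of places of $\cC'$, entails $x \star t \le f$ (with $x$ a fresh variable). Split $\cC'$ by whether the right‑hand side $g$ is a proper subobject of $f$ or is provably equivalent to $f$. If a place of the first kind exists, its ``shifted'' form $(\{\sigma\}, \{t^\diamond\}, g)$ is a place with $g \le f$ by \Cref{lem:pointed-subobject-le}, and taking it as a one‑element progressing cover of $\alpha$ the required entailment follows by transitivity. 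Otherwise every right‑hand side of $\cC'$ equals $f$, so instantiating $\cC'$'s entailment at $x := f$ yields $\pSRKA \vdash f \star t \le f$; then the unparameterized $\diamond$‑induction scheme gives $f \star t^\diamond \le f$, and taking the resolved place $(\{\sigma\}, \{1\}, f)$ — whose entailment premise is $x_\sigma \le f$ — as a one‑element cover of $\alpha$, we conclude $x_\sigma \star t^\diamond \le f \star t^\diamond \le f$.

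The places where I expect to spend the most effort are: checking that the cover produced in each branch of the iteration case is genuinely progressing; the bookkeeping, inside the singleton regime, that $\cC'$'s members really are $S'$‑places of the advertised shape — including the minor associativity and ``$\star 1$‑collapsing'' manipulations relating $\sigma \star (t^\diamond \star t)$ to $(\sigma \star t^\diamond) \star t$ and $\sigma \star 1$ to $\sigma$ — and carrying out the letter case in enough detail to be certain that no hidden $\sqcap$ creeps back in through a right‑hand side.
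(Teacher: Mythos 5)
Your proposal is correct and, for most of its length, follows the paper's own blueprint: restrict places and covers to the singleton, $\sqcap$-free pointed regime (the polarity observation that kills \textsc{fork} and \textsc{alea} and keeps $X$ a singleton is exactly the structural fact the paper relies on), use the axiom $1 \le a$ to handle \textsc{junk} moves in the letter case, use \Cref{lem:pointed-subobject-le} to keep right-hand sides of cover places genuine subobjects of $f$ rather than meets of subobjects, and extract $e \le f$ at the end as in \Cref{thm:completeness-expanded} by substituting $1$ for the place variable and discharging the premises $1 \le g$ since $1$ is now the bottom element. Where you genuinely diverge is the iteration case. The paper's proof idea simply says to ``proceed as for \Cref{thm:completeness-pointed}'', i.e.\ to replay \Cref{lem:cover-iter}; but that lemma's derivation substitutes $(x_{s_0} \star t^\diamond) \sqcap f$ for a variable, uses half-distributivity of $\star$ over $\sqcap$, and invokes parameterized $\diamond$-induction — none of which is available in $\pSRKA$ — and the paper does not say how to repair this. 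Your dichotomy does the repair: either the resolving cover of the reduct $\alpha'$ contains a place whose right-hand side $g$ is strictly below $f$, in which case the single shifted place $(\{\sigma\},\{t^\diamond\},g)$ is already a progressing cover of $\alpha$ by transitivity and \Cref{lem:pointed-subobject-le}; or every right-hand side is provably equivalent to $f$, in which case instantiating the cover entailment at $x := f$ yields $f \star t \le f$ and the unparameterized $\diamond$-induction scheme gives $f \star t^\diamond \le f$, making the resolved place $(\{\sigma\},\{1\},f)$ a one-element cover. This exploits the singleton structure in a way the paper's general lemma does not: the paper's version buys uniformity across all the completeness theorems at the price of needing the meet-based induction axiom, while yours is more elementary and self-contained for this fragment. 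The remaining effort you flag (the $\star 1$-collapsing and associativity bookkeeping identifying $\pposition{\{(\sigma\star t^\diamond)\star 1\}}{g}$ with $\pposition{\{\sigma\star t^\diamond\}}{g}$, and the convention of deleting $s_0$ when its set empties) is real but routine, and I see no gap.
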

\begin{proof}[Proof idea]
We can use the same strategy as for~\Cref{thm:completeness-pointed},
with a further modification to the notion of places and covers. Firstly,
as we do not have $\sqcap$, we know that the sets in the positions
are at most singletons. We can assume as much for places: now a
pointed $s$-place (for $s \in \RegE{\Sigma}$) is simply a pair $(e, f)$
of terms (without $0$, $\top$ or $\sqcap$) such that Duplicator wins in $\psimGame(\pposition{\{s \star e\}}{f})$.
We take an adapted definition of covers much like
in~\Cref{thm:completeness-pointed}, but with one additional requirement:
for a cover
$\cC$ of the pointed place $(e,f)$, we additionally
require for any place $(t, g) \in \cC$ that
$g$ be a spawn of $f$ (instead of a meet of spawns
of $f$). The only adaptation to make for this requirement is in the proof
of the analogue of \Cref{lem:cover-letter}, where in the base case we may
simply take $f''$ instead of $f \sqcap f''$ in the resolved place obtained by following
the \textsc{fill} move. This is correct for our updated notion of cover
because $f''$ is a spawn of $f$ and we have \Cref{lem:pointed-spawn-le}.
We can then proceed to reason as for \Cref{thm:completeness-pointed}.
\end{proof}

\begin{rem}
$\SRKA$ is \emph{not} complete for
Weihrauch reducibility for terms including compositions
as we cannot derive $b \star a \le a \star b \star a$;
this can be proven by showing that $\SRKA$ can only derive inequalities when there
is a simulation between the corresponding transition systems\footnote{See e.g.~\cite[Definition 7.47]
{baierkatoen} for a definition; it essentially corresponds to a version of $\psimGame$
where \textsc{junk} moves are not allowed at all.}, and it is clear there
is none between $b \star a$ and $a \star b \star a$. 
\end{rem}

\section{Computational complexity of deciding universal validity}
\label{sec:complexity}

In this section we look at the complexity of deciding whether an inequality
is valid for various fragments of our signature, both in the ordinary
Weihrauch degrees and the pointed ones. Interestingly, we have some
genuinely different complexities for the pointed case and the general case
for the signature that does not contain any $\sqcap$.
Recall that we write $\Wei$ for the structure of (ordinary) Weihrauch degrees and $\ptWei$ for the
(ordinary) pointed Weihrauch degrees. Here, proofs that are stated for
ordinary Weihrauch degrees also go through for partial degrees, using the same
game characterization.

\subsection{Upper bounds}

Let us first start by easy observations that follows from \Cref{thm:positional-det}.

\begin{lem}
Given $e,f \in \RegE{\Sigma}$, solving ``is $e \le f$ valid in the partial Weihrauch degrees?'' is doable in
$\Exptime$.
\end{lem}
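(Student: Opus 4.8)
The plan is to reduce the question to deciding the winner of the B\"uchi game $\simGame$ via \Cref{thm:maingame}, and then to bound the size of the relevant arena. By \Cref{thm:maingame}, $e \le f$ is valid in the extended Weihrauch degrees if and only if Duplicator has a winning strategy in $\simGame(\position{\emptyset}{\{e\}}{f})$, so it suffices to decide the winner of this game in exponential time.

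First I would analyze which positions are reachable from $\position{\emptyset}{\{e\}}{f}$. Inspecting the moves in \Cref{fig:simgame-moves}, the first component $\Gamma$ can only grow and always stays a subset of the (finitely many) variables occurring in $e$ and $f$, of which there are at most $\size{e}+\size{f}$. The third component is only ever rewritten along $\to$, so it remains a subobject of $f$. And every term placed into the middle component is obtained from $e$ by iterating $\to$, so the middle component is always a subset of the set of subobjects of $e$. By \Cref{def:subobject}, $e$ has fewer than $2\size{e}$ subobjects and $f$ fewer than $2\size{f}$, so the number of reachable positions is at most $2^{\size{e}+\size{f}} \cdot 2^{2\size{e}} \cdot 2\size{f}$, i.e.\ $2^{\mathcal{O}(\size{e}+\size{f})}$. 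Each such position is storable in space polynomial in $\size{e}+\size{f}$ (each subobject has size $\mathcal{O}(\size{e}+\size{f})$, even though it may be larger than the term it is a subobject of), has only polynomially many successors, and these successors, the polarity function, and membership in $W_\simGame$ are all computable in polynomial time straight from \Cref{def:simgame}. Hence the reachable sub-arena together with its B\"uchi winning set can be computed in exponential time.

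Then I would invoke \Cref{thm:positional-det}: the winner of a B\"uchi game is computable in time quadratic in the number of reachable positions. Applied to the sub-arena above, this costs $\bigl(2^{\mathcal{O}(\size{e}+\size{f})}\bigr)^2 = 2^{\mathcal{O}(\size{e}+\size{f})}$ time, which places the problem in $\Exptime$.

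The argument is essentially routine once \Cref{thm:maingame} and \Cref{thm:positional-det} are available; the only point requiring any care is the combinatorial bound on the reachable arena, and in particular the observation that, although a subobject of a term can be strictly larger than the term itself (because of the rule $e^\diamond \to e^\diamond \star e$), there are still only linearly many subobjects, so the middle component ranges over a set of at most exponential size. I do not expect a genuine obstacle here.
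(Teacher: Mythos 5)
Your proposal is correct and follows essentially the same route as the paper: reduce to deciding the winner of $\simGame(\position{\emptyset}{\{e\}}{f})$ via \Cref{thm:maingame}, bound the reachable positions by $2^{\mathcal{O}(\size{e}+\size{f})}$ using the fact that the third component is a subobject of $f$ and the middle component a subset of the ($<2\size{e}$ many) subobjects of $e$, and then apply \Cref{thm:positional-det}. Your additional remarks on representing positions in polynomial space and computing successors in polynomial time are details the paper leaves implicit, but they do not change the argument.
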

\begin{proof}
This follows from the fact that only at most $2^{1 + 2 \cdot \size{e} + \size{\Sigma}} \cdot \size{f}$ positions are reachable in $\simGame(\position{\emptyset}{\{e\}}{f})$ and
\Cref{thm:positional-det}. The bound on the number of reachable positions
follows from the fact that, for any
reachable position $\position{\Gamma}{X}{g}$, we have that $g$ is a
spawn of $f$ and $X$ is included in the set of spawns of $e$.
\end{proof}

\begin{lem}
Given $e,f \in \RegE{\Sigma}$ with $e$ not containing any $(-)^\diamond$ or $\sqcap$, solving ``is $e \le f$ valid in $\ptWei$?'' is doable in
$\Ptime$.
\end{lem}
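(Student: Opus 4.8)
The plan is to reduce the problem to the game characterisation of \Cref{thm:maingame-pointed} and then exploit the hypothesis on $e$ to show that the relevant part of $\psimGame$ is polynomially small. The only wrinkle is that \Cref{thm:maingame-pointed} is stated for terms free of $0$ and $\top$, which is not assumed here, so the first step is a preprocessing pass removing these constants.

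\emph{Preprocessing.} Consider the rewriting system on $\RegE{\Sigma}$ consisting of the size-strictly-decreasing rules $0 \star t \to 0$, $t \star 0 \to 0$, $0 \sqcup t \to t$, $t \sqcup 0 \to t$, $0 \sqcap t \to 0$, $t \sqcap 0 \to 0$, $0^\diamond \to 1$, together with their dual $\top$-counterparts $\top \star t \to \top$, $t \star \top \to \top$, $\top \sqcup t \to \top$, $t \sqcup \top \to \top$, $\top \sqcap t \to t$, $t \sqcap \top \to t$, $\top^\diamond \to \top$. Every rule is a universally valid equivalence of problems (immediate from \Cref{fig:operators} together with the facts that $0,\top$ are the bottom and top elements and that $a \star 0 \leqW 0$, $\top \leqW a \star \top$), so rewriting preserves the validity of $e \le f$; it terminates since it strictly decreases size; and it never introduces $\diamond$ or $\sqcap$. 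Let $e', f'$ be normal forms of $e,f$. A normal form is either $0$, or $\top$, or contains no occurrence of $0$ or $\top$ at all (any such subterm would sit directly under a connective forming a redex), and a $\{0,\top\}$-free term evaluates to $\equivW 1$ under the interpretation sending every variable to $1$, hence is $\not\equivW 0$ and $\not\equivW \top$. A short case analysis on whether $e'$ or $f'$ equals $0$ or $\top$ therefore settles $e \le f$ directly in those cases; in the remaining case $e', f'$ are $\{0,\top\}$-free, $e'$ is still $\{\sqcap,\diamond\}$-free, and by \Cref{thm:maingame-pointed} the inequation $e \le f$ is valid in $\ptWei$ if and only if Duplicator wins $\psimGame(\pposition{\{e'\}}{f'})$.

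\emph{Smallness of the game.} The key point is that every position $\pposition{X}{g}$ reachable from $\pposition{\{e'\}}{f'}$ satisfies $|X| = 1$. Indeed, an easy induction on $\to$ shows that every subobject of $e'$ is again built only from variables, $1$, $\sqcup$ and $\star$, since none of the transition rules of \Cref{fig:regexp-ss} can introduce $0$, $\top$, $\sqcap$ or $\diamond$ into a term of that fragment; in particular every subobject of $e'$ has polarity $\duppol$ or some $\orapol{a}$, never $\spopol$. Since the $X$-component of any reachable position consists of subobjects of $e'$, the move \textsc{fork} of \Cref{fig:simgame-moves} — whose side condition requires a $\spopol$-term in $X$ — is never available, and \textsc{fork} is the only move that can enlarge $X$ (\textsc{explore} and \textsc{fill} replace one element of $X$ by at most one, while \textsc{alea}, \textsc{choose} and \textsc{junk} leave $X$ untouched). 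Hence the invariant $|X| = 1$ is maintained from the initial position. Consequently every reachable position has the form $\pposition{\{t\}}{g}$ with $t$ a subobject of $e'$ and $g$ a subobject of $f'$, and by \Cref{def:subobject} there are fewer than $2\size{e'}\cdot 2\size{f'} = \bigO(\size{e}\cdot\size{f})$ of them, so the induced finite sub-arena of $\psimGame$ is computable in polynomial time.

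\emph{Conclusion.} Apply \Cref{thm:positional-det} to this sub-arena: it decides in time quadratic in the number of positions whether Duplicator wins, hence whether $e \le f$ is valid in $\ptWei$. Combined with the polynomial-time preprocessing, this yields a polynomial-time decision procedure. The two things requiring care are the treatment of the constants $0$ and $\top$ — which are outside the scope of \Cref{thm:maingame-pointed} — and the bookkeeping establishing that the simulated-set component $X$ never grows past a singleton; neither is a deep obstacle.
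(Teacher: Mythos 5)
Your core argument is exactly the paper's: after the hypotheses rule out $\sqcap$ and $(-)^\diamond$ in $e$, no subobject of $e$ has polarity $\spopol$, so no \textsc{fork} move is ever enabled, the component $X$ of a reachable position stays a singleton, the reachable part of $\psimGame(\pposition{\{e\}}{f})$ has only $\bigO(\size{e}\cdot\size{f})$ positions, and \Cref{thm:positional-det} finishes the job. That part is correct, and you are right to notice that \Cref{thm:maingame-pointed} is only stated for $0,\top$-free terms while the lemma is not, so some preprocessing (which the paper's own proof silently skips) is needed to cover the statement as written.

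The gap is in that preprocessing: two of your rewrite rules are not universally valid equivalences. The term $0 \star \top$ is a redex for both $0 \star t \to 0$ and $t \star \top \to \top$, so your (non-confluent) system can normalize it to either $0$ or $\top$; since $0 \not\equivW \top$, at least one of those rules must be wrong. With the orientation of $\star$ that makes the paper's absorption axioms $a \star 0 \le 0$ and $\top \le a \star \top$ sound (the right-hand factor is the oracle consulted first), one computes $0 \star \top \equivW \top$: its domain contains a computable point $\tuple{\unitelt,c}$ because the guard quantifies over the empty solution set $\top(\unitelt)$, and all its solution sets are empty. Hence $0 \star t \to 0$ fails whenever $\interp{t}$ has an unsolvable instance, and dually $\top \star t \to \top$ fails when $\interp{t}$ has empty domain ($\top \star 0 \equivW 0$); under the opposite reading of \Cref{fig:operators} the two offending rules swap, but the problem persists. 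Concretely, on the legal input $e = 0 \star \top$, $f = 0$ your procedure may normalize $e$ to $0$ and wrongly report validity, although $\interp{e}_\rho \equivW \top \not\leqW 0$. The axioms you invoke only justify the one-sided absorptions $t \star 0 \equivW 0$ and $t \star \top \equivW \top$; their mirror images are genuinely false for the extended problems that arise as interpretations of terms containing the constants. The repair is local: normalize innermost-first, keep the two sound $\star$-rules unconditionally, and apply $0 \star t \to 0$ (resp.\ $\top \star t \to \top$) only when the already-normalized $t$ is not $\top$ (resp.\ not $0$); for $\{0,\top\}$-free $t$ and pointed ordinary interpretations, $\interp{t}_\rho$ is total and pointed, so those instances of the rules are sound and your case analysis on the normal forms then goes through unchanged.
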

\begin{proof}
Note that the positions reachable from $\pposition{\{e\}}{f}$
are necessarily of the shape $\pposition{\{e'\}}{f'}$ with $e'$ and $f'$
spawns of $e$ and $f'$ respectively because no \textsc{fork} moves are available
(the only term of polarity $\spopol$ in the fragment under consideration is $1$).
Therefore, there are only polynomially many positions in this B\"uchi game, and
we can apply \Cref{thm:positional-det} to conclude.
\end{proof}

Now we move onto a more involved argument which still uses coarse
quantitative bounds.

\begin{lem}
Given $e,f \in \RegE{\Sigma}$ with $e$ not containing any $(-)^\diamond$, solving ``is $e \le f$ valid in $\Wei$?'' is doable in
$\Pspace$.
\end{lem}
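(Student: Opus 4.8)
The plan is to give a decision procedure running in alternating polynomial time, which coincides with $\Pspace$. By a game characterization of $\le$ in $\Wei$ analogous to \Cref{thm:maingame}, deciding whether $e \le f$ holds in $\Wei$ amounts to deciding which player wins a B\"uchi game $\cG$ whose positions are of the form $\position{\Gamma}{X}{g}$ with $\Gamma \subseteq \Sigma$, $X$ a finite set of subobjects of $e$ and $g$ a subobject of $f$, whose moves are (a variant of) those of \Cref{fig:simgame-moves}, and whose set of Duplicator-winning positions we write $W$. Since $e$ is $\diamond$-free, the relation $\to$ never increases the size of a subobject of $e$, so every member of $X$ has size at most $\size{e}$ and $X$ has fewer than $2\size{e}$ elements; together with the fact that every position has out-degree $\bigO(\size{e})$, this makes positions and their lists of successors computable in polynomial time. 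The only quantity that is not obviously polynomial is the length of a play, since the arena of $\cG$ is of exponential size.

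So the heart of the argument is a polynomial bound on play length. A play of $\cG$ decomposes into an alternation of \emph{phase-one blocks}, consisting of \textsc{explore} and \textsc{fork} moves — which ``dirty'' positions (those carrying a non-$1$, non-oracle term in $X$) force — and \emph{phase-two blocks}, consisting of \textsc{alea}, \textsc{choose} and \textsc{junk} moves, which leave $\Gamma$ and $X$ untouched and only rewrite the right-hand component $g$; the two kinds of block are glued by \textsc{fill} moves. Three observations bound the length of any play consistent with a \emph{positional} Duplicator winning strategy $\sigma$. First, a suitable potential on $X$ — for instance $\sum_{t \in X \setminus \{1\}} \size{t}$, evaluated at the start of each phase-two block — strictly decreases across each \textsc{fill}, because a \textsc{fill} replaces an oracle term $t = g' \star a$ of $X$ by the strictly smaller (and then fully decomposed) term $g'$; as this potential is bounded by $2\size{e}^2$, there are $\bigO(\size{e}^2)$ \textsc{fill} moves. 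Second, a \textsc{fill} introduces a single fresh term of size $\bigO(\size{e})$ to decompose (the rest of $X$ being already atomic), so each phase-one block has $\bigO(\size{e})$ moves, hence $\bigO(\size{e}^3)$ in total. Third, every move of a phase-two block rewrites $g$, so if a position occurred twice within such a block, Spoiler could replay its own choices and, $\sigma$ being positional, force an infinite play lying entirely among phase-two positions; but a phase-two position belongs to $W$ only when it is terminal (the sole winning terminal phase-two position being ``$g$ equivalent to $\top$''), so such a play would be a Duplicator loss, contradicting that $\sigma$ is winning — hence each phase-two block has fewer than $2\size{f}$ moves. Combining the three, every play consistent with $\sigma$ has length at most a fixed polynomial $P(\size{e},\size{f})$; being finite it is terminal, and being a Duplicator win it ends in $W$.

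The algorithm is then an alternating Turing machine that starts at $\position{\emptyset}{\{e\}}{f}$ and plays $\cG$ for at most $P(\size{e},\size{f})$ moves, branching existentially at Duplicator positions and universally at Spoiler positions; it accepts as soon as it reaches a terminal position in $W$ and rejects on reaching any other terminal position or on exceeding the move bound. By the previous paragraph together with positional determinacy of B\"uchi games (\Cref{thm:positional-det}), this machine accepts if and only if Duplicator wins $\cG$ from $\position{\emptyset}{\{e\}}{f}$, that is, if and only if $e \le f$ holds in $\Wei$; and since its positions have polynomial size and polynomially bounded out-degree and it makes only $P(\size{e},\size{f})$ moves, it runs in polynomial time. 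As alternating polynomial time equals $\Pspace$, this gives the claim.

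The step I expect to be the main obstacle is the third observation: making precise why a Duplicator winning strategy cannot realize a phase-two block in which some position recurs. This rests on an exact description of how $W$ meets the phase-two positions — roughly, that a position $\position{\Gamma}{X}{g}$ with $X$ ``clean'' is a Duplicator-winning terminal precisely when $g$ is provably equivalent to $\top$ or $0 \in X$, and is otherwise owned by whichever player controls the non-determinism of $g$ — so that every cycle confined to a phase-two block is a non-accepting infinite play. A secondary point of care is pinning down the game $\cG$ that characterizes validity in $\Wei$ (rather than in the extended degrees, which are what \Cref{thm:maingame} handles) and checking that its phase structure is exactly the one used above, in particular that \textsc{fill} is the only move re-entering a phase-one block.
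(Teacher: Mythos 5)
Your high-level plan (alternating polynomial time plus a polynomial bound on the length of the relevant plays) matches the paper's, and your third observation --- that under a positional Duplicator winning strategy no position can recur inside a phase-two block, because such a cycle would consist entirely of clean non-winning positions and Spoiler could then pump it into an infinite losing play --- is correct and is indeed the delicate point. The genuine gap is in your first observation, which is the one carrying the quantitative bound. The potential $\sum_{t \in X \setminus \{1\}} \size{t}$ does \emph{not} decrease across a \textsc{fill}: after the fill turns $g \star a$ into $g \star 1$ and the ensuing phase-one block exposes $g$, the decomposition may \emph{duplicate} left factors, since a \textsc{fork} on $h \star (t_1 \sqcap t_2)$ replaces it by the two terms $h \star t_1$ and $h \star t_2$, whose total size exceeds $\size{h \star (t_1 \sqcap t_2)}$ as soon as $\size{h} \geq 2$ (e.g.\ filling $((b_1 \star b_2)\star(c \sqcap d)) \star a$ eventually leaves both $(b_1\star b_2)\star c$ and $(b_1 \star b_2)\star d$ in $X$). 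Worse, the conclusion you want from it is false as you stated it (``any play consistent with a positional Duplicator winning strategy''): taking $e_1 = c_1 \sqcap c_2$ and $e_{k+1} = e_k \star (a_{k+1} \sqcap b_{k+1})$, a Duplicator strategy that fills every oracle term it is offered against $f = (c_1 \sqcup c_2 \sqcup a_2 \sqcup \cdots)^\diamond$ is winning and positional, yet performs $2^{k+1}-2$ fills on a term of size $O(k)$, through pairwise distinct positions --- so your no-repetition argument does not exclude it. Your algorithm, which rejects upon exceeding the budget $P(\size{e},\size{f})$, is therefore only correct if you prove that \emph{some} positional winning strategy respects the budget, and that is precisely what the broken potential was supposed to deliver.

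The paper avoids this entirely by changing the algorithm: it records the history of the play, halts at the first repeated position, and accepts or rejects according to whether the detected cycle contains a B\"uchi-winning position; correctness then follows from positional determinacy alone, with no claim that winning strategies avoid cycles. The quantitative step is then a bound on \emph{cycle-free} paths, obtained not from a potential on the current $X$ but from monotonicity: $\Gamma$ only grows, $g$ ranges over the at most $\size{f}$ subobjects of $f$, and --- since $e$ is $\diamond$-free --- the accumulated set $\bigcup_{j \le i} X_j$ of subobjects of $e$ seen so far grows monotonically within a set of at most $\size{e}$ elements and controls $X_i$. If you want to keep your route, you need to replace your potential by an invariant of this accumulated-set kind (one insensitive to the duplication of left factors of $\star$ under $\sqcap$), or switch to the cycle-detection algorithm; you should also, as you note yourself, say explicitly which game characterizes validity in $\Wei$ rather than in the extended degrees before running either argument.
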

\begin{proof}
We use the fact that $\Pspace$ is the same as alternating polynomial-time.
The alternating algorithm that we use will simply consist of playing the game,
recording the trace of all played moves. After each move, we check if we
land in a position we have seen before; if so, we then check if there is a
winning Duplicator position on the cycle. If this is the case, we accept,
otherwise, we reject. If we do not run into a cycle, we let the game play
to completion and accept if and only if we end in a winning position for
Duplicator.
This algorithm is sound because the game is positionally determined, and
all cycles in positional Duplicator (respectively Spoiler) winning strategies
must (respectively must not) contain a winning Duplicator positions.

Now we need to show that this algorithm runs in polynomial time. Clearly
the bookkeeping associated with recording the move history is polynomial
in the size of the position and the length of the plays we are simulating.
So it suffices to show that cycle-free plays in $\simGame(e,f)$ are of size
at most polynomial in $(e,f)$.
Consider one such path $(\position{\Gamma_i}{X_i}{f_i})_{i \le n}$.
The sequence $(\Gamma_i)_{i \le n}$ is monotonic for inclusion; as a result,
there are at most $\size{\Sigma}$ distinct $\Gamma_i$s along such a path.
All of the $f_i$ are spawns of $f$, and there are at most $\size{f}$ of those.
Since $e$ does not contain any $(-)^\diamond$, $X_i$ consists of the
minimal spawns in $\bigcup_{j \le i} X_j$ at every step. This union can contain
at most $\size{e}$ elements. As a result, cycle-free paths have length inferior
to $\size{\Gamma} \cdot \size{f} \cdot \size{e}^2$.
\end{proof}

This line of argumentation does not apply in the presence of iterations.

\begin{exa}
\label{ex:exp}
Consider the $n$-indexed family of reducible pairs of terms
\[ e_n = \left(\bigsqcap_{i = 0}^n (x_i \star y_i)\right)^\diamond  \quad \le\quad  \left(\bigsqcap_{i = 0}^{n} (x_i \sqcup y_i)\right)^\diamond = f_n\]
There are positional Duplicator winning strategies in $\psimGame(\pposition{\{e_n\}}{f_n})$,
and all such positional winning strategies have cycles of exponential size in $n$.

To analyze this situation, first note that Duplicator can never play
a \textsc{junk} move from any position $X \vdash f_n \star z$ for $z \in \{x_j, y_j\}$:
otherwise, by analysing the previous moves, we can see that Spoiler would force a
losing loop for Duplicator:
\[X \vdash f_n
\xrightarrow{\textsc{choose}}
X \vdash f_n \star \bigsqcap_{i = 0}^n x_i \sqcup y_i
\xrightarrow{\textsc{alea}}
X \vdash f_n \star (x_j \sqcup y_j)
\xrightarrow{\textsc{choose}}
X \vdash f_n \star z
\xrightarrow{\textsc{junk}}
X \vdash f_n\]
That there is a winning strategy is relatively easy to check: for
$I \subseteq \{0,\ldots, n\}$, let us consider the sets $X_{n, I}$
and $Y_{n, I}$
of terms given by
\[
X_{n, I} \quad =\quad
\{ e_n \star x_i \mid i \in I\}
\qquad\text{and}\qquad 
  Y_{n, I} \quad =\quad
\{ e_n \star x_i \star y_i \mid i \le n, i \in I\}
\]
Analyzing the unfolding of the initial Spoiler moves, 
we can see it suffices to define winning Duplicator strategies from all positions
$X_{n, I} \cup Y_{n, I'} \vdash f_n$ with $I \cup I' = \{0, \ldots, n\}$.
\begin{enumerate}
\item On any position of the shape $X_{n, I} \cup Y_{n, I'} \vdash f_n$, Duplicator
should first \textsc{choose}
to unfold $f_n$ to reach
$X_{n, I} \cup Y_{n, I'} \vdash f_n \star \bigsqcap_{i = 0}^n x_i \sqcup y_i$.
\item
\label{enumitem:expstratfill}
Then 
Spoiler plays an \textsc{alea} move that yields some $j \le n$ so that the game
goes to $X_{n, I} \cup Y_{n, I'} \vdash f_n \star (x_j \sqcup y_j)$.
\item Then Duplicator moves to
$X_{n, I} \cup Y_{n, I'} \vdash f_n \star x_j$
if $j \in I$. Otherwise if $j \in I'$, then Duplicator moves to
$X_{n, I} \cup Y_{n, I'} \vdash f_n \star y_j$.
\item
\label{enumitem:expstratend}
Then Duplicator plays the unique available \textsc{fill} move. If the
previous position was $X_{n, I} \cup Y_{n, I'} \vdash f_n \star y_j$, then
we reach the position $X_{n, I \cup \{j\}} \cup Y_{n, I' \setminus \{j\}}$.
Otherwise, depending on Spoiler's \textsc{explore} move, we either
reach the position $X_{n, I \setminus \{j\}} \cup Y_{n, I'} \cup \{1\} \vdash f_n$
(where Duplicator can end the game and win by choosing not to unfold $f_n$),
or, after a few \textsc{fork} moves, we reach $X_{n, I \setminus \{j\}} \cup Y_{n, \{0, \ldots, n\}} \vdash f_n$.
\end{enumerate}
The strategy outlined above is winning for Duplicator as it forces Spoiler
to play an \textsc{explore} move at step~\ref{enumitem:expstratend}.
One should note that any positional winning strategy for Duplicator essentially
must have the same behaviour: the differences can only happen at
steps~\ref{enumitem:expstratfill} and~\ref{enumitem:expstratend}:
\begin{itemize}
\item At step~\ref{enumitem:expstratfill}, one could play any of the two
alternatives if we have $j \in I \cap I'$.
\item At step~\ref{enumitem:expstratend}, it would be possible for Duplicator not
to end the game immediately on a position $X_{n, I \setminus \{j\}} \cup Y_{n, I'} \vdash f_n$
if and only if $j \in I'$.
\end{itemize}
Now let us show that such a winning positional Spoiler strategy must have exponential
cycles. To do so, we essentially want to show there is 
a single cycle with distinct positions $X_{n, I} \cup Y_{n, I'} \vdash f_n$ for \emph{all} possible
values of $I \subseteq \{0, \ldots n\}$ (in contrast, we won't have much control on $I'$),
which we can assume is enumerated by flipping membership of a single element at a time.
To do this, it suffices to show that, for any $I \subseteq \{0, \ldots, n\}$ and
$j \le n$, Spoiler can force the game to follow a path where $X_{n,I}$ becomes
$X_{n, I \mathrel{\Delta} \{j\}}$ (writing $\Delta$ for the symmetric difference operator)
without going through any other $X_{n, I'}$.

More formally, we need to show that for any $I, I' \subseteq \{0, \ldots, n\}$
(with $I \cup I' = \{0, \ldots, n\}$)
and $j \le n$, there exists a play $(Z_k \vdash f'_k)_{k = 0}^{K}$ in the strategy
with $Z_0 = X_{n, I} \cup Y_{n, I'}$, $f'_0 = f'_K = f_n$, $Z_K \cap X_{n, \{0, \ldots, n\}} = 
X_{n, I \mathrel{\Delta} \{j\}}$ and $Z_m \cap X_{n, \{0, \ldots, n\}} = X_{n,I}$ for $m < K$.
Since Duplicator is winning, we must have $Z_1 = Z_0$ and
\[f'_1 ~~=~~ f_n \star \bigsqcap_{i = 0}^n x_i \sqcup y_i\]
Spoiler plays the $j$th option, so that we get $Z_2 = Z_1$ and
$f'_2 ~~=~~ f_n \star (x_j \sqcup y_j)$.
Then, we have two scenarios:
\begin{itemize}
\item if $j \not\in I$, then $j \in I'$ and Duplicator must enforce
$Z_3 = Z_2$ and $f'_3 = f_n \star y_j$, and then after the subsequent \textsc{fill}
move, we get $Z_4 = X_{n, I \cup \{j\}} \cup Y_{n, I' \setminus \{j\}}$ and $f'_4 = f_n$,
which concludes the partial play.
\item otherwise $j \in I$. There are two possible scenarios
(where again $Z_3 = Z_2$):
\begin{itemize}
\item either $f'_3 = f_n \star x_j$. Then, following the \textsc{fill} move of
Duplicator, and then the \textsc{explore} and \textsc{fork} moves of Spoiler
that do not produce $1$, we end up with $Z_K = X_{n, I \setminus \{j\}} \cup Y_{n, \{0, \ldots, n\}}$.
\item otherwise, $f'_3 = f_n \star y_j$, in which case we must have $j \in I'$
as well. Following the \textsc{fill} move, we end up 
with $Z_4 = X_{n, I} \cup Y_{n, I' \setminus \{j\}}$ and $f'_4 = f_n$
again. In which case we can follow the same procedure again to complete the play, with termination
ensured
since $j$ has been removed from $I'$.
\end{itemize}
\end{itemize}
\end{exa}

\begin{exa}
Building on the previous example, note that we have that $\top \star e_n \not\le f_n$.
Hence there are \emph{some} Spoiler strategies of exponential length in $n$:
against the (obvious modification of the) winning Duplicator of the previous example,
Spoiler may force to enumerate $2^{n-1}$ positions $Z_i \vdash f_n$ with
$Z_i \cap X'_{n, \{0, \ldots, n\}} = X'_{n, I_i}$ with $\{I_i \mid i \le 2^{n-1}\} = \mathcal{P}(\{0, \ldots, n-1\})$,
assuming
\[
X'_{n, I} \quad =\quad \{ \top \star (e_n \star x_i) \mid i \in I\}
\]

Then, they can force going to a position $Z_{2^{n-1} + 1} \vdash f_n$ with
$Z_i \cap X'_{n, \{0, \ldots, n\}} = X'_{n, I_i \cup \{n\}}$, and then force Duplicator
to ``clear out $Z_i \setminus X'_{n, \{n\}}$'' before forcing the left-hand side to be 
$\empty$\footnote{Note that there are no cycles in such a winning strategy for Spoiler because of positionality, whence the complication}.
\end{exa}

Nevertheless, we still conjecture that the general problem might be
solvable in $\Pspace$.

\begin{conj}
Given a Spoiler winning strategy in $\simGame$, we can optimize it into a positional winning strategy that must reach a cycle in any play 
in polynomial time; therefore solving ``is $e \le f$ valid in $\Wei$?'' for arbitrary $e$ and
$f$ is doable in $\Pspace$.
\end{conj}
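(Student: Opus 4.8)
The plan is to exploit the fact that $\Pspace$ coincides with alternating polynomial time, exactly as in the proof of the preceding $\Pspace$ bound for $(-)^\diamond$‑free $e$, so that the only genuinely new ingredient is the strategy‑optimisation statement asserted in the conjecture. Using the game characterisation (\Cref{thm:maingame}, together with \Cref{thm:maingame-pointed} for the pointed/$\top$‑free readings), one reduces ``$e \le f$ valid in $\Wei$'' to ``Duplicator wins the associated simulation game''. Then one runs the following alternating procedure on that B\"uchi game, from its initial position: simulate a play, making an existential choice at each position of polarity $\duppol$ and branching universally at each position of polarity $\spopol$, while recording the sequence of positions visited so far; after every move, (i) if the current position already occurs in the history, halt and accept iff the cycle it closes in the history contains a position of $W_\simGame$; (ii) if the current position has no outgoing move, halt and accept iff it lies in $W_\simGame$; (iii) if the play has lasted $B$ moves without (i) or (ii) firing, where $B$ is a fixed polynomial in $\size{e}$, $\size{f}$ and $\size{\Sigma}$, halt and accept. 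Recording the history and deciding membership in $W_\simGame$ are polynomial, so the procedure runs in polynomial time as soon as one of (i)--(iii) always fires within $B$ moves.

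Correctness, modulo the optimisation, runs as follows. If Duplicator wins, fix by positional determinacy (\Cref{thm:positional-det}) a positional winning strategy $\sigma$ and have the procedure guess according to $\sigma$ at $\duppol$‑positions; every resulting play stays in the subgraph $G_\sigma$ of positions reachable while following $\sigma$, and every cycle of $G_\sigma$ meets $W_\simGame$ (otherwise Spoiler could confine the play to a $W_\simGame$‑free cycle, contradicting that $\sigma$ wins). Hence whenever (i) fires the procedure accepts, terminal plays consistent with $\sigma$ end in $W_\simGame$ and are accepted by (ii), and otherwise (iii) accepts — so all Spoiler branches accept. Conversely, if Spoiler wins, we invoke the conjectured optimisation to fix a positional Spoiler winning strategy $\tau$ such that every play consistent with $\tau$ revisits a position within $B$ moves; then, on the branch where Spoiler follows $\tau$, the play stays in the reachable subgraph $G_\tau$, clause (i) fires within $B$ moves, and the cycle it closes contains no position of $W_\simGame$ (a cycle of $G_\tau$ meeting $W_\simGame$ would let Duplicator visit $W_\simGame$ infinitely often, contradicting that $\tau$ wins), so the procedure rejects on that branch and therefore rejects. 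Note that this only asks for a short‑simple‑path property of a \emph{Spoiler} strategy — Duplicator strategies are absorbed by the timeout clause (iii) — which is exactly why the conjecture is stated only for Spoiler.

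Everything thus reduces to the optimisation lemma: from any Spoiler winning strategy in $\simGame(\position{\emptyset}{\{e\}}{f})$, construct a positional Spoiler winning strategy $\tau$ whose reachable subgraph $G_\tau$ has only polynomially long simple paths. Along any play the first component $\Gamma$ grows monotonically, hence changes at most $\size{\Sigma}$ times, and the right‑hand component ranges over the at most $2\,\size{f}$ subobjects of $f$ and strictly decreases in each second phase between \textsc{fill} moves; the entire difficulty is the middle component $X$, which the unfolding $g^\diamond \to g^\diamond \star g$ can drive arbitrarily far through the set of subobjects of $e$. I would try to prove that a well‑chosen Spoiler never needs to let $X$ carry more than polynomially much relevant information before either winning outright or returning to an already‑seen position — intuitively, once Spoiler holds a winning position it needs to keep only one ``thread'' alive per iteration subterm occurring in $e$ — and to make this precise using the \emph{places} apparatus of \Cref{sec:completeness}, which already provides a well‑founded, essentially polynomial‑height decomposition of winning positions.

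The main obstacle is precisely the phenomenon of \Cref{ex:exp} and the preceding conjecture: $G_\tau$ genuinely has exponentially many vertices and Spoiler's cyclic behaviour may be exponentially large, so the proof must establish that this blow‑up lives only in the \emph{branching} of $G_\tau$ — across the ways Duplicator resolves its choices — and never inside a single simple play, i.e.\ that however Duplicator plays against $\tau$ a repetition is forced within polynomially many moves. Converting the ``one thread per iteration subterm'' heuristic into an actual positional strategy that is simultaneously winning and has a polynomial simple‑path bound is the crux; an alternative, possibly cleaner, route would be to exhibit a polynomially‑sized B\"uchi game with the same winner as $\simGame(e,f)$, but choosing the right quotient runs into the same difficulty.
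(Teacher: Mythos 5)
This statement is a \emph{conjecture} in the paper: no proof is given there, so there is nothing to compare your argument against, and the honest verdict is that your proposal does not close it either. What you have done correctly is the ``therefore'' half: assuming a positional Spoiler winning strategy $\tau$ whose every consistent play revisits a position within a polynomial bound $B$, your alternating polynomial-time procedure (existential at $\duppol$, universal at $\spopol$, accept/reject on the first closed cycle according to whether it meets $W_\simGame$, accept on timeout) is sound. Your analysis of both directions is right: cycles in the reachable subgraph of a positional Duplicator winning strategy must meet $W_\simGame$ and cycles in that of a positional Spoiler winning strategy must avoid it, the timeout clause absorbs the Duplicator side, and only the Spoiler side needs the repetition bound. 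This mirrors the paper's own proof of the $\Pspace$ upper bound in the $(-)^\diamond$-free case, where the polynomial simple-path bound is obtained unconditionally from monotonicity of $\Gamma$, the subobject bound on the right component, and the fact that without iteration the middle component $X$ only shrinks through the subobjects of $e$.

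The genuine gap is exactly the first clause of the conjecture, which you leave as a heuristic: constructing, from an arbitrary Spoiler winning strategy, a positional one under which \emph{every} play closes a cycle within polynomially many moves. Your ``one thread per iteration subterm'' idea and the appeal to the places machinery of \Cref{sec:completeness} are plausible directions, but neither is developed into an argument, and the obstruction you yourself cite is real: \Cref{ex:exp} shows that the reachable subgraph of a Spoiler winning strategy can contain exponentially many positions and exponentially long cycles, so one must show that the blow-up is confined to Duplicator's branching and never to a single simple path --- a statement for which no mechanism is provided. (There is also a mild ambiguity you gloss over: ``reach a cycle in polynomial time'' could mean ``enter a position lying on some cycle of $G_\tau$'' rather than ``close a cycle,'' and only the latter reading makes clause (i) of your algorithm fire; you should make explicit that you are proving the latter.) As it stands, your proposal is a correct reduction of the conjecture to its own key claim, not a proof of the conjecture.
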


\subsection{Lower bounds}

We now turn to some easy hardness bounds obtained by reducing the standard $\Pspace$-complete
problem $\mathsf{TQBF}$ (\emph{True Quantified Boolean Formula})
to our problems.

\begin{lem}
\label{lem:pspace-hard-pointed}
Given $e,f \in \RegE{\Sigma}$ that do not contain $(-)^\diamond$ or units, solving ``is $e \le f$ valid in $\ptWei$?'' is $\Pspace$-hard.
\end{lem}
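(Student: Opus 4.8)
The plan is to reduce the canonical \PSPACE-complete problem TQBF (truth of totally quantified Boolean formulas) to the problem ``is $e \le f$ valid in $\ptWei$?'' where $e, f$ are $\diamond$-free and unit-free. Fix a closed QBF $\Phi = Q_1 x_1 \cdots Q_n x_n.\ \psi(x_1, \ldots, x_n)$ with $\psi$ in, say, CNF or DNF. The idea is that evaluating a QBF is itself an alternating reachability game — the existential player picks the truth value of each $\exists$-variable and the universal player picks the truth value of each $\forall$-variable — and \Cref{thm:maingame-pointed} tells us that $e \le f$ valid in $\ptWei$ is equivalent to Duplicator winning the finite game $\psimGame(\pposition{\{e\}}{f})$ (since there is no $\diamond$, all plays are finite, so the B\"uchi condition degenerates to a reachability condition). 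So it suffices to engineer $e$ and $f$ so that the simulation game $\psimGame(\pposition{\{e\}}{f})$ simulates the QBF evaluation game, with Duplicator playing the existential role and Spoiler the universal one — recalling from the polarity assignment that $\sqcup$ has polarity $\duppol$ (Duplicator-controlled) and $\sqcap$ has polarity $\spopol$ (Spoiler-controlled).

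First I would set up the alphabet: introduce, for each variable $x_i$, two letters $x_i$ and $\bar x_i$ of $\Sigma$ (representing the literals), plus possibly auxiliary letters. The term $f$ on the right will be a $\star$-chain of length $n+1$: a gadget $C_i$ for each quantifier block, where $C_i$ is of the form $(x_i \sqcup \bar x_i)$ if $Q_i = \forall$ (so that \emph{Duplicator}, controlling $\sqcup$ on the right, commits to which literal will be available — but since it is the simulator on the right, this is where Duplicator must \emph{respond} to Spoiler's choice, giving universal quantification; I need to be careful about which side of the game each connective lands on) and dually for $\exists$. The term $e$ on the left encodes the matrix $\psi$: if $\psi = \bigwedge_k \bigvee_\ell \ell_{k,\ell}$, then $e$ is built as a $\sqcap$ over clauses of $\sqcup$ over literals, each literal being a letter $x_i$ or $\bar x_i$, then $\star$-composed with a long chain threading all the $C_i$'s so that the ``trace'' of letters read along a play of $\psimGame$ records a full truth assignment on the right and a satisfying choice for $\psi$ on the left. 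The correspondence ``valid reduction $\iff$ step-by-step simulation is possible $\iff$ every assignment produced by the $\forall$-choices extends to a satisfying literal-selection'' is exactly the truth of $\Phi$. I would verify the reduction is polynomial-time (the terms $e, f$ have size linear in $|\Phi|$) and then invoke \Cref{thm:maingame-pointed} to transfer hardness from game-winning to $\ltW$-validity in $\ptWei$.

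The main obstacle — and where most of the care goes — is getting the polarity bookkeeping exactly right and plugging the ``leak'' inherent in the simulation game: a \textsc{junk} move lets Duplicator fire an oracle call on any letter already in $\Gamma$ and discard the output, so a naive encoding would let Duplicator cheat by reading a literal it has not legitimately ``committed'' to. I would block this either by using fresh synchronization letters between gadgets so that the only way to make progress on the right is to consume the correct literal in the correct order, or by padding each literal occurrence so that \textsc{junk}-ing it is never helpful. A secondary subtlety is that in $\psimGame$ the set $X$ on the left can grow (via \textsc{fork} on $\sqcap$), so I must ensure the matrix gadget forces Spoiler to fully decompose the clause-conjunction \emph{before} the literal-reading phase, matching the ``first phase / second phase'' structure described after \Cref{def:psim-game}; this is what makes the clause structure of $\psi$ interact correctly with the quantifier prefix. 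Once these gadgets are pinned down, checking that Duplicator wins iff $\Phi$ is true is a routine induction on the quantifier prefix, which I would state but not belabor.
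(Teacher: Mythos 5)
Your high-level plan---reduce TQBF, invoke \Cref{thm:maingame-pointed}, put the quantifier prefix on the right as a $\star$-chain of binary lattice gadgets over literal letters and the matrix on the left---is the same as the paper's, but the matrix encoding you propose is unsound, and the missing idea is exactly what makes the paper's reduction work. First, the polarity bookkeeping you defer has a definite answer: on the right, $\sqcup$ is resolved by Duplicator (\textsc{choose}) and $\sqcap$ by Spoiler (\textsc{alea}), so $\sqcup$ encodes $\exists$ and $\sqcap$ encodes $\forall$, the opposite of what you wrote. On the left, $\sqcap$ is \textsc{fork}ed into parallel simulation attempts of which Duplicator need complete only one (so it behaves like $\vee$), while $\sqcup$ is resolved by Spoiler via \textsc{explore} (so it behaves like $\wedge$); hence your ``$\sqcap$ over clauses of $\sqcup$ over literals'' computes the De Morgan dual of your CNF. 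Even after dualizing to a DNF matrix, encoding the inner conjunction by a lattice connective fails for a structural reason: all \textsc{explore}/\textsc{fork} moves must be played before the right-hand term moves at all, so Spoiler is forced to commit to one literal per disjunct \emph{before} the quantifier prefix is played, which changes the value of the game. Concretely, for $\phi = \exists x\,\forall y\,\bigl((x\wedge y)\vee(\bar x\wedge\bar y)\bigr)$, which is false, Duplicator wins your game: whichever single literal Spoiler selects from each of $\{x,y\}$ and $\{\bar x,\bar y\}$, Duplicator can pick the value of $x$ to make a selected literal true (and for the selection $(y,\bar y)$ it wins for free). The paper instead encodes each disjunct as a $\star$-chain $l_{i,1}\star\cdots\star l_{i,m_i}$ of its literals, ordered consistently with the prefix, so that the literals are checked one at a time, interleaved with the quantifier moves, and the falsifying literal is found adaptively.

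A second gap: a single unstarred gadget $x_i\sqcup\bar x_i$ on the right can feed only one \textsc{fill} move, whereas Duplicator must in general keep several disjunct-simulations alive in parallel and therefore consume the same literal several times. The paper uses $l^\diamond$ in the gadgets and only at the end unfolds each $l^\diamond$ into an $N$-fold product, with $N$ exceeding the number of occurrences of $l$ on the left, to land in the $\diamond$-free fragment. By contrast, your worry about \textsc{junk} is largely moot in the pointed game: there $\Gamma=\Sigma$ throughout, so \textsc{junk} is always available, but it never consumes anything from $X$ and so cannot help Duplicator discharge a simulation obligation; no synchronization letters are needed. Until the matrix gadget is replaced by the $\star$-chain encoding and the multiplicity issue is handled, the reduction as sketched does not establish the claimed equivalence with TQBF.
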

\begin{proof}
Consider a quantified boolean formula
\[ \phi = \forall x_1 \exists x_2 \ldots \exists x_{2n} ~~ \bigvee_{i=1}^k \bigwedge_{j = 1}^{m_i} l_{i,j}\]
with the $l_{i,j}$ being literals in $\cL = \{x_1, \overline{x_1}, \ldots, x_{2n}, \overline{x_{2n}}\}$. Let us order literals by setting $x_1 < \overline{x_1} < \ldots < x_n$
and assume that for every $i \in I$, the $(l_{i,j})_{j = 1}^{m_i}$ sequence is antitone.
We now claim that $\phi$ holds if and only if the following is valid in the pointed
Weihrauch degrees:
\[
\bigsqcap\limits_{i = 1}^k \Compo\limits_{j=1}^{m_i} l_{i,j} \quad \le \quad
\left(x_{2n}^\diamond \sqcup \overline{x_{2n}}^\diamond\right)
 \star 
 \ldots \star
\left(x_2^\diamond \sqcup \overline{x_2}^\diamond\right)
 \star
 \left(x_1^\diamond \sqcap \overline{x_1}^\diamond\right)
\]
The idea is that a play in $\psimGame$ will determine a valuation according
to the quantifier alternation and that an optimal strategy for Duplicator
will be to remove all exposed occurrence of a literal $l$ is the right
hand-side of the position has shape $\ldots \star l^\diamond$.

We can suppress the $(-)^\diamond$ on the right hand-side term as follows:
replace any occurrence of $l^\diamond$ with a $N$-fold product
$(1 \sqcup l) \star \ldots \star (1 \sqcup l)$ for $N$ larger than the
number of occurrences of $l$ in the left hand-side.
\end{proof}

The lemma that follows could be derived as an immediate corollary of
\Cref{lem:pspace-hard-pointed}. But we give an alternative proof, which
exploits the fact that degrees are not necessarily pointed.

\begin{lem}
\label{lem:pspace-hard}
Given $e,f \in \RegE{\Sigma}$, solving ``is $e \le f$ valid in $\Wei$?'' is 
$\Pspace$-hard, even if $e$ and $f$ do not contain $(-)^\diamond$ or units.
\end{lem}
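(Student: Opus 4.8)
The plan is to reduce TQBF, reusing the skeleton of the encoding in \Cref{lem:pspace-hard-pointed} but trading the iterations $(-)^\diamond$ and the hidden identities for a handful of fresh letters; this is where non-pointedness enters, through \Cref{thm:maingame}, whose ``only if'' direction produces a genuinely non-pointed refuting interpretation. Given a closed quantified formula $\phi = \forall x_1 \exists x_2 \cdots \exists x_{2n}\, \bigvee_{i=1}^{k}\bigwedge_{j=1}^{m_i} l_{i,j}$ over the literals $\cL = \{x_1,\overline{x_1},\ldots,x_{2n},\overline{x_{2n}}\}$, introduce a fresh letter $z_m$ for each $m \le 2n$, fix $N$ large enough in terms of $\phi$ (for instance $N$ twice the number of clauses), and set
\[
e_\phi \;=\; \bigsqcap_{i=1}^{k}\bigl(\mu_{i,2n} \star \cdots \star \mu_{i,1}\bigr),
\qquad
f_\phi \;=\; H_{2n}\star\cdots\star H_1,
\]
where $\mu_{i,m} = l \star z_m$ if the literal $l$ on $x_m$ occurs in clause $i$ and $\mu_{i,m} = z_m$ otherwise, and $H_m = B_m^{+} \sqcap B_m^{-}$ when $x_m$ is universally quantified while $H_m = B_m^{+}\sqcup B_m^{-}$ when it is existentially quantified, with $B_m^{+}$ the $N$-fold composition of $(z_m \sqcup x_m)$ with itself and $B_m^{-}$ the $N$-fold composition of $(z_m \sqcup \overline{x_m})$ with itself. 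Both $e_\phi$ and $f_\phi$ have size polynomial in $\phi$ and contain neither $(-)^\diamond$, nor a unit, nor $0$ or $\top$; the factor $B_m^{\pm}$ plays the role that $x_m^\diamond$ resp.\ $\overline{x_m}^\diamond$ plays in \Cref{lem:pspace-hard-pointed} — it lets a simulator replay the relevant literal between $0$ and $N$ times — and the padding letters $z_m$, which occur in \emph{every} slot of \emph{every} clause, make that replaying legal since they guarantee a discardable $z_m$-instance is always at hand.

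I would then show that $\phi$ is true iff $e_\phi \le f_\phi$ is valid in $\Wei$ by inspecting $\simGame(\position{\emptyset}{\{e_\phi\}}{f_\phi})$ and invoking \Cref{thm:maingame}. A play opens with a forced sequence of \textsc{fork} moves breaking up $\bigsqcap_i$, after which all clause-terms sit in the working set; the right-hand factors are then resolved in the order $H_1, H_2, \ldots, H_{2n}$, i.e.\ in quantifier order, the branch of $H_m$ being picked by Duplicator via \textsc{choose} when $x_m$ is existential and by Spoiler via \textsc{alea} when it is universal; meanwhile \textsc{fill} and \textsc{junk} moves let Duplicator advance in parallel exactly those clause-terms whose literals so far agree with the valuation being built, discarding the surplus $z_m$-calls. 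A clause-term can be driven to $1$ precisely when the resulting valuation satisfies that clause, so Duplicator has a winning strategy iff the $\exists$-player wins the evaluation game of $\phi$, that is, iff $\phi$ holds. By \Cref{thm:maingame} this is equivalent to validity in the extended Weihrauch degrees, hence a fortiori to validity in $\Wei$; and when $\phi$ fails, Spoiler wins the game, so \Cref{cor:spo-strat-to-nonred} refutes $e_\phi \le f_\phi$ already over the generic interpretation $G$ of \Cref{def:generic-interp}, which is a non-pointed family of ordinary Weihrauch problems — precisely the ingredient the pointed setting lacks and the reason we need not detour through $\psimGame$.

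The bulk of the work, and the main obstacle, is this play-to-valuation correspondence. Three points must be checked carefully: that the order of oracle calls agrees on the two sides, handled as in \Cref{lem:pspace-hard-pointed} by listing each clause's literals antitonically and placing the genuine call inside each $N$-fold product appropriately while junking the rest; that keeping several clause-terms alive simultaneously grants Duplicator no more power than the $\exists$-player — it does not, since a clause-term is irrevocably stuck the moment one of the literals it needs is made false, independently of the others, and conversely $N$ being large enough means Duplicator never runs out of factors to junk; and that the B\"uchi winning condition behaves as expected, which here is immediate because $e_\phi$ and $f_\phi$ are $(-)^\diamond$-free, so after the opening every play is finite and the winner is read off its terminal last position. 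None of these points is deep, but assembling them is the real content of the proof; once they are in place, \Cref{thm:maingame} closes the reduction and, since TQBF is $\Pspace$-complete, gives $\Pspace$-hardness.
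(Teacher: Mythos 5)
Your reduction is correct, but it is not the one the paper uses. The paper's proof of \Cref{lem:pspace-hard} keeps the clause matrix on the \emph{right} as $t_\phi=\bigsqcup_i\Compo_j l_{i,j}$, places matched quantifier gadgets $e_x=(x\star\bullet)\sqcap(\overline x\star\bullet)$ and $e'_x=(x\sqcup\overline x)\star\bullet$ on the two sides, and uses a single fresh letter $\bullet$ as a scarce resource: correctness there hinges on Duplicator having only $n+1$ occurrences of $\bullet$ available on the left, which is what prevents hedging between the two branches of an existential gadget and is the sense in which that proof ``exploits non-pointedness''. You instead port the reduction of \Cref{lem:pspace-hard-pointed} (clauses on the left under $\bigsqcap$, the quantifier prefix on the right) and replace both the iterations and the forbidden units by $N$-fold products over fresh padding letters $z_m$; these letters simultaneously stand in for the $1$ of the paper's own de-iteration remark (``replace $l^\diamond$ by $(1\sqcup l)\star\cdots\star(1\sqcup l)$'') and bootstrap the \textsc{junk} moves from the initially empty $\Gamma$. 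Your route avoids the $\bullet$-counting argument entirely, because the existential choice is a single \textsc{choose} between $B_m^+$ and $B_m^-$ that cannot be hedged, at the cost of $2n$ extra letters and the padding $N=2k$ (which does suffice: at level $m$ one needs at most one $z_m$-fill and one literal-fill per clause kept alive). The paper itself notes that \Cref{lem:pspace-hard} ``could be derived as an immediate corollary of \Cref{lem:pspace-hard-pointed}''; your proof is essentially a careful implementation of that remark under the unit-free restriction, closed off through \Cref{thm:maingame} and \Cref{cor:spo-strat-to-nonred} exactly as needed for $\Wei$ rather than $\ptWei$. One phrase to sharpen: a discardable $z_m$-instance is not literally ``always at hand'' --- the first $z_m$-call inside $H_m$ must be a genuine \textsc{fill} (only then does $z_m$ enter $\Gamma$), which is possible precisely when some clause-term is still alive at level $m$; when none is, Duplicator is stuck, but that is the intended outcome since the valuation built so far already falsifies every clause.
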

\begin{proof}
We again reduce $\mathsf{TQBF}$ to the problem in question; so consider an arbitrary
quantified boolean formula $\phi$ as in the proof of~\Cref{lem:pspace-hard-pointed}
(with the same conventions for variables, literals) and let us turn that
into terms of $u_\phi, v_\phi \in \RegE{\cL \cup \{\bullet\}}$ %
defined as follows, with $\bullet$ being a fresh variable:
\[
\begin{array}{llclclcl}
& u_\phi &=& e_{x_{2n}} \star \ldots \star e_{x_2} \star a_{x_1} \star \bullet &\qquad
 \text{and} \qquad& v_\phi &=& t_\phi \star e'_{x_{2n}} \star \ldots e'_{x_2} \star a_{x_1} \star \bullet\\\\
\text{with} &
e_x &=& (x \star \bullet) \sqcap (\overline{x} \star \bullet)
& &
a_x &=& x \sqcup \overline{x} \\
& e'_x &=& (x \sqcup \overline{x}) \star \bullet & &
t_\phi &=& \bigsqcup\limits_{i = 1}^k \Compo\limits_{j = 1}^{m_i} l_{i,j} \\
\end{array}
 \]
Clearly $u_\phi$ and $v_\phi$ can be produced in linear time from $\phi$.
Now we claim that $u_\phi \le v_\phi$ holds in the Weihrauch degrees if and only
if $\phi$ holds. The high-level idea is that any play in the simulation
game must first step through the matching pairs of terms $a_x$ or $e_x$ and $e'_x$
simultaneously, determining a valuation through the set of literals that are
simulated along the way. The terms are designed to mimic the behaviour of
quantifiers in $\simGame$. Duplicator cannot hope to win by attempting both
branches of an $e_l$ because they only have $n+1$ $\bullet$ available to them,
so without loss of generality, we can omit this behaviour.
Then, for any valuation $\rho$ seen as a subset
of $\cL$, we have that the reached position $\position{\rho \cup \{\bullet\}}{\{1, \zeta_2 \star \bullet, \ldots, \zeta_{2n} \star \bullet\}}{t_\phi}$ is
winning for Duplicator if and only if
$\rho$ satisfies the body of $\phi$.
\end{proof}

\begin{lem}
\label{lem:coNP-hard}
Given $e,f \in \RegE{\Sigma}$ that do not contain $\sqcap$, $(-)^\diamond$ or units, solving ``is $e \le f$ valid in $\Wei$?'' is $\coNP$-hard.
\end{lem}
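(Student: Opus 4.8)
The plan is to reduce from the $\coNP$-complete problem \textsc{DNF-Tautology}: given a disjunctive normal form formula $\psi = \bigvee_{i=1}^{k}\bigwedge_{j=1}^{m_i} l_{i,j}$ over variables $x_1,\dots,x_n$ (with the $l_{i,j}$ literals), decide whether $\psi$ is a tautology. (This is $\coNP$-complete because, via De~Morgan, $\neg\psi$ is an instance of CNF-SAT; equivalently $\forall x_1\cdots\forall x_n\,\psi$ is a $\coNP$-complete formula.) The construction is the purely-universal specialization of the one in the proof of \Cref{lem:pspace-hard}: as there is no existential quantifier block, the existential gadgets — which are the only place a $\sqcap$ occurs — disappear entirely. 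Concretely, fix the order $x_1 < \overline{x_1} < \dots < x_n < \overline{x_n}$, assume without loss of generality that no clause contains a complementary pair of literals and that each clause lists its literals in decreasing order (degenerate cases such as $n=0$, an empty clause, or $k=0$ are handled directly), and set $a_x = x \sqcup \overline{x}$ and $t_\psi = \bigsqcup_{i=1}^{k}\Compo_{j=1}^{m_i} l_{i,j}$. Over the alphabet $\cL = \{x_1,\overline{x_1},\dots,x_n,\overline{x_n}\}$, put
\[
u_\psi ~=~ a_{x_n}\star a_{x_{n-1}}\star\dots\star a_{x_1}
\qquad\text{and}\qquad
v_\psi ~=~ t_\psi \star a_{x_n}\star a_{x_{n-1}}\star\dots\star a_{x_1}.
\]
These terms use only $\star$ and $\sqcup$, so they contain no $\sqcap$, no $(-)^\diamond$ and no units, and they are computable from $\psi$ in linear time. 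I claim that $u_\psi \le v_\psi$ is valid in $\Wei$ if and only if $\psi$ is a tautology, which yields the lemma.

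To prove the claim I would use \Cref{thm:maingame}, together with the fact — implicit throughout \Cref{sec:complexity}, and a consequence of the generic interpretation $G$ of \Cref{def:generic-interp} taking values in genuine Weihrauch problems — that validity in $\Wei$ is equivalent to Duplicator winning $\simGame(\position{\emptyset}{\{u_\psi\}}{v_\psi})$. Since $u_\psi$ contains no $\sqcap$, the left component of every reachable position is a singleton, and a play of the game consists of a \emph{quantifier phase} followed by a \emph{verification phase}. In the quantifier phase the two matching suffixes $a_{x_n}\star\dots\star a_{x_1}$ are processed innermost-first in lockstep: for each $i$, Spoiler — who controls \textsc{explore}, since an unresolved $a_{x_i}$ on the left has polarity $\duppol$ — must pick a literal $l_i\in\{x_i,\overline{x_i}\}$ on the left; Duplicator is then forced to echo $l_i$ on the right with a \textsc{choose} move, and a \textsc{fill} move records $l_i$ in $\Gamma$. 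Thus the phase ends in the position $\position{\rho}{\{1\}}{t_\psi}$, where $\rho = \{l_1,\dots,l_n\}$ is the valuation chosen by Spoiler. In the verification phase Duplicator — now controlling the \textsc{choose} moves on the join $t_\psi$ — commits to a single clause $C_i = l_{i,1}\star\dots\star l_{i,m_i}$; from there the only moves available are \textsc{junk} moves that strip the literals of $C_i$ one by one, each legal exactly when the literal in question belongs to $\Gamma = \rho$. So, by the same computation as in \Cref{lem:pspace-hard}, the position $\position{\rho}{\{1\}}{t_\psi}$ is winning for Duplicator iff some clause $C_i$ satisfies $C_i\subseteq\rho$, i.e. iff $\rho\models\psi$. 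Combining the two phases: Duplicator wins the game iff $\rho\models\psi$ for every valuation $\rho$, i.e. iff $\psi$ is a tautology.

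The only genuinely delicate point — and it is dealt with exactly as in the proof of \Cref{lem:pspace-hard} — is verifying that Duplicator has no useful way to deviate from the honest play in the quantifier phase: a \textsc{choose} move that fails to match Spoiler's literal leaves a dead position that is not accepting, and no \textsc{junk} move is ever available on the right during that phase, since $\Gamma$ only ever contains literals of already-processed variables (here the absence of $\sqcap$ makes the analysis strictly simpler than in \Cref{lem:pspace-hard}, as no parallel simulation threads can arise). Beyond that, the argument is a routine inspection of the moves of $\simGame$, in the style of the case analyses in \Cref{lem:dup-strat-to-red} and \Cref{cor:spo-strat-to-nonred}; I expect the write-up to be short, the only real work being to record the game trace carefully enough to see that neither player can profitably depart from the play described above.
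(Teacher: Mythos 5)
Your proposal is correct and matches the paper's proof, which is exactly the two-line observation that restricting the reduction of \Cref{lem:pspace-hard} to $\Pi_1$ instances of TQBF (i.e.\ DNF-tautology) eliminates the existential gadgets, which were the only source of $\sqcap$. The only cosmetic difference is that you also drop the auxiliary letter $\bullet$, which is harmless since it was only needed to budget Duplicator's parallel attempts on the $\sqcap$-gadgets.
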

\begin{proof}
  We can reduce from $\mathsf{TQBF}$ restricted to $\Pi_1$ formulas.
  Then the same reduction as in~\Cref{lem:pspace-hard} works,
  as we only used $\sqcap$ to simulate existential quantifiers.
\end{proof}

\begin{oprob}
Is the bound given in \Cref{lem:coNP-hard} tight? If yes, is it also the case when we allow
$(-)^\diamond$?
\end{oprob}

\section{Conclusion}
\label{sec:conc}

We have given a complete axiomatization $\SRKAM$ of universal validity of single
inequalities in the partial Weihrauch degrees and the (usual) pointed
Weihrauch degrees for terms built from the lattice and composition operators.
By way of our game characterization, we also showed that we can decide
whether a given inequality is valid (or, equivalently, if it is derivable in
$\SRKAM$) and gave some easy complexity bounds.
Further analysis of the simulation game $\simGame$ might be required to get tight bounds.

Future work could attempt to generalize elements of
\Cref{thm:mainloop} across several dimensions. We already touched on one dimension
in our discussion of related works, which is the wider applicability of $\SRKAM$
to other notions of degrees than partial Weihrauch degrees. There are also two other
of improvements one could target, irrespective of the degree structure to be studied.
\begin{itemize}
\item One could try to add more operators in the signature. One natural
option worth considering for an extension would be the parallel product $\times$\footnote{
Defined by $\dom(P \times Q) \simeq \dom(P) \times \dom(Q)$ and $(P \times Q)(\tuple{u,v}) \simeq P(u) \times Q(v)$.}, which might lead to consider (a skewed
version of) concurrent Kleene algebras~\cite{hoare2009concurrent} with distributive meets.
Having
a notion of simulation between branching automata~\cite{lodaya1998series},
might be helpful.
When it comes to the axiomatizing this, a roadblock is that
we are currently not aware of a complete axiomatization of the $\times, \sqcap$ fragment
of Weihrauch degrees~\cite{NPP24}, and believe this problem to be challenging.

Another compelling extension would be to target signatures allowing for general
fixpoint operators on well-chosen terms, as hinted in~\cite{PP26}. In particular,
enriching our present signature with least fixpoints of ``fibred'' terms would allow us to
capture all problems definable by automata (as per~\Cref{def:automaton}). Those
terms should admit a natural axiomatization, which we hope can be shown to be
complete using the same techniques as in~\Cref{sec:completeness}. We further expect that adding
greatest fixpoints would yield a correspondence with alternating parity automata over infinite
words. Here we would hope that one may improve and adapt the techniques both in~\Cref{sec:game} and
~\Cref{sec:completeness} to this setting. Finally, adding on top of that $\zeta$
fixpoints as introduced in~\cite{PP26} would allow us to capture the infinite iteration
operator introduced in~\cite{brattka2025loops}. This should presumably lead to
a correspondence with alternating automata working on countable ordinals.
Merely adapting the results of~\Cref{sec:game} to that setting already sounds like
a compelling project.
\item We gave an axiomatization that covers single (in)equations and showed
that validity was decidable.
In contrast, checking the validity of
general first-order formulas in the Weihrauch degrees is extremely
undecidable~\cite[Theorem 1.9]{lmpsv}. It might be interesting to look at intermediate cases,
such as the Horn theories induced by a selected set of operators. For instance, one
result is that the Horn theory of Kleene algebras is $\Pi^1_1$-complete~\cite{kozen1997complexity}. Is this also the
case for $\SRKAM$?
\end{itemize}

\section*{Acknowledgments}
\noindent I want to thank Eike Neumann, Arno Pauly and Manlio Valenti for
discussions about this work, and for inspiring it in the first place. I want to
also thank Damien Pous for discussions on related work on simulations. And finally,
many thanks to the anonymous reviewers whose comments improved substantially the
presentation of this paper.

\bibliographystyle{alphaurl}
\bibliography{biblio}

\newcommand{\etalchar}[1]{$^{#1}$}
\begin{thebibliography}{HMSW09}

\bibitem[AB24]{AhmanBauer24}
Danel Ahman and Andrej Bauer.
\newblock Comodule representations of second-order functionals, 2024.
\newblock URL: \url{https://arxiv.org/abs/2409.17664}, \href
  {https://arxiv.org/abs/2409.17664} {\path{arXiv:2409.17664}}.

\bibitem[AB26]{AhmanBauer26}
Danel Ahman and Andrej Bauer.
\newblock Sheaves as oracle computations, 2026.
\newblock URL: \url{https://arxiv.org/abs/2602.22135}, \href
  {https://arxiv.org/abs/2602.22135} {\path{arXiv:2602.22135}}.

\bibitem[AdFGI14]{AcetoFGI14}
Luca Aceto, David de~Frutos{-}Escrig, Carlos Gregorio{-}Rodr{\'{\i}}guez, and
  Anna Ing{\'{o}}lfsd{\'{o}}ttir.
\newblock Axiomatizing weak simulation semantics over {BCCSP}.
\newblock {\em Theor. Comput. Sci.}, 537:42--71, 2014.
\newblock \href {https://doi.org/10.1016/J.TCS.2013.03.013}
  {\path{doi:10.1016/J.TCS.2013.03.013}}.

\bibitem[ASM26]{abou2026order}
Jean Abou-Samra and David~Alexander Madore.
\newblock An order-reversing embedding of turing degrees into
  arthur-nimue-merlin degrees.
\newblock 2026.
\newblock URL: \url{https://arxiv.org/abs/2603.19946}.

\bibitem[Bau22]{Bauer22}
Andrej Bauer.
\newblock Instance reducibility and {W}eihrauch degrees.
\newblock {\em Logical Methods in Computer Science}, 18(3), 2022.
\newblock \href {https://doi.org/10.46298/LMCS-18(3:20)2022}
  {\path{doi:10.46298/LMCS-18(3:20)2022}}.

\bibitem[BCG07]{BaetenCG07}
Jos C.~M. Baeten, Flavio Corradini, and Clemens Grabmayer.
\newblock A characterization of regular expressions under bisimulation.
\newblock {\em J. {ACM}}, 54(2):6, 2007.
\newblock \href {https://doi.org/10.1145/1219092.1219094}
  {\path{doi:10.1145/1219092.1219094}}.

\bibitem[BGP21]{survey-brattka-gherardi-pauly}
Vasco Brattka, Guido Gherardi, and Arno Pauly.
\newblock {\em {W}eihrauch Complexity in Computable Analysis}, pages 367--417.
\newblock Springer International Publishing, Cham, 2021.
\newblock \href {https://doi.org/10.1007/978-3-030-59234-9\_11}
  {\path{doi:10.1007/978-3-030-59234-9\_11}}.

\bibitem[BK08]{baierkatoen}
Christel Baier and Joost{-}Pieter Katoen.
\newblock {\em Principles of model checking}.
\newblock {MIT} Press, 2008.

\bibitem[BP18]{paulybrattka4}
Vasco Brattka and Arno Pauly.
\newblock On the algebraic structure of {W}eihrauch degrees.
\newblock {\em Logical Methods in Computer Science}, 14(4), 2018.
\newblock \href {https://doi.org/10.23638/LMCS-14(4:4)2018}
  {\path{doi:10.23638/LMCS-14(4:4)2018}}.

\bibitem[Bra25]{brattka2025loops}
Vasco Brattka.
\newblock Loops, inverse limits and non-determinism, 2025.
\newblock URL: \url{https://arxiv.org/abs/2501.17734}, \href
  {https://arxiv.org/abs/2501.17734} {\path{arXiv:2501.17734}}.

\bibitem[DD25]{dasdeomega25}
Anupam Das and Abhishek De.
\newblock {Right-Linear Lattices: An Algebraic Theory of $\omega$-Regular
  Languages, with Fixed Points}.
\newblock In {\em MFCS 2025}, volume 345 of {\em Leibniz International
  Proceedings in Informatics (LIPIcs)}, pages 39:1--39:17, 2025.
\newblock URL: \url{https://arxiv.org/abs/2505.10303}.

\bibitem[DDP18]{lhcka22}
Anupam Das, Amina Doumane, and Damien Pous.
\newblock Left-handed completeness for {K}leene algebra, via cyclic proofs.
\newblock In Gilles Barthe, Geoff Sutcliffe, and Margus Veanes, editors, {\em
  {LPAR-22}}, volume~57 of {\em EPiC Series in Computing}, pages 271--289,
  2018.
\newblock \href {https://doi.org/10.29007/HZQ3} {\path{doi:10.29007/HZQ3}}.

\bibitem[DM79]{multisetwf}
Nachum Dershowitz and Zohar Manna.
\newblock Proving termination with multiset orderings.
\newblock In Hermann~A. Maurer, editor, {\em Automata, Languages and
  Programming, 6th Colloquium, Graz, Austria, July 16-20, 1979, Proceedings},
  volume~71 of {\em Lecture Notes in Computer Science}, pages 188--202.
  Springer, 1979.
\newblock \href {https://doi.org/10.1007/3-540-09510-1\_15}
  {\path{doi:10.1007/3-540-09510-1\_15}}.

\bibitem[DP02]{davey2002introduction}
Brian~A Davey and Hilary~A Priestley.
\newblock {\em Introduction to lattices and order}.
\newblock Cambridge university press, 2002.

\bibitem[FH23]{2023regulargameschapter}
Nathanaël Fijalkow and Florian Horn.
\newblock Regular games.
\newblock In Nathanaël Fijalkow, editor, {\em Games on Graphs}, volume
  abs/2305.10546v1. Online, 2023.
\newblock v1.
\newblock \href {https://arxiv.org/abs/2305.10546v1}
  {\path{arXiv:2305.10546v1}}, \href
  {https://doi.org/10.48550/ARXIV.2305.10546}
  {\path{doi:10.48550/ARXIV.2305.10546}}.

\bibitem[GF20]{GrabmayerF20}
Clemens Grabmayer and Wan~J. Fokkink.
\newblock A complete proof system for 1-free regular expressions modulo
  bisimilarity.
\newblock In Holger Hermanns, Lijun Zhang, Naoki Kobayashi, and Dale Miller,
  editors, {\em {LICS} '20: 35th Annual {ACM/IEEE} Symposium on Logic in
  Computer Science, Saarbr{\"{u}}cken, Germany, July 8-11, 2020}, pages
  465--478. {ACM}, 2020.
\newblock \href {https://doi.org/10.1145/3373718.3394744}
  {\path{doi:10.1145/3373718.3394744}}.

\bibitem[GT23]{golov2023embeddings}
Anton Golov and Sebastiaan~A Terwijn.
\newblock Embeddings between partial combinatory algebras.
\newblock {\em Notre Dame Journal of Formal Logic}, 64(1):129--158, 2023.
\newblock URL: \url{https://arxiv.org/abs/2204.03553v2}.

\bibitem[HMSW09]{hoare2009concurrent}
Tony Hoare, Bernhard M{\"o}ller, Georg Struth, and Ian Wehrman.
\newblock Concurrent {K}leene algebra.
\newblock In {\em CONCUR 2009-Concurrency Theory: 20th International
  Conference, CONCUR 2009, Bologna, Italy, September 1-4, 2009. Proceedings
  20}, pages 399--414. Springer, 2009.

\bibitem[HP13]{paulykojiro}
Kojiro Higuchi and Arno Pauly.
\newblock The degree-structure of {W}eihrauch-reducibility.
\newblock {\em Logical Methods in Computer Science}, 9(2), 2013.
\newblock \href {https://doi.org/10.2168/LMCS-9(2:2)2013}
  {\path{doi:10.2168/LMCS-9(2:2)2013}}.

\bibitem[Kih22]{kihara2022rethinking}
Takayuki Kihara.
\newblock Rethinking the notion of oracle: a prequel to {L}awvere-{T}ierney
  topologies for computability theorists.
\newblock 2022.
\newblock URL: \url{arXiv preprint arXiv:2202.00188}.

\bibitem[Kih23]{kihara2023lawvere}
Takayuki Kihara.
\newblock {L}awvere-{T}ierney topologies for computability theorists.
\newblock {\em Transactions of the American Mathematical Society, Series B},
  10(02):48--85, 2023.
\newblock URL: \url{https://arxiv.org/abs/2106.03061}.

\bibitem[KN26]{kihara2026katvetov1}
Takayuki Kihara and Ming Ng.
\newblock The game-theoretic kat\v{e}tov order and idealised effective
  subtoposes, 2026.
\newblock URL: \url{https://arxiv.org/abs/2602.08138}, \href
  {https://arxiv.org/abs/2602.08138} {\path{arXiv:2602.08138}}.

\bibitem[Koz97]{kozen1997complexity}
Dexter Kozen.
\newblock On the complexity of reasoning in {K}leene algebra.
\newblock In {\em Proceedings of Twelfth Annual IEEE Symposium on Logic in
  Computer Science}, pages 195--202. IEEE, 1997.

\bibitem[KS20]{LKA-KozenSilva}
Dexter Kozen and Alexandra Silva.
\newblock Left-handed completeness.
\newblock {\em Theor. Comput. Sci.}, 807:220--233, 2020.
\newblock \href {https://doi.org/10.1016/J.TCS.2019.10.040}
  {\path{doi:10.1016/J.TCS.2019.10.040}}.

\bibitem[LMP{\etalchar{+}}24]{lmpsv}
Steffen Lempp, Joseph Miller, Arno Pauly, Mariya Soskova, and Manlio Valenti.
\newblock Minimal covers in the {W}eihrauch degrees.
\newblock {\em Proceedings of the American Mathematical Society},
  152(11):4893--4901, 2024.
\newblock URL: \url{https://arxiv.org/abs/2311.12676}.

\bibitem[LW98]{lodaya1998series}
Kamal Lodaya and Pascal Weil.
\newblock Series-parallel posets: algebra, automata and languages.
\newblock In {\em Annual Symposium on Theoretical Aspects of Computer Science},
  pages 555--565. Springer, 1998.

\bibitem[Mil84]{Milner84}
Robin Milner.
\newblock A complete inference system for a class of regular behaviours.
\newblock {\em J. Comput. Syst. Sci.}, 28(3):439--466, 1984.
\newblock \href {https://doi.org/10.1016/0022-0000(84)90023-0}
  {\path{doi:10.1016/0022-0000(84)90023-0}}.

\bibitem[MT26]{maschio2025}
Samuele Maschio and Davide Trotta.
\newblock A topos for extended {W}eihrauch degrees.
\newblock {\em Ann. Pure Appl. Log.}, 177(9):103781, 2026.
\newblock URL: \url{https://arxiv.org/abs/2505.08697}, \href
  {https://doi.org/10.1016/J.APAL.2026.103781}
  {\path{doi:10.1016/J.APAL.2026.103781}}.

\bibitem[NP18]{topol-comput-neumann-pauly}
Eike Neumann and Arno Pauly.
\newblock A topological view on algebraic computation models.
\newblock {\em J. Complex.}, 44:1--22, 2018.
\newblock \href {https://doi.org/10.1016/j.jco.2017.08.003}
  {\path{doi:10.1016/j.jco.2017.08.003}}.

\bibitem[NPP24]{NPP24}
Eike Neumann, Arno Pauly, and C{\'{e}}cilia Pradic.
\newblock The equational theory of the {W}eihrauch lattice with multiplication.
\newblock {\em CoRR}, abs/2403.13975, 2024.
\newblock \href {https://arxiv.org/abs/2403.13975} {\path{arXiv:2403.13975}},
  \href {https://doi.org/10.48550/ARXIV.2403.13975}
  {\path{doi:10.48550/ARXIV.2403.13975}}.

\bibitem[Pau17]{paulycountableordinals}
Arno Pauly.
\newblock Computability on the space of countable ordinals, 2017.
\newblock URL: \url{https://arxiv.org/abs/1501.00386}, \href
  {https://arxiv.org/abs/1501.00386} {\path{arXiv:1501.00386}}.

\bibitem[Pau20]{pauly2020update}
Arno Pauly.
\newblock An update on {W}eihrauch complexity, and some open questions, 2020.
\newblock URL: \url{https://arxiv.org/abs/2008.11168}, \href
  {https://arxiv.org/abs/2008.11168} {\path{arXiv:2008.11168}}.

\bibitem[PP25]{PricePradic25}
C{\'e}cilia Pradic and Ian Price.
\newblock {W}eihrauch problems as containers.
\newblock In Arnold Beckmann, Isabel Oitavem, and Florin Manea, editors, {\em
  Crossroads of Computability and Logic: Insights, Inspirations, and
  Innovations}, pages 395--409, Cham, 2025. Springer Nature Switzerland.
\newblock URL: \url{https://arxiv.org/abs/2501.17250}.

\bibitem[PP26]{PP26}
C{\'{e}}cilia Pradic and Ian Price.
\newblock Problems with fixpoints of polynomials of polynomials.
\newblock In Claudia Faggian and Joost-Pieter Katoen, editors, {\em LICS},
  volume 380 of {\em LIPIcs}, pages 77:1--77:27, 2026.
\newblock URL: \url{https://arxiv.org/abs/2601.15420v2}, \href
  {https://doi.org/10.4230/LIPIcs.LICS.2026.77}
  {\path{doi:10.4230/LIPIcs.LICS.2026.77}}.

\bibitem[Rog87]{rogers1987}
Hartley Rogers.
\newblock {\em Theory of Recursive Functions and Effective Computability}.
\newblock MIT Press, 1987.

\bibitem[SBR10]{SilvaBR10}
Alexandra Silva, Marcello~M. Bonsangue, and Jan J. M.~M. Rutten.
\newblock Non-deterministic {K}leene coalgebras.
\newblock {\em Log. Methods Comput. Sci.}, 6(3), 2010.
\newblock URL: \url{http://arxiv.org/abs/1007.3769}.

\bibitem[VO08]{VanOosten}
Jaap Van~Oosten.
\newblock {\em Realizability: an introduction to its categorical side}, volume
  152.
\newblock Elsevier, 2008.

\bibitem[{W}ei00]{weihrauchbook}
Klaus {W}eihrauch.
\newblock {\em Computable analysis: an introduction}.
\newblock Springer Science \& Business Media, 2000.

\bibitem[Wes21]{westrick2020}
Linda Westrick.
\newblock A note on the diamond operator.
\newblock {\em Computability}, 10(2):107--110, 2021.
\newblock \href {https://doi.org/10.3233/COM-200295}
  {\path{doi:10.3233/COM-200295}}.

\bibitem[Wol55]{wolfe1955strict}
Philip Wolfe.
\newblock The strict determinateness of certain infinite games.
\newblock {\em Pacific Journal of Mathematics}, 5(1):841--847, 1955.

\end{thebibliography}
\end{document}